\newtheorem{theorem}{Theorem}
\newtheorem{corollary}[theorem]{Corollary}
\newtheorem{lemma}[theorem]{Lemma}
\newtheorem{proposition}[theorem]{Proposition}
\newtheorem{definition}[theorem]{Definition}
\newcommand{\myleft}{\mathopen{}\mathclose\bgroup\left}
\newcommand{\myright}{\aftergroup\egroup\right}
\DeclareMathOperator{\Tr}{Tr}
\DeclareMathOperator{\diag}{diag}
\newcommand{\fro}{\mathrm{F}}
\renewcommand{\fro}{F}
\DeclareMathOperator\tr{Tr}
\DeclareMathOperator{\gend}{g_{\mathrm{end}}}
\newcommand{\CC}{\mathbb{C}}
\newcommand{\RR}{\mathbb{R}}
\newcommand{\1}{\mathds{1}}
\newcommand{\EE}{\mathbb{E}}
\newcommand{\PP}{\mathbb{P}}
\newcommand{\md}[1]{\mathbb{#1}}
\newcommand{\mc}[1]{\mathcal{#1}}
\newcommand{\ct}{{}^\dagger}
\newcommand{\ad}{^\dagger}
\newcommand{\norm}[1]{\left\Vert #1 \right\Vert} 
\newcommand{\normn}[1]{\lVert #1 \rVert} 
\newcommand{\snorm}[1]{\norm{#1}_\infty} 
\newcommand{\snormn}[1]{\normn{#1}_\infty}
\newcommand{\fnorm}[1]{\norm{#1}_\fro} 
\newcommand{\fnormn}[1]{\normn{#1}_\fro}
\newcommand{\dnorm}[1]{\norm{#1}_\diamond} 
\newcommand{\lTwoNorm}[1]{\norm{#1}_{\ell_2}} 
\newcommand{\ket}[1]{\left.\left|{#1}\right.\right\rangle}
\newcommand{\bra}[1]{\left.\left\langle{#1}\right.\right|}
\newcommand{\braket}[2]{\left\langle #1 \middle| #2 \right\rangle}
\newcommand{\ketbra}[2]{\ket{#1} \!\! \bra{#2}}
\newcommand{\kett}[1]{|{#1}{\rangle\!\rangle}}
\newcommand{\braa}[1]{{\langle\!\langle}{#1}|}
\newcommand{\Var}{\operatorname{Var}}
\newcommand{\Ev}{\operatorname{\EE}} 
\newcommand{\tn}[1]{^{\otimes#1}} 
\newcommand{\dens}[1]{\ket{#1}\!\!\bra{#1}}
\newcommand{\gr}{\ensuremath{\md{G}}}
\newcommand{\gsum}[1]{\ensuremath{\frac{1}{|\gr |}\!\sum_{#1\in \gr}}}
\newcommand{\ggsum}[1]{\ensuremath{\frac{1}{|\gr |^2}\!\sum_{#1\in \gr}}}
\newcommand{\Irr}{\ensuremath{\mathrm{Irr}(\gr)}}
\newcounter{example}[section]
\newenvironment{example}[1][]{\refstepcounter{example}\par\medskip
   \noindent \textbf{Example~\theexample. #1} \rmfamily}{\medskip}
\newcommand{\wh}[1]{\widehat{#1}}
\newcommand{\dsum}{\frac {1}{d^2} \sum_{\lambda \in \Lambda} d_{\sigma_\lambda} \, }
\newcommand{\set}[1]{\left\{ #1 \right\}}
\newcommand{\Fnorm}[1]{\fnormn{#1}}
\newcommand{\infnorm}[1]{\snormn{#1}}
\newcommand{\abs}[1]{\lvert #1 \rvert}
\newcommand{\vecOL}{\lvert \braket{z (\sigma_\lambda) }{ r_{\max} (\sigma_\lambda) } \braket{ \ell_{\max} (\sigma_\lambda)}{z (\sigma_\lambda)} \rvert}
\newcommand{\onevecOL}{\big\lvert 1-\braket{z (\sigma_\lambda) }{ r_{\max} (\sigma_\lambda) } \braket{ \ell_{\max} (\sigma_\lambda)}{z (\sigma_\lambda)} \rvert}
\newcommand{\onevecOLcases}{\big\lvert 1-\braket{z (\sigma_\lambda) }{ r_{\max} (\sigma_\lambda) } \braket{ \ell_{\max} (\sigma_\lambda)}{z (\sigma_\lambda)} \rvert}
\def\l@subsubsection#1#2{}
\definecolor{ingo}{rgb}{.6,.05,.05}
\definecolor{jonas}{rgb}{.06,.5,.05}
\definecolor{eo}{RGB}{246,76,25}
\definecolor{cy}{RGB}{70,200,150}
\newcommand{\newtext}[1]{{\color{black}#1}}
\begin{document}
	
\title{A general framework for randomized benchmarking}

\author{J. Helsen}
\affiliation{QuSoft \& Korteweg-de Vries Institute, University of Amsterdam, Science Park 123 1098 XG Amsterdam, The Netherlands}
\author{I. Roth}
\affiliation{Quantum Research Centre, Technology Innovation Institute, Abu Dhabi, UAE}
\affiliation{Dahlem Center for Complex Quantum Systems, Arnimallee 14, Freie Universit\"{a}t Berlin, 14195 Germany}
\author{E. Onorati}
\affiliation{University College London,
Dept. of Computer Science
66-72 Gower Street,
London WC1E 6EA,
United Kingdom
}
\author{A. H. Werner}
\affiliation{{Department of Mathematical Sciences, University of Copenhagen, 2100 K{\o}benhavn, Denmark}}
\affiliation{{NBIA, Niels Bohr Institute, University of Copenhagen, Blegdamsvej 17, 2100 K{\o}benhavn, Denmark}}
\author{J. Eisert}
\affiliation{Dahlem Center for Complex Quantum Systems, Arnimallee 14, Freie Universit\"{a}t Berlin, 14195 Germany}

\affiliation{Mathematics and Computer Science,
Takustra{\ss}e 9, Freie Universit\"{a}t Berlin, 14195 Berlin,
Germany}
\affiliation{Helmholtz-Zentrum Berlin f{\"u}r Materialien und Energie, Hahn-Meitner-Platz 1, 14109 Berlin, Germany}

\begin{abstract}
Randomized benchmarking refers to a collection of protocols that in the past decade have become central methods for characterizing quantum gates. These protocols aim at efficiently estimating the quality of a set of quantum gates in a way that is resistant to state preparation and measurement errors.
Over the years many versions have been developed, however, a comprehensive theoretical treatment of randomized benchmarking has been missing. In this work, we develop a rigorous framework of randomized benchmarking general enough to encompass virtually all known protocols as well as novel, more flexible extensions. Overcoming previous limitations on error models and gate sets, this framework allows us, for the first time, to formulate realistic conditions under which we can rigorously guarantee that the output of any randomized benchmarking experiment is well-described by a linear combination of matrix exponential decays. We complement this with a detailed analysis of the fitting problem associated with randomized benchmarking data. We introduce modern signal processing techniques to randomized benchmarking, prove analytical sample complexity bounds, and numerically evaluate performance and limitations. In order to reduce the resource demands of this fitting problem, we introduce novel, scalable post-processing techniques to isolate exponential decays, significantly improving the practical feasibility of a large set of randomized benchmarking protocols. These post-processing techniques overcome shortcomings in efficiency of several previously proposed methods such as character benchmarking and linear-cross entropy benchmarking. Finally, we discuss, in full generality, how and when randomized benchmarking decay rates can be used to infer quality measures like the average fidelity. On the technical side, our work substantially extends the recently developed Fourier-theoretic perspective on randomized benchmarking by making use of the perturbation theory of invariant subspaces, as well as ideas from signal processing.
\end{abstract}

\maketitle



\section{Introduction}

In the last few years significant steps have been taken towards the development of large-scale quantum computers.
A key part of the development of these quantum computers are tools that provide diagnostics, certification, and benchmarking. Particularly for quantum operations, stringent conditions have to be met to achieve fault tolerance. Motivated by this observation, in recent years a significant body of work has been dedicated to the development of tools for the
certification and benchmarking of quantum gates.
A prominent role in this collection of tools is taken by methods that can be collectively referred to as \emph{\acf{RB}}. These methods have risen to prominence because they conform well to the demands of realistic experimental settings. They estimate the magnitude of an average error of a set of quantum gates in a fashion that is robust to errors in \emph{state preparation and measurement} (SPAM) and moreover is, in many settings, efficient, in the sense that the resources required scale polynomially with the number of qubits in the device. The various versions of \ac{RB} apply sequences of randomly chosen quantum gates of varying length. Small errors are thus amplified with the sequence length, and gate quality measures can be extracted from the dependence \newtext{of the output data on sequence length.}

In \ac{RB} protocols, group structures feature strongly, in that the gate set considered is in almost all cases a subset of a finite group. Such group structures not only make it possible to efficiently make predictions for error-free sequences and compute inverses, but they also provide the means to analyze the error contribution after averaging. Originally proposed for random unitary gates \citep{FirstRB,dankert_exact_2009,PhysRevA.75.022314},
\ac{RB} is now most prominently executed with gates from the
so-called \emph{Clifford group}
\citep{magesan2012efficient,knill2008randomized,emerson2007symmetrized}, a set of
efficiently classically simulatable quantum gates that take a key role
specifically in fault tolerant quantum computing \citep{Roads}.
It has also been considered for other (subsets of) finite groups
\citep{PhysRevA.90.030303,PhysRevLett.123.060501,PhysRevA.92.060302,BeyondCliffordRB,PhysRevLett.123.060501, HelsenEtAl:2019:character, CycleBenchmarking,francca2018approximate,proctor2019direct}.  Moreover \ac{RB} has been generalized to capture
other figures of merit of gate sets, such as relative average gate fidelities  to
specific anticipated target gates \citep{magesan2012efficient}, fidelities within a symmetry sector \citep{PhysRevA.92.060302,PhysRevLett.123.060501}, or
the unitarity \citep{wallman2015estimating}. Specifically recently, with challenges of realizing fault-tolerant quantum
computers in mind, emphasis has been put on capturing
 losses, leakage, and cross-talk in a scheme
 \citep{GambettaEtAl:2012:simultaneousRB, WallmanEtAl:2015:LossRates,WallmanEtAl:2016:Leakage}. Also, data from \ac{RB} -- or rather suitably combining data from multiple such experiments --
can be sufficient to acquire full tomographic information about a quantum gate
\citep{KimmOhki,AverageGateFidelities, 2019arXiv190712976F}.
This adds up to a wealth of \ac{RB} protocols \cite{Review}
proposed over the previous years. Fig.\ \ref{fig:results overview} summarizes
a (to our knowledge) up to date list of theoretical proposals for \ac{RB} procedures presently known.

A significant body of work moreover deals with the limitations and precise
preconditions of \ac{RB}. The originally
rather stringent assumptions on noise being necessarily identical across different quantum gates have over time been relaxed for particular protocols in later work \citep{wallman2018randomized,Merkel18,Proctor17}, and the connection between the output of \ac{RB} and operationally relevant quantities (such as average fidelity) has been studied in some detail~\cite{Proctor17,carignan2018randomized}.

And yet, it seems fair to say that a comprehensive picture of \ac{RB}
schemes for the quantum technologies \cite{Roadmap} has been lacking so far.
 In particular a theoretical framework that is broad enough to formalize the required pre-conditions ensuring the proper functioning of \ac{RB} protocols beyond case-by-case arguments for specific protocols.
  This is unsatisfactory, as the development of higher quality quantum gates and currently relies heavily on a plethora of tailor-made variants of \ac{RB}.
This motivates our current effort at providing a clear rigorous underpinning for \ac{RB} and exploring its underlying mathematical structure, putting all variants of \ac{RB} on a common footing.
%
%



With this effort we aim to not only \emph{better understand} these protocols, but also to
\emph{increase trust} in them, making it possible to \emph{reliably use them} without a detailed understanding of their inner workings. \newtext{This is a timely effort, as procedures that fit within the \ac{RB} framework, such as linear-cross-entropy benchmarking~\cite{arute2019quantum} and the behaviour of noisy random circuits more generally, have been the topic of significant attention recently \cite{bouland2019complexity,noh2020efficient,dalzell2021random}, including for the purpose of benchmarking \cite{liu2021benchmarking}. Given how we identify linear-cross-entropy benchmarking
as a randomized benchmarking procedure, we relate our general framework to this
timely discussion.}

At the same time our framework allows us to \emph{go significantly beyond current protocols} and establish a series of novel theoretical results and benchmarking schemes, addressing several shortcomings of the current state of the art. Among others, these novel results include a rigorous error bound for generator-style \acl{RB}, a formal equivalence between linear cross entropy benchmarking and \acl{RB} and a novel, scalable method for isolating signals in RB experiments, an absolute requirement if one wants to apply RB to non-standard gate-sets. \newtext{This latter method, which we call \emph{filtered} \ac{RB}, is a significant conceptual improvement over standard \ac{RB} schemes, promising greater flexibility and applicability.} Notably, it also obviates the need for physically implemented inversion gates in \acl{RB} experiments and the preparation of specific input states, making its implementation significantly more straight-forward.
\newtext{As such, this framework therefore also constitutes a solid basis for \emph{developing new schemes}
of randomized benchmarking.}
Altogether these results substantially advance the understanding of the possibilities and requirements of \acl{RB} as a practical tool for estimation and certification.

\section{Overview of results}
In this work, we aim at developing a \emph{mathematically comprehensive framework of \acl{RB} protocols}, synthesizing, generalizing, and substantially strengthening previous work. This paper covers a variety of different aspects of \acf{RB}, from general theorems on the validity of \ac{RB} data, to a detailed study of the classical post-processing of data generated by \ac{RB} and an in-depth discussion of the connection between the outputs of \ac{RB} and average fidelity. \newtext{As our work is often quite technical, we have formulated a series of `take-home messages at the end of this section, summarizing the key takeaways of our work for experimental practice.}

\subsection{A general framework for \acl{RB}}

We begin by providing a general framework that generalizes and covers (to the best of our knowledge) all \ac{RB} procedures currently present in the literature. This can also be thought of as an attempt at {\it a formal definition of \ac{RB} protocols}, and is largely an effort to organize and formalize knowledge already present in the \ac{RB} literature.
{\ac{RB} protocols can be divided} into two separate phases: a data collection phase, and a data processing phase.
\begin{itemize}
\item The \emph{data collection phase} corresponds to the part of the protocol involving the actual quantum computer and can be described as (1) the preparation of a quantum state, (2) the application of a sequence of random quantum operations, capped by (3) an inversion operator mapping the state (ideally) to a specified final state (usually the initial state), upon which (4) a measurement is then performed.
\item
 This process yields estimates of a {success probability} $p(m)$ for different sequence lengths $m$, which constitutes the input to the \emph{data processing phase}. In this phase -which is completely classical-  the data $p(m)$ is fitted to a functional model, generally a linear combination of exponential decays. One can consider the decay rates of these exponential decays as direct measures of quality of the implementation, or further relate it to operational quantities like the average fidelity.
\end{itemize}
Starting with the data collection phase, we write down a general \ac{RB} protocol (alg. \ref{prot:rand_bench}). This protocol depends on a number of input parameters, and by making particular choices for these parameters we can obtain all \ac{RB} protocols currently in the literature. The key parameters are as follows:
\begin{enumerate}
\item {\bf A group} $\gr$, encoding the gates which are applied during the \ac{RB} protocol. A common choice for this group is the multi-qubit Clifford group $\mathbb{C}_q$ but many other choices are possible.
\item {\bf A reference implementation} $\phi_{\mathrm{r}}$ assigning to each element of the group $\gr$ an ideal quantum operation to be implemented. In the standard scenario this map is a \emph{representation} of the group $\gr$ (denoted $\omega$). In general this map need not be a representation, but it is in all known cases obtained from a representation by some fixed mapping. The paradigmatic example of such an implementation map is the standard conjugation action $g\mapsto U_g \rho U_g\ct$ which associates a unitary action to every element $g$ of the group.
\item {\bf A probability distribution} $\nu$ encoding the probability with which gates are selected from $\gr$. In the standard case this probability distribution is simply the uniform distribution over the group. We will also consider the situation where this probability distribution can vary throughout different steps of the protocol.
\item {\bf An ending gate} $\gend$ governing the total operation performed in each \ac{RB} sequence. Typically this is the identity, but other choices are relevant, and it can even be chosen at random.
\end{enumerate}
\begin{figure}[!t]
\centering~
\includegraphics[width=.6\columnwidth]{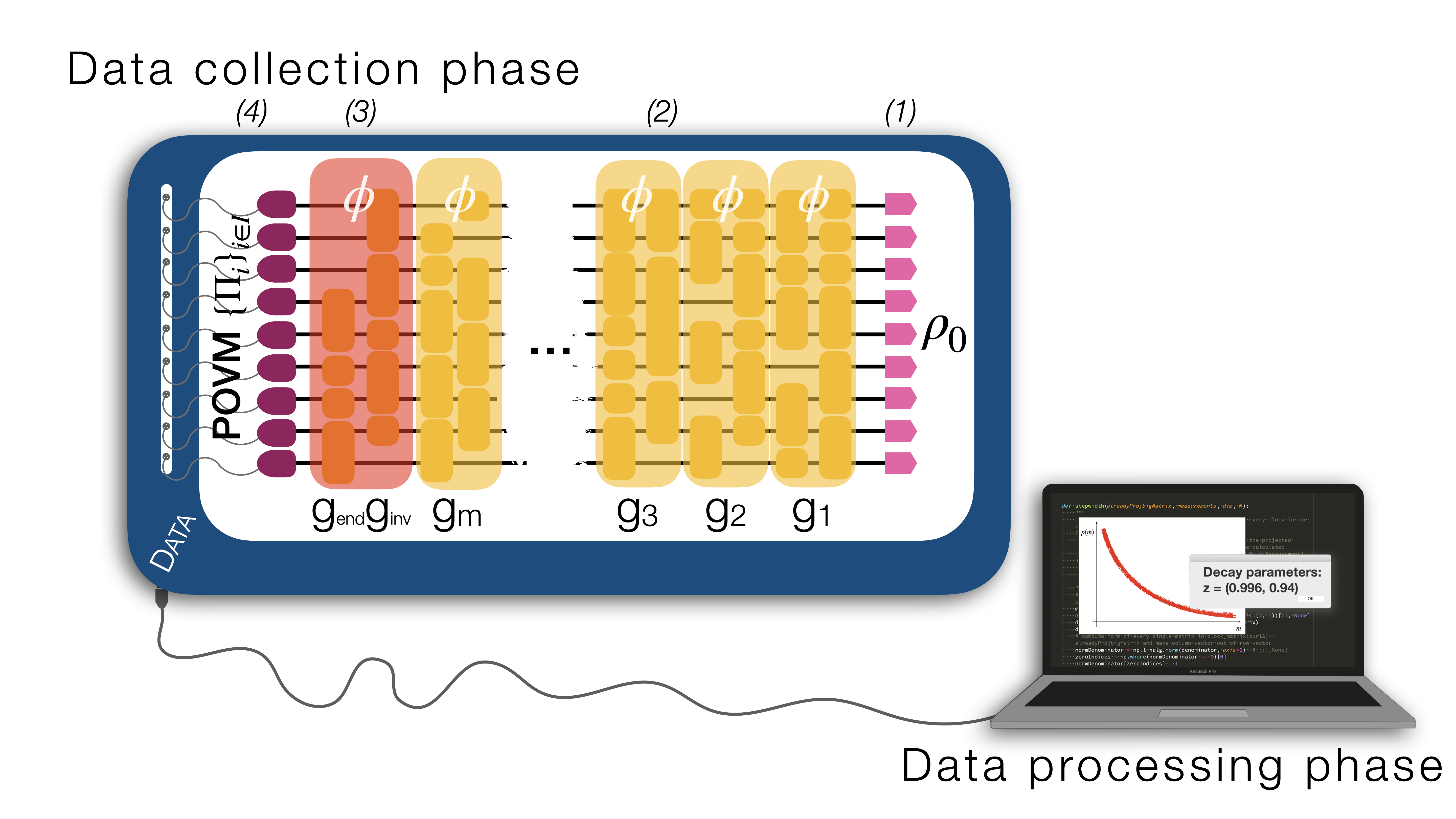}
\caption{The basic structure of \ac{RB}.
The \ac{RB} data collection phase iterates the following steps:
After (1) the preparation of an initial state $\rho_0$, (2) a sequence of $m$ random gates $g_i$ is applied, (3) followed by the gate inverting the sequence $g_\text{inv}$ up to an end gate $\gend$ and (4) a final POVM measurement. In the data processing phase the measurement data, for many random sequences and different sequence lengths is post-processed in a classical computer to extract decay parameters quantifying the imperfections in the implemented gates.}
\end{figure}
Different choices for these key parameters can be collected into classes, yielding a \emph{typology of \acl{RB} procedures},
an overview of which can be seen in
fig.\ \ref{fig:results overview}. This typology consists of three classes:
\begin{itemize}
\item {\bf Uniform \ac{RB}}, which is characterized by uniform random sampling of operations and reference implementations that are representations.
\item
{\bf Interleaved \ac{RB}}, where the reference implementations involve the application of `interleaved' gates.
\item {\bf Non-uniform \ac{RB}}, which is characterized by non-uniform random sampling of operations. This last class comes with two subtypes: {\bf{approximate \ac{RB}}}, where the sampling distribution is close to uniform, and {{\bf subset \ac{RB}}}, where the sampling distribution is very far from being uniform (for instance taking only non-zero values on a small set of generators).
\end{itemize}
These classes of \ac{RB} procedures are motivated by the qualitatively different behaviour of the associated output data $p(m)$, which we will discuss in more detail later. They also partially but not completely align with notions already present in the literature.
In particular we will see that the behaviour of this data is dictated by the group $\gr$ and the reference representation $\omega$.
We can always decompose this representation $\omega$ into a direct sum of \emph{irreducible} sub-representations, i.e., $\omega = \bigoplus_{\lambda\in \Lambda} \sigma_\lambda^{\oplus n_\lambda}$ where the $\sigma_\lambda$ are irreducible (and occur with multiplicity $n_\lambda$).

A key tenet of \ac{RB} is that this decomposition decides the functional form of the output data $p(m)$ as a function of sequence length $m$. More precisely, we expect behaviour of the form
\begin{equation}\label{eq:exp_form_results}
p(m) \approx \sum_{\lambda\in \Lambda}\tr(A_\lambda M_\lambda^m),
\end{equation}
where $A_\lambda, M_\lambda$ are $n_\lambda\times n_\lambda$
matrices encoding
\emph{state preparation and measurement errors} (SPAM), and the quality of gate implementation respectively. This formalizes in a precise way the general idea that \ac{RB} data is well described by a linear combination of exponential decays and allows for the classical processing of \ac{RB} output data, thus providing the connection between the data collection and the data processing phases.  Note, however, that if irreducible sub-representations appear with non-trivial multiplicities the functional {form of
eq.~(\ref{eq:exp_form_results}) includes \emph{matrix} exponential decays}. These can have qualitatively different features than scalar exponential decays, requiring a more sophisticated data processing approach. {It is for instance possible for these matrices to have \emph{complex} eigenvalue pairs, which will lead to damped-oscillation behaviour in randomized benchmarking data.}

\subsection{The functional form of \acl{RB} data}

At the core of the \ac{RB} literature is the promise that \ac{RB} output data has a very specific form, namely that of a linear combination of (matrix) exponential decays (as expressed in eq.\ (\ref{eq:exp_form_results})), decaying with the length of the sequences of random gates. {Moreover, this linear combination is of a specific structure, determined by the implemented gate-set.} However, this functional form of the \ac{RB} output data is not guaranteed by the protocol itself, but is instead derived from assumptions on the noisy implementation of the random quantum operations. In early work this assumption took the form of the \emph{gate-independent noise assumption}. Later, it was realized that this assumption is not satisfactory~\cite{Proctor17} and it was subsequently generalized for standard Clifford \ac{RB} to the more general assumption that the noisy implementations of gates are Markovian and time-independent, and moreover either that the gate-dependent variation of the noise is upper bounded in the diamond norm (in the work of ref.~\citep{wallman2018randomized}), {or lower bounded in average fidelity } (in the work of ref.~\cite{Merkel18}). Here, we provide a series of theorems generalizing these works to (almost) all existing \ac{RB} protocols, justifying eq.\  (\ref{eq:exp_form_results}) in broad circumstances. The theorems we prove make claims of different strength for different classes of \ac{RB} protocols, as per the typology outlined in fig.\ \ref{fig:results overview}.

\begin{figure}[t]
\centering

\begin{tikzpicture}

\draw[thick, rounded corners=8pt,gray] (-2.3,-4) -- (-2.3,4.3) -- (2.9,4.3) -- (2.9,-4) -- (-2.3,-4);

\draw[thick, rounded corners=8pt,gray] (3.1,0.6) -- (3.1,3.7) --(12,3.7) -- (12,0.6) -- (3.1,0.6);

\draw[thick, rounded corners=8pt,gray] (3.1,-4) -- (3.1,-0.5) --(12,-0.5) -- (12,-4) -- (3.1,-4);

\draw[thick, rounded corners=8pt,green] (-2.5,-4.2) -- (-2.5,4.5) --(7.5,4.5) -- (7.5,-4.2) -- (-2.5,-4.2);

\draw[thick, rounded corners=8pt,red] (7.7,0.3) -- (7.7,4.5) --(12.2,4.5) -- (12.2,0.3) -- (7.7,0.3);

\draw[thick, rounded corners=8pt,orange] (7.7,-4.2) -- (7.7,-0.2) --(12.2,-0.2) -- (12.2,-4.2) -- (7.7,-4.2);

\node at (0,4.1) {{\bf Uniform \ac{RB}}};
\node at (5,4.1) {{\bf Covered by theorem} \ref{thm:mother}};
\node at (5,3.45) {{\bf Interleaved \ac{RB}}};
\node at (9.5,4.1) {{\bf Discussed in section} \ref{subsec:interleaved}};
\node at (5,-0.8) {{\bf Non-uniform \ac{RB}}};
\node at (9.5,-3.5) {{\bf Covered by theorem} \ref{thm:subset_mother}};
\node at (4.2,-1.3) {{\bf (Approximate)}};
\node at (8.4,-1.3) {{\bf (Subset)}};

\node at (0, 0) {\makebox{
    {\begin{varwidth}{\linewidth}\begin{itemize}\setlength\itemsep{0em}
        \item Standard Clifford \ac{RB} \cite{CliffordRBPRL,knill2008randomized}
        \item Real \ac{RB} \cite{hashagen2018real}
        \item Simultaneous \ac{RB}~\cite{gambetta2012characterization}
        \item dihedral \ac{RB}* \cite{carignan2015characterizing}
        \item CNOT-dihedral \ac{RB} \cite{cross2016scalable}
        \item Character \ac{RB}* \cite{helsen2019new}
        \item Restricted gate set \ac{RB} \cite{brown2018randomized}
        \item Monomial \ac{RB} \cite{francca2018approximate}
        \item Complete \ac{RB} \cite{chasseur2015complete}
        \item Leakage \ac{RB} (1)** \cite{WallmanEtAl:2016:Leakage}
        \item Leakage \ac{RB} (2)** \cite{wood2018quantification}
        \item Unitarity \ac{RB}** \cite{wallman2015estimating}
        \item Loss \ac{RB}** \cite{WallmanEtAl:2015:LossRates}
        \item Measurement based \ac{RB} \cite{alexander2016randomized}
        \item Logical \ac{RB} \cite{combes2017logical}
        \item Pauli channel tomography* \cite{flammia2019efficient,harper2019efficient}
        \item Linear XEB* \cite{arute2019quantum}
    \end{itemize}\end{varwidth}}
}};

\node at (5, 2) {\makebox{
    {\begin{varwidth}{\linewidth}\begin{itemize}\setlength\itemsep{0em}
        \item Standard interleaved \ac{RB} \cite{magesan2012efficient}
        \item T-gate interleaved \ac{RB} \cite{harper2017estimating}
        \item Iterative \ac{RB} \cite{sheldon2016characterizing}
        \item Individual gate \ac{RB} \cite{PhysRevLett.123.060501}
        \item Hybrid \ac{RB}* \cite{chasseur2017hybrid}
        \item Cycle \ac{RB}* \cite{CycleBenchmarking}
    \end{itemize}\end{varwidth}}
}};

\node at (9,3.1) {\makebox{
    {\begin{varwidth}{\linewidth}\begin{itemize}\setlength\itemsep{0em}
        \item \ac{RB} tomography \cite{kimmel2014robust}
    \end{itemize}\end{varwidth}}
}};

\node at (4.65, -2) {\makebox{
    {\begin{varwidth}{\linewidth}\begin{itemize}\setlength\itemsep{0em}
        \item Approximate \ac{RB} \cite{francca2018approximate}
        \item NIST \ac{RB} \cite{knill2008randomized,boone2019randomized}
    \end{itemize}\end{varwidth}}
}};

\node at (9.3, -2.25) {\makebox{
    {\begin{varwidth}{\linewidth}\begin{itemize}\setlength\itemsep{0em}
        \item Generator \ac{RB} \cite{francca2018approximate,ryan2009randomized}
        \item Direct \ac{RB} \cite{proctor2019direct}
        \item Coset (2-for-1) \ac{RB}* \cite{helsen2019new}
    \end{itemize}\end{varwidth}}
}};

\end{tikzpicture}

\caption{An overview of \ac{RB} schemes, indicating how they fit within our typology (see section \ref{subsec:rb_typology}) of \ac{RB} schemes and what theorem covers the behaviour of their output data (see section \ref{sec:output_data}). A $*$ indicates that the protocol has a non-trivial post-processing scheme, while a $*\!*$ indicates that the protocol in its original specification has no inversion gate. We discuss how this is equal to uniform \ac{RB} (with inversion) together with a post-processing step in section \ref{sec:new_crb}.}
\label{fig:results overview}
\end{figure}
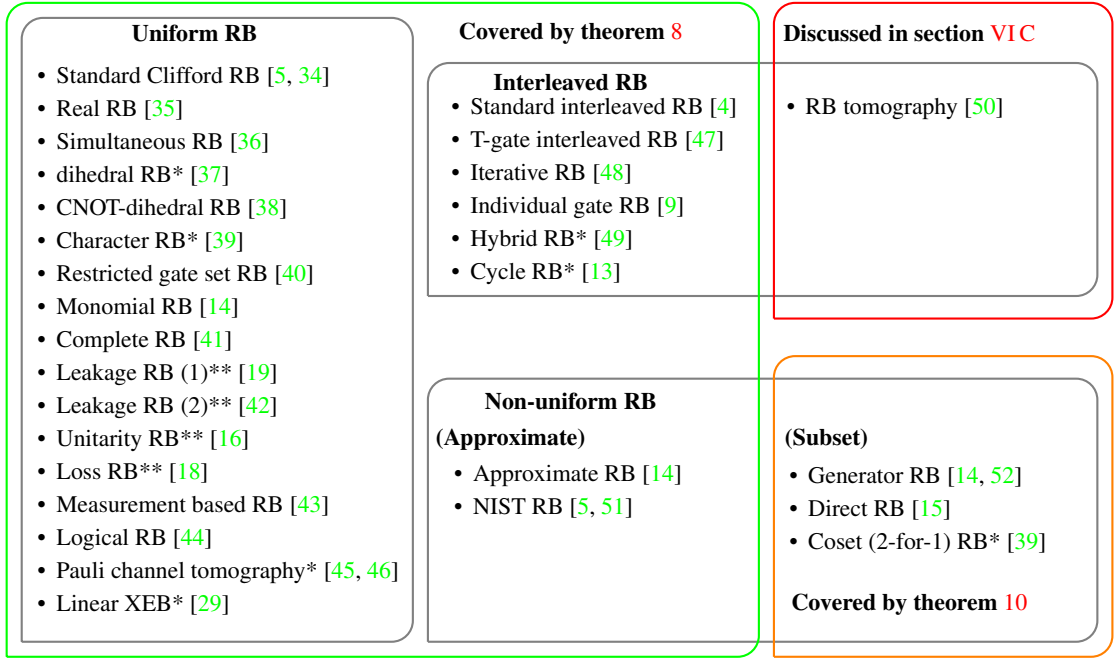

\begin{itemize}
\item We prove that the output data of {\it uniform \ac{RB}} protocols (as per the typology in fig.\ \ref{fig:results overview}) can be described as a linear combination of exponentials, up to an exponentially small error, provided that the gate implementations are Markovian, time-independent and are on average close  \emph{in diamond norm} to an ideal implementation that is a representation. This closeness is independent of the particular \ac{RB} protocol and independent of the underlying Hilbert space dimension.
The complete statement is given as theorem~\ref{thm:mother} that can be summarized as follows:

\begin{theorem}[Informal version of theorem~\ref{thm:mother}]
Consider an \ac{RB} experiment with sequence length $m$, with gates uniformly drawn from a group $\gr$ and implemented through a reference representation $\omega(g)=\bigoplus_{\lambda\in \Lambda} \sigma_\lambda^{\oplus n_\lambda}(g)$. Denote the corresponding noisy implementation on the quantum computer as $\phi(g)$ {(note that this assumes time independent and Markovian noise)}. If we have
\begin{equation}\label{eq:norm_bound_results}
\frac{1}{|\gr|}\sum_{g\in \gr}\dnorm{\omega(g) - \phi(g)}\leq \delta \leq \frac{1}{9},
\end{equation}
then the output data  $p(m)$ of the \ac{RB} experiment obeys the relation
\begin{equation}
\Big|p(m) - \sum_{\lambda\in \Lambda}\tr(A_\lambda M_\lambda^m)\Big|\leq O(\delta^{m})
\end{equation}
with the error exponentially suppressed in $m$. Here $A_\lambda$ and $M_\lambda$ are $n_\lambda\times n_\lambda$ matrices, with $M_\lambda$ depending only on the actual implementation $\phi$.
\end{theorem}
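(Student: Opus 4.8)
The plan is to pass to the Fourier domain of $\gr$, where the average over random sequences collapses onto the $m$-th power of a single \emph{implementation map}, and then to control that power with the perturbation theory of invariant subspaces. For a fixed sequence the inversion gate is the group element $g_{\mathrm{inv}}=\gend\,(g_m\cdots g_1)^{-1}$, so that with vectorised SPAM data $\braa{E},\kett{\rho}$,
\begin{equation}
p(m)=\frac{1}{|\gr|^m}\sum_{g_1,\dots,g_m\in\gr}\braa{E}\,\phi(g_{\mathrm{inv}})\,\phi(g_m)\cdots\phi(g_1)\,\kett{\rho}.
\end{equation}
Changing variables to the partial products $h_j=g_j\cdots g_1$ (still uniform and independent) turns the constrained sum into a convolution along the chain, which I would diagonalise with the Fourier transform $\widehat\phi(\sigma_\lambda)=\frac1{|\gr|}\sum_{g}\overline{\sigma_\lambda(g)}\otimes\phi(g)$. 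Fourier inversion then yields the \emph{exact} identity
\begin{equation}
p(m)=\sum_{\lambda\in\Lambda} d_{\sigma_\lambda}\,\tr\!\big(\widehat{\mathcal E}_\lambda\,\widehat\phi(\sigma_\lambda)^{\,m}\big),
\end{equation}
with $\widehat{\mathcal E}_\lambda$ a fixed boundary matrix built from $\braa{E}$, $\kett{\rho}$ and $\gend$ (one $\phi$-factor from the inversion being absorbed here).

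\emph{The ideal spectrum and its perturbation.} In the noiseless case $\phi=\omega$, Schur orthogonality applied to $\omega=\bigoplus_\mu\sigma_\mu^{\oplus n_\mu}$ shows that $\widehat\omega(\sigma_\lambda)=\Pi_\lambda\otimes\1_{n_\lambda}$, where $\Pi_\lambda$ is the rank-one projector onto the invariant line of $\overline{\sigma_\lambda}\otimes\sigma_\lambda$; hence $\widehat\omega(\sigma_\lambda)$ has eigenvalue $1$ on an exactly $n_\lambda$-dimensional space and $0$ elsewhere, a spectral gap of size $1$. The hypothesis controls the perturbation in precisely the right norm: since each $\phi(g)-\omega(g)$ is Hermiticity preserving and $\sigma_\lambda$ is unitary, $\snorm{\widehat\phi(\sigma_\lambda)-\widehat\omega(\sigma_\lambda)}\le\frac1{|\gr|}\sum_g\snorm{\phi(g)-\omega(g)}\le\frac1{|\gr|}\sum_g\dnorm{\phi(g)-\omega(g)}\le\delta$, using $\snorm{\cdot}\le\dnorm{\cdot}$ for Hermiticity-preserving maps.

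\emph{Splitting the power.} With $\delta\le 1/9$ the gap is not closed, so $\widehat\phi(\sigma_\lambda)$ carries a unique $n_\lambda$-dimensional invariant subspace near the ideal one; let $P_\lambda$ be its Riesz spectral projector. Writing $\widehat\phi(\sigma_\lambda)^m=P_\lambda\widehat\phi(\sigma_\lambda)^mP_\lambda+(\1-P_\lambda)\widehat\phi(\sigma_\lambda)^m(\1-P_\lambda)$, the first block is the $m$-th power of the $n_\lambda\times n_\lambda$ compression $M_\lambda$ of $\widehat\phi(\sigma_\lambda)$ (depending only on $\phi$), producing $\tr(A_\lambda M_\lambda^m)$ once $\widehat{\mathcal E}_\lambda$, $P_\lambda$ and $d_{\sigma_\lambda}$ are absorbed into $A_\lambda$. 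The complementary block sits where the ideal operator vanishes, so its norm is $O(\delta)$; raising to the $m$-th power and summing over the finitely many $\lambda$ gives the claimed $O(\delta^m)$.

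\emph{The main obstacle.} The delicate point is the error bound on the complementary block: $\widehat\phi(\sigma_\lambda)$ is in general non-normal, so a bound on its subdominant spectral radius does not by itself control $\snorm{(\1-P_\lambda)\widehat\phi(\sigma_\lambda)^m(\1-P_\lambda)}$, which could exhibit transient growth. I would instead bound the operator norm of the complementary block directly---it is $O(\delta)$ because the ideal block is exactly zero and both the perturbation and the deviation of $P_\lambda$ from the ideal projector are $O(\delta)$---and only then take the $m$-th power, sidestepping non-normality. Making this quantitative, and in particular showing that the Riesz projectors stay uniformly bounded (they need not have norm one away from the self-adjoint case), is exactly where the explicit threshold $\delta\le 1/9$ enters, through resolvent estimates on the isolated eigenvalue cluster.
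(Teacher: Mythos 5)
Your overall architecture --- pass to Fourier space, view the Fourier transform of $\phi$ as a perturbation of the projector obtained from $\omega$, split into a dominant $n_\lambda$-dimensional invariant block giving $\tr(A_\lambda M_\lambda^m)$, and bound the complementary block's norm by $O(\delta)$ before taking powers --- is exactly the paper's strategy. But the step on which everything rests is false: the inequality $\snorm{\mc{E}}\le\dnorm{\mc{E}}$ does \emph{not} hold for Hermiticity-preserving maps, and in fact fails for differences of quantum channels by factors growing with the Hilbert-space dimension. Take $\mc{E}_1(X)=\tr(X)\dens{0}$ and $\mc{E}_2(X)=\tr(X)\1/d$, both CPTP, so that $\dnorm{\mc{E}_1-\mc{E}_2}\le 2$; feeding in the unit-Frobenius-norm input $X=\1/\sqrt{d}$ gives $\TwoNormn{(\mc{E}_1-\mc{E}_2)(X)}=\sqrt{d}\,\TwoNormn{\dens{0}-\1/d}=\sqrt{d-1}$, so the spectral norm of this superoperator (as a matrix on Hilbert--Schmidt space) is at least $\sqrt{d-1}$ while its diamond norm is at most $2$. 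Consequently the hypothesis $\frac{1}{|\gr|}\sum_g\dnorm{\phi(g)-\omega(g)}\le\delta\le 1/9$ gives you \emph{no} useful spectral-norm control on $\mc{F}(\phi-\omega)[\sigma_\lambda]$: the best generic bound carries a factor $\sqrt{d}$, which destroys precisely the dimension-independence that the theorem asserts and which is its main point. The same problem recurs at the final contraction against SPAM, where estimating $|\braa{E}X\kett{\rho}|$ via $\snorm{X}\,\TwoNormn{E}\,\TwoNormn{\rho}$ again picks up dimension factors (e.g.\ $\TwoNormn{\Pi}$ can be as large as $\sqrt{d}$ for a POVM element).

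This is not a cosmetic issue that can be patched inside the spectral-norm framework; it is the reason the paper constructs the sub-multiplicative norms $\norm{F(\phi)}_{\mathrm{m}}=\frac{1}{|\gr|}\sum_{g}\dnorm{\phi(g)}$ and $\norm{F(\phi)}_{\mathrm{max}}=\max_{g}\dnorm{\phi(g)}$ on the algebra of Fourier operators, for which the hypothesis reads \emph{exactly} $\norm{F(\phi-\omega)}_{\mathrm{m}}\le\delta$, and then invokes the Stewart--Sun invariant-subspace theorem (theorem \ref{thm:subspace_pert}), which holds for an \emph{arbitrary} sub-multiplicative matrix norm; the boundary terms are then contracted using the max/mean norm inequality of eq.\ (\ref{eq:norm_id}), keeping every constant dimension-free. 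Your Riesz-projector and resolvent argument could in principle be rerun in this Banach-algebra norm, but then your appeals to the spectral gap of a Hermitian projector and to spectral-norm resolvent estimates must be rebuilt from scratch --- which is what theorem \ref{thm:subspace_pert} accomplishes. A secondary error: your ``exact identity'' sums only over $\lambda\in\Lambda$, whereas Fourier inversion produces a sum over all of $\Irr$; for the noisy $\phi$ the modes at $\lambda\notin\Lambda$ are generically nonzero and must be bounded as part of the subdominant contribution (the paper keeps them inside the block on which $F(\omega)$ vanishes), so they cannot be dropped from the outset.
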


The proof of this theorem relies on a combination of techniques from earlier works: Taking the matrix Fourier transform perspective introduced to \ac{RB} in ref.\ \citep{Merkel18} and combining it with the realization in ref.\ \citep{wallman2018randomized} that the diamond distance (averaged over random gates) is the correct distance measure for the formulation of assumptions on noisy gate implementations. We also make heavy use of the perturbation theory of invariant subspaces of non-normal matrices~\citep{KatoPTLO,StewartSun}. {We note that the specific parameter $1/9$ is an artifact of the proof techniques and probably sub-optimal.}
\item Building on theorem \ref{thm:mother}, we prove multiple theorems for {\it non-uniform \ac{RB}} protocols. The first subtype, approximate \ac{RB}, is covered by theorem \ref{thm:non_uniform}, a direct generalization of theorem \ref{thm:mother}, and also features an exponentially suppressed error.
For the second subtype, subset \ac{RB}, on the other hand, we can only give a weaker statement, guaranteeing that the \ac{RB} output data is described by a linear combination of exponentials up to \emph{constant error} (in sequence length) as long as the sequence length $m$ is taken to be larger than a mixing length $m_{\mathrm{mix}}$. This mixing length indicates the moment where the $m$-fold convolution $\nu^{*m}$ of the probability distribution $\nu$, which governs the sampling of random gates, becomes close to the uniform distribution and is a function of both the initial distribution $\nu$ and the underlying group $\gr$.
We can summarize our result on subset \ac{RB} as follows:

\begin{theorem}[Informal version of theorem \ref{thm:subset_mother}]
Consider a \ac{RB} experiment with sequence length $m$, with gates drawn from a group $\gr$ according to a probability distribution $\nu$ and implemented through a reference representation $\omega(g)=\bigoplus_{\lambda\in \Lambda} \sigma_\lambda^{\oplus n_\lambda}(g)$. Denote the corresponding (noisy) actual implementation on the quantum computer as $\phi(g)$. If we have, for some sequence length $m_{\mathrm{mix}}$ that
\begin{align}
\sum_{g\in \gr}\nu(g)\dnorm{\omega(g) - \phi(g)}&\leq \frac{\delta}{m_{\mathrm{mix}}},\\
\sum_{g \in \gr} \big|\nu^{*m_\mathrm{mix}}(g) - \frac{1}{|\gr|}\big|\leq \delta',
\end{align}
and $\delta+ \delta' \leq 1/9$, then the output data $p(m)$ of the \ac{RB} experiment obeys the relation
\begin{equation}
\Big|p(m) - \sum_{\lambda\in \Lambda}\tr(A_\lambda M_\lambda^{m-m_\mathrm{mix}})\big|\leq O(\delta+\delta')
\end{equation}
with the error bound independent of $m$. Here $A_\lambda$ and $M_\lambda$ are $n_\lambda\times n_\lambda$ matrices, with $M_\lambda$ depending only on the actual implementation $\phi$.
\end{theorem}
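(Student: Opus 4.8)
The plan is to run the same Fourier-theoretic argument that underlies theorem~\ref{thm:mother}, but with the uniform twirl replaced by the $\nu$-weighted group Fourier transform, and then to use the mixing hypothesis to recover the uniform picture after $m_{\mathrm{mix}}$ steps. First I would express the expected output as a sum over the irreducible blocks, $p(m)=\sum_{\lambda\in\Lambda} L_\lambda\big(\hat\phi_\nu(\sigma_\lambda)^m\big)$, where $\hat\phi_\nu(\sigma_\lambda):=\sum_{g\in\gr}\nu(g)\,\overline{\sigma_\lambda(g)}\otimes\phi(g)$ is the $\nu$-weighted Fourier operator at the irrep $\sigma_\lambda$ and $L_\lambda$ is a linear functional built from the (noisy) state preparation, inversion gate and measurement. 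This is the verbatim generalization of the uniform decomposition underlying theorem~\ref{thm:mother}, with each group element weighted by $\nu(g)$ rather than $1/|\gr|$.

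Next I would split $\hat\phi_\nu(\sigma_\lambda)^m=\hat\phi_\nu(\sigma_\lambda)^{m-m_{\mathrm{mix}}}\,\hat\phi_\nu(\sigma_\lambda)^{m_{\mathrm{mix}}}$ and control the short (mixing) factor by two estimates. Writing $\hat\omega_\nu(\sigma_\lambda)$ for the ideal analogue obtained by replacing $\phi$ with $\omega$, a telescoping bound together with sub-multiplicativity — all these operators are contractions because $\omega(g),\phi(g)$ are (close to) channels and $\overline{\sigma_\lambda(g)}$ is unitary — and the first hypothesis give $\|\hat\phi_\nu(\sigma_\lambda)^{m_{\mathrm{mix}}}-\hat\omega_\nu(\sigma_\lambda)^{m_{\mathrm{mix}}}\|\le m_{\mathrm{mix}}\cdot(\delta/m_{\mathrm{mix}})=\delta$. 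For the ideal factor I would use the homomorphism property of $\omega$ and $\sigma_\lambda$ to collapse the product into a single convolution, $\hat\omega_\nu(\sigma_\lambda)^{m_{\mathrm{mix}}}=\sum_{h\in\gr}\nu^{*m_{\mathrm{mix}}}(h)\,\overline{\sigma_\lambda(h)}\otimes\omega(h)$, so that the mixing hypothesis directly yields $\|\hat\omega_\nu(\sigma_\lambda)^{m_{\mathrm{mix}}}-P_\lambda\|\le\delta'$, where $P_\lambda:=\frac{1}{|\gr|}\sum_{h\in\gr}\overline{\sigma_\lambda(h)}\otimes\omega(h)$ is the uniform twirl, i.e.\ the projection onto the $n_\lambda$-dimensional intertwiner space. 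Combining the two, $\|\hat\phi_\nu(\sigma_\lambda)^{m_{\mathrm{mix}}}-P_\lambda\|\le\delta+\delta'$.

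The reason for passing to the $m_{\mathrm{mix}}$-th power is that $\hat\phi_\nu(\sigma_\lambda)^{m_{\mathrm{mix}}}$ — unlike $\hat\phi_\nu(\sigma_\lambda)$ itself, whose spectral gap may be tiny when $\nu$ is far from uniform — is within $\delta+\delta'\le 1/9$ of a genuine projection, so the perturbation theory of invariant subspaces used for theorem~\ref{thm:mother} applies directly. It yields a spectral projection $\tilde P_\lambda$ of $\hat\phi_\nu(\sigma_\lambda)$ with $\|\tilde P_\lambda-P_\lambda\|=O(\delta+\delta')$, splitting off an $n_\lambda$-dimensional dominant invariant subspace — on which $\hat\phi_\nu(\sigma_\lambda)$ restricts to the matrix $M_\lambda$, close to the identity since on the intertwiner space the ideal operator acts trivially — from a complementary strict contraction. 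I would then replace the mixing factor $\hat\phi_\nu(\sigma_\lambda)^{m_{\mathrm{mix}}}$ by $\tilde P_\lambda$ at cost $O(\delta+\delta')$, use that $\hat\phi_\nu(\sigma_\lambda)^{m-m_{\mathrm{mix}}}\tilde P_\lambda$ restricts to $M_\lambda^{m-m_{\mathrm{mix}}}$ on the invariant subspace (the contraction part only decaying further), and observe that every replacement error is multiplied by operator norms bounded by one, so that the total error stays $O(\delta+\delta')$ uniformly in $m$. Feeding this back into $L_\lambda$ and summing over $\lambda$ gives the claim, with $A_\lambda$ absorbing $L_\lambda$ together with the change of basis onto the invariant subspace.

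The main obstacle I anticipate is precisely the interface between mixing and perturbation theory. Because $\hat\phi_\nu(\sigma_\lambda)$ need not have a spectral gap bounded away from zero for far-from-uniform $\nu$, invariant-subspace perturbation theory cannot be applied to it directly; the argument must be carried out at the level of the $m_{\mathrm{mix}}$-th power, and one has to verify that the spectral projection obtained there is genuinely the spectral projection of $\hat\phi_\nu(\sigma_\lambda)$ for its dominant eigenvalues, so that $M_\lambda^{m-m_{\mathrm{mix}}}$ is meaningful for the per-step operator. Keeping the error independent of $m$ — rather than letting the per-step noise $\delta/m_{\mathrm{mix}}$ accumulate over all $m$ steps — hinges on showing that the post-mixing factor remains a contraction off the signal subspace, which is exactly what the threshold $\delta+\delta'\le 1/9$ secures.
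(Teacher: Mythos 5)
Your proposal is correct and follows essentially the same route as the paper's proof: introducing the effective ($\nu$-weighted) implementation $\phi_\nu$, bounding $\phi_\nu^{*m_\mathrm{mix}}$ against $\omega$ by a telescoping estimate (giving $\delta$) plus the convolution-collapse/mixing estimate (giving $\delta'$), applying the invariant-subspace perturbation theory to the $m_\mathrm{mix}$-th power viewed as a perturbation of the projection $F(\omega)$, and using contraction norms to keep the error $m$-independent. The interface issue you flag is resolved exactly as you suggest: the paper notes that $F(\phi_\nu^{*m_\mathrm{mix}})$ and $F(\phi_\nu)$ commute and hence share the relevant (spectral) invariant subspaces, so that $M_\lambda$ is the restriction of the per-step Fourier operator to the dominant subspace.
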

This theorem cannot guarantee an exponential error bound, but still improves on the state of the art~\citep{proctor2019direct,francca2018approximate}, both in the generality of the assumptions made and the size of the possible error. Note also the appearance of the ${m_\mathrm{mix}}^{-1}$
term in the average diamond norm deviation. This can be read as the requirement that the generating gates are of sufficiently high quality that any (composite) uniformly randomly chosen gate will be close in diamond norm to its ideal version. In this sense this requirement is of the same stringency as eq.\
(\ref{eq:norm_bound_results}).

\item We discuss the behaviour of {\it interleaved \ac{RB}} protocols, illustrating how standard interleaved \ac{RB}, as well as all but one non-standard interleaved \ac{RB} protocol, are covered by theorem \ref{thm:mother}. We consider two  non-standard interleaved \ac{RB} protocols, namely cycle benchmarking~\citep{CycleBenchmarking}, which is covered by our theorems in a non-trivial way and robust \ac{RB} tomography~\citep{kimmel2014robust}, which is not covered by our theorems. We argue that this is not a weakness of our argument but rather that the \ac{RB} output data of this protocol behaves in a non-standard manner, requiring tailor-made analysis.
\item In section \ref{sec:RBdiamond}, we provide a discussion of the central assumption
${|\gr|}^{-1}\sum_{g\in \gr}\dnorm{\omega(g) - \phi(g)}\leq \delta$, made on the behaviour of noisy gates in the above theorems. We argue that this assumption is a natural one to make (theorem \ref{thm:stability}) and moreover that it can not be replaced by a similar assumption involving the average fidelity without requiring the gate to be exponentially close to perfect in the number of qubits. This also answers an open question posed in ref.\  \cite{Merkel18}
in the negative.
\end{itemize}

The unifying conceptual theme of all of our theorems is the fact that \ac{RB} can be seen as a `power iteration in frequency space'. The behaviour of the output data is dictated by the dominant eigenvalues of a fixed matrix that is obtained from the Fourier transform~\cite{Merkel18} (in a specific sense defined later) of the noisy implementation map $\phi$. Taking powers of this matrix results in the exponential suppression of all but the largest eigenvalues.
\emph{Together,
these results provide a {rigorous} justification for the folkloric knowledge that \ac{RB} protocols function under broad experimental circumstances.}

\subsection{A framework for \acl{RB} data processing}

The second phase of the \ac{RB} protocol, the data processing phase, takes in \ac{RB} output data, which is well-described by a linear combination of exponentials and outputs the decay rates associated with those exponentials. If the data is well described by a single exponential decay this can be done by off-the-shelf
curve fitting procedures, but if the \ac{RB} output data is of a more complex form (such as a linear combination of several exponentials) a more flexible approach is required. Here we provide a self-contained discussion of modern signal processing methods for extracting decay parameters from data with a functional form given by eq.~(\ref{eq:exp_form_results}).
We review signal processing algorithms, in particular the MUSIC and ESPRIT algorithms, that are
at least in principle applicable to the most general form of \ac{RB} output data, even including matrix exponentials.
Beyond that, we discuss theoretical guarantees that were derived for these algorithms and discuss their implications for \ac{RB} data processing.
Building upon these guarantees, we derive a sampling complexity statement that ensures the recovery of decay rates with these algorithms
under measurements with finite statistics.
We complement our analytical discussion with numerical evaluations and simulations that demonstrate the practical performance of these algorithms.
Importantly, our discussions
detail the fundamental limitations
of post-processing \ac{RB} output data featuring many exponential decays.

\subsection{A general post-processing scheme for isolating exponential decays}

Even with modern methods, fitting multiple exponential decays is a difficult affair, and in many scenarios one is only interested in a subset of the decay parameters that describe the output data of a particular \ac{RB} experiment. Because of this, several methods have been developed to isolate particular exponential decays. Examples of this include the class of uniform \ac{RB} protocols without inversion gates (indicated with a double asterisk `$**$' in fig.\ \ref{fig:results overview}) and a variety of other protocols that take linear combinations of \ac{RB} output data with different ending gates $\gend$ to isolate particular exponential decays (indicated with a single asterisk `$*$'). In section \ref{sec:new_crb}, we \newtext{give a novel class of protocols called filtered \ac{RB}} that subsumes all these earlier approaches. For simplicity, we only consider uniform \ac{RB}, but our results generalize to other types of \ac{RB}.

This class of protocols is based on the realization that \ac{RB} output data (indexed by an ending gate $\gend$) can be seen as a vector in the group algebra of the group being benchmarked. This allows for the design of {\it filter functions} $\alpha_\lambda:\gr\to \mathbb{C}$, based on the matrix elements of irreducible representations, that isolate exponential decays associated with sub-representations of the ideal implementation of the gates in the group $\gr$. Using these filter functions we can write down a general post-processing scheme for the isolation of exponential decays and prove that it works when the assumptions of theorem \ref{thm:mother} are satisfied.
We prove a theorem of the following form.

\begin{theorem} [Theorem \ref{thm:char_avg} (informal)]
Let $\alpha_\lambda:\gr\to \mathbb{C}$ be the \emph{filter function} associated with the irreducible representation $\sigma_\lambda$ and let $p(m,\gend)$ be the output data associated with a uniform \ac{RB} experiment with ending gate $\gend$, satisfying the condition eq.\ (\ref{eq:norm_bound_results}) with parameter $\delta$.
We have that
\begin{equation}
k_\lambda(m) :=\frac{1}{|\gr|}\sum_{\gend\in \gr} \alpha_\lambda(\gend) p(m,\gend)
\end{equation}
satisfies
\begin{equation}
\big|k_\lambda(m) -\tr(B_\lambda M_\lambda^m)\big| \leq  O(\delta^m)
\end{equation}
with $M_\lambda$ associated with the irreducible sub-representation $\sigma_\lambda$ (as per eq. (\ref{eq:exp_form_results})).
\end{theorem}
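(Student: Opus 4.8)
The plan is to leverage Theorem~\ref{thm:mother} as a black box together with the \emph{Schur orthogonality relations} for the matrix elements of irreducible representations. First I would invoke Theorem~\ref{thm:mother} for \emph{each} fixed ending gate $\gend$, so that the raw signal decomposes as
\begin{equation}
p(m,\gend) = \sum_{\lambda'\in\Lambda}\tr\!\big(A_{\lambda'}(\gend)\, M_{\lambda'}^m\big) + O(\delta^m),
\end{equation}
where crucially the decay matrices $M_{\lambda'}$ depend only on the implementation $\phi$ and \emph{not} on $\gend$, while the SPAM matrices $A_{\lambda'}(\gend)$ carry the entire dependence on the ending gate. Since $k_\lambda(m)$ is by definition \emph{linear} in the family $\{p(m,\gend)\}_{\gend\in\gr}$, substituting this decomposition reduces the claim to (i) controlling the propagated error and (ii) understanding precisely how $A_{\lambda'}(\gend)$ depends on $\gend$.

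The key structural step is to make the $\gend$-dependence of $A_{\lambda'}(\gend)$ explicit. Here I would reopen the proof of Theorem~\ref{thm:mother} and track where $\gend$ enters: in the protocol the ending gate specifies the \emph{ideal} net operation, so after the Fourier/twirl averaging over the random sequence, the dependence on $\gend$ enters only through the ideal representation $\omega(\gend)=\bigoplus_{\lambda'}\sigma_{\lambda'}^{\oplus n_{\lambda'}}(\gend)$, restricted to the block belonging to $\sigma_{\lambda'}$. Consequently $A_{\lambda'}(\gend)$ is a \emph{linear function of the matrix elements} $[\sigma_{\lambda'}(\gend)]_{ij}$, with fixed ($\gend$-independent) coefficients that absorb the SPAM data and the gate-implementation noise.

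With this in hand, the isolation is a direct application of the Great Orthogonality Theorem. Choosing the filter function $\alpha_\lambda$ as the appropriate conjugated matrix element(s) of $\sigma_\lambda$, I would exchange the order of the finite sums and use
\begin{equation}
\frac{1}{|\gr|}\sum_{\gend\in\gr}\overline{[\sigma_\lambda(\gend)]_{ab}}\,[\sigma_{\lambda'}(\gend)]_{cd} = \frac{1}{d_{\sigma_\lambda}}\,\delta_{\lambda\lambda'}\,\delta_{ac}\,\delta_{bd},
\end{equation}
which annihilates every cross term with $\lambda'\neq\lambda$ and collapses the surviving $\lambda'=\lambda$ contribution onto a single combination of matrix elements. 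Reassembling this combination into an $n_\lambda\times n_\lambda$ matrix defines $B_\lambda$ (built from the fixed coefficients of the previous step) and yields the main term $\tr(B_\lambda M_\lambda^m)$. The propagated error equals $\frac{1}{|\gr|}\sum_{\gend\in\gr}\alpha_\lambda(\gend)\,O(\delta^m)$ and remains $O(\delta^m)$, since $\frac{1}{|\gr|}\sum_{\gend\in\gr}\abs{\alpha_\lambda(\gend)}$ is bounded by a constant depending only on $\sigma_\lambda$ (via Cauchy--Schwarz and the normalization $\frac{1}{|\gr|}\sum_{\gend}\abs{[\sigma_\lambda(\gend)]_{ij}}^2=1/d_{\sigma_\lambda}$), independent of $m$.

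I expect the main obstacle to be the second step: rigorously certifying that the ending gate enters the signal \emph{only} through the ideal irrep matrix elements $[\sigma_{\lambda'}(\gend)]_{ij}$, so that orthogonality applies exactly. The subtlety is that the ending operation is physically realized through the noisy inversion gate $\phi(g_{\mathrm{inv}})$ rather than an ideal $\omega(\gend)$; one must verify that this noise is consistently absorbed into the ($\gend$-independent) coefficients defining $B_\lambda$ and into the $O(\delta^m)$ remainder, and does not leak a residual $\gend$-dependence that would survive the orthogonality projection. Once this bookkeeping is pinned down, the remaining estimates are routine.
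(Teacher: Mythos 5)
Your proposal is correct and follows essentially the same route as the paper's own proof: invoke Theorem~\ref{thm:mother}, reopen its proof to see that the SPAM matrices $A_{\lambda'}(\gend)$ depend on $\gend$ only through the ideal irrep matrix elements $\overline{\sigma}_{\lambda'}(\gend^{-1})$ (eq.~(\ref{eq:spamfactor})), annihilate cross terms via Schur orthogonality (which the paper states in Fourier form, $\gsum{\gend}\mc{P}_\lambda\omega(\gend)\otimes\sigma_{\lambda'}(\gend)=\delta_{\lambda,\lambda'}\mc{F}(\omega)[\sigma_{\lambda'}]$, equivalent to your Great Orthogonality Theorem step), and bound the propagated error by linearity with an $m$-independent constant $K$. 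The bookkeeping obstacle you flag at the end is resolved automatically by the convolution structure underlying Theorem~\ref{thm:mother}: the noisy inversion gate is compiled as $\phi(\gend g_{\mathrm{inv}})$, so after the change of variables it becomes the extra factor of $\mc{F}(\phi)$ in Fourier space, while all explicit $\gend$-dependence sits in the inverse-transform factor $\overline{\sigma}_{\lambda'}(\gend^{-1})$ --- precisely the form your orthogonality step requires.
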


Beyond this theoretical result we note that this novel class of protocols allows one (by a simple re-parametrization) to eliminate the need for an explicitly implemented inversion gate in \ac{RB}, making the protocol significantly simpler to implement in practice.

We also give a statistical analysis of this post-processing scheme.
In particular, we prove that if the measurement POVM performed in the \ac{RB} experiment is (proportional to) a state $3$-design, the sample complexity of the complete benchmarking procedure (data collection plus post-processing) is asymptotically independent of the dimension of the underlying Hilbert space for arbitrary benchmarking groups.
This is a strong improvement on previous attempts at such a general post-processing procedure.
Note that the $3$-design condition appearing here plays a similar role in controlling the variance in scalable estimation procedures such as shadow estimation~\cite{huang2020predicting, KlieschRoth:2020:Tutorial}.

We stress, however, that the $3$-design condition is a sufficient condition and there are examples in the literature covered by this post-processing scheme where this condition is not met but the overall procedure is still scalable. In particular we discuss the recently proposed \emph{linear cross entropy benchmarking procedure (XEB)}~\cite{arute2019quantum} in section \ref{subsec:lin_cross_ent}.
We argue that the variant of XEB that performs multiple random gate sequences is an \emph{example of uniform \ac{RB}} (as per the typology) combined with an instance of our general post-processing scheme. Furthermore, we argue that the sample complexity of linear XEB is asymptotically independent of the underlying Hilbert space dimension even though the POVM being measured is not itself a $3$-design.

\subsection{Randomized benchmarking and average fidelity}\label{sec:intro_gate_fidelity_connection}

\newtext{\ac{RB} has originally been designed to estimate the {\it average gate fidelity} of a group of gates. Under the assumption of gate-independent noise, it can be proven (as has already been done in
ref.~\cite{FirstRB}) that the decay rates estimated in an \ac{RB} experiment correspond exactly to the average fidelity of the noise associated to the gates. However, if this condition is relaxed, the connection between these decay rates and the average fidelity is less clear. Even more strongly, it has been argued in ref.~\cite{Proctor17} that due to a so called gauge freedom in the representation of the gate set, the entire premise of a connection between \ac{RB} decay rates and average fidelity may be suspect. This is because the choice of the gauge does not influence the \ac{RB} decay rates, but it does affect the average gate fidelity. Indeed, it has been shown that under some transformations the two quantities may differ by orders of magnitude, even in the gate dependent noise case (where the previously proven connection can be seen as a `natural' gauge choice).}

\newtext{Subsequently proposals have been made to reconnect the average gate fidelity and \ac{RB} decay rates in the context of standard Clifford \ac{RB}: A natural gauge called the depolarizing gauge~\cite{Merkel18} and the noise-in-between-gates framework. Both of these proposals provide an exact connection between the decay rates of \ac{RB} and the average fidelity. However,
several crucial questions of interpretation have still been left open, and in this work we aim to address some of them, and sharpen others.}

\newtext{In section \ref{sec:entanglement_fidelity}, we substantially generalize both proposed connections between decay rates and average fidelity to \ac{RB} with \emph{arbitrary} finite groups. What is more, we argue that these two proposals are in fact \emph{equivalent}. Moreover, we present an explicit example of a completely positive implementation map which is not completely positive in the depolarizing gauge (or equivalently has non-completely positive noise-in-between-gates). This implies that both these interpretations of \ac{RB} decay rates are not fully satisfactory, because they can not be guaranteed to correspond to the average fidelity of a \emph{physical} process. That said, this does not mean
that \ac{RB} decay rates are not useful figures of merit, as they can always be interpreted as
meaningful benchmarks in their own right.}

\newtext{Complementing this, following the approximate approach of ref.~\cite{carignan2018randomized}, we show that the problem of connecting \ac{RB} decay rates with the average gate fidelity can be (approximately) reduced to the deviation between the dominant (ideal) unperturbed eigenvectors and their (implemented) perturbed version in Fourier space. We show that, as long as this overlap is sufficiently close to 1, any gauge choice that corresponds to a CPT channel will connect \ac{RB} parameters to the average gate fidelity. Hence we obtain, under precise conditions, an approximate version of the connection between average fidelity and \ac{RB} decay rates.}

\newtext{More formally, we leverage the Fourier transform framework introduced in ref. \cite{Merkel18} to derive the following expression for the \emph{entanglement fidelity}, which is linearly related to the average fidelity, averaged over all elements of the group as
\begin{equation}\label{eq:intro_entanglement_fidelity}
F_e(\phi,\omega)
=
\dsum f_{\max}(\sigma_\lambda) \,\alpha_{\mathrm{Overlap}}+\alpha_{\mathrm{Res}},
\end{equation}
where $f_{\max}(\sigma_\lambda)$ is the \ac{RB} decay parameter associated with the irreducible subrepresentation $\sigma_\lambda$. In the Fourier framework $f_{\lambda,\max}$ corresponds to the largest eigenvalue of the Fourier transform of the implementation map $\phi$ evaluated at $\sigma_\lambda$. Furthermore, the parameter $\alpha_{\mathrm{Overlap}}$ encodes the overlap between the (left and right) eigenvectors associated with this largest eigenvalue, and the eigenvector of the Fourier transform of the reference representation $\omega$ evaluated at $\sigma_\lambda$.  Finally, the term $\alpha_{\mathrm{Res}}$, the residuum, encodes information about the sub-dominant eigenspaces of the Fourier transform. The factors $\alpha_{\mathrm{Overlap}},\alpha_{\mathrm{Res}}$ are gauge dependent. We give bounds on the overlap and residuum in terms of the deviation of $\phi$ from the reference $\omega$ and discuss relevant scenarios where these terms contribute only negligibly to the entanglement fidelity (and thus when \ac{RB} decay data corresponds approximately to an average fidelity).}


\subsection{Non-technical discussion}

\newtext{In this work, we develop a comprehensive theory of \acf{RB}. Our main motivation has been our desire to give a mathematical framework for \ac{RB} and to classify known schemes. It should be clear, however, that our work goes significantly beyond a mere classification of what is present in the literature. Since our work is in parts rather technical, we will in the following formulate a series of `take home messages': Actionable advice for experimentalists interested in using \ac{RB} in the laboratory
and developing new protocols to suit their needs.
\begin{enumerate}
  \item \textbf{\ac{RB} gives exponential decays under broad (Markovian) circumstances.}
  		Confirming experimental intuition, and extending earlier results for specific groups, our main result (theorem \ref{thm:mother}) proves that \ac{RB} protocols behave (up to an exponentially small correction factor) as expected whenever the noise afflicting the gate-set is Markovian and time independent. Because the correction factor is so small, any deviation from the prescribed functional form can in fact be taken as evidence of non-Markovian or time-dependent noise processes (as suggested earlier by ref.~\cite{wallman2018randomized}). We do wish to emphasize that the error term in theorem \ref{thm:mother} can be quite significant for small sequence lengths. Hence we recommend as a rule of thumb that \ac{RB} experiments should not include very short ($m\leq 5$) sequence lengths, especially when strong gate-dependent (but Markovian) noise is suspected, as this might bias the estimator for the decay rate.
  \item \textbf{\ac{RB} is broadly resistant to deviations from uniform sampling.}
  		Similar to robustness against gate-dependent Markovian noise, we prove
		(theorem \ref{thm:non_uniform}) that \ac{RB} gives correct results even when the group is not being sampled exactly uniformly. This broadly justifies the use of (generically applicable) Markov chain techniques for sampling group elements~\cite{francca2018approximate}, overcoming a key technical hurdle in running \ac{RB} protocols with new groups.

  \item \textbf{The decay rates given by \ac{RB} can be interpreted as an average fidelity (but caveats apply).}
  We find that the decay rates of general \ac{RB} experiments can always be exactly associated to the average fidelity of a fixed process, however, this process need not be physical (i.e., it does not always correspond to a CPTP map). Alternatively, we show that \ac{RB} decay rates can always be connected approximately to a the average fidelity of a physical process, but the degree of approximation is dependent upon external beliefs about the underlying noise process. Hence, we believe the interpretation of \ac{RB} decay rates as an average fidelity to be broadly valid, but subject to technical caveats.

\end{enumerate}
These three messages can be considered folk knowledge in the \ac{RB} community, for which we provide a rigorous underpinning. However, our work also contains new conceptual developments, notably
the following.
\begin{enumerate}
  \item \textbf{Filtering scalably extends \ac{RB} to a large class of groups.}
  As formalized in section \ref{sec:data_processing}, a major practical hurdle to applying \ac{RB} with \emph{arbitrary} finite groups, is that this generically requires the fitting of output data to multi-exponential decays. This is a difficult problem both in theory and in practice and it has so far contributed to the limited experimental use of \ac{RB} beyond a few groups (such as the Clifford group). Our new procedure, which we call \emph{filtering} (or filtered \ac{RB}), takes a major step towards solving this problem by giving a generic procedure for isolating exponential decays in a fully scalable manner. \newtext{This approach is discussed in great detail in 
  sections \ref{sec:data_processing} and  \ref{sec:new_crb}, with filter functions being introduced in section   \ref{sec:new_crb}.A.}

  \item \textbf{Inversion gates are not required for \ac{RB}.} Another key practical difficulty in performing randomized benchmarking has been the necessity to compute and implement a global inversion gate.
  However, filtered \ac{RB} has the bonus property that it does not require the application of inverses. Instead a random noisy gate sequence can be directly compared to a perfect classical simulated version to extract the same \ac{RB} decay rates., making the quantum part of the protocol significantly easier to implement. However, this simplicity is gained at a (constant) extra sampling overhead, as the inversion gate in standard \ac{RB} also suppresses the sampling complexity~\cite{helsen2019multiqubit}.
\end{enumerate}
With these new contributions, our framework serves as a convenient basis to design new schemes that \emph{come with rigorous performance bounds built in}. We expect this to facilitate and accelerate the development of more sophisticated and tailor-made benchmarking schemes as required by experimental practitioners. Steps in this direction have already been made \cite{MatchgateRB,LinghangKongCompactGroups,RandomSequences}. \newtext{In particular, ref.~\cite{LinghangKongCompactGroups} explores the framework put forth here for
continuous groups of quantum gates.}
}
\subsection{Structure of this work}

In section \ref{sec:prelim_one}, we discuss mathematical preliminaries: We set the notation for the rest of the work and recall standard notions from representation theory. This section can be skipped by experienced readers.

In section \ref{sec:fourier_and_pert}, we discuss implementation maps: linear maps from finite groups to super-operators, a central concept in our treatment of \ac{RB}. We also give an introduction into matrix valued Fourier theory and explicitly state several results from the perturbation theory of non-normal matrices which we use throughout the rest of the work.

In section \ref{sec:rand_bench}, we give a general framework for \ac{RB}, {with its two phases}: the \emph{data collection} and
\emph{data processing phases},  and give a general protocol for the data collection phase. This protocol, which depends on a range of input parameters, covers (the data collection phase of) all known versions of \ac{RB}. We also discuss a typology of \ac{RB} schemes, dividing up the known protocols into a few generic classes.

In section \ref{sec:output_data}, we present a series of general theorems that govern the behaviour of the output data of a \ac{RB} protocol. We confirm the folklore knowledge that \ac{RB} data is well described by a linear combination of (matrix) exponentials, under some general assumptions.

In section \ref{sec:data_processing}, we discuss general procedures for extracting decay parameters from \ac{RB} output data. We discuss implementation and general limitations and prove a sampling complexity statement for \ac{RB}.

In section \ref{sec:new_crb}, we propose a general post-processing method for isolating exponential decays associated with particular sub-representations. We argue that this post processing method covers many previously proposed procedures. We also prove a sufficient condition under which this post processing scheme is scalable for any \ac{RB} protocol and analyze linear cross-entropy benchmarking as an example.

In section \ref{sec:fidelity_interpretation},
we discuss the relation between the decay rates generated by \ac{RB} and the average fidelity, focusing in particular on the gauge freedom in the presentation of the underlying noise channels.

Finally, in section \ref{sec:RBdiamond}, we \newtext{finally} argue that the assumptions made in section \ref{sec:output_data} are natural and in some sense necessary for the correct behaviour of \ac{RB}.

\section{Preliminaries: quantum channels and group representations}\label{sec:prelim_one}

In this section, we will go over some of the basic mathematical machinery needed to talk about \acf{RB}\, and prove our central theorems. We will discuss quantum channels and their matrix representations (section\ \ref{subsec:channels}), and groups and group representations (section\ \ref{subsec:rep_theory}). This is fairly standard material, and beyond the setting of notation it can be skipped by an experienced reader.

We begin by setting \newtext{the stage and introducing} some basic notation used throughout our work. We will denote complex vector spaces by $V$ or more explicitly by $\mathbb{C}^d$. We denote by $\mc{M}_d$ the vector space of complex linear transformation of $\mathbb{C}^d$ and by $\mc{S}_d$ the space of linear transformations of $\mc{M}_d$, often called super-operators. Here $d$ is an integer that in many cases can be thought of as being a power of two $(d=2^q)$, however, all theorems are valid for general $d$ unless explicitly stated. We will denote by $\Tr_{V}$ the partial trace over a tensor factor $V$ (of an implied tensor product space $V\otimes W$ for some $W$). {Finally we will denote the complex conjugate by a bar  (i.e., $\overline{A}$ is the entry-wise complex conjugate of $A$)}

\subsection{Quantum channels and the operator-matrix representation}\label{subsec:channels}

Unitary operations as they are generated by quantum gates -- in the focus of
attention in this work -- are quantum channels. Formally,
quantum channels are super-operators, that is elements of $\mc{S}_d$, that are trace preserving and completely positive.
In order to represent quantum channels (and elements of $\mc{S}_d$ more generally), we make use of the \emph{operator matrix representation}. Given a quantum channel $\mathcal E\in \mc{S}_d$, we can represent it as an element of $\mc{M}_{d^2}$ by choosing an orthonormal basis (with respect to the trace or Hilbert-Schmidt inner product) $\set{b_j}_{j=1}^{d^2}$ for $\mc{M}_d$. Thus $\mc{E}$ (abusing notation) is a $d^2 \times d^2$ matrix with components
\begin{equation}
	\mc{E}_{j,k} \coloneqq   \Tr \left[ b_j^\dagger \, \mathcal{E}(b_k) \right].
\end{equation}
Analogously, (density) matrices $\rho \in \mc{M}_d$ can be represented as vectors,
\begin{equation}
\kett{\rho} = \begin{pmatrix} \rho_1 \\ \rho_2 \\ \dots \\ \rho_{d^2} \end{pmatrix} \qquad \text{with} \qquad
\rho_k\coloneqq  \Tr \left[b_k^\dagger\, \rho \right] .
\end{equation}
Note that the action $\mathcal{E}(\rho)$ now corresponds to a matrix-vector multiplication $\mc{E} \kett{\rho}$ and the concatenation of two channels $\mathcal{E}$ and $\mathcal{E'}$ into a matrix multiplication $\mc{E}\mc{E}'$.
We can analogously write a (POVM element) matrix $\Pi \in \mc{M}_d$ as a co-vector
\begin{equation}
\braa{\Pi} = \left( \Pi_1 \ \Pi_2 \dots \Pi_{d^2} \right)
\qquad \text{with} \qquad
\Pi_k \coloneqq \langle\!\langle \Pi , b_k \rangle\!\rangle = \Tr \left[ \Pi \, b_k \right] .
\end{equation}
With this, the probability to obtain an outcome described by the POVM element $\Pi$ when measuring $\rho$ is $p(\Pi|\rho)= \langle\!\langle \Pi, \rho \rangle\!\rangle= \Tr [\Pi \rho ] $.

\subsection{Representations of groups}\label{subsec:rep_theory}
At the heart of our discussion will be notions of representations of groups. In this section, we will hence recall some basic facts about the representations of finite (and compact) groups over complex vector spaces, with a focus on their use in quantum computation. For a more in depth treatment of this topic we refer to refs.~\citep{goodman2000representations,fulton2013representation}. 
\newtext{
	We in this work restrict our attention to finite groups keeping the notation more concise.  
	Most results can be analogously stated for continuous, compact groups and derived following the same strategy. 
	Ref.~\cite{LinghangKongCompactGroups} carefully discusses the required modifications and gives explicite reformulations for continuous compact groups. 
}


\subsubsection{Representations}
Let $\gr$ be a finite group and consider the space $\mc{M}_d$ of linear transformations of $\mathbb{C}^d$. A representation $\omega$ is a map $\omega:\gr \rightarrow \mc{M}_d$ that preserves the group multiplication, i.e.,
\begin{equation}
\omega(g)\omega(h) = \omega(gh),\;\;\;\;\;\; \forall g,h \in \gr.
\end{equation}
We will require the operators $\omega(g)$ to be unitary as well (for finite groups this can always be done).


\subsubsection{Reducible and irreducible representations}
If there is a non-trivial subspace $W$ of $\mathbb{C}^d$ such that for all vectors $w\in W$ we have
\begin{equation}\label{eq:sub-rep}
\omega(g)w\in W,\;\;\;\;\;\;\forall g\in \gr,
\end{equation}
then the representation $\omega$ is called \emph{reducible}. The restriction of $\omega$ to the subspace $W$ is also a representation, which we call a \emph{sub-representation} of $\omega$. If there are no non-trivial subspaces $W$  such that
eq.\  (\ref{eq:sub-rep}) holds the representation $\omega$ is called \emph{irreducible}. We will generally reserve the letter $\sigma$ to denote irreducible representations.
Two representations $\omega,\omega'$ of a group $\gr$ are called \emph{equivalent} if there exists an invertible linear map $T$ such that
\begin{equation}\label{eq:equivalent}
T\omega(g) = \omega'(g)T,\;\;\;\;\;\;\;  \forall g\in \gr.
\end{equation}
We will denote this by $\omega\simeq \omega'$.
For finite groups $\gr$ the set of irreducible representations (up to the above equivalence) is finite. We will denote it by $\Irr$.

\subsubsection{Sums, products, and Maschke's Lemma}
We will make use of sums and products of representations.
Given representations $\omega, \omega'$, the maps
\begin{align}
&\omega\oplus\omega':\gr \to \mc{M}_d \oplus \mc{M}_{d'}:g \mapsto \omega(g)\oplus \omega'(g),\\
&\omega\otimes\omega':\gr \to \mc{M}_d \otimes \mc{M}_{d'}:g \mapsto \omega(g)\otimes \omega'(g),
\end{align}
are again representations. They are, however, generally not irreducible (even if $\omega$ and $\omega'$ are).
However, Maschke's Lemma ensures that every representation $\omega$ of a
group can be uniquely written as a direct sum of irreducible representations, that is
\begin{equation}\label{eq:Maschke_Lemma}
\omega(g) \simeq \bigoplus_{\lambda\in \Lambda} \sigma_\lambda(g)^{\oplus n_\lambda},\;\;\;\;\;\;\forall g\in \gr
\end{equation}
where the index set $\Lambda$ is a subset of the set $\Irr$ and $n_\lambda$ is an integer denoting the number of copies (or multiplicity) of $\sigma_\lambda$ present in $\omega$.

\subsubsection{Characters}
\emph{Characters} are a central object in representation theory, given by the trace of a representation.
\begin{definition}[Character of a representation]
	The character $\chi_\omega$ of a representation $\omega$ of a group $\gr$ is defined as
	\begin{equation}
	\chi_\omega (g) = \Tr [\omega (g) ] .
	\end{equation}
\end{definition}
One of the most important properties for characters of irreducible representations is the following orthogonality relation.
\begin{proposition}[Orthogonality formula]
	Let $\chi_\lambda,\chi_{\lambda'}$ be the characters of two irreducible representations $\sigma_\lambda,\sigma_{\lambda'}$ of a group $\gr$.
	Then
	\begin{equation}
	\frac{1}{|\gr|} \sum_{g \in \gr} \overline{\chi}_\lambda(g)  \chi_{\lambda'} (g)
	=
	\begin{cases}
	1 & \text{if } \sigma_{\lambda} \simeq \sigma_{\lambda'} \\
	0 & \text{if } \sigma_{\lambda} \not\simeq \sigma_{\lambda'}
	\end{cases}.
	\end{equation}
\end{proposition}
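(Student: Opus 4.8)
The plan is to obtain the character orthogonality as a corollary of the stronger \emph{Schur orthogonality relations} for the individual matrix coefficients of the $\sigma_\lambda$, which themselves follow from Schur's lemma combined with a group-averaging (projection) argument. First I would recall Schur's lemma in the relevant form: if $\sigma_\lambda,\sigma_{\lambda'}$ are irreducible and a linear map $X$ intertwines them, meaning $\sigma_{\lambda'}(h)X = X\sigma_\lambda(h)$ for all $h\in\gr$, then $X=0$ whenever $\sigma_\lambda\not\simeq\sigma_{\lambda'}$, while $X$ must be a scalar multiple of the identity when $\sigma_\lambda = \sigma_{\lambda'}$ (here one uses that $\CC$ is algebraically closed, so the intertwiner has an eigenvalue). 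Since the character $\chi_\lambda$ depends only on the equivalence class of $\sigma_\lambda$ (the trace is invariant under the conjugation in eq.~(\ref{eq:equivalent})), we may assume without loss of generality that two equivalent irreducibles are literally equal, leaving only the two cases $\sigma_\lambda = \sigma_{\lambda'}$ and $\sigma_\lambda\not\simeq\sigma_{\lambda'}$ to analyse.

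Next, for an arbitrary linear map $X$ between the two representation spaces I would form the averaged operator
\[
\widehat{X} := \frac{1}{|\gr|}\sum_{g\in\gr}\sigma_{\lambda'}(g)\,X\,\sigma_\lambda(g)^\dagger ,
\]
where unitarity lets me replace $\sigma_\lambda(g)^{-1}$ by $\sigma_\lambda(g)^\dagger$. A relabelling $g\mapsto hg$ of the summation index, together with $\sigma_\lambda(h^{-1})^\dagger = \sigma_\lambda(h)$, shows that $\sigma_{\lambda'}(h)\widehat{X} = \widehat{X}\sigma_\lambda(h)$ for every $h$, so $\widehat{X}$ is an intertwiner. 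Schur's lemma then forces $\widehat{X}=0$ in the inequivalent case, and $\widehat{X} = \frac{\Tr(X)}{d_\lambda}\1$ in the equal case, where $d_\lambda := \dim\sigma_\lambda$ and the constant is fixed by taking traces (note $\Tr\widehat{X} = \Tr X$ by cyclicity and unitarity).

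I would then specialise $X$ to the matrix units $E^{(jl)}$, with entries $(E^{(jl)})_{ab} = \delta_{aj}\delta_{bl}$, and read off the $(i,k)$ entry of the above identity. Using $[\sigma_\lambda(g)^\dagger]_{lk} = \overline{[\sigma_\lambda(g)]_{kl}}$ this yields the Schur orthogonality relations
\[
\frac{1}{|\gr|}\sum_{g\in\gr}[\sigma_{\lambda'}(g)]_{ij}\,\overline{[\sigma_\lambda(g)]_{kl}} = \frac{\delta_{\lambda\lambda'}}{d_\lambda}\,\delta_{ik}\,\delta_{jl}.
\]
Finally, writing $\chi_{\lambda'}(g)=\sum_i [\sigma_{\lambda'}(g)]_{ii}$ and $\overline{\chi}_\lambda(g)=\sum_k \overline{[\sigma_\lambda(g)]_{kk}}$ and contracting the indices as $j=i$, $l=k$ (then summing over $i$ and $k$) collapses the right-hand side to $\sum_{i,k}\delta_{ik}\delta_{ik}/d_\lambda = d_\lambda/d_\lambda = 1$ in the equal case and to $0$ in the inequivalent case, which is precisely the claimed value of $\tfrac{1}{|\gr|}\sum_g \overline{\chi}_\lambda(g)\chi_{\lambda'}(g)$.

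The only conceptually substantive ingredient is Schur's lemma; the rest is the averaging trick and index bookkeeping. Accordingly, the main thing to be careful about is the placement of the complex conjugation so that it matches the conjugated factor $\overline{\chi}_\lambda$ in the statement, which I handle by putting $\sigma_\lambda$ in the daggered slot of $\widehat{X}$ and repeatedly using unitarity in the form $\overline{[\sigma(g)]_{ab}} = [\sigma(g^{-1})]_{ba}$. A minor additional point is the reduction from equivalence to equality of irreducibles, justified by the invariance of characters under eq.~(\ref{eq:equivalent}).
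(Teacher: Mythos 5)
Your proof is correct. Note that the paper itself gives no proof of this proposition: it is stated as standard background in the preliminaries, with the reader implicitly deferred to the cited representation-theory references (Goodman--Wallach, Fulton--Harris). Your argument --- averaging an arbitrary linear map to manufacture an intertwiner, invoking Schur's lemma to force it to be $0$ or $\Tr(X)\1/d_\lambda$, specialising to matrix units to obtain the Schur orthogonality relations for matrix coefficients, and then contracting indices to recover the character formula --- is exactly the classical textbook derivation those references use, and your handling of the complex conjugation via unitarity ($\overline{[\sigma(g)]_{ab}}=[\sigma(g^{-1})]_{ba}$) and of the reduction from equivalence to equality of irreducibles is sound.
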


\subsubsection{Projections onto irreducible representation}

Given a representation $\omega = \bigoplus_{\lambda\in \Lambda} \sigma_\lambda^{\oplus n_\lambda}$ on a vector space $V_\omega = \bigoplus V_\lambda^{\oplus n_\lambda}$ we can choose a basis $\set{v_j^\lambda \mid j \in \set{1,\dots ,d_\lambda}}$ for each $V_\lambda$. Each vector $v$ in $V_\omega$ can thus be written as a linear combination $v = \sum_{\lambda\in \Lambda} \sum_{j=1}^{d_\lambda} c_j^\lambda \, v^\lambda_j$. We can conversely identify the basis vector components of any vector $v$ by application of an appropriate projection $\mathrm{P}_j^\lambda$, such that $\mathrm{P}_j^\lambda\, v = c_j^\lambda \, v_j^\lambda$, where
\begin{equation}\label{eq:projections_onto_irred_basis}
\mathrm{P}_j^\lambda
=
\frac{d_\lambda}{\left|\gr \right|} \sum_{g \in \gr} \big[\overline{\sigma}_{\lambda} (g)\big]_{{j,j}} \, \omega(g).
\end{equation}
Note that, in order to construct these projections, the knowledge of the diagonal elements of the corresponding irreducible representation $\sigma_\lambda$ is required. However,
it is also possible to project any vector onto distinct irreducible subspaces (up to multiplicity) by using only knowledge of the character of a representation:
\begin{equation}\label{eq:projections_onto_irred_subspace}
\mathrm{P}_\lambda
=
\frac{d_\lambda}{\left|\gr\right|} \sum_{g \in \gr}   \overline{\chi}_\lambda (g) \, \omega(g).
\end{equation}
This last formula follows simply from the definition of the character as $\chi_\lambda(g) = \tr( \sigma_\lambda(g))$.

\section{Fourier transforms and perturbation theory of implementation maps}\label{sec:fourier_and_pert}
 In this section, we review the concept of \emph{group implementation maps} and their
 \emph{Fourier theory} (section\ \ref{subsec:matrix_fourier}). Mathematically this corresponds to non-commutative harmonic analysis of matrix-valued functions. We also discuss perturbation theory  for non-normal matrices. This material is somewhat less well known, so we spend more time discussing these concepts.

\subsection{Implementation maps}\label{subsec:impl_maps}
Given a group $\gr$, we can assign quantum circuits (elements of $U(d)$) to each group element, which gives rise to a representation of the group. However, in practice quantum circuits will not be executed perfectly, but rather include noise. This noise can be modelled by a quantum channel, and we can thus envision assigning to each group element a quantum channel modelling the real implementation of that circuit. These quantum channels can be composed, but this composition will not necessarily maintain group structure and will thus in general not form a representation. However, we can define the more general concept of an `implementation map' $\phi$, which is a function from a finite group $\gr$ to the space of super-operators $\mc{S}_d$,
\begin{equation}
\phi:\gr \to \mc{S}_d\, ,
\end{equation}
where we will usually assume that $\phi(g)$ is a trace non-increasing quantum channel for all $g$. If we want to draw explicit attention to this fact we will call $\phi$ completely positive if and only if $\phi(g)$ is completely positive for all $g\in \gr$. Finally, note that if $\phi(g)\phi(h) = \phi(gh)$ for all $g,h \in \gr$ then $\phi$ would be a representation.
We can think of the implementation map as being an abstract presentation of the noisy implementation of the group elements, which depends on the noise processes in the quantum computer but also on other choices such as the compilation of circuits into elementary gates.

\subsection{Fourier transforms of implementation maps}\label{subsec:matrix_fourier}
When considering an implementation map one can ask precisely when it is a representation, and failing that, if it is close to a representation (in some reasonable way). To answer this question we need to introduce some mathematical machinery. This machinery was first introduced into the theory of \acf{RB} by ref.\ \citep{Merkel18}, based on work by Gowers \& Hatami \citep{GowersHatami}, which is itself a partial review of older mathematical work.
In this section, we will consider general maps $\phi$ from a group $\gr$ to a space of $d\times d$ matrices $\mc{M}_d$. Thinking of $\mc{S}_d$ as a matrix space, our notion of implementation map can be seen to be a special case of these maps. Given a map $\phi$ we define its \emph{Fourier transform} $\mc{F}(\phi)$ as
\begin{equation}
\mc{F}(\phi)[\sigma_\lambda]  = \gsum{g} \overline{\sigma_\lambda}(g)\otimes \phi(g)
\end{equation}
for all $\lambda \in \Irr$. So the Fourier transform $\mc{F}(\phi)$ is a function from the set $\Irr$ of irreducible representations of $\gr$ to a set of matrices. This definition has all the properties of a Fourier transform. Firstly, it has an inverse transform, which maps $\mc{F}(\phi)$ back to $\phi$, given by
\begin{equation}
\mc{F}^{-1}\big[\mc{F}(\phi)\big](g) = \sum_{\lambda \in \Irr} d_\lambda \tr_{V_\lambda} \big(\mc{F}(\phi)[\sigma_\lambda] \overline{\sigma_\lambda}(g^{-1})\otimes \1\big)
\end{equation}

 for all $g \in \gr$ and where $d_\lambda$ is the dimension of $V_\lambda$, the space on which the representation $\sigma_\lambda$ acts.

Secondly, it has the correct behaviour with respect to convolutions of implementation maps: the Fourier transform of a convolution corresponds to a product of Fourier transforms. Recalling the definition of a convolution of two implementation maps $\phi,\phi'$
\begin{equation}\label{eq:convolution}
\phi* \phi' (g) = \gsum{g'}\phi(gg'^{-1})\phi'(g')
\end{equation}
we can easily see the following
\begin{equation}
\mc{F}(\phi* \phi')[\sigma_\lambda] = \ggsum{g,g'} \overline{\sigma_\lambda}(g)\otimes \phi(gg'^{-1})\phi'(g') = \ggsum{g,g'} \overline{\sigma_\lambda}(gg')\otimes \phi(g)\phi'(g') = \mc{F}(\phi)[\sigma_\lambda] \mc{F}(\phi')[\sigma_\lambda]
\end{equation}
for all $\lambda \in \Irr$. Another useful property is the Parseval identity
\begin{equation}\label{Parseval_identity}
\gsum{g} \tr\big(\phi(g)\ct \phi'(g)\big) = \sum_{\lambda \in \Irr} d_\lambda \tr(\mc{F}(\phi)[\sigma_\lambda]\ct \mc{F}(\phi')[\sigma_\lambda]) .
\end{equation}
Finally, we note that the Fourier transform (evaluated at an irreducible representation) of a representation is an orthogonal projector with its rank given by the multiplicity of that irreducible representation. To see this, consider a representation $\omega = \bigoplus_{\lambda \in \Lambda} \sigma_\lambda^{\oplus n_\lambda}$. We have that
\begin{equation}\label{eq:FT_projection}
(\mc{F}(\omega)[\sigma_{\lambda'}])^2 = \frac{1}{|\gr|^2}\sum_{g,g'\in\gr} \overline{\sigma_{\lambda'}}(gg')\otimes \omega(gg') = \frac{|\gr|}{|\gr|^2} \sum_{g\in \gr}\overline{\sigma_{\lambda'}}(g)\otimes \omega(g) = \mc{F}(\omega)[\sigma_{\lambda'}]
\end{equation}
for all $\lambda' \in \Irr$. Moreover for $\lambda'\in \Lambda$ we have
\begin{equation}\label{eq:FT_projection_rank}
\tr(\mc{F}(\omega)[\sigma_{\lambda'}]) = \gsum{g}\overline{\chi_{\sigma_{\lambda'}}}(g)\chi_{\omega}(g) = n_{\lambda'}
\end{equation}
by the character orthogonality formula.
\subsubsection*{Fourier operators}
We also give another, useful way to think about the matrix Fourier transform, namely in terms of what we call {\it Fourier operators}.

Note that the set of maps $\phi \to \mc{S}_d$ can be seen as a vector space under point-wise addition (of the super-operators).
We can further lift this vector space to an algebra by considering the convolution operator $*$ (as defined in eq.\ (\ref{eq:convolution})) on the functions in the vector space. We can construct a faithful {(i.e., injective)} matrix representation of this algebra as
\begin{equation}\label{eq:fourier_operator}
F(\phi) = \frac{1}{|\gr|}\sum_{g\in \gr}\bigoplus_{\lambda\in \Irr} \overline{\sigma}_{\lambda}(g)\otimes \phi(g) = \frac{1}{|\gr|}\sum_{g\in \gr} \overline{\omega_{\gr}}(g)\otimes \phi(g),
\end{equation}
with $\omega_{\gr} = \bigoplus_{\lambda\in \Irr} \sigma_\lambda$.  This is just the Fourier transform of $\phi$ gathered in a direct sum (note that $\Irr$, and hence the sum, is finite for any finite group). By the Peter-Weyl theorem for finite groups one can equally well think of $\omega_\gr(g)$ as an element of the group algebra $\mathbb{C}[\gr]$ associated with $\gr$, we will not be using this point of view explicitly. We will call $F(\phi)$ the {\it Fourier operator} of $\phi$. From the properties of the Fourier transform we immediately see that $F(\phi)F(\phi') = F(\phi*\phi')$.  It will be useful to equip the algebra of Fourier operators with several norms, based on the diamond norm $\dnorm{\cdot}$ for $\mc{S}_d$ (In principle this construction will work for any norm on $\mc{S}_d$). We define
\begin{align}\label{eq:fourier_norms}
\norm{F(\phi)}_{\mathrm{max}} &= \max_{g\in \gr} \dnorm{\tr_{V_{\omega_{\gr}}}\big(D_{\omega_{\gr}}\overline{\omega_\gr}(g^{-1})\otimes \1 F(\phi)\big) } = \max_{g\in \gr} \dnorm{\phi(g)},\\
\norm{F(\phi)}_{\mathrm{m}} &= \frac{1}{|\gr|}\sum_{g\in \gr} \dnorm{\tr_{V_{\omega_{\gr}}}\big(D_{\omega_\gr}\overline{\omega_\gr}(g^{-1})\otimes \1 F(\phi) \big)} =\frac{1}{|\gr|}\sum_{g\in \gr} \dnorm{\phi(g)},
\end{align}
where $D_{\omega'} =  \bigoplus_{\lambda\in \Irr} d_{\lambda}\1_{\lambda}$ collects the relevant dimensional factors and where the second equality follows from the properties of the Fourier transform. These norms are bona fide matrix norms on the algebra of Fourier operators, notably they are sub-multiplicative viz.,
\begin{eqnarray}
\norm{F(\phi)F(\phi')}_{\mathrm{max}} &=& \norm{F(\phi*\phi')}_{\mathrm{max}}
= \max_{g \in \gr} \dnorm{\phi*\phi'(g)}\nonumber\\
&\leq& \max_{g \in \gr} \frac{1}{|\gr|} \sum_{\hat{g}\in \gr}\dnorm{\phi(g\hat{g}^{-1})}\dnorm{\phi'(\hat{g})} \nonumber\\
&\leq  &\max_{g,\hat{g} \in \gr} \dnorm{\phi(g)}\dnorm{\phi'(\hat{g})}\nonumber\\
&=&   \norm{F(\phi)}_{\mathrm{max}}\norm{F(\phi')}_{\mathrm{max}}
\end{eqnarray}
and similarly for $\norm{\cdot}_{\mathrm{m}}$. We also have an identity involving both norms
\begin{eqnarray}\label{eq:norm_id}
\norm{F(\phi)F(\phi')}_{\mathrm{max}} &=& \norm{F(\phi*\phi')}_{\mathrm{max}} = \max_{g \in \gr} \dnorm{\phi*\phi'(g)}\\
&\leq& \max_{g \in \gr} \frac{1}{|\gr|} \sum_{\hat{g}\in \gr}\dnorm{\phi(g\hat{g}^{-1})}\dnorm{\phi'(\hat{g})} \nonumber\\
&=&\norm{F(\phi)}_{\mathrm{max}}\norm{F(\phi')}_{\mathrm{m}}
\end{eqnarray}
that will be helpful later.

\subsection{Perturbation theory}\label{subsec:pert_theort}
In this section, we gather some technical tools from matrix perturbation theory that are essential to the many of the proofs in this paper. Our sources for this section are the standard books of Stewart and Sun \citep{StewartSun} and Kato \citep{KatoPTLO}. For the rest of this section, we will assume that $\norm{\cdot}$ denotes a sub-multiplicative matrix norm on $\mc{M}_d$, i.e.,
$\norm{AB}\leq \norm{A}\norm{B}$ for all $A,B\in \mc{M}_d$.

Let $A\in \mc{M}_d$ be a complex Hermitian matrix. Assume that there exists a unitary matrix $X = [X_1,X_2]$ such that the columns of $X_1$ and $X_2$ span invariant subspaces of $A$, that is
\begin{equation}
[X_1,X_2]\ct A [X_1,X_2] = \begin{pmatrix} A_1 & 0 \\ 0 & A_2\end{pmatrix}
\end{equation}
with $A_1 = X_1\ct A X_1$ and $A_2 = X_2\ct A X_2$. We call this a \emph{spectral resolution} of $A$. We can think of $A_1,A_2$ as the matrix $A$ restricted to subspaces of $\mathbb{C}^d$ spanned by the columns of $X_1,X_2$, respectively, and furthermore we assume that the eigenvalues of $A_1$ are all distinct from the ones of $A_2$: the subspaces are then said to be \emph{simple}. These subspaces are invariant under the action of $A$ in the sense that $AX_1 = X_1A_1$ and are hence called invariant subspaces.  It turns out that spectral resolutions, and invariant subspaces more generally, are stable against (small) perturbations. That is, given a perturbation matrix $E$ (not necessarily Hermitian) we can find matrices $R = [R_1,R_2]$ and $L = [L_1,L_2]$ such that $L\ct = R^{-1}$ and
\begin{equation}
[L_1,L_2]\ct (A+E) [R_1,R_2] = \begin{pmatrix} A'_1 & 0 \\ 0 & A'_2\end{pmatrix}
\end{equation}
for some $A'_1,A'_2$ and the matrices $R,L$ are close to $X$ in a well specified sense. This is what one would expect from a perturbation theorem. It, however,
only holds if the perturbation $E$ is small with respect to the difference between $A_1$ and $A_2$. This difference is made quantitative by the so-called separation function:
\begin{equation}
\mathrm{sep}(A_1,A_2) = \min_{Z\neq 0} \frac{\norm{A_1 Z - ZA_2}}{\norm{X_1 Z X_2\ct}}.
\end{equation}
This separation function has some rather nice properties. Firstly it is symmetric in its arguments:
\begin{align}
\mathrm{sep}(A_1,A_2) &= \mathrm{sep}(A_2,A_1)\, .
\end{align}
Secondly it is stable against perturbations, i.e., given a perturbation $A+ E$ of $A$ we have
\begin{equation}
|\mathrm{sep}(A_1+ E_1,A_2+ E_2) -\mathrm{sep}(A_1,A_2)| \leq \norm{E_1} + \norm{E_2}.
\end{equation}
With this function we can state the following theorem, which can be derived from theorem 2.8 in ref.\ \cite[p.~238]{StewartSun}.
\begin{theorem}[\citep{StewartSun}]\label{thm:subspace_pert}
Let $A$ be a complex Hermitian matrix with spectral resolution $\mathrm{diag}(A_1,A_2)$ induced by a unitary $X = [X_1,X_2]$. Also, let $\norm{\cdot}$ be a matrix norm. Now let $E$ be a complex matrix. If $E$ has the properties
\begin{align}\label{eq:pert_props}
\frac{\norm{X_1\ct EX_2} \norm{X_2\ct E X_1}}{\big( \mathrm{sep}(A_1,A_2)- \norm{X_1\ct E X_1} - \norm{X_2\ct E X_2}\big)^2 } &< \frac{1}{4},\\
\frac{\norm{X_1\ct EX_2} \norm{X_2\ct E X_1} +\norm{X_1\ct E X_2} \norm{X_1\ct E X_2} }{\big( \mathrm{sep}(A_1,A_2)- \norm{X_1\ct E X_1} - \norm{X_2\ct E X_2}\big)^2 } &< \frac{1}{2}
\end{align}
then there exist matrices $P_1,P_2$ such that
\begin{align}
\norm{P_1} &\leq \frac{\norm{X_2\ct E X_1}}{\mathrm{sep}(A_1,A_2) - \norm{X_1\ct E X_1} - \norm{X_2\ct E X_2}},\\
\norm{P_2} &\leq \frac{\norm{X_2\ct E X_1}}{\mathrm{sep}(A_1,A_2) - \norm{X_1\ct E X_1 + X_1\ct EX_2P_1} - \norm{X_2\ct E X_2- P_1X_1\ct EX_2}}
\end{align}
and
\begin{equation}\label{eq:matrix_2_blocks}
[L_1, L_2]\ct (A+E)[R_1,R_2] = \begin{pmatrix} A'_1 & 0 \\ 0 & A'_2\end{pmatrix}
\end{equation}
with
\begin{align}
[R_1,R_2] = [X_1,X_2] \begin{pmatrix} \1 & 0 \\ P_1 & I\end{pmatrix}\begin{pmatrix} \1 & P_2 \\ 0 & I\end{pmatrix}\, ,\\
[L_1,L_2]\ct = \begin{pmatrix} \1 & -P_2 \\ 0 & I\end{pmatrix}\begin{pmatrix} \1 & 0 \\ -P_1 & I\end{pmatrix}[X_1,X_2]\ct,
\end{align}
and  $A'_1 = A_1 + X_1\ct E X_1  - X_2\ct E X_1 P_1$ and $A'_2 = A_2 + X_2\ct E X_2  -P_1 X_1\ct E X_2$. Equivalently,
we have
\begin{equation}
A+E = R_1 A'_1 L_1\ct + R_2 A'_2 L_2\ct.
\end{equation}
\end{theorem}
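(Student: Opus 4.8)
The plan is to establish the result by the classical two-stage reduction underlying Stewart and Sun's invariant-subspace perturbation theory, verifying along the way that the two inequalities in eq.~(\ref{eq:pert_props}) are exactly what make each stage succeed. First I would pass to the orthonormal basis furnished by $X=[X_1,X_2]$. Because $X$ realizes the spectral resolution of $A$, setting $E_{ij}:=X_i\ct E X_j$ turns the perturbed matrix into the $2\times2$ block form
\[ X\ct(A+E)X = \begin{pmatrix} A_1+E_{11} & E_{12}\\ E_{21} & A_2+E_{22}\end{pmatrix}, \]
where $A$ itself is block diagonal with blocks $A_1,A_2$. The task then becomes block-diagonalizing this matrix by a similarity of the prescribed factored shape, reading off $A'_1,A'_2$ from the resulting diagonal blocks.

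Stage one is block triangularization. Conjugating by $\begin{pmatrix}\1 & 0\\ P_1 & \1\end{pmatrix}$ and requiring the $(2,1)$ block to vanish produces the algebraic Riccati equation $(A_2+E_{22})P_1 - P_1(A_1+E_{11}) - P_1 E_{12}P_1 + E_{21}=0$. I would solve this by the fixed-point iteration $P_1 \mapsto \mathcal{T}^{-1}\!\big(P_1 E_{12}P_1 - E_{21}\big)$, where $\mathcal{T}(Z)=(A_2+E_{22})Z - Z(A_1+E_{11})$ is a Sylvester operator. The definition of $\mathrm{sep}$ bounds the inverse of the unperturbed Sylvester operator by $1/\mathrm{sep}(A_1,A_2)$, and its stated stability estimate then gives $\norm{\mathcal{T}^{-1}}\leq \big(\mathrm{sep}(A_1,A_2)-\norm{E_{11}}-\norm{E_{22}}\big)^{-1}=:1/s$. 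Feeding this into the iteration, the governing inequality for the radius $r$ of an invariant ball is $\norm{E_{12}}\,r^2 - s\,r + \norm{E_{21}}\le 0$; this admits a solution precisely when its discriminant is nonnegative, which is exactly the first (the $1/4$) hypothesis of eq.~(\ref{eq:pert_props}), and its smaller root produces a bound on $\norm{P_1}$ of the asserted form. The diagonal blocks are thereby updated to $A'_1$ and $A'_2$.

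Stage two is block diagonalization. The matrix is now block triangular, so conjugating by $\begin{pmatrix}\1 & P_2\\ 0 & \1\end{pmatrix}$ and clearing the remaining off-diagonal block reduces to the \emph{linear} Sylvester equation $A'_1 P_2 - P_2 A'_2 = -E_{12}$, with no quadratic term. Its unique solution exists as soon as $\mathrm{sep}(A'_1,A'_2)>0$; since $A'_1,A'_2$ differ from $A_1,A_2$ only by the perturbations appearing in the denominator of the $\norm{P_2}$ bound, the stability of $\mathrm{sep}$ keeps that separation positive—this is the content of the second (the $1/2$) hypothesis—and yields the stated bound on $\norm{P_2}$. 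Composing the two similarities assembles
\[ R=[X_1,X_2]\begin{pmatrix}\1 & 0\\ P_1 & \1\end{pmatrix}\begin{pmatrix}\1 & P_2\\ 0 & \1\end{pmatrix}, \qquad L\ct=R^{-1}, \]
in the claimed form, and reading off the block-diagonal result gives eq.~(\ref{eq:matrix_2_blocks}); expanding $R\,\mathrm{diag}(A'_1,A'_2)\,L\ct$ in blocks gives the final identity $A+E=R_1A'_1L_1\ct+R_2A'_2L_2\ct$.

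I expect the main obstacle to be stage one: controlling the nonlinear Riccati equation so that the fixed-point map genuinely contracts on the claimed ball, which is where the interplay of $\mathrm{sep}$, the diagonal perturbations $E_{11},E_{22}$, and the quadratic term $P_1E_{12}P_1$ must be tracked carefully to reproduce the precise constants and denominators. Stage two, being linear, is routine by comparison. A shortcut—which I expect is what the authors intend, given the phrasing just before the statement—is to invoke Stewart and Sun's Theorem~2.8~\cite{StewartSun} directly and merely reconcile conventions, chiefly checking that $\mathrm{sep}$ as defined here (with $\norm{X_1 Z X_2\ct}$ in the denominator) coincides with theirs and that their block-diagonalizing transformation specializes to the factored $R,L$ above.
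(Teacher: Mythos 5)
Your proposal is correct and takes essentially the same route as the paper: the paper likewise performs the two-stage reduction, invoking Stewart and Sun's Theorem~2.8 first to obtain $P_1$ (with the $1/4$ condition), and then a second time on the block-triangular matrix viewed as a perturbation of $\mathrm{diag}(A'_1,A'_2)$ — where the vanishing $(2,1)$ block makes the Riccati equation degenerate to exactly your linear Sylvester equation — to obtain $P_2$, with the $1/2$ condition guaranteeing $\mathrm{sep}(A'_1,A'_2)>0$ via the stability and symmetry of $\mathrm{sep}$. The only difference is presentational: you re-derive the Riccati fixed-point argument that the paper simply cites as Theorem~2.8, a shortcut you yourself anticipate at the end.
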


\begin{proof}
From the first property in (\ref{eq:pert_props}), and theorem 2.8 in ref.\ \cite{StewartSun} we conclude the existence of a matrix $P_1$ s.t.
\begin{equation}
\norm{P_1} \leq \frac{\norm{X_2\ct E X_1}}{\mathrm{sep}(A_1,A_2) - \norm{X_2\ct E X_2} - \norm{X_1\ct E X_1}},\\
\end{equation}
and
\begin{equation}
\begin{pmatrix} \1 & 0 \\ -P_1 & I\end{pmatrix}[X_1,X_2]\ct(A+E)[X_1,X_2] \begin{pmatrix} 1\ & 0 \\ P_1 & I\end{pmatrix} = \begin{pmatrix} A'_1 & E_{12} \\ 0 & A'_2\end{pmatrix}
\end{equation}
with $E_{12} = X_1\ct EX_2$ and $A'_1 = A_1 + X_1\ct E X_1  - X_2\ct E X_1 P_1$ and $A'_2 = A_2 + X_2\ct E X_2  -P_1 X_1\ct E X_2$.  Now considering the above as a perturbation of $A' = \begin{pmatrix} A'_1 & 0 \\0  & A'_2\end{pmatrix}$ we can apply theorem 2.8 from ref.\ \cite{StewartSun} again so long as
\begin{equation}
\mathrm{sep}(A'_2,A'_1) >0 .
\end{equation}
Using the stability and symmetry of the $\mathrm{sep}$ function a necessary condition for the above is
\begin{equation}
\mathrm{sep}(A_1,A_2) - \norm{X_1\ct EX_1 + X_1\ct EX_2P} - \norm{X_1\ct EX_1 -PX_1\ct E X_2}>0
\end{equation}
which, by sub-multiplicativity and the norm bound on $P_1$ is true if the second property in eq.\  (\ref{eq:pert_props})
holds. Hence theorem 2.8 in ref.\ \cite{StewartSun} provides for the existence of a $P_2$ with norm bound
\begin{align}
\norm{P_2} &\leq \frac{\norm{X_2\ct E X_1}}{\mathrm{sep}(A_1,A_2) - \norm{X_2\ct E X_2 + X_1\ct EX_2P_1} - \norm{X_2\ct E X_2- P_1X_1\ct EX_2}}
\end{align}
and the property that
\begin{equation}
[L_1, L_2]\ct (A+E)[R_1,R_2] = \begin{pmatrix} A'_1 & 0 \\ 0 & A'_2\end{pmatrix}
\end{equation}
with
\begin{align}
[R_1,R_2] &= [X_1,X_2] \begin{pmatrix} \1 & 0 \\ P_1 & I\end{pmatrix}\begin{pmatrix} \1 & P_2 \\ 0 & I\end{pmatrix},\\
[L_1,L_2]\ct &= \begin{pmatrix} \1 & -P_2 \\ 0 & I\end{pmatrix}\begin{pmatrix} \1 & 0 \\ -P_1 & I\end{pmatrix}[X_1,X_2]\ct.
\end{align}
\end{proof}
We note that in eq.\ (\ref{eq:pert_props}) the first property implies the second if $\norm{X_1\ct EX_2}\leq  \norm{X_2\ct EX_1} $.

While eigenvalues and invariant subspaces are stable under small perturbations (as discussed above), that is, they are holomorphic functions with respect to analytic perturbations, the same is not true for eigenvectors. This is due to the fact that a vector basis spanning a multi-dimensional eigenspace is not uniquely determined, and thus the eigenvectors of the perturbed eigenspace may be completely different from the unperturbed basis. However, if an unperturbed eigenvalue $a_1$ is simple, the related eigenvector $x_1$ is unique (up to a scalar factor), and it is thus stable. We can make this more explicit by specializing theorem \ref{thm:subspace_pert} to simple invariant subspaces of dimension one.
Let us again consider an hermitian matrix $A$ and adopt a unitary basis transformation $X=[x_1,X_2]$ so that
\begin{equation}
	[x_1,X_2]^\dagger A [x_1,X_2]
	=
	\begin{pmatrix}
	a_1 & 0 \\
	0 & A_2
	\end{pmatrix} ,
\end{equation}
where $A_2 \in \mc{M}_{d-1}$.
In this specific setting, the separation function becomes~\citep{Stewartselected}
\begin{equation}\label{eq:scalar_sep}
\mathrm{sep}(a_1,A_2) = \norm{\left(a_1 \1-A_2\right)^{-1}}^{-1}.
\end{equation}
From Theorem~\ref{thm:subspace_pert}, we then have \newtext{the following}.
\begin{corollary}[Perturbation of a 1-dim simple subspace]\label{thm:eigenvector_perturbation}
	The left and right perturbed eigenvectors originated from $x_1$ are
	\begin{equation}
	r_1 \approx x_1 +  X_2\, \left(a_1 \1-A_2\right)^{-1} \, X_2^\dagger E x_1 \qquad \text{and} \qquad
	\ell_1^\dagger \approx x_1^\dagger+ x_1^\dagger E X_2\, \left(a_1 \1-A_2\right)^{-1} \, X_2^\dagger ,
	\end{equation}
	where we neglect terms $O(\norm{E}^2)$.
\end{corollary}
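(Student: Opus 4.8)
The plan is to specialize Theorem~\ref{thm:subspace_pert} to the case where the first invariant subspace is one-dimensional, spanned by $x_1$, and then expand the resulting expressions to first order in $E$. Concretely, I would set $X_1 = x_1$ (a single column) and $A_1 = a_1$ (a scalar), so that the perturbation matrices $P_1, P_2$ become a column vector and a row vector respectively. The right and left perturbed eigenvectors are then read off as $r_1 = R_1$ and $\ell_1^\dagger = L_1^\dagger$, which by the theorem equal the first column of $[X_1,X_2]\begin{pmatrix}\1 & 0 \\ P_1 & I\end{pmatrix}\begin{pmatrix}\1 & P_2 \\ 0 & I\end{pmatrix}$ and the first row of $\begin{pmatrix}\1 & -P_2 \\ 0 & I\end{pmatrix}\begin{pmatrix}\1 & 0 \\ -P_1 & I\end{pmatrix}[X_1,X_2]^\dagger$, respectively.

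First I would extract $P_1$ and $P_2$ at leading order. From the definition in Theorem~\ref{thm:subspace_pert}, $P_1$ solves a Sylvester-type equation whose leading-order solution (dropping the $O(\norm{E})$ corrections appearing inside the separation denominator) is $P_1 \approx (a_1\1 - A_2)^{-1} X_2^\dagger E x_1$, using that $\mathrm{sep}(a_1,A_2) = \norm{(a_1\1 - A_2)^{-1}}^{-1}$ as recorded in eq.~(\ref{eq:scalar_sep}) and that the governing equation is $A_2 P_1 - P_1 a_1 = -X_2^\dagger E x_1$ to lowest order. Symmetrically, $P_2 \approx x_1^\dagger E X_2 (a_1\1 - A_2)^{-1}$ at leading order. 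These are the only genuine inputs; everything else is bookkeeping.

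Next I would carry out the matrix multiplications defining $R_1$ and $L_1^\dagger$. Multiplying out the block factors, the first column of the right transformation is $x_1 + X_2 P_1 + O(\norm{E}^2)$ (the $P_2$ block contributes only to the second column at this order, and cross terms are quadratic), giving $r_1 \approx x_1 + X_2 (a_1\1 - A_2)^{-1} X_2^\dagger E x_1$. Likewise the first row of the left transformation is $x_1^\dagger - P_2 X_2^\dagger + O(\norm{E}^2)$, which yields $\ell_1^\dagger \approx x_1^\dagger + x_1^\dagger E X_2 (a_1\1 - A_2)^{-1} X_2^\dagger$. Matching these against the claimed formulas completes the derivation.

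The main obstacle is bookkeeping rather than any deep difficulty: one must be careful that the separation denominators in the bounds on $\norm{P_1},\norm{P_2}$ reduce, at leading order, exactly to $\mathrm{sep}(a_1,A_2)^{-1}$ factors that reconstitute $(a_1\1 - A_2)^{-1}$, and that the $O(\norm{E}^2)$ terms genuinely collect all the discarded contributions (the $P_1 P_2$ cross terms in the block product, and the first-order corrections hidden in the exact $P_1,P_2$). I would therefore state at the outset that the expansion is to first order in $\norm{E}$ and verify once that the Sylvester equation for $P_1$ linearizes to $A_2 P_1 - a_1 P_1 = -X_2^\dagger E x_1$, whose solution is the stated resolvent expression since $a_1 \notin \spec(A_2)$ by simplicity.
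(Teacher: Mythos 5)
Your overall strategy is exactly the paper's: Corollary~\ref{thm:eigenvector_perturbation} is intended as a direct specialization of Theorem~\ref{thm:subspace_pert} to a one-dimensional simple subspace, with $r_1$ and $\ell_1^\dagger$ read off as the first column of $[R_1,R_2]$ and the first row of $[L_1,L_2]^\dagger$; the paper itself gives no more detail than this. Your right-eigenvector half is correct: $R_1 = X_1 + X_2P_1$ holds exactly (the $P_2$ factor never enters the first column, not merely to first order), and the linearized Sylvester equation $(a_1\1 - A_2)P_1 = X_2^\dagger E x_1$ gives $P_1 \approx (a_1\1-A_2)^{-1}X_2^\dagger E x_1$, hence the stated formula for $r_1$.

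However, your expression for $P_2$ has the wrong sign, and as written the left-eigenvector derivation is internally inconsistent. The two Sylvester equations are \emph{not} symmetric: eliminating the $(2,1)$ block of $[X_1,X_2]^\dagger(A+E)[X_1,X_2]$ linearizes to $(a_1\1 - A_2)P_1 = +X_2^\dagger E x_1$, whereas eliminating the $(1,2)$ block linearizes to $a_1P_2 - P_2A_2 = -x_1^\dagger E X_2$, i.e.\ $P_2(a_1\1 - A_2) = -x_1^\dagger E X_2$, so that $P_2 \approx -\,x_1^\dagger E X_2\,(a_1\1 - A_2)^{-1}$. With this (correct) sign, the first row of the left transformation, $\ell_1^\dagger \approx x_1^\dagger - P_2X_2^\dagger$, reproduces the corollary's $x_1^\dagger + x_1^\dagger E X_2(a_1\1-A_2)^{-1}X_2^\dagger$. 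With the sign you wrote, the same bookkeeping would yield $x_1^\dagger - x_1^\dagger E X_2(a_1\1-A_2)^{-1}X_2^\dagger$, contradicting both your own stated conclusion and the corollary (whose signs can be verified independently by imposing $\ell_1^\dagger(A+E) = (a_1+\delta a)\ell_1^\dagger$ with $\ell_1^\dagger = x_1^\dagger + d^\dagger X_2^\dagger$ and multiplying on the right by $X_2$). The fix is one line, but since you flagged sign bookkeeping as the main hazard of this proof, it is precisely the step that needed to be done carefully.
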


Finally, to analyze perturbations of eigenvalues, we will make use of the Bauer-Fike Theorem~\cite[theorem 1.6]{StewartSun}: let $A$ be diagonalizable such that $S^{-1}AS= \mathrm{diag}  ((a_j)_j)$ and let $E$ be an arbitrary operator of same dimension. Then, for any eigenvalue $\tilde a$ of $A+E$, the bound
\begin{equation}
\abs{\tilde a - a_j} \leq \normn{S}\normn{S^{-1}}\normn{E}
\end{equation}
is satisfied for some eigenvalue $a_j$ in any vector-induced norm. This implies that, if $A$ is Hermitian, then
\begin{equation}\label{bound:eigval_perturbation}
\abs{\tilde a - a_j} \leq\normn{E}.
\end{equation}
\section{The randomized benchmarking protocol}\label{sec:rand_bench}

The name \acf{RB} is conventionally given to a class of methods that assess the quality of a set of quantum gates. These methods are probabilistic, and can be seen as constructing an estimator for a quantity that captures some notion of gate quality. In this section, we will make an attempt at \emph{defining} \acl{RB}. By this we mean that we will attempt to organize and make explicit various ideas that have been present in the literature. We begin (in section \ref{subsec:structure}) by dividing \ac{RB} into two parts: a data collection phase and a data processing phase. These correspond roughly to the parts of \ac{RB} performed on a quantum computer and on a classical computer, respectively. Within this division we focus first on the data collection phase. In sections \ref{subsec:RB_input} and \ref{subsec:rb_protocol}, we give a general protocol for the data collection phase of \ac{RB}. This general protocol depends on a number of input parameters, and we can obtain every known \ac{RB} protocol from a choice of these input parameters. We complement this protocol with a classification of \ac{RB} protocols into a few types in section \ref{subsec:rb_typology}. This classification, which pertains only to the data collection phase of \ac{RB} is largely a formalization of knowledge implicit in the literature but we will see that it is a useful organizing tool when proving theorems about the data generated by \ac{RB}. This data we discuss in section \ref{subsec:output_data}.

We note that the output of \ac{RB} data is assumed to be of a very particular form, namely that of a
linear combination of (matrix) exponential decays. However, this form is incumbent upon assumptions on the quantum computer on which (the data collection phase of ) \ac{RB} is implemented. We discuss what assumptions have been made before in the literature and propose our own set of assumptions, which we justify later in the text.

\subsection{The data collection and data processing phases}\label{subsec:structure}

 {\ac{RB} is composed of two major parts}, a {\it data collection phase} and a {\it data processing phase}. The data collection phase consists of what one typically thinks of as \ac{RB}: one randomly selects a sequence of quantum gates and applies them to a quantum state together with a global inverse, and measures the resulting state. Averaging over many random choices of these gates one obtains \ac{RB} output data that depends on the length of the random sequence in a controlled way. This vague description can be made more precise in many different ways and we will provide a general framework for this procedure in the next few subsections.

 The data processing phase on the other hand consists of what one then does with the data given by a \ac{RB} experiment. This can be as simple as fitting the data to an exponential decay, but in many cases also involves more sophisticated processing techniques. The key feature of the \ac{RB} protocol that allows for a structured approach to data processing is the fact that the \ac{RB} output data has a very controlled form. We will discuss this form in section\ \ref{subsec:output_data} after more formally discussing the data collection phase of \ac{RB}.

\subsection{Input parameters}\label{subsec:RB_input}
The data collection phase of a \ac{RB} procedure is characterized by a set of input parameters. These input parameters fully define a protocol (which we write down in section \ref{subsec:rb_protocol}) that can be executed on a quantum computer, yielding probabilistic data that can then be interpreted.  Below is a list of all input parameters to \ac{RB}, together with an explanation and examples of choices for these parameters that correspond to versions of \ac{RB} present in the literature.

\begin{enumerate}
	\item {\bf A gate set/group:} A finite set of unitaries (quantum gates) on $\mathbb{C}^d$. In (almost) all \ac{RB} protocols this gate set is also a finite subgroup $\gr\subset U(d)$ of the unitary group. In a large section of the \ac{RB} literature the group considered is the $q$-qubit Clifford group $\mathbb{C}_q$, but a range of other choices (such as the Pauli group $\mathbb{P}_q$ \citep{CycleBenchmarking}, the real Clifford group \citep{hashagen2018real} or the $\mathrm{CNOT}$-dihedral group \citep{carignan2015characterizing,cross2016scalable}) are possible. Choosing a group fixes what gates \ac{RB} assesses the quality of and partially determines the structure of the output data. {In generator-style \ac{RB}~\cite{proctor2019direct,francca2018approximate} this group is defined implicitly by the set of generators.}
	\item {\bf A reference implementation/representation:} A map $\phi_{r}$ from the gate set/group $\gr$ to the $d$-dimensional super-operators that specifies how the gates in $\gr$ should be implemented in the quantum computer. This map takes into account aspects of the specific \ac{RB} protocol but also how gates are composed of elementary gates and other implementation details. In uniform \ac{RB} the map $\phi_r$ is a representation of the group $\gr$ on $\mc{S}_{d}$. {The prototypical example is the action on
the space of Hermitian matrices $\rho$ by conjugation, i.e., $\phi_r(g) (\rho) = \omega(g)(\rho) = U_g\rho U_g\ct$.} In general, however, the reference implementation $\phi_r$ is not a representation, though we will see that for any known \ac{RB} procedure the reference implementation can be written as $\phi_r(g) = \mc{A}\omega(g)\mc{B}$, where $\mc{A},\mc{B}$ are (unitary) quantum channels. We will refer to $\omega$ as \emph{the reference representation}.
	
	\item {\bf An ending gate:} A group element $\gend$ that dictates the global action of an \ac{RB} sequence. For most proposals this gate is simply the identity, but in other proposals non-trivial choices for $\gend$ (such as choosing it uniformly at random~\citep{helsen2019new,carignan2015characterizing,CycleBenchmarking,flammia2019efficient}) play an essential role in data-processing schemes. This ending gate also allows us to include \ac{RB} schemes that do not involve an inversion gate \citep{dirkse2019efficient,wallman2015estimating,WallmanEtAl:2016:Leakage,wood2018quantification}. { We emphasize that it is not necessary to implement this gate physically, but rather it arises from compilation.}
	\item {\bf A set of sequence lengths:}  A set of integers $\md{M}$ denoting the length of the random sequences of gates implemented in a \ac{RB} experiment. We will denote elements of this set by $m$ and the largest element of this set by $M$.
	\item {\bf An input state:} A state $\rho_0$ that is prepared at the beginning of an \ac{RB} experiment. This state will typically be a pure state (such as the $\ket{0,\ldots, 0}$ state vector), but is chosen mixed in some versions of \ac{RB}~\citep{helsen2019multiqubit}.
	\item {\bf An output POVM:} A POVM that is measured at the end of an \ac{RB} experiment. We will denote this POVM as $\{
	\Pi_i\}_{i\in I}$ with some index set $I$. In many cases this is a two-component POVM , but some \ac{RB} procedures explicitly call for more complex measurements (such as a computational basis measurement \citep{arute2019quantum}).
	\item {\bf A set of sampling distributions:} A set of probability distributions $\nu_i$ for $i \in \{1,\ldots , M\}$ over the group $\gr$ that govern the random sampling of group elements in \ac{RB}. We will often consider the scenario where all these probability distributions are the same, in which case we will drop the subscript $i$ and just write $\nu$ for \emph{the} probability distribution . Moreover, in almost all instances in the literature this distribution is uniform, i.e., $\nu(g) = 1/ |\gr|$, and unless stated explicitly we will always assume this to be the case.
\end{enumerate}

\subsection{The data collection protocol}\label{subsec:rb_protocol}
Given the input parameters discussed above we can write down a formal procedure for the data collection phase of \ac{RB}. It has as output an estimator $\hat{p}(i,m)$ of a probability $p(i,m)$ for each POVM element $\Pi_i$ for $i\in I$ and  each sequence length $m\in \md{M}$.

\begin{center}
{\setlength{\fboxsep}{5pt}

\framebox{
\begin{minipage}[t]{0.9\columnwidth}
\centering
\begin{algorithm}[H]\label{alg:dataCollectionPhase}
\SetAlgoLined
\For{$m\in \md{M}$}{
	Prepare the initial state $\rho_0$\\
	\For{$i \in \{1,\ldots, m\}$}{
 	Choose $g_i$ at random from $\gr$ according to the measure $\nu_i$\\
 	Apply $\phi_r(g_i)$ to the state\\
 	}
 Compute the global inverse $g_{\mathrm{inv}} = (g_m\ldots g_1)^{-1}$\\
 Apply $\phi_r(\gend g_{\mathrm{inv}})$ to the state\\
 Measure the state in the POVM $\{\Pi_i\}_{i\in I}$\\
 Repeat the above many times to obtain estimators $\hat{p}(i ; g_1, \ldots , g_m)$ for the probabilities
 \begin{equation*}
 p(i ; g_1, \ldots , g_m) = \tr \big( \Pi_i \phi_r(\gend g_{\mathrm{inv}})\phi_r(g_m)\cdots \phi_r(g_1)(\rho_0)\big)
 \end{equation*}\\
 Repeat for many random $g_1, \ldots, g_m$ and average to obtain estimators $\hat{p}(i,m)$ for the probabilities
 \begin{equation*}
 p(i , m) = \sum_{g_1,\ldots, g_m\in \gr}\nu_1(g_1)\ldots \nu_m(g_m) p(i ; g_1, \ldots , g_m)
 \end{equation*}
 }
 Output the estimators $\hat{p}(i,m)$ for all $i\in I, m\in \md{M}$\\
 \caption{\Acl{RB} (data collection phase)}\label{prot:rand_bench}
\end{algorithm}
\end{minipage}
}}
\end{center}

Note that the probabilities $p(i,m)$ depend in a non-trivial manner on the initial state $\rho_0$, the POVM $\{\Pi_i\}_{i\in I}$ and the ending gate $\gend$. We will, however,
suppress this dependence unless it is explicitly necessary to refer to it.

\subsection{A typology of \acl{RB} protocols}\label{subsec:rb_typology}

Given protocol\ alg.~\ref{prot:rand_bench}, different choices of the parameters discussed in section\ \ref{subsec:RB_input} give rise to different \ac{RB} procedures. More strongly, (the data collection phases of) all variants of \ac{RB} currently in the literature can be expressed by choosing these input parameters correctly. Surveying the literature we can distinguish $3$ major types that are differentiated by their reference implementations and sampling distributions. The output data associated with these classes of protocols has varying behaviour and we will treat each class separately in section\ \ref{sec:output_data}. All protocols included in these classes can be found in fig.\ \ref{fig:results overview} (here we will only give illustrative examples).
\begin{enumerate}
	\item {\bf Uniform \acl{RB}: } This is the basic type of \ac{RB}. It is characterized by the fact that the probability distributions $\nu_i$ are the uniform distribution for all $i\in \{1, \ldots, m_\mathrm{max}\}$, and that the reference implementation map $\phi_r$ is exactly a representation $\omega$, usually the standard action by conjugation given by $\omega(g) = \phi_r(g)(\rho) = U_g\rho U_g\ct$ for unitaries $U_g$ (other choices have been made in~\citep{wood2018quantification,wallman2016robust}). Randomized benchmarking proposals of this type are mainly distinguished by what group $\gr$ they consider as a gate-set (at least when it comes to the data collection phase, different proposals in this class might have radically different data processing procedures.) Protocols of this type include the original \ac{RB} proposals \citep{FirstRB,magesan2011gate} and many others.
	\item {\bf Non-uniform \acl{RB}: } The defining feature of this class is that the sampling distributions $\nu_i$ are not the uniform distribution. It comes in two flavours, which we will discuss separately:
	\begin{enumerate}
	 	\item {\bf Subset \ac{RB}:} Here, the distributions $\nu_i$ are far from uniform (and typically only have support on a small subset of the group $\gr$). Examples from the literature are refs.\ \citep{proctor2019direct,francca2018approximate,helsen2019new,ryan2009randomized}.
	 	\item {\bf Approximate \ac{RB}:} Here the $\nu_i$ are close to uniform. This latter class will turn out to be essentially the same as uniform \ac{RB}. This class has been discussed  in ref.\ \cite{francca2018approximate} and also arises in the original `NIST' \ac{RB} proposal~\citep{knill2008randomized} (as per the analysis of ref.\ \cite{boone2019randomized}).
	\end{enumerate}
	In all works of this type so far the reference implementations are representations (akin to uniform \ac{RB}).
	\item {\bf Interleaved \acl{RB}: }
	This class of \ac{RB} protocols is characterized by the addition of an extra `interleaving gate' in the \ac{RB} procedure. This is a class that is somewhat idiosyncratic, having one standard subtype and a collection of `non-standard'
	protocols:
	\begin{enumerate}
	\item {\bf Standard interleaved \acl{RB}:} In this class the interleaving gate is an element of the benchmarked group $\gr$. In this case we find that it is most useful to interpret interleaved \ac{RB} as uniform \ac{RB}, with the reference implementation a representation $\omega$, but with the probability distributions $\nu_i$ uniform for even $i$ and peaked on a single group element (the interleaving gate) for odd $i$. We will consider this in more detail in section\ \ref{subsec:interleaved}. The paradigmatic example is ref.\ \citep{magesan2012efficient}, but nearly all uniform \ac{RB} protocols have an interleaved version.
	\item {\bf Non-standard interleaved \acl{RB}:} These protocols are characterized by the addition of interleaving gates that are not part of the group $\gr$ as well as non-uniform sampling distributions. We will discuss these protocols on a more case by case basis in section\ \ref{subsec:interleaved}.
	\end{enumerate}
\end{enumerate}

\subsubsection{Protocols without inversion gates}
A number of \ac{RB} protocols have been developed that do not feature an inversion gate $g_{\mathrm{inv}}$. These protocols are indicated with a $**$ in fig.\ \ref{fig:results overview}. While not immediately obvious, these protocols are actually covered by the general procedure written down in alg.~\ref{prot:rand_bench}. We can think of these protocols as choosing the ending gate $\gend$ at random for each experimental run and averaging over the results. Because of the invariance of uniform group averages this is equivalent to not including an inversion gate and ending the protocol on a random group element. In section \ref{sec:new_crb} we will see that protocols without inversion gate can be seen as a special case of a general post-processing scheme for \ac{RB} data.

\subsection{Output data}\label{subsec:output_data}
There is a folkloric notion that the output data of \ac{RB} has an exponential dependence on the sequence length, with the rate of decay dependent only on the implementation $\phi$ of the gates in $\gr$. This was first established to be true for uniform \ac{RB} (in our typology) with the unitary and Clifford groups where, under certain assumptions (see section \ref{subsec:assumptions}) on the quantum computer implementing operations, one can prove that $p(i,m) = Af^m + B$ where $f$ only depends on the implementation map $\phi$ and $A,B$ are constants depending on SPAM. However, if the group $\gr$ was not the Clifford group it was found that the \ac{RB} output data did not follow a single exponential decay but rather was of the form $p(i,m) = \sum_\lambda A_\lambda f_\lambda^m$ with the decay constants $f_\lambda$ depending only on the implementation of the quantum operations and associated with the irreducible sub-representations of the reference representation $\omega$.

However, this functional form is only valid if the reference representation $\omega$ has no multiplicities (no irreducible sub-representation occurs more than once), and hence does not describe all possible \ac{RB} experiments. In this paper we will argue that for a general reference representation of the form $\omega = \bigoplus_{\lambda\in \Lambda}\sigma_\lambda^{\oplus n_\lambda}$ for $\Lambda \subset \Irr$ \ac{RB} data takes the form
\begin{equation}\label{eq:rb_data_decay}
p(i,m) \approx \sum_{\lambda\in \Lambda} \tr(A_{\lambda} M^m_\lambda)
\end{equation}
where $M_\lambda$ is an $n_\lambda\times n_\lambda$ real matrix that only depends on the implementation $\phi$ and $A_{\lambda}$ is an $n_\lambda\times n_\lambda$ matrix encoding SPAM behaviour. Note that the matrices $M_\lambda$ are not required to be normal, or even diagonalizable. This means that $p(i,m)$ can appear to be strikingly non-exponential (at least if $m$ is fairly small) unless $\omega$ is known to be multiplicity-free. We will discuss this in greater detail in section\ \ref{sec:data_processing} when we discuss general fitting procedures.

\subsection{Assumptions}\label{subsec:assumptions}
The functional form of \ac{RB} output data given in eq.\  (\ref{eq:rb_data_decay}) does not immediately follow from the specification of the protocol in alg.\ \ref{prot:rand_bench}. Rather it must be derived based on assumptions on the behaviour of the operations being performed inside the quantum computer. Here we give a run-down of assumptions that are made throughout the literature, and which we will make in order to derive eq.~\eqref{eq:rb_data_decay}. The assumptions we will make are not the most general possible that still lead to eq.\ (\ref{eq:rb_data_decay}), but we attempted to strike a balance between generality and operational motivation. In the list we will point out where assumptions can be generalized and refer to work where this is done (for some versions of \ac{RB}).

\begin{itemize}
	\item {\bf State preparation and measurement consistency: } We assume that the initial state $\rho_0$ and the measurement POVM $\{\Pi_i\}_{i\in I}$ are always prepared in the same manner, independently of the gates being implemented. Slightly stronger, we will assume the existence of quantum channels $\mc{E}_{\mathrm{SP}}$ and $\mc{E}_{\mathrm{M}}$ such that the implemented initial state is given by $\mc{E}_{\mathrm{SP}}(\rho_0)$ and the elements of the implemented measurement POVM are given by $\mc{E}_{\mathrm{M}}(\Pi_i)$. This assumption is made throughout the \ac{RB} literature.

	\item {\bf Markovianity and time-independence:} We assume that the implementation of a gate $g\in \gr$ is always the same, independently of when it is performed in the \ac{RB} protocol and independently of its context (the gates being performed before and after). This assumption leads to the concept of an implementation map $\phi:\gr \to \mc{S}_d$ which assigns to each group element $g$ a completely positive super-operator $\phi(g)$ modelling the actual implementation of the gate.

	\begin{itemize}

		\item This assumption is not always justified, as the implementation of a gate can in principle depend on, e.g., the gates being implemented before it or the amount of time elapsed in the protocol. It can also depend on external uncontrolled variables (either deterministic or random). In  ref.\  \cite{wallman2014randomized},
		a model of time dependence has been considered and in refs.\  \cite{epstein2014investigating,fong2017randomized,fogarty2015nonexponential} the effect of gate-correlations and certain uncontrolled variables such as quasi-static noise were investigated. In all of these scenarios, however,
		the exponential behaviour of eq.\ (\ref{eq:rb_data_decay}) breaks down. {It might be  possible to derive assumptions beyond the setting of Markovian time-independence that lead to output data of the correct form, but we will not pursue this here. }
	\end{itemize}

	\item {\bf Closeness to reference implementation:} In order to derive eq.\ (\ref{eq:rb_data_decay})
	we must make additional assumptions on the implementation map $\phi$. We will assume that
	\begin{equation}
	\frac{1}{|\gr|}\sum_{g\in \gr} \dnorm{\phi_r(g) - \phi(g)} \leq \delta,
	\end{equation}	
	for sufficiently small $\delta>0$. The appearance of the diamond distance might strike one as overly pessimistic, however, we will show that it is in fact required in section\ \ref{sec:fidelity_interpretation}. It is also not the most general possible assumption that still guarantees eq.\ (\ref{eq:rb_data_decay}) (see below), but it has the advantage of making reference only to physical quantities and being operationally interpretable.
	\begin{itemize}

		\item In early works on \ac{RB} the standard assumption was that of gate-independent noise. This means the implementation map $\phi$ is of the form $\phi(g) = \mc{A}\phi_r(g)$ for all $g$ with some fixed quantum channel $\mc{A}$. This is not a very realistic assumption and several attempts were made to replace it with a weaker assumption. In  ref.\  \cite{magesan2011gate} it has been proposed to consider a perturbation $\phi(g) = \mc{A}\phi_r(g) + \mc{A}_g\phi_r(g) $. In  ref.\  \cite{Proctor17}, however,
		this analysis was shown to not be strong enough to actually justify behaviour of the form eq.\ (\ref{eq:rb_data_decay}). {Here, an analysis of uniform Clifford randomized benchmarking as a power iteration of a matrix was proposed (see also early work in this direction by \cite{chasseur2015complete}), justifying the exponential decay model (but with non-optimal correction).
		Subsequently,} in ref.\  \cite{wallman2018randomized} eq.\ (\ref{eq:rb_data_decay})
		was derived (with an exponentially small correction) for uniform \ac{RB} with the multi-qubit Clifford group under the assumption that there exist super-operators $\mc{R},\mc{L}$ such that
	\begin{equation}
	\frac{1}{|\gr|} \sum_{g\in \gr} \dnorm{\phi(g) - \mc{R}\omega(g)\mc{L}} \leq \delta
	\end{equation}
	for small enough $\delta$. This assumption is quite general, but has as its main drawback that the operators $\mc{R},\mc{L}$ are not guaranteed to be completely positive, complicating the interpretation of this assumption as being a belief on physical quantities.
	{Finally,  ref.\  \cite{Merkel18} derives (introducing the Fourier analysis also used here) eq.\ (\ref{eq:rb_data_decay}) (up to an exponentially small correction)} for uniform \ac{RB} with the multi-qubit Clifford group under an assumption on the fidelity of the implementation map $\phi$ w.r.t.~its reference implementation,
	\begin{equation}
	\frac{1}{|\gr|} \sum_{g\in \gr} F(\omega(g),\phi(g)) \geq 1- \delta .
	\end{equation}
	This assumption has the advantage of making reference to physical objects only, but suffers from the drawback that $\delta$ must grow inversely proportional to the underlying Hilbert space dimension for the argument in  ref.\ \cite{Merkel18} to hold. We discuss this further in section \ref{sec:RBdiamond}.
	\end{itemize}

\end{itemize}

\section{The randomized benchmarking fitting model}\label{sec:output_data}
In this section, we will prove a general theorem about the behaviour of \ac{RB} output data, i.e.,
the probabilities $p(i,m)$ associated with an \ac{RB} experiment with its input parameters specified as in section\ \ref{subsec:RB_input} and described in protocol\ alg.~\ref{prot:rand_bench}. We will argue that for a broad variety of choices for reference implementations and probability distributions this data is well described by a linear combination of exponential (matrix) decays (as in eq.\ (\ref{eq:rb_data_decay})), as long as the {physical implementation} $\phi$ is close to its ideal version: the reference implementation $\phi_r$. By close we mean that the diamond distance between reference and ideal implementations, averaged over the group, has to be bounded
as
\begin{equation}
\frac{1}{|\gr|}\sum_{g\in \gr} \dnorm{\phi_r(g) - \phi(g)} \leq \delta.
\end{equation}
One can think of the above equation as a relatively weak initial belief one must hold about one's quantum computer (instantiated in $\phi$) before one can trust the outcome of \ac{RB}.

For the rest of the work we will adopt the transfer matrix framework (discussed in section \ref{sec:prelim_one}) for describing the action of super-operators. We also explicitly write implementation noise on the initial state $\rho_0$ and output POVM $\{\Pi_i\}_{i\in I}$ through quantum channels $\mc{E}_{\mathrm{SP}}$ (state preparation) and $\mc{E}_{\mathrm{M}}$ (measurement). This is notationally somewhat clumsy, but it makes explicit one of the assumptions underlying \ac{RB}, namely that SPAM noise is independent of sequence length.

The theorems we present in this section are generalizations of the theorems given  in ref.\ \cite{wallman2018randomized}, encompassing almost all known \ac{RB} procedures, but the techniques used are based on the cleaner conceptual framework of matrix valued Fourier transforms provided by ref.\ \cite{Merkel18}, which we reviewed in section \ref{subsec:matrix_fourier}. The central observation of ref.\ \cite{Merkel18} is that the data collection phase of uniform \ac{RB} can be seen as evaluating an $m$-fold convolutions of the implementation map $\phi$. This observation generalizes beyond uniform \ac{RB} to arbitrary implementation maps, and, in particular, we see that
\begin{equation}
p(i,m) = 
\sum_{g_1,\ldots, g_m\in \gr}\!\!\!\! \braa{\mc{E}_{\mathrm{M}}(\Pi_i)}\nu(g_1)\ldots \nu_m(g_m) \phi(\gend g_1^{-1}\ldots g_m^{-1})\phi(g_m)\cdots \phi(g_1)\kett{\mc{E}_{\mathrm{SP}}(\rho_0)}
\end{equation}
can be rewritten, using the invariance of the uniform sum over $\gr$ under changes of variables, as
\begin{align}\label{eq:rb_as_conv}
p(i,m)  &= 
\sum_{g_1,\ldots, g_m\in \gr}\!\!\!\! \braa{\mc{E}_{\mathrm{M}}(\Pi_i)} \phi(\gend g_m^{-1})\nu_m(g_mg_{m-1}^{-1})\phi(g_m g_{m-1}^{-1}) \cdots \nu_1(g_1)\phi(g_1) \kett{\mc{E}_{\mathrm{SP}}(\rho_0)}\\
&= \braa{\mc{E}_{\mathrm{M}}(\Pi_i)} \big(\phi*(\nu_m\phi)* \cdots *(\nu_1\phi)\big)(\gend)\kett{\mc{E}_{\mathrm{SP}}(\rho_0)}
\end{align}
where we have used the definition of convolution of implementation maps given in eq.\ (\ref{eq:convolution}) and where $(\nu_i\phi)(g) = \nu_i(g)\phi(g)$. We will see that often the convolution product map $\phi*(\nu_m\phi)* \cdots *(\nu_1\phi)$ can be written exactly as an $m$-fold convolution $\phi'^{*m}$ (for some $\phi'$ that is not necessarily the same as $\phi$).

We will begin in section\ \ref{subsec:mother_standard} with discussing the case of uniform \ac{RB} (as per the \ac{RB} typology in section\ \ref{subsec:rb_typology}). This is the easiest case, but the results derived there will go a long way in analyzing the other two types (non-uniform and interleaved \ac{RB}).

\subsection{Uniform \acl{RB}}\label{subsec:mother_standard}

Here we discuss the behaviour of \ac{RB} output data given by a uniform \ac{RB} scheme (as defined in section\ \ref{subsec:rb_typology}). We will prove that this data behaves as expected (i.e., a controlled linear combination of exponential decays), as long as the implementation map $\phi$ is close enough to its reference implementation $\phi_r$. As we saw in section\ \ref{subsec:RB_input}, for uniform \ac{RB} protocols this reference implementation is exactly a representation, which we denote by $\omega$. We can always decompose $\omega$ into a direct sum of irreducible representations. We write this as $\omega = \bigoplus_{\lambda\in\Lambda} \sigma_\lambda^{\oplus n_\lambda}$ with $\Lambda$ some index set and $\sigma_\lambda$ irreducible sub-representations appearing with multiplicity $n_\lambda$. As discussed in section\ \ref{sec:rand_bench}, we expect the \ac{RB} output data to be approximately well described by a linear combination of the form
\begin{equation}
p(i,m) \approx \sum_{\lambda\in \Lambda}\tr(A_\lambda M_\lambda^m)
\end{equation}
where $M_\lambda$ is an  $n_\lambda\times n_\lambda$ matrix depending only on the actual implementation $\phi$. In particular $M_\lambda$ is given by the projection of the Fourier mode $\mc{F}(\phi)[\sigma_\lambda]$ onto the subspace associated with its $n_\lambda$ largest (in absolute value) eigenvalues. This is the content of theorem\ \ref{thm:mother}. The essential idea in theorem\ \ref{thm:mother} is the fact that convolutions correspond to matrix multiplication in Fourier space, together with a careful use of the subspace perturbation techniques discussed in section\ \ref{sec:prelim_one}.

\begin{theorem}[Output data of uniform \acl{RB}]\label{thm:mother}
	Let $p(i,m)$ be the outcome probability associated with a uniform \ac{RB} experiment with group $\gr$, initial state $\rho_0$, reference representation $\omega = \bigoplus_{\lambda\in \Lambda} \sigma_\lambda^{\oplus n_\lambda}$, and ending gate $\gend$, for a specific sequence length $m\in \md{M}$ and POVM element $\Pi_i$ in the POVM $\{\Pi_i\}_i$ (as described in protocol\ alg.~\ref{prot:rand_bench}). Let $\phi$ be the implementation map describing the actually implemented operations.
	Moreover, assume that there exists a  $\delta>0$ such that
	\begin{equation}\label{eq:norm_assump}
	\frac{1}{|\gr|}\sum_{g\in \gr} \dnorm{\omega(g) - \phi(g)} \leq \delta \leq 1/9,
	\end{equation}
	The \ac{RB} output probability $p(i,m)$ is well approximated as
	\begin{equation}\label{eq:mother}
	|p(i,m) -\sum_{\lambda \in \Lambda} \tr(A_\lambda M_\lambda^m)| \leq 8\left(\delta\left[1+ \frac{2\delta}{1-5\delta}\right] \right)^{m}\, ,
	\end{equation}
	where $M_\lambda, A_\lambda$ are $n_\lambda\times n_\lambda$ real matrices and $M_\lambda$ only depends on the implementation $\phi$.

\end{theorem}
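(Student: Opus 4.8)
The plan is to pass to Fourier space, recognize the uniform sequence average as a power of the Fourier operator, and then control that power by perturbing around the projector $\mc{F}(\omega)[\sigma_\lambda]$ with the subspace-perturbation machinery of theorem~\ref{thm:subspace_pert}. The conceptual slogan is ``power iteration suppresses subdominant eigenvalues''; the work is making the constants explicit for \emph{non-normal} Fourier blocks.

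First I would carry out the Fourier reduction. Starting from the convolution form~(\ref{eq:rb_as_conv}), specialised to uniform $\nu_i$, the sequence average becomes an $m$-fold convolution of $\phi$, and the convolution theorem $\mc{F}(\phi * \phi')[\sigma_\lambda] = \mc{F}(\phi)[\sigma_\lambda]\,\mc{F}(\phi')[\sigma_\lambda]$ together with the inverse transform evaluated at $\gend$ lets me write the output block-wise as $p(i,m) = \sum_{\lambda\in\Lambda} d_\lambda\,\braa{Q_\lambda}\,\mc{F}(\phi)[\sigma_\lambda]^{m}\,\kett{P_\lambda}$, where the contribution of the inversion gate and the $\overline{\sigma_\lambda}(\gend^{-1})$ factor are absorbed, along with $\mc{E}_{\mathrm{M}}(\Pi_i)$ and $\mc{E}_{\mathrm{SP}}(\rho_0)$, into covectors $\braa{Q_\lambda},\kett{P_\lambda}$ whose norms are bounded independently of $m$ because state preparation and measurement are physical. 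Everything now hinges on the $m$-th powers of the individual Fourier modes $\mc{F}(\phi)[\sigma_\lambda]$.

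Next I would set up the perturbation. By eqs.~(\ref{eq:FT_projection}) and~(\ref{eq:FT_projection_rank}), $A := \mc{F}(\omega)[\sigma_\lambda]$ is idempotent with trace $n_\lambda$, and unitarity of $\omega$ makes it Hermitian, hence an orthogonal projector of rank $n_\lambda$. Choosing a unitary $X=[X_1,X_2]$ adapted to $A$ gives $A_1=\1_{n_\lambda}$, $A_2=0$, so the spectral gap yields $\mathrm{sep}(A_1,A_2)=1$. The perturbation $E := \mc{F}(\phi)[\sigma_\lambda]-A = \mc{F}(\phi-\omega)[\sigma_\lambda]$ is controlled by the assumption~(\ref{eq:norm_assump}): the diamond-norm $\mathrm{m}$-norm of the Fourier operator dominates the relevant matrix norm of each block, so $\norm{E}\leq\delta$ and every compression $\norm{X_i^\dagger E X_j}\leq\delta$. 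With $\mathrm{sep}=1$ and all these quantities $\leq\delta\leq 1/9$, the hypotheses~(\ref{eq:pert_props}) hold with room to spare (one checks $\delta^2/(1-2\delta)^2 < 1/4$ at $\delta=1/9$), so theorem~\ref{thm:subspace_pert} furnishes a block-diagonalisation $\mc{F}(\phi)[\sigma_\lambda] = R_1 M_\lambda L_1^\dagger + R_2 N_\lambda L_2^\dagger$ with $L^\dagger=R^{-1}$, a dominant $n_\lambda\times n_\lambda$ block $M_\lambda$ (eigenvalues near $1$) and a subdominant block $N_\lambda$ (eigenvalues near $0$). Since $L_i^\dagger R_j=\delta_{ij}$ the cross terms vanish in powers, giving $\mc{F}(\phi)[\sigma_\lambda]^m = R_1 M_\lambda^m L_1^\dagger + R_2 N_\lambda^m L_2^\dagger$. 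Absorbing $R_1,L_1^\dagger$ and the SPAM covectors into a single $n_\lambda\times n_\lambda$ matrix $A_\lambda$, the dominant part contributes exactly $\sum_\lambda \tr(A_\lambda M_\lambda^m)$, while the subdominant part is the error. Bounding $\norm{N_\lambda}\leq \delta\bigl(1+\tfrac{2\delta}{1-5\delta}\bigr)$ using the norm estimate on $P_1$ (the $1-5\delta$ denominator coming from combining the gap $1-2\delta$ with the second-order corrections), applying $\norm{N_\lambda^m}\leq\norm{N_\lambda}^m$, and summing the dimensional factors over $\Lambda$ yields the claimed $8\bigl(\delta[1+\tfrac{2\delta}{1-5\delta}]\bigr)^m$. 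Reality of $M_\lambda,A_\lambda$ follows because physical implementation maps are Hermiticity-preserving, so in a Hermitian-operator basis all Fourier blocks are real and the entire construction can be performed over $\RR$.

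The main obstacle is the bookkeeping in the last step rather than any single idea: one must verify that the physically motivated assumption~(\ref{eq:norm_assump}) cleanly bounds the specific \emph{compressed} perturbation norms $\norm{X_i^\dagger E X_j}$ that theorem~\ref{thm:subspace_pert} requires, and then track every constant so that $\delta\leq 1/9$ is precisely what both activates the perturbation theorem and produces a base of the exponential strictly below $1$. The non-normality of the Fourier blocks is what forces us to work with invariant subspaces and the separation function instead of eigenvectors, which is the genuinely delicate part; the exact normalisation of the convolution in the first step and the reality argument at the end are comparatively routine.
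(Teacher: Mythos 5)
Your overall strategy---rewrite the uniform sequence average as a power in Fourier space and split off the dominant invariant subspace with theorem~\ref{thm:subspace_pert}---is exactly the paper's, but there is a genuine gap at the step where you connect the assumption \eqref{eq:norm_assump} to the perturbation theorem. You apply Stewart--Sun block-wise to each $\mc{F}(\phi)[\sigma_\lambda]$ and assert that ``the diamond-norm $\mathrm{m}$-norm of the Fourier operator dominates the relevant matrix norm of each block, so $\norm{E}\leq\delta$''. For any standard matrix norm this is false without dimension factors: for the spectral norm one only gets $\snormn{\mc{F}(\phi-\omega)[\sigma_\lambda]}\leq \frac{1}{|\gr|}\sum_{g}\snormn{\phi(g)-\omega(g)}$, and a Hermiticity-preserving map can have transfer-matrix spectral norm $\sqrt{d}$ times its diamond norm; for instance $\Delta(X)=d^{-1/2}\tr(X)\,\ketbran{0}{0}$ satisfies $\dnorm{\Delta}=d^{-1/2}$ while its transfer matrix has spectral norm $1$. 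The same problem recurs when you pair the subdominant block with SPAM: your covectors have norms independent of $m$, but not of $d$ (e.g.\ $\fnormn{\Pi_i}$ can be as large as $\sqrt{d}$, and there is a prefactor $d_\lambda$), so Cauchy--Schwarz in the Euclidean norm costs further $\operatorname{poly}(d)$ factors. What your argument actually yields is a bound of the form $8\bigl(\sqrt{d}\,\delta[\cdots]\bigr)^m$ under an assumption $\delta\lesssim 1/(9\sqrt{d})$---precisely the dimension-dependent requirement that the theorem, and the discussion in section~\ref{sec:RBdiamond}, are designed to avoid.

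The paper's resolution, which is the technical core of the proof, is to never leave the algebra of Fourier operators during the perturbation step: it equips that algebra with the sub-multiplicative norms $\norm{\cdot}_{\mathrm{m}}$ and $\norm{\cdot}_{\mathrm{max}}$ of eq.~\eqref{eq:fourier_norms}, for which $\norm{F(\phi-\omega)}_{\mathrm{m}}=\frac{1}{|\gr|}\sum_{g}\dnorm{\phi(g)-\omega(g)}\leq\delta$ holds exactly and pairing with states and POVM elements is dimension-free; it applies theorem~\ref{thm:subspace_pert} to the \emph{full} operator $F(\phi)$ with respect to $\norm{\cdot}_{\mathrm{m}}$; it bounds the subdominant term via the mixed inequality \eqref{eq:norm_id}; and only afterwards does it use block-diagonality (through the rank-one operators $F(\mc{P}^{j}_\lambda\omega\mc{P}^{j}_\lambda)$ and $F(\mc{R}^{j}_\lambda\omega\mc{P}^{j}_\lambda)$) to extract the $n_\lambda\times n_\lambda$ matrices $M_\lambda$, $A_\lambda$. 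A symptom of the gap is that your constants do not cohere: if every compression satisfied $\norm{X_i\ct E X_j}\leq\delta$, as you claim, the applicability condition would read $\delta^2/(1-2\delta)^2<1/4$, i.e.\ $\delta<1/4$, and one would get $\norm{P_1}\leq\delta/(1-2\delta)$---not the claimed threshold $1/9$ and factor $2\delta/(1-5\delta)$. In the paper those constants arise because $X_2=\1-F(\omega)$ has $\norm{X_2}_{\mathrm{m}}\leq 2$, so the off-diagonal compressions are only bounded by $2\delta$ and the diagonal one by $4\delta$, giving numerator $(2\delta)^2$ and denominator $(1-5\delta)^2$, whence $\delta\leq 1/9$; your appeal to ``second-order corrections'' does not bridge this. (A smaller issue: reality of $M_\lambda$ does not follow simply from Hermiticity preservation, since an irreducible component $\sigma_\lambda$ of a real representation may itself be complex.)
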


\begin{proof}
	Note from eq.\ (\ref{eq:rb_as_conv}) with $\nu_i$ the uniform probability distribution for all $i\in \{1, \ldots , m\}$ that
	\begin{equation}\label{eq:rb_as_conv_stand}
	p(i,m) = \braa{\mc{E}_{\mathrm{M}}(\Pi_i)}(\phi*\phi^{*m})(\gend)\kett{\mc{E}_{\mathrm{SP}}(\rho_0)}.
	\end{equation}
	Inserting the Fourier transform of $\phi$, we get
  \begin{align}
	p(i,m) &= \sum_{\lambda \in \Irr} d_{\lambda}\braa{\mc{E}_{\mathrm{M}}(\Pi_i)}\tr_{V_\lambda}(\mc{F}(\phi)^{m+1}[\sigma_{\lambda}]\overline{\sigma_{\lambda}}(\gend^{-1})\otimes \1 )\kett{\mc{E}_{\mathrm{SP}}(\rho_0)} \label{eq:RB_outcome_simple_form}\\
	&=\braa{\mc{E}_{\mathrm{M}}(\Pi_i)} \tr_{V_{\omega_\gr}} \left[\bigg[\frac{1}{|\gr|}\sum_{g\in \gr} \overline{\omega}_\gr(g)\otimes \phi(g)\bigg]^{m+1} (\mc{D}_\gr \overline{\omega}_\gr(\gend^{-1}))\otimes \1\right] \kett{\mc{E}_{\mathrm{SP}}(\rho_0)}
	\end{align}
	where $\omega_\gr(g) = \oplus_{\lambda\in \Irr}\sigma_\lambda(g)$ is the direct sum of all  irreducible representations of $\gr$ and $\mc{D}_\gr = \oplus_{\lambda\in \Irr} d_\lambda \1_{\lambda}$ accounts for the dimensional factor in the inverse Fourier transform. Now we can consider the Fourier operator $F(\phi)$ (as defined in eq.\ (\ref{eq:fourier_operator})) associated with $\phi$ as a perturbation of its ideal version $F(\omega)$. From our discussion of Fourier transforms and Fourier operators we know that $F(\omega) =\frac{1}{|\gr|}\sum_{g\in \gr}\bigoplus_{\lambda\in \Lambda}\overline{\sigma}_\lambda(g)\otimes \omega(g)$ is an orthogonal projection, with rank given by the number of irreducible sub-representations of $\omega$ ($\mathrm{Rk}(F(\omega)) = \sum_{\lambda\in \Lambda} n_\lambda$). Recall also that there is a natural matrix norm $\norm{\cdot}_{\mathrm{m}}$ on the space of Fourier operators and that
	\begin{equation}
	\norm{F(\phi-\omega)}_{\mathrm{m}} = \frac{1}{|\gr|}\sum_{g\in \gr}\dnorm{\tr_{V_{\omega_\gr}}\left[ F(\phi-\omega) \mc{D}_{\gr}\overline{\omega}_\gr(g^{-1})\otimes \1\right] }= \frac{1}{|\gr|}\sum_{g \in \gr} \dnorm{\phi(g)-\omega(g)}.
	\end{equation}
	The plan is now to use the perturbation theorem (theorem\ \ref{thm:subspace_pert}) to split the above into dominant and sub-dominant invariant subspaces. To do this note that $F(\omega)$ is a projector so we trivially get a spectral resolution with $X_1 = F(\omega)$, $X_2 = \1 - F(\omega)$ with $F(\omega)$ acting as the identity on the column and row space of $X_1 $ and as the zero operator on the column and row space of $X_2$. Thinking of $F(\phi-\omega)$ as a perturbation to $F(\omega)$ we need to ensure the conditions in eq.\ (\ref{eq:pert_props}) are satisfied with respect to the norm $\norm{\cdot}_{\mathrm{m}}$.  Using the sub-multiplicativity of this norm and the fact that $\norm{X_1}_{\mathrm{m}}=1$ by construction together with the triangle inequality, we get the following sufficient condition for the applicability of theorem\ \ref{thm:subspace_pert}:
	\begin{equation}
	\frac{\norm{X_1\ct F(\phi-\omega)X_2}_m\norm{X_2\ct F(\phi-\omega)X_1}_m}{(\mathrm{sep}(\1,0)-\norm{X_1\ct F(\phi-\omega)X_1}_m - \norm{X_2\ct F(\phi-\omega)X_2}_m)^2 } \leq \frac{(2\norm{F(\phi -\omega)}_{\mathrm{m}})^2}{(1- 5\norm{F(\phi -\omega)}_{\mathrm{m}})^2 }< \frac{1}{4}
	\end{equation}
	where we also used that $\mathrm{sep}(\1,0)=1$, which is easy to see from the definition of $\mathrm{sep}$ (see section \ref{subsec:pert_theort}). Working out, we see that the above is satisfied if eq.\ (\ref{eq:norm_assump}) is true, which it is by assumption. Hence we can use theorem\ \ref{thm:subspace_pert} to conclude the existence of operators $R = [R_1,R_2],L= [L_1,L_2]$ with $L\ct = R^{-1}$ and $P_1$ such that

	\begin{equation}
	F(\phi) = R_1\big[X_1\ct F(\omega)X_1 + X_1\ct F(\phi-\omega) (X_1+ X_2 P_1)\big]L_1\ct +  R_2\big[X_2\ct F(\omega)X_2 +(X_2\ct-P_1X_1\ct) F(\phi-\omega)X_2 \big]L_2\ct .
	\end{equation}
	Using the fact that $L\ct= R^{-1}$ (and thus that $L_2\ct R_1 = L_1\ct R_2 = 0$) we can now write $p(m, \gend, \Pi)$ as a sum of two terms corresponding to the above spectral resolution:
	\begin{align}\label{eq:mother_inter}
	p(i,m) &= \braa{\mc{E}_{\mathrm{M}}(\Pi_i)}\tr_{V_{\omega_\gr}}(\mc{D}_{\gr}\overline{\omega}_{\gr}(\gend^{-1})\otimes \1) F(\phi)\left[R_1\big[X_1\ct F(\omega)X_1 +  X_1\ct F(\phi-\omega)(X_1+X_2P_1)\big]L_1\ct\right]^m\kett{\mc{E}_{\mathrm{SP}}(\rho_0)} \notag\\&\hspace{2em}+  \braa{\mc{E}_{\mathrm{M}}(\Pi_i)} \tr_{V_{\omega_\gr}}(\mc{D}_{\gr} \overline{\omega}_{\gr}(\gend^{-1})\otimes \1) \left[ R_2\big[X_2\ct F(\omega)X_2 + (X_2\ct-P_1X_1\ct) F(\phi-\omega) X_2\big]^{m+1} L_2\ct \right]\kett{\mc{E}_{\mathrm{SP}}(\rho_0)}.
	\end{align}
	We will consider both of these terms separately. We will deal first with the second term.
	Note that, using the definitions of $R,L$ from theorem \ref{thm:subspace_pert}, we have
\begin{align}
(2)  &\leq \dnorm{\tr_{V_{\omega_{\gr}}}(\mc{D}_{\gr} \overline{\omega}_{\gr}(\gend^{-1})\otimes \1) \left[ R_2\big[X_2\ct F(\omega)X_2 + (X_2\ct-P_1X_1\ct) F(\phi-\omega) X_2\big]^{m+1} L_2\ct \right]}\\
&\leq \norm{\left[ (X_2 + X_1P_2 + X_2P_1P_2)\big[X_2\ct F(\omega)X_2 + (X_2\ct-P_1X_1\ct) F(\phi-\omega) X_2\big]^m (X_2-P_1X_1)\ct \right]}_{\mathrm{max}}
\end{align}
which is just a statement about the max-norm of a Fourier-operator. Note that $X_2\ct F(\omega)X_2 = 0$ by construction so the above only depends on $F(\phi-\omega)$. Now using the max-mean norm inequality in
eq.\ (\ref{eq:norm_id}) several times and the fact that $X_2 = \1 - X_1$, we can upper bound this as
\begin{align}
(2) &\leq \norm{(X_2 + X_1P_2+ X_2P_1P_2) (X_2-P_1X_1)\ct F(\phi-\omega)X_2(X_2-P_1X_1) }_{\mathrm{max}} \norm{(F(\phi-\omega) X_2(X_2-P_1 X_1)\ct)^{m}}_{\mathrm{m}}\\
&\leq 2\big(\norm{F(\phi-\omega)}_{\mathrm{max}}(1+\norm{P_2}_{\mathrm{m}})(1+\norm{P_1}_{\mathrm{m}}) +\norm{P_1}_{\mathrm{m}}^2 \norm{P_2}_{\mathrm{m}}(3+ \norm{P_1}_{\mathrm{m}}) \big)\left(\norm{F(\phi-\omega)}_{\mathrm{m}}(1+\norm{P_1}_{\mathrm{m}}) \right)^{m}.
\end{align}
Now we use from theorem\ \ref{thm:subspace_pert}, the upper bounds on
\begin{align}
\norm{P_1}_{\mathrm{m}}\leq\frac{\norm{X_2\ct F(\phi-\omega)X_1}_{\mathrm{m}}}{1 - \norm{X_1\ct F(\phi-\omega)X_1}_{\mathrm{m}} -\norm{X_1\ct F(\phi-\omega)X_1}_{\mathrm{m}} }\leq \frac{2\norm{F(\phi-\omega)}_{\mathrm{m}}}{1 -5\norm{F(\phi-\omega)}_{\mathrm{m}} }\leq \frac{2\delta}{1-5\delta}
\end{align}
and
\begin{align}
\norm{P_2}_{\mathrm{m}} &\leq \frac{2\norm{F(\phi-\omega)}_{\mathrm{m}}}{1 - 5\norm{F(\phi-\omega)}_{\mathrm{m}} - 2\norm{P_1}_{\mathrm{m}}\norm{X_1\ct F(\phi-\omega) X_2 } }\\& \leq \frac{2\norm{F(\phi-\omega)}_{\mathrm{m}}}{1- 5\norm{F(\phi-\omega)}_{\mathrm{m}} - \frac{8\norm{F(\phi-\omega)}_{\mathrm{m}}^2}{1 -5\norm{F(\phi-\omega)}_{\mathrm{m}} } }\\ &\leq \frac{2\delta(1-5\delta)}{1- 8 \delta^2}\\
&\leq\frac{\delta}{1-  \delta} ,
\end{align}
where we have exploited the assumption $\delta\leq 1/9$ in the last line.
Inserting these bounds into the main expression we get
\begin{align}\label{eq:upper_bound}
(2) &\leq 4\bigg(\!\!\left[1+ \frac{2\delta}{1-5\delta}\right]  \!\! \left[1+\frac{\delta}{1-\delta}\right]  \!    +  \!   \left[\frac{2\delta}{1-5\delta}\right]^2  \! \!   \left[\frac{\delta}{1-\delta^2}\right]\!  \!   \left[3 +\frac{2\delta}{1-5\delta}\right] \!\!  \bigg) \!\! \left(\delta\left[1+ \frac{2\delta}{1-5\delta}\right] \right)^{m} \\
&\leq \frac{115}{16}\left(\delta\left[1+ \frac{2\delta}{1-5\delta}\right] \right)^{m}\\&\leq 8\left(\delta\left[1+ \frac{2\delta}{1-5\delta}\right] \right)^{m}
\end{align}
where we have used that $\norm{F(\phi-\omega)}_{\mathrm{max}}\leq 2$ and $\delta\leq 1/9$.
Next we consider the first term in eq.\ (\ref{eq:mother_inter}). For this term we desire an exact expression. We begin by noting that both $F(\omega)$ and $F(\phi)$ are block diagonal with respect to the decomposition of $\omega_{\gr}$ into irreducible representations.
 This implies that the matrices $R,L$ are block diagonal w.r.t.\ this decomposition as well, and that, moreover, we can take the matrices $P_1,P_2$ to be block diagonal with the blocks labeled by the irreducible sub-representations present in $\omega = \bigoplus_{\lambda \in \Lambda}\sigma_\lambda^{\oplus n_\lambda}$. Writing $P = \oplus_{\lambda\in \Lambda} P^\lambda$, and similarly for other operators we can write the first term of eq.\ (\ref{eq:mother_inter}) as
\begin{align}
(1) &= \sum_{\lambda\in \Lambda} d_{\sigma_\lambda}\braa{\mc{E}_{\mathrm{M}}(\Pi_i)}\tr_{V_{\sigma_\lambda}}(\overline{\sigma}_\lambda(\gend^{-1})\otimes \1)\left[\mc{F}(\phi)[\sigma_\lambda] R_1^\lambda\big[({X_1^\lambda}\ct \mc{F}(\phi)[\sigma_\lambda] (X_1^\lambda +{X_2^\lambda}P_1^\lambda)]^m {L_1^\lambda}\ct\right]\kett{\mc{E}_{\mathrm{SP}}(\rho_0)},
\end{align}
where we have also used that $F(\omega)X_2 = 0$ by construction. To continue further we need to pick a convenient basis to express $X_1^\lambda,R_1^\lambda$.

For this note that we can specify rank $1$ Fourier operators in $L(V_{\omega_{\gr}})\otimes \mc{S}_d$ by specifying pairs of super-operators $\mc{A},\mc{B}$ and looking at Fourier operators of the form $F(\mc{A}\omega\mc{B})$ (It is useful to think of the Fourier operator $F(\omega)$ as a vectorization operation on $\mc{A},\mc{B}$). We can express $X_1 = F(\omega)$ in this way by considering the operators $F(\mc{P}^j_\lambda \omega \mc{P}^{j}_\lambda)$ where $\mc{P}_\lambda^j$ is the (super-operator) projector onto the $j$th copy of the representation $\sigma_\lambda$ in $\omega =\bigoplus_{\lambda\in \Lambda}\sigma_\lambda^{\oplus n_\lambda})$. Note that these operators are rank one orthogonal projectors and moreover that
\begin{equation}
\sum_{j_\lambda=1}^{n_\lambda}F(\mc{P}_\lambda^{j_\lambda} \omega\mc{P}_\lambda^{j_\lambda})  = \sum_{j_\lambda=1}^{n_\lambda} = F(\sigma_\lambda^{\oplus n_\lambda}) = X_1^{\lambda}
\end{equation}
holds true. Now noting that $(X_1^\lambda + {X_2^\lambda}P_1^\lambda) = R^\lambda_1$ is a rank $n_\lambda$ matrix with $X_1^\lambda R^\lambda_1 = X_1^\lambda$, we can similarly find $n_\lambda$ super-operators $\mc{R}_\lambda^{j_\lambda}$ ($j_\lambda\in 1, \ldots ,n_\lambda$) such that
\begin{equation}
R_1^\lambda = \sum_{j_\lambda=1}^{n_\lambda} F(\mc{R}_\lambda^{j_\lambda} \omega\mc{P}_\lambda^{j_\lambda}),
\end{equation}
where $F(\mc{P}_\lambda^{j_\lambda} \omega\mc{R}_\lambda^{j_\lambda})$ is again of rank one (but no longer orthogonal).
Note that $X_1^\lambda R_1^\lambda = X_1^\lambda$ gives rise to the orthogonality property
\begin{equation}
 F(\mc{P}_\lambda^{j_\lambda} \omega\mc{P}_\lambda^{j_\lambda})F(\mc{R}_\lambda^{j_\lambda'} \omega\mc{P}_\lambda^{j_\lambda'}) =  \delta_{j_\lambda,j'_\lambda}F(\mc{P}_\lambda^{j_\lambda} \omega\mc{P}_\lambda^{j_\lambda}) .
\end{equation}

Using these resolutions of $X_1^\lambda, R_1^\lambda$  and the orthogonality property we can express the first term in eq.\ (\ref{eq:mother_inter}) further as
\begin{align}
(1) &= \sum_{\lambda\in \Lambda} \sum_{j_\lambda^1,\ldots,j_\lambda^{2m} =1}^{n_\lambda} d_{\sigma_\lambda}\braa{\mc{E}_{\mathrm{M}}(\Pi_i)}\tr_{V_{\sigma_\lambda}}\bigg[(\overline{\sigma}_\lambda(\gend^{-1})\otimes \1)\mc{F}(\phi)[\sigma_\lambda] R^\lambda_1\\
&\hspace{7em}\times \big[ F(\mc{P}_\lambda^{j_\lambda^1} \omega\mc{P}_\lambda^{j_\lambda^1}) \mc{F}(\phi)[\sigma_\lambda] F(\mc{R}_\lambda^{j_\lambda^2} \omega\mc{P}_\lambda^{j_\lambda^2}) \cdots \mc{F}(\phi)[\sigma_\lambda] F(\mc{R}_\lambda^{j_\lambda^{2m}} \omega\mc{P}_\lambda^{j_\lambda^{2m}})\big]{L_1^\lambda}\ct \bigg]\kett{\mc{E}_{\mathrm{SP}}(\rho_0)} \\
=&\sum_{\lambda\in \Lambda} \sum_{j_\lambda^1,j^{2m}_\lambda =1}^{n_\lambda} d_{\sigma_\lambda}\braa{\mc{E}_{\mathrm{M}}(\Pi_i)}\tr_{V_{\sigma_\lambda}}\left[(\overline{\sigma}_\lambda(\gend^{-1})\otimes \1)\mc{F}(\phi)[\sigma_\lambda] R^\lambda_1 F(\mc{P}_\lambda^{j_\lambda} \omega\mc{P}_\lambda^{j'_\lambda}) {L_1^\lambda}\ct \right]\kett{\mc{E}_{\mathrm{SP}}(\rho_0)} [M_\lambda^m]_{j_\lambda,j_\lambda^{2m}}
\end{align}

with
\begin{equation}
[M_\lambda]_{j_\lambda,j'_\lambda} =\tr \left(F(\mc{P}_\lambda^{j_\lambda}\omega\mc{P}_\lambda^{j_\lambda})  F(\phi)F(\mc{R}_\lambda^{j_\lambda'}\omega\mc{P}_\lambda^{j'_\lambda})\right) =   \tr \left(F(\phi)F(\mc{R}_\lambda^{j_\lambda} \omega\mc{P}_\lambda^{j'_\lambda}) \right),
\end{equation}
by the fact that $ F(\mc{P}_\lambda^{j_\lambda} \omega\mc{P}_\lambda^{j_\lambda}),  F(\mc{R}_\lambda^{j_\lambda} \omega\mc{P}_\lambda^{j_\lambda})$ are of rank one.
 Now writing
	\begin{equation}\label{eq:spamfactor}
	[A_\lambda]_{j_\lambda,j_\lambda} = d_{\sigma_\lambda}\braa{\mc{E}_{\mathrm{M}}(\Pi_i)}\tr_{V_{\sigma_\lambda}}\left[(\overline{\sigma}_\lambda(\gend^{-1})\otimes \1)\mc{F}(\phi)[\sigma_\lambda] R^\lambda_1 F(\mc{P}_\lambda^{j_\lambda} \omega\mc{P}_\lambda^{j_\lambda}) {L_1^\lambda}\ct \right]\kett{\mc{E}_{\mathrm{SP}}(\rho_0)}
	\end{equation}
we can combine the two terms in eq.\  (\ref{eq:mother_inter}) to get
\begin{equation}
\bigg|p(i,m) - \sum_{\lambda\in \Lambda} \tr(A_\lambda M_\lambda^m)\bigg|\leq 8\left(\delta\left[1+ \frac{2\delta}{1-5\delta}\right] \right)^{m}.
\end{equation}
\end{proof}

\subsection{Randomized benchmarking with non-uniform sampling}\label{subsec:mother_non_uniform}

Several works \citep{francca2018approximate,proctor2019direct,knill2008randomized,boone2019randomized,helsen2019new} discuss adaptations of \ac{RB} where the elements of the group are no longer sampled exactly at random, but are instead sampled according to (1) a distribution close to uniform~\citep{francca2018approximate,knill2008randomized,boone2019randomized} (which we call `approximate \ac{RB}' in section \ref{subsec:RB_input}, following ref.~\citep{francca2018approximate}), or (2) a distribution that only has support on a small subset of the group; group generators in the case of ref.\ \cite{francca2018approximate} (see also early work on the Clifford group by \cite{ryan2009randomized}), subgroup cosets in the case of ref.\ \cite{helsen2019new}, and constant depth circuits (layers) in the case of ref.\ \cite{proctor2019direct}. In section \ref{subsec:RB_input}, we called these approaches `subset RB'.

We begin by treating the case of approximate \ac{RB}. This corresponds to performing \ac{RB} as described in protocol alg.~\ref{prot:rand_bench} but instead of sampling group elements from the group $\gr$ uniformly at random one samples group elements according to some prescribed probability distributions $\nu_i:\gr\to [0,1]$ (with $i$ indicating the time at which the gate is applied). In ref.\ \citep{francca2018approximate} it has been  argued that as long as the distributions $\nu_i$ are all close to the uniform distribution in the $l_1$-norm, then the output data of approximate \ac{RB} is close to the output data of exact \ac{RB}.

As a corollary of theorem\ \ref{thm:mother} we obtain a similar result. Our result is somewhat less general than the one given in theorem 17 of ref.\ \citep{francca2018approximate}. In particular, we will  assume that all distributions $\nu_i$ are equal to a fixed distribution $\nu$. In return for this restriction we will be able to make a much stronger statement on the behaviour of the \ac{RB} output data.
Moreover, our approach does not require the gate-independent noise assumption (replacing it with the more general diamond norm assumption of eq.\ (\ref{eq:norm_assump})). We have the following statement.

\begin{theorem}[Randomized benchmarking data with non-uniform sampling]\label{thm:non_uniform}
	Let $\nu$ be a probability distribution on $\gr$ and $p_\nu(i,m)$ be the outcome probability associated with a non-uniform \ac{RB} experiment with implementation map $\phi$ and reference representation $\omega(g) = \bigoplus_{\lambda\in \Lambda}\sigma_\lambda^{\oplus n_\lambda}$.
	Moreover, assume that there exists  $\delta,\delta'>0$ such that
	\begin{align}\label{eq:norm_assump_approx}
	\frac{1}{|\gr|}\sum_{g\in \gr} \dnorm{\omega(g) - \phi(g)} &\leq \delta,\\
	\sum_{g\in \gr} |\nu(g) - \frac{1}{|\gr|}| \leq \delta',
	\end{align}
	with $\delta+ \delta'\leq 1/9$.
	Now $p_\nu(i,m)$ is well approximated as
	\begin{equation}\label{eq:non-uniform decay}
	|p_\nu(i,m) -\sum_{\lambda \in \Lambda} \tr(A_\lambda (M^\nu_\lambda)^m)| \leq 8\left((\delta+\delta')\left[1+ \frac{2(\delta+\delta')}{1-5(\delta+\delta')}\right] \right)^{m}
	\end{equation}
	where $M^\nu_\lambda, A_\lambda$ are $n_\lambda\times n_\lambda$ real matrices, $M^\nu_\lambda$ depends on the implementation $\phi$ and the measure $\nu$.
\end{theorem}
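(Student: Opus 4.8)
The plan is to reduce this statement to the machinery already developed for Theorem~\ref{thm:mother}, treating the non-uniform sampling as a modification of the map that is powered in Fourier space. Starting from the convolution expression in eq.~(\ref{eq:rb_as_conv}) specialized to $\nu_i=\nu$ for all $i$, the $m$ sampled gates contribute the weighted map $(\nu\phi)(g)=\nu(g)\phi(g)$, while the (deterministic) inversion gate contributes an unweighted $\phi$. It is convenient to absorb the sampling weight by setting $\phi_\nu(g)=|\gr|\,\nu(g)\,\phi(g)$, so that $\phi_\nu=\phi$ in the uniform case and, at the level of Fourier operators, $p_\nu(i,m)$ is governed by $F(\phi)\,F(\phi_\nu)^m$ in the same way that $p(i,m)$ was governed by $F(\phi)^{m+1}$ in the proof of Theorem~\ref{thm:mother}. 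The argument of Theorem~\ref{thm:mother} can then be reproduced almost verbatim with $F(\phi)$ in the \emph{powered} factor replaced by $F(\phi_\nu)$, provided one controls the perturbation $F(\phi_\nu)-F(\omega)$ in the averaged norm $\norm{\cdot}_{\mathrm{m}}$.

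The single new estimate to establish is therefore
\begin{equation}
\norm{F(\phi_\nu)-F(\omega)}_{\mathrm{m}} \leq \delta + \delta'.
\end{equation}
By linearity and the definition of $\norm{\cdot}_{\mathrm{m}}$ in eq.~(\ref{eq:fourier_norms}) this equals $\tfrac{1}{|\gr|}\sum_{g}\dnorm{|\gr|\nu(g)\phi(g)-\omega(g)}$. Writing $|\gr|\nu(g)\phi(g)-\omega(g)=(|\gr|\nu(g)-1)\phi(g)+(\phi(g)-\omega(g))$ and using the triangle inequality together with $\dnorm{\phi(g)}\leq 1$ (each $\phi(g)$ is trace non-increasing and completely positive), the first piece contributes $\tfrac{1}{|\gr|}\sum_g\big||\gr|\nu(g)-1\big|=\sum_g|\nu(g)-1/|\gr||\leq\delta'$ and the second contributes $\tfrac{1}{|\gr|}\sum_g\dnorm{\phi(g)-\omega(g)}\leq\delta$. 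This is precisely the substitution $\delta\to\delta+\delta'$ that appears throughout eq.~(\ref{eq:non-uniform decay}).

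With this in hand I would apply Theorem~\ref{thm:subspace_pert} to $F(\phi_\nu)$, regarded as a perturbation of the orthogonal projection $F(\omega)=X_1$, whose applicability condition eq.~(\ref{eq:pert_props}) is met because $\delta+\delta'\leq 1/9$. Splitting into dominant and subdominant invariant subspaces exactly as in eq.~(\ref{eq:mother_inter}) produces the dominant term $\sum_{\lambda\in\Lambda}\tr(A_\lambda (M^\nu_\lambda)^m)$, where $M^\nu_\lambda$ is the dominant block of $F(\phi_\nu)$ restricted to $\sigma_\lambda$ and hence depends on both $\phi$ and $\nu$, together with a subdominant remainder to be bounded by $8\left((\delta+\delta')\left[1+ \frac{2(\delta+\delta')}{1-5(\delta+\delta')}\right] \right)^{m}$. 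The main point requiring care—and the chief obstacle—is that $\phi_\nu=|\gr|\nu\phi$ is \emph{not} a genuine quantum channel, so Theorem~\ref{thm:mother} cannot be invoked as a black box and its remainder estimate must be re-derived. Concretely, the uniform proof pulls one factor out in the max-norm using $\norm{F(\phi-\omega)}_{\mathrm{max}}\leq 2$, but $\norm{F(\phi_\nu-\omega)}_{\mathrm{max}}$ need not be small when $\nu$ is sharply peaked at a single group element. The resolution is to instead extract the genuine channel $F(\phi)$ coming from the inversion gate (which satisfies $\norm{F(\phi)}_{\mathrm{max}}\leq 1$) in the max-norm via the mixed-norm inequality eq.~(\ref{eq:norm_id}), keeping all remaining factors—built from $F(\phi_\nu)$, $P_1$, $P_2$, $R_2$ and $L_2$—in the $\norm{\cdot}_{\mathrm{m}}$ norm, where they are controlled purely through $\delta+\delta'$ and the norm estimates on $P_1,P_2$ already recorded in the proof of Theorem~\ref{thm:mother}. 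This reproduces the same functional form of the remainder with $\delta$ uniformly replaced by $\delta+\delta'$, completing the argument.
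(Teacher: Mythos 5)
Your proposal is correct and takes essentially the same route as the paper: the paper's own proof introduces exactly the same effective implementation $\phi_\nu(g)=|\gr|\nu(g)\phi(g)$, writes $p_\nu(i,m)=\braa{\mc{E}_{\mathrm{M}}(\Pi_i)}(\phi*\phi_\nu^{*m})(\gend)\kett{\mc{E}_{\mathrm{SP}}(\rho_0)}$, derives $\gsum{g}\dnorm{\omega(g)-\phi_\nu(g)}\leq\delta+\delta'$ by the same triangle-inequality split, and then simply asserts that the proof of theorem~\ref{thm:mother} ``immediately applies.'' Where you go beyond the paper is in flagging that this last step is not fully black-box: since $\phi_\nu$ is not a quantum channel, the estimate $\norm{F(\phi_\nu-\omega)}_{\mathrm{max}}\leq 2$ that the uniform proof uses to fix its constant prefactor fails --- under the stated assumptions this max-norm can be as large as $2+|\gr|\delta'$, so a verbatim substitution $\delta\to\delta+\delta'$ would not reproduce the prefactor $8$ (and would give a bound that degrades with the group size). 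Your fix --- pulling out the genuine channel factor $F(\phi)$ coming from the inversion gate in the max-norm via eq.~(\ref{eq:norm_id}), so that $\norm{F(\phi)}_{\mathrm{max}}\leq 1$, and keeping all $m$ factors built from $F(\phi_\nu-\omega)$, $P_1$, $P_2$ in $\norm{\cdot}_{\mathrm{m}}$, where only $\norm{F(\phi_\nu-\omega)}_{\mathrm{m}}\leq\delta+\delta'$ is needed --- is sound, and it is in fact precisely the device the paper itself deploys later in the proof of theorem~\ref{thm:subset_mother}. So your write-up is, if anything, more careful than the paper's own four-line proof at the one point where care is genuinely required.
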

\begin{proof}
Consider the map $\phi_\nu:\gr\to \mc{S}_d: g\to |\gr|\nu(g)\phi(g)$. Note that we can think of non-uniform \ac{RB} as being uniform \ac{RB} with this (not trace preserving but still completely positive) implementation map. In particular we have
\begin{equation}
p_\nu(i,\gend,m) = \braa{\mc{E}_{\mathrm{M}}(\Pi_i)}(\phi*\phi_\nu^{*m})(\gend)\kett{\mc{E}_{\mathrm{SP}}(\rho_0)}
\end{equation}
which is just (\ref{eq:rb_as_conv_stand})
but with the `effective implementation'  $\phi_\nu$.
From the assumptions of the theorem  we have
\begin{equation}
\gsum{g} \dnorm{\omega(g)- \phi_\nu(g)} \leq  \frac{1}{|\gr|}\sum_{g\in \gr} \dnorm{\omega(g) - \phi(g)} + \sum_{g\in \gr} |\nu(g) - \frac{1}{|\gr|}| \leq \frac{1}{9} .
\end{equation}
Hence, the proof of theorem \ref{thm:mother} immediately applies to $p_\nu(i,\gend,m)$, yielding (\ref{eq:non-uniform decay}).

\end{proof}
We note that in the case of NIST \ac{RB} \citep{boone2019randomized,boone2019randomized} the probability distribution over (a subgroup of) the single qubit Clifford group is not strictly speaking close enough to uniform to apply the above theorem. This can be easily solved by blocking a few gate applications together, defining a new effective implementation map $\phi' = (\nu\phi)*(\nu\phi)\cdots *(\nu\phi)$ which is close enough to uniformly distributed to apply theorem \ref{thm:non_uniform}.

The above approach fails utterly when applied to subset \ac{RB}.
In this scenario the distribution $\nu$ only has support on a small subset $A$ of $\gr$ and consequently $\sum_{g\in \gr}|\nu(g) - \frac{1}{|\gr|}| \approx 1$ in many cases.
This is not necessarily a weakness of theorem \ref{thm:mother} but rather a statement of the fact that strong deviations from exponential behaviour can be observed if one does not give the distribution $\nu$ time to converge to the uniform distribution through repeated convolution.
This was already noted more or less explicitly in previous papers on subset \ac{RB}.
There are two approaches to solving this problem. The first, followed in refs.\ \cite{proctor2019direct,francca2018approximate,helsen2019new,ryan2009randomized} is to
restrict the set of sequence lengths $\md{M}$ at which \ac{RB} data is
gathered to $m\geq m_{\mathrm{mix}}$ where $m_{\mathrm{mix}}$ is related to the mixing time of the distribution $\nu$.
Note that in the direct \ac{RB} proposal \citep{proctor2019direct},
this convergence time is instead enforced directly by applying a uniformly random gate before applying non-uniformly sampled gates.
The second approach is to take this deviation from uniform \ac{RB} behaviour at face value~\citep{CycleBenchmarking} and draw conclusions from the \ac{RB} output directly.
We believe this latter approach is more accurately classified as an interleaved benchmarking scheme and we will discuss it there.

With regards to the first approach we can make a statement akin to theorem \ref{thm:mother} about subset \ac{RB} procedures by making the (natural) assumption that upon equilibration of the distribution $\nu$ the quality of the total gates has not degraded too much. Intuitively, this means that the gates that have high weight in the initial distribution are of high enough quality to generate (by composition) good quality implementations of all gates in the group. Concretely,
we have the following theorem.

\begin{theorem}[Subset \acl{RB}]\label{thm:subset_mother}
Let $\nu$ be a probability distribution on $\gr$ and $p_\nu(i,m)$ be the outcome probability associated with a non-uniform \ac{RB} experiment with implementation map $\phi$ and reference representation $\omega(g) = \bigoplus_{\lambda\in \Lambda}\sigma_\lambda^{\oplus n_\lambda}$.
	Moreover, assume that there exists an integer $m_{\mathrm{mix}}$ and real numbers $\delta,\delta'>0$ such that
	\begin{align}\label{eq:norm_assump_subset}
		\sum_{g\in \gr} |\nu^{*m_\mathrm{mix}}(g) - \frac{1}{|\gr|}| \leq \delta',
		\\
	\sum_{g\in \gr} \nu(g)\dnorm{\omega(g) - \phi(g)} &\leq \frac{\delta}{m_{\mathrm{mix}}}
	\end{align}
	with $\delta+ \delta'\leq 1/9$.
	Now $p_\nu(i,m)$ is well approximated as
	\begin{equation}\label{eq:subset_decay}
	|p_\nu(i,m) -\sum_{\lambda \in \Lambda} \tr(A_\lambda M_\lambda^{m-m_{\mathrm{mix}}})| \leq \varepsilon
	\end{equation}
	with $M_\lambda$ the projection onto the $n_\lambda$ dimensional dominant invariant subspace of $\mc{F}(\nu\phi)[\sigma_\lambda]$ and where
  \begin{equation}
  \varepsilon \leq 2\delta''\bigg(\!\!\left[1+ \frac{2\delta''}{1-5\delta''}\right]  \!\! \left[1+\frac{\delta''}{1-\delta''}\right]  \!    +  \!   \left[\frac{2\delta''}{1-5\delta''}\right]^2  \! \!   \left[\frac{\delta''}{1-\delta''}\right]\!  \!   \left[3 +\frac{2\delta''}{1-5\delta''}\right] \!\!  \bigg) \leq 4 \delta''
  \end{equation}
  with $\delta'' = \delta + \delta'$.
\end{theorem}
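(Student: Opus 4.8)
The plan is to reduce subset \ac{RB} to the power-iteration analysis of Theorem~\ref{thm:mother}, carried out after the sampling distribution has mixed. As in the proof of Theorem~\ref{thm:non_uniform}, I first fold the non-uniform weights into an effective single-step map $\phi_\nu(g)\coloneqq|\gr|\,\nu(g)\,\phi(g)$, so that $p_\nu(i,m)=\braa{\mc{E}_{\mathrm{M}}(\Pi_i)}(\phi*\phi_\nu^{*m})(\gend)\kett{\mc{E}_{\mathrm{SP}}(\rho_0)}$ and the iteration is governed, in Fourier space, by the Fourier operator $F(\phi_\nu)$. The obstruction is that for concentrated $\nu$ the map $\phi_\nu$ is \emph{not} close to $\omega$ in the $\norm{\cdot}_{\mathrm{m}}$-norm (indeed $\norm{F(\phi_\nu-\omega)}_{\mathrm{m}}\approx 1$), so Theorem~\ref{thm:mother} does not apply to $F(\phi_\nu)$ directly. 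The remedy is to pass to the $m_{\mathrm{mix}}$-fold convolution $\psi\coloneqq\phi_\nu^{*m_{\mathrm{mix}}}$, whose Fourier operator is $F(\psi)=F(\phi_\nu)^{m_{\mathrm{mix}}}$ and which \emph{is} close to $\omega$.

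The first key step is to show $\norm{F(\psi)-F(\omega)}_{\mathrm{m}}\lesssim\delta''$, where $\delta''\coloneqq\delta+\delta'$. I split $\phi_\nu=\omega_\nu+\eta$ with $\omega_\nu(g)\coloneqq|\gr|\,\nu(g)\,\omega(g)$ and $\eta(g)\coloneqq|\gr|\,\nu(g)\,(\phi(g)-\omega(g))$. Because $\omega$ is a representation, an induction gives $\omega_\nu^{*k}(g)=|\gr|\,\nu^{*k}(g)\,\omega(g)$, so (using $\omega^{*m_{\mathrm{mix}}}=\omega$) the mixing assumption yields $\norm{F(\omega_\nu^{*m_{\mathrm{mix}}})-F(\omega)}_{\mathrm{m}}=\sum_{g\in\gr}\big|\nu^{*m_{\mathrm{mix}}}(g)-|\gr|^{-1}\big|\le\delta'$. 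The noise part obeys $\norm{F(\eta)}_{\mathrm{m}}=\sum_{g\in\gr}\nu(g)\dnorm{\phi(g)-\omega(g)}\le\delta/m_{\mathrm{mix}}$, and the per-step growth is $\norm{F(\phi_\nu)}_{\mathrm{m}}=\sum_{g\in\gr}\nu(g)\dnorm{\phi(g)}\le1+\delta/m_{\mathrm{mix}}$. Telescoping $\psi-\omega$ into $(\omega_\nu^{*m_{\mathrm{mix}}}-\omega)+\sum_{k=0}^{m_{\mathrm{mix}}-1}\phi_\nu^{*k}*\eta*\omega_\nu^{*(m_{\mathrm{mix}}-1-k)}$ and using sub-multiplicativity together with $\norm{F(\omega_\nu)}_{\mathrm{m}}=1$ bounds the noise contribution by $m_{\mathrm{mix}}\cdot(1+\delta/m_{\mathrm{mix}})^{m_{\mathrm{mix}}}\cdot(\delta/m_{\mathrm{mix}})\le\delta\,e^{\delta}$. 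This is exactly where the $1/m_{\mathrm{mix}}$ scaling of the noise assumption is indispensable: the window-length amplification $(1+\delta/m_{\mathrm{mix}})^{m_{\mathrm{mix}}}=O(1)$ cancels the $m_{\mathrm{mix}}$ accumulated error terms, keeping the total of order $\delta$ rather than blowing up.

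With this in hand I apply the subspace perturbation theorem~\ref{thm:subspace_pert} to $F(\psi)=F(\omega)+F(\psi-\omega)$ exactly as in Theorem~\ref{thm:mother}: $F(\omega)$ is the orthogonal projector of rank $\sum_\lambda n_\lambda$ (take $X_1=F(\omega)$, $X_2=\1-F(\omega)$, $\mathrm{sep}(\1,0)=1$), and since the perturbation has $\norm{\cdot}_{\mathrm{m}}$-norm $\le\delta''\le1/9$ the hypotheses~(\ref{eq:pert_props}) hold, producing $R_{1,2}$, $L_{1,2}$, $P_{1,2}$ with the norm bounds of Theorem~\ref{thm:mother} but with $\delta$ replaced by $\delta''$. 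Crucially, since $F(\psi)=F(\phi_\nu)^{m_{\mathrm{mix}}}$ commutes with $F(\phi_\nu)$, the dominant and sub-dominant spectral subspaces of $F(\psi)$---ranges of spectral projectors, hence polynomials in $F(\psi)$ and so in $F(\phi_\nu)$---are also invariant under the single-step operator $F(\phi_\nu)$. Thus $F(\phi_\nu)=R_1C_1L_1\ct+R_2C_2L_2\ct$ with $C_j=L_j\ct F(\phi_\nu)R_j$ block-diagonal over $\lambda$ and $C_j^{m_{\mathrm{mix}}}=B_j\coloneqq L_j\ct F(\psi)R_j$. I define $M_\lambda\coloneqq(C_1)_\lambda$, the dominant block, i.e.\ the restriction of the effective Fourier mode $\mc{F}(\phi_\nu)[\sigma_\lambda]$ (proportional to $\mc{F}(\nu\phi)[\sigma_\lambda]$, hence with the same invariant subspaces) to its $n_\lambda$-dimensional dominant invariant subspace.

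Finally I insert $F(\phi_\nu)^m=R_1C_1^mL_1\ct+R_2C_2^mL_2\ct$ into the Fourier form of $p_\nu(i,m)$ and split into a dominant term (1) and a sub-dominant term (2). For (1) I write $C_1^m=C_1^{m-m_{\mathrm{mix}}}B_1$ and reuse the rank-one resolution of $R_1,L_1$ from Theorem~\ref{thm:mother} verbatim, absorbing $F(\phi)$, $R_1$, the single factor $B_1$, $L_1\ct$ and the SPAM covectors into the matrices $A_\lambda$; as this term is exact it yields precisely $\sum_{\lambda}\tr(A_\lambda M_\lambda^{m-m_{\mathrm{mix}}})$, the shifted exponent reflecting the one folded mixing window. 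For (2) I must bound the norm uniformly in $m$, and here the single-step block needs care: since $\norm{C_2}_{\mathrm{m}}=1+O(\delta'')$ its high powers would blow up, so instead I write $m=r+q\,m_{\mathrm{mix}}$ with $0\le r<m_{\mathrm{mix}}$, $q\ge1$, use $C_2^m=C_2^rB_2^q$ with the invariance identity $F(\phi_\nu)^rR_2=R_2C_2^r$ to rewrite the block as $F(\phi_\nu)^rR_2B_2^qL_2\ct$. Now $\norm{F(\phi_\nu)^r}_{\mathrm{m}}\le(1+\delta/m_{\mathrm{mix}})^{m_{\mathrm{mix}}}\le e^{\delta}=O(1)$ because $r<m_{\mathrm{mix}}$, while $\norm{B_2^q}_{\mathrm{m}}\le\norm{B_2}_{\mathrm{m}}\lesssim\delta''$ for $q\ge1$ since $B_2$ is the genuinely contractive sub-dominant block of $F(\psi)$; estimating the rest as in the term-(2) bound of Theorem~\ref{thm:mother} (with $\delta\to\delta''$) gives the stated $\varepsilon$. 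I expect the main obstacle to be precisely this uniform-in-$m$ control of $C_2$: one must extract whole mixing windows to expose the contraction $\norm{B_2}\lesssim\delta''$ while parking the leftover low power on the cleanly-bounded raw operator $F(\phi_\nu)^r$---the $e^{\delta}$ bound being exactly what the $1/m_{\mathrm{mix}}$-scaled noise assumption buys---and then check that every constant still closes under $\delta''\le1/9$.
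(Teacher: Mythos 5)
Your proposal is correct and follows essentially the same route as the paper's own proof: the same effective map $\phi_\nu(g)=|\gr|\nu(g)\phi(g)$, the same two-part bound on $\norm{F(\phi_\nu^{*m_{\mathrm{mix}}})-F(\omega)}_{\mathrm{m}}$ (mixing term plus telescoped noise term), the same application of the subspace perturbation theorem to $F(\phi_\nu^{*m_{\mathrm{mix}}})$, and the same commutation argument giving shared invariant subspaces for $F(\phi_\nu)$ and $F(\phi_\nu^{*m_{\mathrm{mix}}})$.

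The one place you deviate is self-inflicted. The paper's standing assumption is that $\phi(g)$ is a (trace non-increasing) quantum channel, so $\dnorm{\phi(g)}\le 1$ and hence $\norm{F(\phi_\nu)}_{\mathrm{m}}\le\sum_{g}\nu(g)\dnorm{\phi(g)}\le 1$: the single-step operator is already an $\mathrm{m}$-norm contraction. Using this, the telescoping estimate gives exactly $m_{\mathrm{mix}}\cdot\delta/m_{\mathrm{mix}}=\delta$ (not $\delta e^{\delta}$), and the sub-dominant term is killed by simply bounding $\norm{F(\phi_\nu)^{m-m_{\mathrm{mix}}-1}}_{\mathrm{m}}\le 1$ and letting the sub-dominant block of $F(\phi_\nu^{*m_{\mathrm{mix}}})$ appear once, with norm $O(\delta'')$ — no decomposition $m=r+qm_{\mathrm{mix}}$ is needed at all. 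Your looser bound $\norm{F(\phi_\nu)}_{\mathrm{m}}\le 1+\delta/m_{\mathrm{mix}}$ is not just cosmetic: with $\delta''=\delta e^{\delta}+\delta'$ the hypothesis $\delta''\le 1/9$ of the perturbation theorem can strictly fail when $\delta+\delta'=1/9$, so as written your constants do not quite close; replacing that bound by the contraction property repairs this immediately and collapses your final paragraph to two lines.
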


Note that this theorem is qualitatively less strong than theorem \ref{thm:mother}. In particular, we can not guarantee that the distance between the output data of subset \ac{RB} and the exponential decays associated with the irreducible sub-representations of the reference representation closes exponentially fast with increasing sequence length. However, our bound on this distance is stronger than previous rigorous statements (theorem 20 in ref.\ \cite{francca2018approximate}) and works under weaker assumptions. The distance bound given in ref.\ \cite{helsen2019new} (theorem 3) does close exponentially but the proof relies critically on the fact that $\nu$ is uniformly non-zero on a (large) subgroup coset in $\gr$, and thus only applies to a far more restricted situation. {Note also that it does not directly apply to the approach taken in \cite{proctor2019direct}. However, we believe that with very minor alterations the reasoning below can be made to fit. }

\begin{proof}
Consider again the  map $\phi_\nu:\gr\to \mc{S}_d: g\to |\gr|\nu(g)\phi(g)$. We have
\begin{equation}
p_\nu(i,m) = \braa{\mc{E}_{\mathrm{M}}(\Pi_i)}(\phi*\phi_\nu^{*m})(\gend)\kett{\mc{E}_{\mathrm{SP}}(\rho_0)} .
\end{equation}
We now establish a bound on the quality of $\phi_\nu^{*m_{\mathrm{mix}}}$, namely we show that
\begin{equation}\label{eq:non_uniform_norm}
\gsum{g}\dnorm{\phi_\nu^{*m_{\mathrm{mix}}}(g) - \omega(g)} \leq \delta + \delta'\leq \frac{1}{9}.
\end{equation}
This can be seen as follows
\begin{equation}
\gsum{g}\dnorm{\phi_\nu^{*m_\mathrm{mix}}(g)- \omega(g)} \leq  \gsum{g} \dnorm{\omega_\nu^{*m_\mathrm{mix}}(g)- \omega(g)} + \gsum{g}\dnorm{\phi_\nu^{*m_\mathrm{mix}} - \omega_\nu^{*m_\mathrm{mix}}}
\end{equation}
with $\omega_\nu(g) = |\gr|\nu(g)\omega(g)$. Writing out the convolution in the first term and changing variables,
we get
\begin{align}
\gsum{g} \dnorm{\omega_\nu^{*m_\mathrm{mix}}(g)- \omega(g)}&= \gsum{g} \dnorm{\sum_{g_1,\ldots g_{m_{\mathrm{mix}}-1}\in \gr} |\gr|\nu(gg_{m_{\mathrm{mix}}-1}^{-1})\ldots \nu(g_1) \omega(gg_{m_{\mathrm{mix}}-1}^{-1}) \cdots \omega(g_1)- \omega(g)}\\
&\leq \gsum{g} \bigg|\sum_{g_1,\ldots g_{m_{\mathrm{mix}}-1}}|\gr|\nu(gg_{m_{\mathrm{mix}}-1}^{-1})\ldots \nu(g_1)-1\bigg| \dnorm{\omega(g)}\\
&= \sum_{g\in \gr} \bigg|\frac{1}{|\gr|} - \nu^{*{m_{\mathrm{mix}}}}(g)\bigg|
\end{align}
for the first term and
\begin{align}
\gsum{g}\dnorm{\phi_\nu^{*m_\mathrm{mix}} - \omega_\nu^{*m_\mathrm{mix}}}&= \gsum{g}\dnorm{\sum_{j=1}^{m_{\mathrm{mix}}}\phi_\nu^{*(m_\mathrm{mix}-j)}*(\phi_\nu - \omega_\nu)* \omega_\nu^{*(j-1)}(g) }\\
&\leq m_{\mathrm{mix}}\sum_{g\in \gr} \nu(g) \dnorm{\phi(g)-\omega(g)} ,
\end{align}
where we have used the telescoping series identity $A^{m} - B^{m} = \sum_{j=1}^{m}A^{m-j} (A-B)B^{j-1}$ which holds for any elements $A,B$ of an associative algebra (such as the implementation maps with convolution), the sub-multiplicativity of the diamond norm,  and the fact that $\dnorm{\phi(g)} = \dnorm{\omega(g)} = 1$ for all $g\in \gr$. Together with the theorem assumptions, this yields (\ref{eq:non_uniform_norm}).
Now as in theorem \ref{thm:mother}, we can write the \ac{RB} output data as
\begin{equation}
p_\nu(i,m) = \braa{\mc{E}_{\mathrm{M}}(\Pi_i)}\tr_{V_{\omega_\gr}}\bigg(\mc{D}_{\gr}(\overline{\omega}_{\gr}(\gend^{-1})\otimes \1) F(\phi) F(\phi_\nu)^{m'}F(\phi_\nu^{*m_\mathrm{mix}})\bigg) \kett{\mc{E}_{\mathrm{SP}}(\rho_0)},
\end{equation}
where $m' = m- m_{\mathrm{mix}}$.
We can again consider $F(\phi_\nu^{*m_\mathrm{mix}})$ as a perturbation of $F(\omega)$. Since $F(\omega)$ is a projector, the operator $F(\phi_\nu^{*m_\mathrm{mix}})$ will resolve into a dominant an sub-dominant invariant subspace (as in theorem \ref{thm:mother}). We have
\begin{align}\label{eq:subset_inter}
  p(i,m) &= \braa{\mc{E}_{\mathrm{M}}(\Pi_i)} \tr_{V_{\omega_\gr}}\bigg(\mc{D}_{\gr}(\overline{\omega}_{\gr}(\gend^{-1})\otimes \1) \Big[F(\phi)F(\phi_\nu)^{m-m_{\mathrm{mix}}}R_1\big[X_1\ct F(\omega)X_1\notag \\&\hspace{15em}+ (X_1\ct-P_1X_2\ct) F(\phi_\nu^{*m_\mathrm{mix}}-\omega)X_1\big]L_1\ct \Big]\bigg)\kett{\mc{E}_{\mathrm{SP}}(\rho_0)} \notag\\&\hspace{5em}+  \braa{\mc{E}_{\mathrm{M}}(\Pi_i)} \tr_{V_{\omega_\gr}}\bigg(\mc{D}_{\gr}(\overline{\omega}_{\gr}(\gend^{-1})\otimes \1) \Big[F(\phi)F(\phi_\nu)^{m-m_{\mathrm{mix}}} R_2\big[X_2\ct F(\omega)X_2 \notag\\&\hspace{17em}+ (X_2\ct-P_1X_1\ct) F(\phi_\nu^{*m_\mathrm{mix}}-\omega) X_2\big] L_2\ct\Big]\bigg)\kett{\mc{E}_{\mathrm{SP}}(\rho_0)}.
\end{align}

Now note that $F(\phi_\nu^{*m_\mathrm{mix}})$ and $F(\phi_\nu)$ commute, and hence share invariant subspaces. This means we can write the first term in eq.\
(\ref{eq:subset_inter}) as
\begin{equation}
(1) = \sum_{\lambda\in \Lambda} \tr\big(A_\lambda M_\lambda^{m-m_\mathrm{mix}}).
\end{equation}
Finally, we can bound the second term in eq.\ (\ref{eq:subset_inter}) as
\begin{align}
|\;(2)\;|&\leq \norm{F(\phi)F(\phi_\nu)^{m-m_{\mathrm{mix}}}R_2\big[X_2\ct F(\omega)X_2 + (X_2\ct-P_1X_1\ct) F(\phi_\nu^{*m_\mathrm{mix}}-\omega) X_2\big] L_2\ct  }_{\mathrm{max}}\\
&\leq \norm{R_2\big[X_2\ct F(\omega)X_2 + (X_2\ct-P_1X_1\ct) F(\phi_\nu^{*m_\mathrm{mix}}-\omega) X_2\big] L_2\ct}_{m} \norm{F(\phi_\nu)^{m-m_{\mathrm{mix}} -1} }_{\mathrm{m}}\norm{F(\phi) }_{\mathrm{max}}
\end{align}
using the max-mean inequality of the norms on Fourier operators. Note now that
\begin{equation}
\norm{F(\phi_\nu)^{m-m_{\mathrm{mix}}-1} }_{\mathrm{m}} \leq \left[\sum_{g\in \gr} \nu(g) \dnorm{\phi(g)}\right]^{m-m_{\mathrm{mix}}-1}\leq 1
\end{equation}
where we have used that $\nu$ is a probability distribution and that $\dnorm{\phi}\leq 1$. Moreover, we have that $\norm{F(\phi) }_{\mathrm{max}} \leq 1$.
Using this and the reasoning from theorem \ref{thm:mother} we can thus bound the second term as
\begin{equation}
|\;(2)\;|\leq 2\norm{F(\phi_\nu^{*{m_\mathrm{mix}}} - \omega)}_{m} \bigg(\!\!\left[1+ \frac{2\delta''}{1-5\delta''}\right]  \!\! \left[1+\frac{\delta''}{1-\delta''}\right]  \!    +  \!   \left[\frac{2\delta''}{1-5\delta''}\right]^2  \! \!   \left[\frac{\delta''}{1-\delta''}\right]\!  \!   \left[3 +\frac{2\delta''}{1-5\delta''}\right] \!\!  \bigg)
\end{equation}
with $\delta'' = \delta+ \delta'$. Inserting the assumption that $\norm{F(\phi^{*{m_\mathrm{mix}}} - \omega)}_{m}\leq \delta''$ we obtain the statement of the theorem.
\end{proof}

\subsection{Interleaved \acl{RB}}\label{subsec:interleaved}
As discussed in \ref{subsec:rb_typology}, a common variant of \ac{RB} is interleaved randomized benchmarking (IRB). IRB is performed like uniform \ac{RB}, as
formulated in alg.~\ref{prot:rand_bench}, but the reference implementation is not a representation. Instead a fixed operation $C$ is being interleaved between the application of randomly selected group elements. The outcome of this experiment is then compared to the same \ac{RB} experiment without the interleaving gate to infer the quality of the interleaved gate $C$. The literature splits into two sections, standard interleaved \ac{RB}~\citep{magesan2012efficient,sheldon2016characterizing} and non-standard interleaved \ac{RB}~\citep{harper2017estimating,PhysRevLett.123.060501}. {We emphasize here that we discuss the so-called `interleaved step of the interleaved \ac{RB} protocol, and do not interpret the resulting decay rate (for a thorough discussion of the relationship of interleaved RB decay rates and their interpretation see \cite{carignan2019bounding}).}

\subsubsection{Standard interleaved \acl{RB}} In the standard protocol the interleaved operation $C$ is applied after every randomly selected gate and is also a part of the group $\gr$. Hence at the end of a random sequence, the inversion step can be performed inside the group. An IRB output data is thus of the form
\begin{equation}
p_{\mathrm{IRB}}(i,\gend, m) = \frac{1}{|\gr|^m} \sum_{g_1,\ldots ,g_m \in \gr}\braa{\mc{E}_{\mathrm{M}}(\Pi_i)} \phi(\gend (g_1C\ldots g_m C)^{-1})\phi(C)\phi(g_m)\cdots \phi(C)\phi(g_1)\kett{\mc{E}_{\mathrm{SP}}(\rho_0)}
\end{equation}
for a POVM element $\Pi_i$, an ending gate $\gend$, a sequence length $m$, an implementation map $\phi$ and an initial state $\rho_0$. It is interesting to interpret this procedure in the light of the protocol given in section \ref{subsec:rb_protocol}. Namely we can think of defining a probability distribution $\nu_C$ over $\gr$, that takes the value $1$ for $g = C$ and $0$ for all other group elements. With this probability distribution, we can reconsider the above as an \ac{RB} experiment according to the protocol written in alg.~\ref{prot:rand_bench}, we have
 \begin{align}
p_{\mathrm{IRB}}(i,\gend, m) = p(i,\gend,2m)= \!\!\!\!\sum_{g_1,\ldots ,g_{2m} \in \gr}&\!\!\!\!\braa{\mc{E}_{\mathrm{M}}(\Pi_i)} \phi(\gend (g_1g_2\ldots g_m)^{-1})\nu_C(g_{2m})\phi(g_{2m})\notag \\&\!\!\!\!\!\!\!\times \mu(g_{2m-1})\phi(g_{2m-1})\cdots \nu_C(g_2)\phi(g_2)\mu(g_1)\phi(g_1)\kett{\mc{E}_{\mathrm{SP}}(\rho_0)}
\end{align}
where $\mu$ is the uniform distribution on $\gr$. Hence, we can think of standard IRB as being a \ac{RB} experiment with a particular choice of sampling distributions. In this picture, it becomes trivial to extend theorem \ref{thm:mother} to standard interleaved \ac{RB} by considering the map $\phi_C = (\nu_C\phi)*\phi$.
By the standard change of variables we can see
\begin{equation}
p_{\mathrm{IRB}}(i,\gend, m) =\braa{\mc{E}_{\mathrm{M}}(\Pi_i)}\phi*\phi_C^{*m}(\gend)\kett{\mc{E}_{\mathrm{SP}}(\rho_0)}
\end{equation}
and hence interleaved \ac{RB} is just uniform \ac{RB} with the implementation map $\phi_C$. If $\phi(C)$ is close enough to its reference representation element $\omega(C)$ the assumption eq.\ (\ref{eq:norm_assump}) is reasonable for $\phi_C$ as well. Hence, theorem \ref{thm:mother} holds equally well for interleaved \ac{RB}.

Non-standard interleaved RB protocols \citep{harper2017estimating,PhysRevLett.123.060501,CycleBenchmarking,kimmel2014robust} depart from the above framework by including interleaved gates that are not part of the group $\gr$, (the Pauli group in the case of ref.\ \cite{CycleBenchmarking} and the Clifford group in the case of ref.\ \cite{harper2017estimating}) and sampling from the group in a non-uniform manner.  These are somewhat idiosyncratic so we will treat them separately. We will see that the protocols of refs.\ \cite{harper2017estimating,PhysRevLett.123.060501} are covered by theorem \ref{thm:mother}, while the protocols of ref.\ \cite{CycleBenchmarking} and ref.\
\cite{kimmel2014robust} are not covered. We expect that it is possible to make guarantees on the output data of these protocols with suitable adaptations to theorem \ref{thm:mother} but we do not pursue this here.

\subsubsection{Interleaved T-gate \acl{RB}}
In ref.\ \cite{harper2017estimating} the quality of a $T$-gate (with ideal implementation $\mc{T}$), with an associated noisy implementation $\widetilde{\mc{T}}$ is assessed by estimating the following quantity
\begin{equation}
p_T(m) =\frac{1}{|\mathbb{P}_q|^m |\mathbb{C}_q|^m} \sum_{\substack{p_1,\ldots p_m\in \mathbb{P}_q\\g_1,\ldots g_m\in \mathbb{C}_q}}\braa{\mc{E}_{\mathrm{M}}(\Pi_i)}  \phi((g_m t(p_m) \ldots g_1 t(p_1))^{-1})\phi(g_m)\widetilde{\mc{T}} \phi(p_m)\widetilde{\mc{T}} \cdots \phi(g_1) \widetilde{\mc{T}} \phi(p_1)\widetilde{\mc{T}}\kett{\mc{E}_{\mathrm{SP}}(\rho_0)}
\end{equation}
with $\mathbb{C}_q$ the $q$-qubit Clifford group, $\mathbb{P}_q\subset \mathbb{C}_q$ the Pauli group and $\phi:\mathbb{C}\to \mc{S}_d$ an implementation of the Clifford group (and the Pauli group) and $t(p):\mathbb{P}_q\to \mathbb{C}_q$ is an injective map mapping Pauli elements $p$ to $TpT\ct$. Because $T$ is in the third level of the Clifford hierarchy we have $TpT\ct \in \mathbb{C}_q$ for all $p\in \mathbb{P}_q$ making the above well defined. By defining the map $\phi_T(g):\mathbb{C}_q\to \mc{S}_d:g \mapsto \nu_T(g)\mc{T}\ct \phi(t^{-1}(g))\mc{T}$ with
\begin{equation}
\nu_T(g) = \frac{|\mathbb{C}_q|}{|\mathbb{P}_q|}I(g \in \mathrm{Im}(t))
\end{equation}
a probability distribution on $\mathbb{C}_q$ taking non-zero value only on the image of the map $t$ (strictly speaking $t^{-1}(g)$ is not defined for $g\not \in \mathrm{Im}(t)$, but $\nu_T$ is zero there anyway).
With these definitions we can rewrite the output probability as
\begin{equation}
p_T(m) = \braa{\mc{E}_{\mathrm{M}}(\Pi_i)}  (\phi*(\phi*\phi_T)^{*m})(e)\kett{\mc{E}_{\mathrm{SP}}(\rho_0)} .
\end{equation}
Hence, theorem \ref{thm:mother} generalizes to $p_T(m)$ as long as \eqref{eq:norm_assump} is satisfied for the convoluted map $\phi*\phi_T$. In the ideal case of $\phi=\omega$ (the reference representation) and $\mc{T} = T$ we see that $\phi*\phi_T(g) = \omega(g)$. Hence this is a reasonable assumption to make, and theorem \ref{thm:mother} thus covers the protocol presented  in ref.\
\cite{harper2017estimating}.

\subsubsection{Individual gate benchmarking}
Individual \ac{RB}, as proposed  in ref.\ \cite{PhysRevLett.123.060501}, is an interleaved \ac{RB} protocol characterized by uniform probability distributions and, interestingly, a reference implementation $\phi_r$ that is not a representation. Rather, the reference implementation is of the form $\phi_r(g) = \mc{U} \omega(g)$ where $\omega(g)$ is the standard action by conjugation, i.e. $\omega(g)(\rho)=U_g\rho U_g\ct$, and $\mc{U}(\rho) = U\rho U\ct$ is a fixed unitary gate (that is not a part of the group $\gr$). Moreover, $\mc{U}$ is assumed to commute with the representation $\omega(g)$.
The output \ac{RB} data $p(i,m)$ associated with this procedure is of the form
(\ref{eq:rb_as_conv_stand}),
however, the central assumption  (eq. \ref{eq:norm_assump}) of theorem \ref{thm:mother} is generally far from satisfied (unless $\mc{U}$ is the identity). However,
we can make the alternative assumption that
\begin{equation}\label{eq:ind_gate_assump}
\gsum{g}\dnorm{\mc{U}\omega(g) - \mc{\widetilde{U}} \phi(g)}\leq\delta
\end{equation}
where $\mc{\widetilde{U}}$ is the noisy implementation of the unitary $\mc{U}$ and $\phi$ is the implementation of the reference representation $\omega(g)$. This is a reasonable assumption to make since
\begin{align}
\gsum{g}\dnorm{\mc{U}\omega(g) - \mc{\widetilde{U}} \phi(g)}&\leq \gsum{g}\dnorm{\mc{U}\omega(g) - \mc{U} \phi(g)} + \dnorm{\mc{U}\phi(g) - \mc{\widetilde{U}} \phi(g)}\\
&\leq\dnorm{\mc{U}- \mc{\widetilde{U}}} + \gsum{g}\dnorm{\omega(g) - \phi(g)}
\end{align}
so as long as the implementation of the interleaving unitary $\mc{U}$ is of sufficient quality eq.\ (\ref{eq:ind_gate_assump})
is reasonable. Furthermore we note that due to the commutation assumption $[\omega(g),\mc{U}]=0$ the Fourier operator $F(\mc{U}\omega)$ has the same dominant invariant subspace as $F(\omega)$ (since $F(\mc{U}\omega) = \1_{\gr}\otimes \mc{U} F(\omega) = F(\omega)\1_{\gr}\otimes \mc{U}$). Hence the proof of theorem \ref{thm:mother} goes through for individual gate benchmarking as well, replacing the assumption eq.\ (\ref{eq:norm_assump}) with eq.\  (\ref{eq:ind_gate_assump}).

\subsubsection{Cycle benchmarking}
{

Cycle benchmarking~\cite{CycleBenchmarking} is a recently developed \ac{RB} protocol that can also be subsumed under the framework of Theorem \ref{thm:mother}, albeit after some non-trivial considerations we will discuss in this section.

The data collection phase of cycle benchmarking can be seen as interleaved \ac{RB} over the Pauli group with the interleaving gate $C$ being a (non-Pauli) Clifford gate. In particular, cycle benchmarking implements sequences $C, g_m, \ldots,C, g_1$ where $g$ is drawn uniformly at random from the Pauli group $\mathbb{P}_q$ and $C$ is a Clifford gate.

A key aspect of cycle benchmarking is the cycle length, i.e. an integer $c$ s.t. $C^c = e$ (note that for any Clifford gate such a cycle length exists). In cycle benchmarking the number of random Pauli elements implemented is always a multiple of the cycle length. Writing $\phi(g)$ for the noisy implementation of the standard conjugation representation of the Pauli group, and $\widetilde{\mathcal{C}}$ for the noisy implementation of the Clifford gate $C$ we can define the cycle implementation map (on the Pauli group):
\begin{equation}
\phi_c(g)  = \frac{1}{|\mathbb{P}_q|^{c-1}} \sum_{\substack{g_1,\ldots g_c \in \mathbb{P}_q\\Cg_c \cdots Cg_1 = g}} \widetilde{\mathcal{C}} \phi(g_c) \ldots \widetilde{\mathcal{C}}\phi(g_1).
\end{equation}
Note that because the Clifford group contains the Pauli group the equation $Cg_c \cdots Cg_1 = g$ makes sense. Now because of the cycle property
\begin{equation}
Cg_c \ldots Cg_1 = (C^{-(c-1)}g_c C^{(c-1)}) \ldots C^{-1} g_2 C g_1 =  g'_c \cdots g'_1
\end{equation}
since $C^{-1}g C$ is always a Pauli element. Hence the equation has exactly $|\mathbb{P}_1^{(c-1)}|$ solutions. Furthermore we have that
\begin{equation}
\frac{1}{|\mathbb{P}_q|}\sum_{g\in\mathbb{P}_q} \phi_c(g)
=\frac{1}{|\mathbb{P}_q|^{c}} \sum_{g \in \mathbb{P}_q}\sum_{\substack{g_1,\ldots g_c \in \mathbb{P}_q\\Cg_c \cdots Cg_1 = g}} \widetilde{\mathcal{C}} \phi(g_c) \ldots \widetilde{\mathcal{C}}\phi(g_1)
= \frac{1}{|\mathbb{P}|^{c}} \sum_{g'_1,\ldots g'_c \in \mathbb{P}_q} \widetilde{\mathcal{C}} \phi(g'_c) \cdots \widetilde{\mathcal{C}} \phi(g'_1)
\end{equation}
and thus that
\begin{equation}
\frac{1}{|\mathbb{P}_q|^{mc}}\sum_{g_{1,1},\ldots g_{m,c} \in \mathbb{P}_q}\widetilde{\mathcal{C}} \phi(g_{mc}) \cdots \widetilde{\mathcal{C}} \phi(g_1) = \phi_c^{*m}(e)
\end{equation}
which means cycle benchmarking can be framed as \ac{RB} with the implementation map $\phi_c$. Moreover, since in the limit of perfect gates we have, if $Cg_c \cdots Cg_1 = g$, that
\begin{equation}
\mathcal{C}\omega(g_c) \cdots \mathcal{C}\omega(g_1) = \omega(g)
 \end{equation}
 we can reasonably make the assumption that $\phi_c$ is close to its reference implementation (i.e. \ref{eq:norm_assump}). Hence the behaviour of cycle benchmarking data is covered by Theorem \ref{thm:mother}. What is less clear is how to interpret the resulting exponential decays (especially in terms of the implementations $\phi$ and $\widetilde{\mathcal{C}}$). This requires a more sophisticated analysis, which is done in \cite{CycleBenchmarking}.
}

\subsubsection{Robust benchmarking tomography}
In \emph{robust benchmarking tomography} \citep{kimmel2014robust} one uses a \ac{RB} protocol as a subroutine to extract tomographic information from a super-operator (not necessarily a unitary) $\mc{E}$. This is done by estimating the  probability
\begin{equation}
p(i,m) = \frac{1}{|\gr|^m} \sum_{g_1,\ldots ,g_m \in \gr}\braa{\mc{E}_{\mathrm{M}}(\Pi_i)} \phi( g'(g_1\ldots g_m)^{-1})\mc{E}\phi(g' g_m)\cdots \mc{E}\phi(g' g_1)\kett{\mc{E}_{\mathrm{SP}}(\rho_0)},
\end{equation}
where $g'$ is a fixed element of the group $\gr$ and $\phi$ is the implementation of a reference representation $\omega$ (the goal is to estimate correlations between $\omega(g')$ and $\mc{E}$). We can consider this as an interleaved \ac{RB} scheme with reference implementation $\phi_{\mathrm{tom}}(g) = \omega(g' g)$ (thinking of $\mc{E}$ as a noisy implementation of the identity gate). However, this reference implementation is not close to a representation (unless $g'=e$), which means that theorem \ref{thm:mother} does not apply. This is not an artifact of the proof technique but rather a reflection of the fact that robust benchmarking tomography features extremely rapid exponential decays. In the gate-independent noise case the decay rate is set by the average fidelity $F(\omega(g'),\mc{E})$ which can be very small. In the language of matrix Fourier theory this means that the dominant eigenvalues of the Fourier operator $F(\phi_{\mathrm{tom}})$ will be small even in the ideal case. Hence, we do not expect an assumption of the form\ (\ref{eq:norm_assump})
to be strong enough to guarantee exponential behaviour of the \ac{RB} output data in this scenario.

\section{Data processing and sample complexity}\label{sec:data_processing}

As discussed before the \acf{RB} protocol can be divided into data collection and post-processing
phases.
The data collection protocol is summarized in algorithm~\ref{alg:dataCollectionPhase}.
The outputs of the data collection phase are mean estimators $\hat p(i, m, \gend)$ that
estimate the
average over all sequences of length $m$ according to the measures $\nu_i$ and the quantum measurement statistics, simultaneously.
The main theorems of the data collection phase (theorems \ref{thm:mother} - \ref{thm:subset_mother}) state that the expectation value, again both over
the measurement statistics and the random sequences, is well-approximated by a linear combination of (matrix) exponentials in $m$.

The figures of merit that \ac{RB} experiments report are the decay parameters associated with the
linear combination of (matrix) exponentials.
Extracting these decay parameters is the objective
of the data-processing phase that is the focus of the current section.
For gate-independent noise and reference representations without multiplicities
the decay parameters
can be directly connected to the average gate fidelity of the noise.
In the more general case, the interpretation of the decay parameters in terms of
other operational measures of quality can be more complicated. We will consider the connection between the decay parameters and the average gate-set fidelity in
section~\ref{sec:fidelity_interpretation}.

Here we want to take a more pragmatic approach for the post-processing phase.
The deviation of the decay parameters from unity can directly be regarded as a measure of
quality that captures the deviation of the actually implemented gates from an ideal implementation.
In principle, the set of decay parameters itself provides
a refined image of the quality of the implementation, as compared to the average gate fidelity.
This motivates us to limit the post-processing phase to the extraction of the decay parameters.
The estimation of other measures of quality from the decay parameters is then left to an optional subsequent processing phase.

In the simplest \ac{RB} setting (e.g., uniform \ac{RB} with the Clifford group), featuring a single noise-affected representation, the data processing phase only involves fitting a single exponential decay curve.
The analysis of \ac{RB} data arising in more general settings, however, requires a considerably more flexible approach for the data processing.

Extracting multiple decay coefficients, or poles, from a discrete series of data points is a well-studied problem in signal processing that arises in many different disciplines.
For this reason, this section includes a review of modern approaches to this fitting problem that not only have been generalized to the fitting of matrix exponentials but also come with theoretical performance guarantees and bounds.
The pole-finding algorithms we review (MUSIC and ESPRIT) come with multiple merits:
(1) they are easily and efficiently implementable, (2) they are flexible enough to in principle analyze any \ac{RB} signal of the general form \eqref{eq:rb_data_decay}, (3) they come with in-built de-noising and super-resolution capabilities, (4) they feature theoretical bounds that can (4.a) inform the design of experimental parameters, and (4.b) -very importantly- can be used to identify parameter regimes where distinguishing the different decay parameters becomes infeasible in practice.

Following this review we combine analytical guarantees and numerical simulations to
evaluate the performance of these algorithmic approaches for the processing of \ac{RB} data.
In particular, we discuss the effect of
the configuration of the decay parameters, such as their number and spacings, on the overall number of required measurements and the maximal sequence length in the experiment.
We thereby provide theoretical guiding principles for designing
\ac{RB} experiments and explicitly work out
limitations where the experimental
precision required in order to separate
multiple decays become impractical.

These fundamental limitations in analyzing \ac{RB} data have previously motivated a variety of more resource-intensive data-gathering protocols that take further data from which one can isolate different decay curves in the classical post-processing phase.
We will turn our attention to devising a novel general method for isolating matrix exponentials in section~\ref{sec:new_crb}.
We begin by a detailed description of the data processing problem.

\subsection{The \acl{RB} data processing phase}

The theorems on the data collection phase, morally summarized by eq.\ \eqref{eq:rb_data_decay}, state that in expectation \ac{RB} output data is well-approximated by a linear combination of (matrix) exponentials in $m$.
Every matrix $M_\lambda \in \CC^{n_\lambda \times n_\lambda}$ in the expansion is associated with an irreducible representation $\lambda$ of the reference representation $\omega$ and $n_\lambda$ is the multiplicity of $\sigma_\lambda$ in the decomposition of $\omega$.
From the collected data, a \ac{RB} protocol subsequently
extracts decay parameters that describe the exponential decay.
The decay parameters associated with a matrix $M_\lambda$ are its eigenvalues $\operatorname{spec}(M_\lambda) = \{z^{(\lambda)}_i\}_{i=1}^{n_\lambda}$.
If $M_\lambda$ is diagonalizable, then
\begin{equation}
\Tr(A_\lambda M^m_\lambda) = \sum^{n_\lambda}_i a^{(\lambda)}_i (z^{(\lambda)}_i)^{m}
\end{equation}
with coefficients $a^{(\lambda)}_i$ depending on the overlap of $A_\lambda$ with the eigenspaces.
More generally, let $M_\lambda = S^{-1} J S$ be the Jordan normal decomposition of $M_\lambda$ with Jordan blocks $J = \diag(J_1, J_2, \ldots)$, $J_i \in \mathbb R^{\mu_i \times \mu_i}$ and $\{z^{(\lambda)}_i\}$ being
the corresponding eigenvalues.
For $m \geq \mu_i$, the $j$-th diagonal of the $m$-th power of the $i$-th Jordan block contains the entry $\binom{m}{j} z_i^{m-j}$.
Therefore, the matrix exponential takes the form
\begin{equation}\label{eq:RBdata:signalmodel}
  \Tr(A_\lambda M_{\lambda}^m) = \sum_{i} \sum_{j \in [\mu_i]} a^{(\lambda, j)}_i \binom{m}{j} (z^{(\lambda)}_i)^{m-j}
\end{equation}
with real coefficients $a_i^{(\lambda, j)}$.
Note that $\binom{m}{j}$ are falling polynomials in $m$. Thus, the function space of $\Tr(A_\lambda M_{\lambda}^m)$ is in general spanned by exponential function parametrized by the eigenvalues modulated by falling polynomials.
With the pole-finding techniques, which we discuss in the next section,
one can extract the set of all poles
\begin{equation}
	Q = \bigcup_{\lambda \in \Lambda} \{ z_i^{(\lambda)} | i \in [n_\lambda] \}
\end{equation}
 from \ac{RB} output data.
Thus, the general post-processing task of \ac{RB} is the following: \emph{Given a data-series $\hat p(m)$ that is approximately described by linear combinations of polynomial modulated decays, extract the set $Q$ of all poles}.

Loosely speaking, estimating $Q$ is typically possible, provided that
the coefficients of all representations are sufficiently large and the poles are sufficiently spaced.
In the remainder of this section, we assess this statement quantitatively using analytical and numerical methods.

{In practice, one might operate under additional assumptions and does not need to extract all poles individually.
For example, if one expects multiple poles in the data series that are all more or less aligned, the data processing problem becomes equivalent to extracting a single pole.
The general form of the data-processing task however, stays the same, namely extracting the poles in the data series.}

Without additional assumptions or post-processing,
the resulting poles are \emph{unlabeled}, in the
sense that one does not know which pole is associated with which irreducible presentation.
This issue will be addressed when we turn our attention to techniques that filter the \ac{RB} data
for specific representations in section~\ref{sec:new_crb}.

\subsection{Data processing algorithms and guarantees}
\subsubsection{Fitting single decays}
Many proposals for \ac{RB} derive a data model that is well-approximated by a single decay curve. This is for example the case when the group is a unitary $2$-design, the reference representation $\omega$ is the adjoint representation and the actual implementation is close to being trace-preserving~\cite{wallman2018randomized}.
The adjoint representation of a unitary $2$-group acts irreducible on the space of traceless matrices and yields a single dominant decay curve.

A single dominant decay parameter can be extracted using non-linear least-square fitting algorithms such as Levenberg-Marquardt, see, e.g.,~ref.~\cite[Chapter~3.2]{Kelley:1999:optimization}.
In ref.~\citep{helsen2019multiqubit} it has been shown that in \ac{RB} for the Clifford group the variance of the data points is expected to strongly vary with the sequence length $m$.
This observed heteroskedasticity motivates to use iteratively re-weighted variants of least square fitting algorithms.

Ref.~\citep{Harper:2019:Stats} analyses a simplified fitting procedure that estimates the decay parameter from the ratio of the data for two sufficiently separated sequence lengths.
In the regime of high fidelity, it establishes a multiplicative error in the deviation of the decay parameter from one from an efficient number of samples. Relatedly, ref.\ \citep{flammia2019efficient} gives an estimation scheme for a \ac{RB} procedure that estimates, in parallel, multiple single exponential decays with multiplicative accuracy. This scheme makes use of post-processing techniques to guarantee the `single-exponential' shape of the data. We will discuss this more in section \ref{sec:new_crb}.

\subsubsection{Fitting multiple decay with pole-finding algorithms: MUSIC and ESPRIT}
Algorithms for simultaneously identifying multiple poles (frequencies and decay parameters) from a discrete series of data points date back to at least the work of Prony \cite{Prony:1795}.
A zoo of modern algorithmic approaches has been developed in the context of direction-of-angle estimation in array signaling.
In principle, these techniques can extract poles that are closer together than the grid spacing defined by the finite sampling rate, a phenomenon dubbed \emph{super resolution}.
The theoretical framework to derive guarantees for these algorithms that go beyond a perturbative analysis of special noise models or very simple configurations, was only developed recently \cite{CandesFernandez-Granda:2013, CandesFernandez-Granda:2014}, first focusing on convex optimization.

Here, we will analyze the performance of the MUSIC algorithm \cite{Schmidt:1986} and the ESPRIT \cite{RoyPaulrajKailath:1986} algorithm on \ac{RB} data.
Performance guarantees for these two subspace algorithms were derived in refs.~\cite{LiaoFannjiang:2014:MUSIC,Fannjiang:2016:ESPRIT,LiLiao:2017,LiLiaoFannjiang:2019:Super-resolution} for the multiplicity-free case.
Furthermore, the ESPRIT algorithm was extended to polynomially modulated exponentials of the type we encounter in \ac{RB} data with multiplicities in refs.~\cite{BadeauDavidRichard:2006,BadeauRichardDavid:2008}.
We will summarize the required modification in section~\ref{sec:generalESPRIT}.
For the sake of clarity, we now start reviewing the algorithms for identifying multiple poles without polynomial modulation. This corresponds to the case of \ac{RB} with a multiplicity-free reference representation.
For the rest of this section we will denote the output data as $y_m$ instead of $\hat{p}(m)$, in keeping with the signal processing literature.
We will also assume equidistant spacing of the available sequence lengths $m$.
As we point out in section~\ref{sec:low-rankCompletion}, this requirement can be relaxed by running a low-rank
completion algorithm on incomplete data and thereby infer equidistantly spaced data $y_m$. When clear from the context,
we will write the data series simply as a vector $y$, dropping the explicit dependence on $m$.

The strategy of both algorithms, MUSIC and ESPRIT, is to identify the range of the subspaces associated with the dominant singular values of the Hankel matrix of the data series $\{y_m\}_m$. The crucial observation is that from this subspace the poles can be extracted.
Let $y \in \RR^M$ be the \ac{RB} data with $M$ the maximal sequence length. The \emph{Hankel matrix} for $1 \leq L < M$ is given by
\begin{equation}
  \operatorname{Hankel}_L(y) = %
    \begin{pmatrix}
      y_0 & y_1 & \cdots & y_{M-L} \\
      y_1 & y_2 & \cdots & y_{M-L + 1} \\
      \vdots & \vdots & \ddots & \vdots \\
      y_L & y_{L+1} & \cdots & y_M
    \end{pmatrix}.
\end{equation}
We denote the Vandermonde matrix of size $n \times M$ for poles $z = (z_1, \ldots, z_n)$ by
\begin{equation}
  W_M (z) = W_M(z_1, \ldots, z_n) = \begin{pmatrix}
    1 & z_1 & z_1^2 & \ldots & z_1^{M-1} \\
    1 & z_2 & z_2^2 & \ldots & z_2^{M-1} \\
    \vdots & \vdots & \vdots & \vdots & \vdots \\
    1 & z_n & z_n^2 & \ldots & z_n^{M-1}
  \end{pmatrix}.
\end{equation}
If $n=1$, and thus $z \in \mathbb C$ we will refer to  $W_M(z)$ as the Vandermonde vector of length $M$ and pole $z$.

With this notation, the data vector $y$, without noise, is in the range of $W_M(z)^T$.
Furthermore, cyclically shifting the entries of $y$ amounts to multiplication of the summands with the respective poles.
In effect, the Hankel matrix has a Vandermonde decomposition
\begin{equation}\label{eq:vandermonde_decomposition}
  \operatorname{Hankel}_L(y) = W_L^T(z) \operatorname{diag}(a) W_{M-L}(z) + \operatorname{Hankel}_L(\alpha),
\end{equation}
where we have denoted by $\alpha$ the deviation of $y$ from an ideal linear combination of exponentials due to the
perturbative error $\epsilon(m)$ and finite statistics and where $a$ is the vector of pre-factors given in eq.~(\ref{eq:RBdata:signalmodel}).

To identify the signal subspace and distinguish it from the noise subspace, the MUSIC and ESPRIT algorithms employ an SVD decomposition of the Hankel matrix, $\operatorname{Hankel}_L(y) = U \Sigma V^T$.
In the absence of noise and perturbation, i.e., $\alpha = 0$,
$\operatorname{Hankel}_L(y)$ has $n$ non-vanishing singular values and the corresponding singular vectors form an orthonormal basis of the signal space $\operatorname{span} W^T_M(z)$.
Let $U_\text{signal}$ be the matrix consisting of the singular vectors of the non-trivial singular values as columns and let $U_\text{noise}$ be the matrix consisting of an orthonormal basis of the complement.
In the presence of noise, analogously choosing the singular vectors of the $n$ largest singular values yields an estimate of the signal space.

From the \emph{noise space projector} $P_\text{noise} = U_\text{noise} U^\dagger_\text{noise}$, the MUSIC algorithm defines the inverse noise-space correlation function
$R^{-1}: \mathbb C \to \mathbb R$,
\begin{equation}\label{eq:MUSIC:correlator}
  R^{-1}(z) =
    \frac
    {\|W_L(z)\|_2}
    {\|P_\text{noise}W_L(z)\|_2}.
\end{equation}
The poles $z$ can then be identified as the peaks of $R^{-1}(z)$. These can be found by a continuous scan of the values of $R^{-1}(z)$, which can be done numerically.

A slightly different approach that avoids the continuous search for poles is taken by the ESPRIT algorithm.
The ESPRIT algorithm exploits a so-called `rotational invariance' property. To this end, let $W^{\downarrow}_L(z)$ and $W^{\uparrow}_L(z)$ be the sub-matrices of the Vandermonde matrix $W^T_L(z)$ that omit the last and first column, respectively.
These sub-matrices are related via
\begin{equation}
  W^{\downarrow}_L(z) = W^\uparrow_L(z) \operatorname{diag}(z).
\end{equation}
This rotational invariance property is inherited by $U_\text{signal}$.
In consequence, let $H^\downarrow$ and $H^\uparrow$ be the sub-matrix of the Hankel matrix $H$ that omits the last and first rows, respectively.
Then, in the noiseless case, a solution matrix $\Psi$ of the equation
\begin{equation}
  H^{\downarrow} = H^\uparrow \Psi
\end{equation}
has non-zero eigenvalues $z$, which are the poles contained in the data. It is given explicitly by the pseudo-inverse of $H^\uparrow$ applied to $H^\downarrow$.
Again noisy signals can be considerably de-noised by projecting $H^\uparrow$ to the signal space before inversion.
Altogether we find the algorithmic strategy of ESPRIT to be
(i) calculate the SVD of the Hankel matrix of $y$ and determine $P_\text{signal} = U_\text{signal} U^\dagger_\text{signal}$,
(ii) calculate $\Psi =  (P_\text{signal} H^\uparrow)^+ H^\downarrow$ and
(iii) determine $z$ as the eigenvalues of $\Psi$.

\subsubsection{Performance guarantees}

Non-perturbative analysis of the performance of MUSIC
 has been conducted in refs.~\cite{LiaoFannjiang:2014:MUSIC,LiLiao:2017}.
Therein, the following bound for the deviation of the noise-correlation function $R(z)$ from the ideal noiseless counter-part $R_\text{signal}(z)$
has been derived for poles $z$ of unit absolute value (sinusoids).
{The argument, however, holds verbatim for all $z \in \CC^n$.}
\begin{theorem}[Noise-correlation function bound \citep{LiLiao:2017}, Proposition~4.2]\label{thm:MUSICperformance}
  Let $E = \operatorname{Hankel}_L(\alpha)$ denote the Hankel matrix of the perturbation/noise of the signal vector $y$.
  Let $\varepsilon_{\min}$ be the smallest singular value of the Hankel matrix of the noise-free signal.
  Suppose $L \geq n, M - L + 1 \geq n$ and $2\snorm{E} < \varepsilon_{\min}$. Then
  \begin{equation}
    |R(z) - R_\text{\rm signal}(z)| \leq  \frac{2\snorm{E}}{\varepsilon_{\min}}
  \end{equation}
  for all $z\in \CC$.
\end{theorem}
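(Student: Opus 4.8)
The plan is to show that the pointwise deviation $|R(z)-R_{\mathrm{signal}}(z)|$ is controlled, uniformly in $z$, by the spectral-norm distance between the noise-space projectors of the noisy and noise-free Hankel matrices, and then to bound that distance by a singular-subspace perturbation ($\sin\Theta$) argument. Write $\widehat H$ for the Hankel matrix of the noise-free signal, so that $\operatorname{Hankel}_L(y)=\widehat H+E$. Recall $R(z)=\TwoNorm{P_{\mathrm{noise}}W_L(z)}/\TwoNorm{W_L(z)}$, while its ideal counterpart is $R_{\mathrm{signal}}(z)=\TwoNorm{\widehat P_{\mathrm{noise}}W_L(z)}/\TwoNorm{W_L(z)}$, where $\widehat P_{\mathrm{noise}}$ projects onto the orthogonal complement of the range of $\widehat H$. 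First I would apply the reverse triangle inequality in the numerator and then the operator-norm inequality:
\begin{equation}
|R(z)-R_{\mathrm{signal}}(z)|\;\le\;\frac{\TwoNorm{(P_{\mathrm{noise}}-\widehat P_{\mathrm{noise}})W_L(z)}}{\TwoNorm{W_L(z)}}\;\le\;\snorm{P_{\mathrm{noise}}-\widehat P_{\mathrm{noise}}},
\end{equation}
uniformly in $z$, so that the $z$-dependence drops out entirely. This cancellation is precisely why the bound extends verbatim from unit-modulus poles to all $z\in\CC$.

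Next I would pass from the noise projectors to the signal projectors. Since $P_{\mathrm{noise}}=\1-P_{\mathrm{signal}}$ and $\widehat P_{\mathrm{noise}}=\1-\widehat P_{\mathrm{signal}}$, we have $P_{\mathrm{noise}}-\widehat P_{\mathrm{noise}}=\widehat P_{\mathrm{signal}}-P_{\mathrm{signal}}$, and hence $\snorm{P_{\mathrm{noise}}-\widehat P_{\mathrm{noise}}}=\snorm{P_{\mathrm{signal}}-\widehat P_{\mathrm{signal}}}$. The spectral norm of the difference of two orthogonal projectors onto $n$-dimensional subspaces equals $\snorm{\sin\Theta}$, the sine of the largest principal angle between the top-$n$ left singular subspaces of $\operatorname{Hankel}_L(y)$ and $\widehat H$. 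Here the hypotheses $L\ge n$ and $M-L+1\ge n$ are what guarantee, via the full rank of the Vandermonde factors in the decomposition (\ref{eq:vandermonde_decomposition}) for distinct poles, that $\widehat H$ has rank exactly $n$ and hence a well-defined $n$-dimensional signal space.

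The main step is to bound $\snorm{\sin\Theta}$. Because $\widehat H$ has rank exactly $n$, its noise singular values vanish, and a singular-subspace $\sin\Theta$ theorem (Wedin's theorem; cf.\ the perturbation machinery of section \ref{subsec:pert_theort}) simplifies considerably. Let $\tilde\sigma_n$ denote the $n$-th singular value of $\operatorname{Hankel}_L(y)$; by Weyl's inequality $\tilde\sigma_n\ge\varepsilon_{\min}-\snorm{E}$, which the hypothesis $2\snorm{E}<\varepsilon_{\min}$ keeps strictly positive, so the signal and noise singular values do not cross and the top-$n$ subspace is unambiguous. The relevant gap is precisely $\tilde\sigma_n-0=\tilde\sigma_n$, and the Wedin residual is $\widehat H\tilde V_1-\tilde U_1\tilde\Sigma_1=-E\tilde V_1$ (using that $(\tilde U_1,\tilde\Sigma_1,\tilde V_1)$ is an exact SVD block of $\widehat H+E$), of norm at most $\snorm{E}$, and similarly for the transpose residual. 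Thus
\begin{equation}
\snorm{\sin\Theta}\;\le\;\frac{\snorm{E}}{\varepsilon_{\min}-\snorm{E}}\;\le\;\frac{\snorm{E}}{\varepsilon_{\min}/2}\;=\;\frac{2\snorm{E}}{\varepsilon_{\min}},
\end{equation}
where the second inequality again uses $2\snorm{E}<\varepsilon_{\min}$. Combining the three displays proves $|R(z)-R_{\mathrm{signal}}(z)|\le 2\snorm{E}/\varepsilon_{\min}$ for all $z$.

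The main obstacle is the correct invocation of the $\sin\Theta$ theorem: one must confirm the non-crossing (gap) condition so that the top-$n$ singular subspace is well defined, identify the correct gap — clean here only because $\widehat H$ is exactly rank $n$, so the unperturbed noise block is zero and a single Weyl shift suffices to reach $\varepsilon_{\min}-\snorm{E}$ rather than $\varepsilon_{\min}-2\snorm{E}$ — and bound the Wedin residuals. An equivalent route staying within the paper's own toolkit is to pass to the Hermitian dilation $\bigl(\begin{smallmatrix}0 & \widehat H\\ \widehat H^\dagger & 0\end{smallmatrix}\bigr)$, whose signal eigenspace is separated from the zero eigenvalue by $\mathrm{sep}=\varepsilon_{\min}$, and apply theorem \ref{thm:subspace_pert}; one should note, however, that the generic two-sided denominator of that theorem yields the slightly weaker gap $\varepsilon_{\min}-2\snorm{E}$, so the sharp constant in the statement is most directly obtained from the residual-based Wedin bound. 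Everything else is the reverse-triangle and operator-norm bookkeeping of the first two steps.
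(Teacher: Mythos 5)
Your proof is correct, but note that the paper itself does not prove theorem \ref{thm:MUSICperformance}: it imports the statement from ref.~\cite{LiLiao:2017} (Proposition~4.2) without argument, so there is no in-paper proof to compare against. What you have produced is a faithful, self-contained reconstruction of the standard argument behind that cited result. Its three ingredients --- (i) the reverse-triangle/operator-norm reduction $|R(z)-R_{\mathrm{signal}}(z)|\le\snorm{P_{\mathrm{noise}}-\widehat P_{\mathrm{noise}}}$, which eliminates the $z$-dependence entirely and is exactly what licenses the paper's parenthetical remark that the bound ``holds verbatim for all $z\in\CC$''; (ii) the identification of the projector difference with $\snorm{\sin\Theta}$ for the two equal-dimension signal subspaces; and (iii) the Weyl-plus-Wedin estimate $\snorm{\sin\Theta}\le\snorm{E}/(\varepsilon_{\min}-\snorm{E})\le 2\snorm{E}/\varepsilon_{\min}$, where the exact rank-$n$ structure of the noise-free Hankel matrix (guaranteed by $L\ge n$, $M-L+1\ge n$, distinct poles and nonzero amplitudes) makes the unperturbed noise singular values vanish --- are precisely how the constant $2$ arises. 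Your closing observation that the paper's own perturbation tool (theorem \ref{thm:subspace_pert}, applied to the Hermitian dilation) would only yield the weaker denominator $\varepsilon_{\min}-2\snorm{E}$ is also accurate, and correctly identifies the residual-based Wedin bound as the lemma that gives the sharp constant.

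One small imprecision: to conclude that the top-$n$ singular subspace of $\operatorname{Hankel}_L(y)$ is unambiguous you need non-crossing, $\tilde\sigma_{n+1}<\tilde\sigma_n$, not merely $\tilde\sigma_n>0$. This does follow from the same Weyl argument you invoke --- $\tilde\sigma_{n+1}\le\hat\sigma_{n+1}+\snorm{E}=\snorm{E}$, while $\tilde\sigma_n\ge\varepsilon_{\min}-\snorm{E}>\snorm{E}$ under the hypothesis $2\snorm{E}<\varepsilon_{\min}$ --- but you should state the bound on $\tilde\sigma_{n+1}$ explicitly rather than let ``strictly positive'' stand in for it.
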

We observe that the bound on $R(z)$ is proportional to the spectral-norm of the noise in the signal but in addition is decorated by a \emph{noise-enhancing} factor inversely proportional to the smallest singular value $\varepsilon_{\min}$ of the Hankel matrix.
The bound on $R(z)$ can thus not be directly translated into a bound on the precision in recovering the poles $z$ without
further assumptions, see ref.~\cite[theorem 4]{LiaoFannjiang:2014:MUSIC} in this context.
Nonetheless,
the peaks of $R^{-1}(z)$ are typically very sharp, and
the bound on $R(z)$ indicates a regime where one can typically expect MUSIC to accurately work.
For the ESPRIT algorithm, similar bounds  can be found in refs.~\cite{Fannjiang:2016:ESPRIT,LiLiaoFannjiang:2019:Super-resolution}.
The bounds for ESPRIT additionally involve the minimum singular value of the truncation (as defined above) of the Hankel matrix.

\subsubsection{Conditioning of Vandermonde matrices}
The performance guarantees for MUSIC (and ESPRIT) show a noise-enhancement inversely proportional to
the minimum singular value $\varepsilon_{\min}$ of the Hankel matrix of the ideal signal.
The minimum singular value $\varepsilon_{\min}$ in turn can be regarded as a measure for the conditioning
of the Vandermonde matrices into which the Hankel matrix decomposes.  This conditioning depends on the system parameters and on the configuration of poles.
Given expected values for the poles and the maximal sequence length, it is straight-forward to calculate the minimum singular value numerically.
This can provide valuable information in the design of \ac{RB} experiments.

More systematically,
it is informative to understand the scaling behaviour of the conditioning of the Vandermonde matrices with the help of theoretical bounds. One such bound that allows us to study its asymptotic behaviour is briefly reviewed in this section.
A lot of work has been devoted to study the often surprisingly favorable conditioning of Vandermonde matrices for poles on the unit circle, which describe sinusoidal oscillations,
see, e.g., ref.~\cite{LiLiaoFannjiang:2019:Super-resolution}
and references therein for a discussion of the phenomenon
of super resolution.

In the context of \ac{RB}, we are conversely interested in poles that are on the real-line.
A more general characterization of the conditioning of Vandermonde matrices with poles inside the unit circle (allowing for decays beyond oscillations) has been studied in ref.~\cite{Bazan}.
The conditioning obviously depends on the set of poles $z$ and the size $M$ of the Vandermonde matrix.
To state the result given in ref.~\cite{Bazan} we define several quantities. To the set of poles $z = (z_1, \ldots, z_n)$, we associate
$\check z := \max_j |z_j|$,
$\hat z := \min_j |z_j|$ and
$\ddot z := \min_{j \neq k} |z_j - z_k|$.
Furthermore, let us define
\begin{equation}
Q_M(z) = (W_M(z) W^\dagger_M(z))^{-1/2}.
\end{equation}
Note that $W_M(z)W^\dagger_M(z) $ is the frame operator of the frame defined by the rows of the Vandermonde matrix and $Q_M(z)$ is the orthogonalizing matrix arising in symmetric orthogonalization.
With the help of $Q_M(z)$,
we define the matrix
\begin{equation}
  F_M(z) := Q_M(z)\diag(z) Q^{-1}_M(z)
\end{equation}
that will play a prominent role for analyzing the Vandermonde conditioning. In particular,
its departure from normality as measured by
$D^2(F_M(z)) = \|F_M(z)\|_F^2 - \|z\|^2_{\ell_2}$ will appear.

In ref.\ \citep{Bazan}
a bound is derived for the $2$-norm condition number $\kappa_2(W_M) = \snorm{W_M} \snorm{W_M^+}$ through the bounding of the Frobenius norm condition number $\kappa_F(W_M) =  \fnorm{W_M} \fnorm{W_M^+}$. Here $X^+$ denotes the (Moore-Penrose) pseudo inverse of a matrix $X$.
The condition number of a linear map $A$ gives a worst-case bound on the relative reconstruction error in $\ell_2$-norm induced by an additive error in $\ell_2$-norm for a linear inverse problem.
But here we are more concerned with how it enters into the accuracy of identifying poles in the MUSIC and ESPRIT algorithms.
For the analysis of the MUSIC and ESPRIT algorithm, we want to upper bound the minimum singular value $\varepsilon^{-1}_{\min}$.
By means of the Vandermonde decomposition \eqref{eq:vandermonde_decomposition} and the sub-multiplicativity of the spectral norm,
we have $\varepsilon^{-1}_{\min} \leq \snorm{W^+_{M-L}}\snorm{W^+_L} \hat z^{-1}$.
Since $\snorm{W_{M}} \geq 1$, we conclude that
\begin{equation}\label{eq:sigmin_as_cond}
  \varepsilon^{-1}_{\min} \leq \kappa_2(W_{M-L}) \kappa_2(W_L) \hat z^{-1}\, .
\end{equation}

For the condition number the following bound holds.
\begin{theorem}[Conditioning of Vandermonde matrices \citep{Bazan}, theorem~6]
\label{thm:bazanbound}
  For $M > n \geq 2$, for a Vandermonde matrix $W_M(z)$, it holds that
  \begin{equation}
    \frac{\varepsilon_1(F_M(z))}{\check z} \leq \kappa_2(W_M(z)) \leq \frac12\left(\rho + \sqrt{\rho^2 - 4}\right)
  \end{equation}
  with
  \begin{equation}
    \rho = n \left[1 + \frac{D^2(F_M(z))}{(n-1) \ddot z^2} \right]^\frac{n-1}2 \frac{\lTwoNorm{\phi_L(\check z)}}{\lTwoNorm{\phi_L(\hat z)}} - n + 2 \, .
  \end{equation}
\end{theorem}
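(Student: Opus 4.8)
The statement is quoted from ref.~\citep{Bazan}, so the most economical route in the paper is to invoke theorem~6 there directly; what follows is the mathematical content one would reconstruct. The central object is $F_M(z) = Q_M(z)\diag(z) Q_M^{-1}(z)$ with $Q_M(z) = (W_M(z) W_M^\dagger(z))^{-1/2}$ Hermitian positive definite, so that $F_M$ is similar to $\diag(z)$ and hence has eigenvalues exactly $z_1,\dots,z_n$ with $Q_M$ as a (Hermitian) eigenvector matrix. The first identity I would record is that $Q_M$ and $W_M$ share singular values up to inversion: since the eigenvalues of $W_M W_M^\dagger$ are the squared nonzero singular values of $W_M$, the singular values of $Q_M$ are their reciprocals, whence $\kappa_2(Q_M) = \snorm{Q_M}\snorm{Q_M^{-1}} = \snorm{W_M^+}\snorm{W_M} = \kappa_2(W_M)$. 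The whole problem thus reduces to controlling the conditioning of the Hermitian diagonalizer $Q_M$ of $F_M$ in terms of its departure from normality $D^2(F_M)$ and the eigenvalue gap $\ddot z$.

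For the lower bound I would simply submultiply: $\varepsilon_1(F_M) = \snorm{F_M} = \snorm{Q_M \diag(z) Q_M^{-1}} \le \snorm{Q_M}\,\check z\,\snorm{Q_M^{-1}} = \check z\,\kappa_2(Q_M) = \check z\,\kappa_2(W_M)$, which rearranges to $\varepsilon_1(F_M)/\check z \le \kappa_2(W_M)$. This direction is immediate and uses only submultiplicativity together with the identity $\kappa_2(Q_M)=\kappa_2(W_M)$.

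The upper bound is the substantive part. Here I would pass to a Schur decomposition $F_M = U(\Lambda + N)U^\dagger$ with $\Lambda = \diag(z)$ and $N$ strictly upper triangular; unitary invariance of the Frobenius norm gives $\fnorm{N}^2 = \fnorm{F_M}^2 - \lTwoNorm{z}^2 = D^2(F_M)$, so the off-diagonal Schur part is exactly the departure from normality. The plan is then to diagonalize $\Lambda + N$ by eliminating superdiagonal blocks one eigenvalue at a time, each step solving a Sylvester equation $\lambda_i x - x T_{>i} = -n_i$ whose solution obeys $\snorm{x_i}\le \snorm{n_i}/\mathrm{sep}(\lambda_i, T_{>i})$, with the separation controlled below by $\ddot z$. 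The elementary similarity at each step contributes a conditioning factor of the form $(1 + \snorm{x_i}^2 + \dots)$; multiplying the $n-1$ factors, bounding $\sum_i \snorm{x_i}^2 \le D^2/\ddot z^2$, and applying the arithmetic–geometric mean inequality produces the characteristic $[1 + D^2(F_M)/((n-1)\ddot z^2)]^{(n-1)/2}$ term. The residual factor $n\,\lTwoNorm{\phi_M(\check z)}/\lTwoNorm{\phi_M(\hat z)}$, where $\phi_M(z)=(1,z,\dots,z^{M-1})^T$ is the single-pole Vandermonde vector with $\lTwoNorm{\phi_M(z)}^2 = \sum_{k=0}^{M-1}|z|^{2k}$, arises from passing between the Frobenius and spectral condition numbers (costing a factor of order $n$) and from undoing the diagonal rescaling relating the conditioning of $F_M$ to that of $W_M$, whose rows have $\ell_2$-norms ranging between those at $\hat z$ and $\check z$. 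Finally, a bound of the form $\kappa_2(W_M) + \kappa_2(W_M)^{-1} \le \rho$ is inverted through the larger root of $t^2 - \rho t + 1 = 0$, giving $\kappa_2(W_M) \le \frac12(\rho + \sqrt{\rho^2 - 4})$.

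The main obstacles I anticipate are twofold. First, $\mathrm{sep}(\lambda_i, T_{>i})$ for a triangular remainder can be much smaller than the raw gap $\ddot z$, so replacing it by $\ddot z$ demands either a favorable ordering of the Schur form or an inductive accounting that absorbs the degradation into $D^2$. Second, the Schur reduction bounds the conditioning of \emph{some} diagonalizing similarity, whereas $Q_M$ is the specific Hermitian (symmetric-orthogonalization) one; relating the two requires that $Q_M$ is the balanced choice pinned down by $F_M$ being self-adjoint in the $Q_M^{-2}$ inner product. Making the constants line up exactly with Bazán's $\rho$—in particular the clean $(n-1)/2$ exponent and the precise Vandermonde-vector ratio—is delicate bookkeeping, and for this paper it is cleanest to cite ref.~\citep{Bazan} rather than reproduce it.
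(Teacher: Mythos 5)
The paper provides no proof of this statement: it is imported wholesale from Bazán \citep{Bazan} (Theorem~6 there), so your primary move---citing that reference---is exactly what the paper itself does. Your supplementary reconstruction is also consistent with the structure of Bazán's argument: the identity $\kappa_2(Q_M(z))=\kappa_2(W_M(z))$ and the submultiplicativity lower bound $\snorm{F_M(z)} \leq \check z\,\kappa_2(W_M(z))$ are correct as stated, and the upper-bound sketch (Schur form with $\fnorm{N}^2 = D^2(F_M(z))$, gap $\ddot z$ controlling the Sylvester solves, Frobenius-to-spectral conversion, and inversion of $t^2-\rho t+1=0$) identifies the right ingredients while honestly flagging as deferred to the citation precisely the bookkeeping that is genuinely delicate.
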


Most interesting in our context is the asymptotic scaling in the limit of large maximal sequence length $M$, for poles inside the unit disc $|z_i| < 1$ for all $i$. In this limit, the above bounds become tight and the following
 holds true.

\begin{lemma}[Asymptotics of condition number \citep{Bazan}, lemma~8] \label{lem:asymptotic_conditioning}
  Let $z = (z_1, \ldots, z_n) \in \CC^n$ with $|z_i| < 1$ for all $i \in [n]$.
  Define $C(z) \in \CC^{n\times n}$ as the matrix with entries
  \begin{equation}
  C_{i,j}(z) = \frac1{1 - z_i \bar z_j}.
  \end{equation}
  Then,
  \begin{equation}
    \lim_{M \to \infty} \kappa_2(W_{M}(z)) = \sqrt{\kappa_2(C(z))}\, .
  \end{equation}
\end{lemma}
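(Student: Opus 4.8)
The plan is to reduce the condition number of the rectangular Vandermonde matrix to the spectral condition number of an $n \times n$ Hermitian matrix, and then to identify that matrix's limit entrywise. Since $W_M(z) \in \CC^{n \times M}$ with $M \geq n$ and pairwise distinct poles $z_i$ (so that $W_M(z)$ has full row rank $n$), its nonzero singular values $\sigma_1 \geq \dots \geq \sigma_n > 0$ are exactly the square roots of the eigenvalues of the Gram matrix $G_M := W_M(z) W_M^\dagger(z) \in \CC^{n \times n}$. Because $\snorm{W_M(z)} = \sigma_1$ and $\snorm{W_M^+(z)} = \sigma_n^{-1}$, we obtain $\kappa_2(W_M(z)) = \sigma_1/\sigma_n = \sqrt{\lambda_{\max}(G_M)/\lambda_{\min}(G_M)} = \sqrt{\kappa_2(G_M)}$, where we used that $G_M$ is Hermitian positive definite so that its condition number is the ratio of its extreme eigenvalues. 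It thus suffices to analyse $\kappa_2(G_M)$ as $M \to \infty$.

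Next I would compute the entries of $G_M$ explicitly as truncated geometric series. By definition $[G_M]_{i,j} = \sum_{k=0}^{M-1} z_i^k \bar z_j^k = \sum_{k=0}^{M-1} (z_i \bar z_j)^k = \frac{1 - (z_i \bar z_j)^M}{1 - z_i \bar z_j}$, where the closed form is valid since $|z_i \bar z_j| \leq |z_i|\,|z_j| < 1$ and in particular $z_i \bar z_j \neq 1$. As $M \to \infty$, the assumption $|z_i| < 1$ for all $i$ forces $|z_i \bar z_j|^M \to 0$, so each entry converges, $[G_M]_{i,j} \to (1 - z_i \bar z_j)^{-1} = C_{i,j}(z)$. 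As these are fixed-size matrices, entrywise convergence coincides with convergence in any norm, so $G_M \to C(z)$ in operator norm.

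Finally I would invoke continuity of the spectral condition number under this convergence. The limit $C(z)$ is precisely the Gram matrix in $\ell_2(\NN)$ of the sequences $(z_i^k)_{k \geq 0}$, which are linearly independent exactly because the $z_i$ are distinct; hence $C(z)$ is Hermitian positive definite, so $\lambda_{\min}(C(z)) > 0$. By Weyl's perturbation inequality the eigenvalues of $G_M$ converge to those of $C(z)$, and since the smallest eigenvalue of the limit is bounded away from zero, $\kappa_2(G_M) = \lambda_{\max}(G_M)/\lambda_{\min}(G_M) \to \lambda_{\max}(C(z))/\lambda_{\min}(C(z)) = \kappa_2(C(z))$. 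Combined with the first step this yields $\lim_{M \to \infty} \kappa_2(W_M(z)) = \sqrt{\kappa_2(C(z))}$. The one delicate point is exactly this last continuity argument: the condition number is continuous only away from singular matrices, so the crux is to certify invertibility (equivalently positive definiteness) of $C(z)$, which is where the standing hypotheses that the poles lie strictly inside the unit disc and are pairwise distinct are genuinely used.
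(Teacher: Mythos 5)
Your proof is correct. There is, however, nothing in the paper to compare it against: the paper does not prove this lemma but imports it verbatim from ref.~\citep{Bazan} (lemma~8), so your derivation supplies a proof that the paper only cites. The route you take is the natural one and is essentially the argument of the cited reference: identify $\kappa_2(W_M(z)) = \sqrt{\kappa_2(G_M)}$ with $G_M = W_M(z) W_M^\dagger(z)$ via the singular value/Gram-eigenvalue correspondence for a full-row-rank matrix, sum the geometric series to get $[G_M]_{i,j} = \bigl(1-(z_i \bar z_j)^M\bigr)/\bigl(1 - z_i \bar z_j\bigr)$, observe entrywise (hence norm) convergence $G_M \to C(z)$, and transfer this to the condition number by Weyl's inequality once $C(z)$ is certified positive definite. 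Two remarks. First, the lemma as stated omits the hypothesis that the $z_i$ are pairwise distinct; you correctly flag that this is where invertibility of $C(z)$ comes from (if two poles coincide, both sides are $+\infty$), and distinctness is implicitly in force in the paper, whose surrounding results involve the minimal spacing $\ddot z = \min_{j \neq k}\lvert z_j - z_k\rvert$. Second, your identification of $C(z)$ as the $\ell_2(\NN)$ Gram matrix of the sequences $(z_i^k)_{k \geq 0}$ is the cleanest way to obtain positive definiteness; linear independence of those sequences for distinct $z_i$ follows already from the invertibility of the square Vandermonde matrix formed by their first $n$ entries, so no appeal to analytic-function arguments is needed.
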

Later in this section we will use this bound to perform numerical investigations of the resolving power of the MUSIC and ESPRIT algorithms and to give a sampling complexity bound for general \ac{RB}.
\subsubsection{Extensions of the algorithms}

\paragraph*{Incomplete data or logarithmic grids.} \label{sec:low-rankCompletion}
So far the presented algorithms and analysis relied on having an equidistant grid of sequence-length.
It is well-known that a low-rank matrix can under fairly general assumptions be completed from the
knowledge of just a subset of their entries \cite{NguyenKimShim:2019:Low-rank}.
Thus, given only data $y_m$ for values $m$ on an irregular subset regular grid, one can attempt at
completing the Hankel matrix for the regular grid using a low-rank matrix completion algorithm.
This pre-processing step can be combined with MUSIC or ESPRIT to arrive at pole-finding algorithms
that do not rely on complete data from an equidistant grid \cite{LiaoFannjiang:2014:MUSIC}.
In particular, we suspect that for exponential decays a logarithmic grid can potentially
yield improved recovery similar to the multiplicative error bounds for the fitting of single
exponentials derived in ref.~\cite{Harper:2019:Stats}, but we leave formally verifying this to future work.

\paragraph*{Generalization of ESPRIT to matrix exponentials.}\label{sec:generalESPRIT}
Refs.~\cite{BadeauDavidRichard:2006,BadeauRichardDavid:2008} have
generalized the ESPRIT algorithm to signal spaces spanned by products of falling polynomials and exponentials.
This is exactly the signal model \eqref{eq:RBdata:signalmodel} that we encountered for \ac{RB} output data, when the reference representation has multiplicities.
The key insight in this generalization is that the Hankel matrix of such signals admits a decomposition analogous to the Vandermonde decomposition \eqref{eq:vandermonde_decomposition} in terms of \emph{Pascal-Vandermonde matrices}.
These Pascal-Vandermonde matrices feature the same rotational invariance property underlying the ESPRIT algorithm.
Thus, one can show that when applying the standard ESPRIT algorithm to data of this form, the vector of eigenvalues of the matrix $\Psi$ is still the vector of poles $z$ with the eigenvalues appearing in multiplicities according to the maximal degree of the associated falling polynomial. Hence, ESPRIT can be directly applied to estimate matrix-exponential data series.  Noise in the signal will generically break the degeneracy of the eigenvalue spectrum,
corresponding to the fact that a generic matrix has non-degenerate eigenvalues.
Searching for regular polygons of poles allows for matching groups of perturbed poles corresponding to the same unperturbed pole.
We refer to refs.~\cite{BadeauDavidRichard:2006,BadeauRichardDavid:2008} for further details.

\subsection{Randomized benchmarking sampling complexity -- estimation of the Hankel matrix}

The performance bounds on the pole-finding algorithms, such as theorem~\ref{thm:MUSICperformance}, depend on the deviation of the Hankel matrix from ideal data in spectral norm.
In \ac{RB} protocols this error has two contributions:
\begin{enumerate}
	\item The finite sampling statistics of the measurements, which yields a statistical error of the mean estimator $\hat p(m)$.
\item The perturbative error that comes from neglecting sub-dominant eigenvalues, which is controlled by our theorems~\ref{thm:mother}, \ref{thm:non_uniform}, and \ref{thm:subset_mother}.
\end{enumerate}

For the finite sampling error, we provide the following bound. To this end, we model the
individual measurement performed during the \ac{RB} protocol by a random variable $\hat Y_m$.
To simplify the notation in the proof, we assume that the number of different sequence lengths is even and use a square Hankel matrix.

\begin{lemma}[Statistical estimation]\label{lem:statistical_estimation}
  Let $M$ be even and $L = M / 2$.
  For $m \in [M]$, let $\hat Y_m$ be a random variable taking values in $[0,1]$ with $\Var[\hat Y_k] \leq \varepsilon^2$.
  Furthermore, let $\hat p(m) = \frac 1 N \sum_{i=1}^N \hat Y^{(i)}_m$ the corresponding mean estimator of $N$ i.i.d.\ copies $\hat Y^{(i)}_m$ of $\hat Y_m$. We denote with $\operatorname{Hankel}_{L}(\hat p)$ the Hankel matrix of the vector $\hat p = (\hat p(m))_{m \in [M]} \in \RR^{M}$.
  Then,
    \begin{equation}
  \snorm{\operatorname{Hankel}_{L}(\hat p) - \EE \operatorname{Hankel}_{L}(\hat p)} \leq \epsilon
    \end{equation}
    with probability $1 - \delta$
  provided that
  \begin{equation}
    N \geq 4 \max \left\{ \frac{M\varepsilon^2}{\epsilon^2}, \frac 2{3 \epsilon} \right\} \log\frac M \delta.
  \end{equation}
\end{lemma}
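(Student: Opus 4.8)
The plan is to recognize the centered Hankel matrix as a sum of independent, zero-mean, operator-norm-bounded random matrices and then apply the matrix Bernstein inequality. First, since $y \mapsto \operatorname{Hankel}_L(y)$ is linear, the object to control is $E := \operatorname{Hankel}_L(\hat p) - \EE\operatorname{Hankel}_L(\hat p) = \operatorname{Hankel}_L(\hat p - \EE\hat p)$, the Hankel matrix of the fluctuation vector. I would expand $\operatorname{Hankel}_L$ in the anti-diagonal basis: writing $H^{(k)}$ for the $(L+1)\times(M-L+1)$ matrix with $[H^{(k)}]_{ab}=1$ iff $a+b=k$ and $0$ otherwise, one has $\operatorname{Hankel}_L(b) = \sum_k b_k H^{(k)}$. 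Each $H^{(k)}$ carries at most one nonzero entry per row and per column, i.e.\ it is a sub-permutation matrix, so $\snorm{H^{(k)}}\le 1$, while $H^{(k)}(H^{(k)})^\dagger$ and $(H^{(k)})^\dagger H^{(k)}$ are diagonal $0$–$1$ projectors satisfying the counting identities $\sum_k H^{(k)}(H^{(k)})^\dagger = (M-L+1)\,\1$ and $\sum_k (H^{(k)})^\dagger H^{(k)} = (L+1)\,\1$.

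The decisive step is the choice of summands. Writing $\hat p(k) - \EE\hat p(k) = \tfrac1N\sum_{i=1}^N(\hat Y_k^{(i)} - \EE\hat Y_k)$ and using that the experiments for distinct sequence lengths $k$ are independent, as are the $N$ repetitions, I would view
\begin{equation}
E = \sum_{i=1}^N \sum_k S_{i,k}, \qquad S_{i,k} := \tfrac1N\big(\hat Y_k^{(i)} - \EE\hat Y_k\big)\, H^{(k)},
\end{equation}
as a sum of mutually independent, zero-mean random matrices. Because $|\hat Y_k^{(i)} - \EE\hat Y_k|\le 1$ and $\snorm{H^{(k)}}\le 1$, each summand obeys the uniform bound $\snorm{S_{i,k}}\le 1/N =: R$. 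It is essential to decompose all the way down to individual $(i,k)$ pairs here: only then is the almost-sure bound $R$ free of $M$, which is precisely what produces the second, $M$-independent branch of the final maximum.

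For the variance proxy I would use $\EE[S_{i,k}S_{i,k}^\dagger] = N^{-2}\Var[\hat Y_k]\,H^{(k)}(H^{(k)})^\dagger$ together with $\Var[\hat Y_k]\le\varepsilon^2$ and the counting identities above, giving
\begin{equation}
v := \max\Big\{\snormb{\textstyle\sum_{i,k}\EE[S_{i,k}S_{i,k}^\dagger]},\ \snormb{\textstyle\sum_{i,k}\EE[S_{i,k}^\dagger S_{i,k}]}\Big\} \;\le\; \frac{\varepsilon^2}{N}\,\big(\tfrac{M}{2}+1\big),
\end{equation}
where the factor $N$ from summing over the identical repetitions cancels one power of $N^{-2}$.

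Finally I would invoke the matrix Bernstein inequality for the square $(M/2+1)\times(M/2+1)$ Hankel matrix, $\Pr[\snorm{E}\ge\epsilon]\le (M+2)\exp\!\big(-\tfrac{\epsilon^2/2}{v+R\epsilon/3}\big)$, the dimensional prefactor being $d_1+d_2=M+2$. Demanding the exponent to exceed $\log\frac{M+2}{\delta}$ and splitting the denominator as $v\le \epsilon^2/(4\log(\cdot))$ and $R\epsilon/3\le\epsilon^2/(4\log(\cdot))$ yields the two branches $N\gtrsim M\varepsilon^2\epsilon^{-2}\log(\cdot)$ and $N\gtrsim \epsilon^{-1}\log(\cdot)$, i.e.\ exactly the stated $\max$, once $\log\frac{M+2}{\delta}$ is bounded by a constant multiple of $\log\frac M\delta$ and the split and dimensional constants are absorbed into the prefactor. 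The main obstacle is conceptual rather than computational: one must exploit independence across sequence lengths (not merely across repetitions) so that every summand is a single rescaled anti-diagonal of operator norm $1/N$; summing only over repetitions would force $\snorm{S_i}\sim M/N$, since the all-ones Hankel matrix already has spectral norm $\sim M/2$, and would spoil the $M$-free Bernstein term. The combinatorial norm identities for the $H^{(k)}$ and the bookkeeping of Bernstein constants are then routine.
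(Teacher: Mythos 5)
Your proposal is correct and follows essentially the same route as the paper's proof: the paper likewise expands the centered Hankel matrix into per-repetition, per-sequence-length summands $S^{(i)}_k = \frac{1}{N}(\hat Y^{(i)}_k - \EE[\hat Y^{(i)}_k])\,X^k J$ (its $X^kJ$, shift times exchange matrix, being exactly your anti-diagonal matrices $H^{(k)}$), bounds each summand in spectral norm by $O(1/N)$ and the matrix variance statistic by $M\varepsilon^2/N$ via the same diagonal-projector identities, and concludes with the matrix Bernstein inequality. Your bookkeeping is in fact slightly tighter (norm bound $1/N$ versus the paper's $2/N$), and your emphasis on decomposing over both indices $(i,k)$ to keep the uniform bound $M$-free is precisely the mechanism at work in the paper's argument.
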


Combining lemma~\ref{lem:statistical_estimation} with the performance bound for MUSIC, theorem~\ref{thm:MUSICperformance}, and \eqref{eq:sigmin_as_cond} we can state the following result for the overall sampling complexity of random benchmarking experiments.

\begin{corollary}[Sampling complexity]\label{cor:sampling_complexity}
    Let $M$ be even and $L = M / 2$. And $z = (z_i)_{i=1}^n$ be a set of poles.
    For $m \in [M]$ let $\hat p(m)$ be the mean estimator of i.i.d.\ copies of random variables with variance bounded by $\varepsilon^2$.
    Choose $\tilde \epsilon, \delta > 0$, provided that the total number of random trials is
    \begin{equation}
       N_\text{total} \geq 8 \kappa^4_2(W_{M/2}(z)) \hat z^{-2} \frac{M \varepsilon^2}{\tilde\epsilon^2} \log\frac M \delta
    \end{equation}
    and
    \begin{equation}
        N_\text{total} \geq \frac{16}3\, \kappa^2_2(W_{M/2}(z)) \hat z^{-1} \frac1{\tilde\epsilon} \log\frac M \delta
    \end{equation}
    for the noise space correlation function \eqref{eq:MUSIC:correlator} defined by the MUSIC algorithm with input data $\hat p$
    it holds that $ | R(z') - R_\text{\rm signal}(z') | \leq \tilde\epsilon$ with probability $\delta$.
\end{corollary}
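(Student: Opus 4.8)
The plan is to chain together the three ingredients assembled above: the deterministic MUSIC perturbation bound (theorem~\ref{thm:MUSICperformance}), the statistical control of the Hankel matrix (lemma~\ref{lem:statistical_estimation}), and the conditioning estimate \eqref{eq:sigmin_as_cond} for the smallest singular value of the noise-free Hankel matrix. Since the only randomness in the corollary enters through the finite measurement statistics, I would identify the perturbation matrix $E$ appearing in theorem~\ref{thm:MUSICperformance} with the statistical fluctuation $E = \operatorname{Hankel}_{M/2}(\hat p) - \EE\operatorname{Hankel}_{M/2}(\hat p)$ of the mean estimator; the perturbative model error from neglecting sub-dominant eigenvalues (controlled separately by theorems~\ref{thm:mother}--\ref{thm:subset_mother}) is treated as absorbed into the signal part.

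First I would run the argument backwards from the target accuracy. Theorem~\ref{thm:MUSICperformance} gives $|R(z') - R_{\text{signal}}(z')| \leq 2\snormn{E}/\varepsilon_{\min}$ whenever $2\snormn{E} < \varepsilon_{\min}$, so to force the left-hand side below $\tilde\epsilon$ it suffices to guarantee $\snormn{E} \leq \tilde\epsilon\,\varepsilon_{\min}/2$ (which, for $\tilde\epsilon \le 1$, simultaneously verifies the hypothesis of the theorem). This is precisely an instance of lemma~\ref{lem:statistical_estimation} with target deviation $\epsilon := \tilde\epsilon\,\varepsilon_{\min}/2$ and $L = M/2$: the lemma then certifies $\snormn{E} \leq \epsilon$ with probability $1-\delta$ as soon as
\begin{equation}
N_{\text{total}} \geq 4\max\!\left\{ \frac{M\varepsilon^2}{\epsilon^2},\ \frac{2}{3\epsilon} \right\}\log\frac{M}{\delta}.
\end{equation}

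The remaining step is to make this condition explicit by eliminating $\varepsilon_{\min}$, which sits inside the target accuracy $\epsilon$ and hence in the denominator of both terms of the maximum. Substituting $\epsilon = \tilde\epsilon\,\varepsilon_{\min}/2$ turns the two terms into expressions proportional to $M\varepsilon^2/(\tilde\epsilon^2\varepsilon_{\min}^2)$ and $1/(\tilde\epsilon\,\varepsilon_{\min})$, respectively. Here I would invoke \eqref{eq:sigmin_as_cond}, which with $L = M-L = M/2$ reads $\varepsilon_{\min}^{-1} \leq \kappa_2(W_{M/2}(z))^2\,\hat z^{-1}$; because this is an \emph{upper} bound on $\varepsilon_{\min}^{-1}$, replacing $\varepsilon_{\min}^{-1}$ and $\varepsilon_{\min}^{-2}$ by $\kappa_2(W_{M/2})^2\hat z^{-1}$ and $\kappa_2(W_{M/2})^4\hat z^{-2}$ only enlarges the right-hand side, so the resulting explicit requirements are sufficient. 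The first term of the maximum then produces the $\kappa_2^4$ bound and the second the $\kappa_2^2$ bound, giving the two displayed lower bounds on $N_{\text{total}}$.

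The step demanding the most care — more a conceptual point than a genuine obstacle — is the direction of the inequalities in this final substitution: the conditioning estimate bounds $\varepsilon_{\min}$ from below, which is exactly what is needed, since the implicit sample-complexity requirement grows as $\varepsilon_{\min}$ shrinks, and an upper bound on $\varepsilon_{\min}^{-1}$ therefore converts it into an explicit, harder-to-satisfy condition. I would also check that the $1-\delta$ success probability of lemma~\ref{lem:statistical_estimation} carries through verbatim, as every other ingredient (theorem~\ref{thm:MUSICperformance} and the Vandermonde conditioning) is deterministic once the pole configuration $z$ is fixed.
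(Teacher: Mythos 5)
Your proposal is correct and is precisely the paper's own argument: the paper offers no proof beyond the sentence announcing that the corollary follows by ``combining lemma~\ref{lem:statistical_estimation} with the performance bound for MUSIC, theorem~\ref{thm:MUSICperformance}, and \eqref{eq:sigmin_as_cond}'', and your chaining of exactly those three ingredients (identifying $E$ with the statistical fluctuation of the Hankel matrix, setting $\epsilon = \tilde\epsilon\,\varepsilon_{\min}/2$, and then replacing $\varepsilon_{\min}^{-1}$ by $\kappa_2^2(W_{M/2}(z))\,\hat z^{-1}$) is that proof spelled out. One caveat on constants: carrying your substitution through gives $4\,M\varepsilon^2/\epsilon^2 = 16\,M\varepsilon^2/(\tilde\epsilon^2\varepsilon_{\min}^2)$, i.e.\ a prefactor of $16$ rather than the displayed $8$ in the first bound, while the second bound comes out as $16/3$ exactly as displayed; so your claim to recover ``the two displayed lower bounds'' is off by a factor of $2$ in the first, and since no single choice of $\epsilon$ reproduces the pair $(8,\,16/3)$ from the stated ingredients, this points to an arithmetic slip in the paper's constant rather than a flaw in your reasoning.
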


We state this bound in terms of the condition number of the Vandermonde matrix, which allows us to make analytic claims about the behaviour of the sampling complexity in various regimes. However, one can state an equivalent bound in terms of the smallest singular value, which will often be significantly smaller. It is, however, difficult to work with analytically.

For the application of corollary \ref{cor:sampling_complexity} to \ac{RB} data processing, one has to additionally control the perturbative error appearing in theorems~\ref{thm:mother}, \ref{thm:subset_mother}, \ref{thm:non_uniform}. The perturbative error $\epsilon$ per \ac{RB} data point, see e.g.,~\eqref{eq:mother}, yields an additive error in the noise correlation function of order of $M\hat z^{-1} \kappa^2_2(W_{M/2}(z)) \epsilon$.
The scaling with $M\epsilon$ originate from the spectral norm of the Hankel matrix and the factor of $z^{-1}\kappa^2_2(W_{M/2}(z))$ captures the noise-enhancement.

Lemma \ref{lem:statistical_estimation} follows from the Matrix Bernstein bound~\citep{TroppBernstein, AW02}
that requires us to control
the spectral norm and matrix variance statistics in order to provide a tail bound for sums of matrices.
We follow the same strategy as presented in ref.~\citep{TroppBernstein} for Toeplitz matrices.

\begin{proof}[Proof of lemma~\ref{lem:statistical_estimation}]
  With the help of the $L \times L$ exchange matrix
  \begin{equation}
    J_{i,j} = \begin{cases} 1 & j = L - i + 1 \\ 0 & \text{else} \end{cases}
  \end{equation}
  and the $L \times L$ (non-cyclic) shift matrix $X$ that has ones its first upper off-diagonal and zeros everywhere else
  we can write
  \begin{equation}
    \operatorname{Hankel}_L(\hat p) = \sum_{k=-L+1}^{L-1} \hat p_k X^k J,
  \end{equation}
  where we identify the elements of $p$ cyclically.
  We define
    \begin{equation}
    S^{(i)}_k := \frac1N(\hat Y^{(i)}_k - \EE[\hat Y^{(i)}_k]) X^k J
      \end{equation}
      such that $\operatorname{Hankel}_L(\hat p) - \EE\operatorname{Hankel}_L(\hat P) = \sum_{i=1}^N\sum_{k=-L+1}^{L-1} {S_k}$ is the sum of the random matrices $S^{(i)}_k$.
  Since     \begin{equation}
  \snorm{X} = \snorm{J} = 1
   \end{equation} and $\hat Y_k$ takes values in $[0,1]$, we have that
     \begin{equation}
   \snorm{S^{(i)}_k} \leq 2/N
      \end{equation}
      for all $i, k$.
  For the matrix variance we calculate that
  \begin{equation}
     \sum_{k=-L+1}^{L-1} \EE[S^{(i)}_k(S^{(i)}_k)^\dagger] = \frac1{N^2}\sum_{k=-L+1}^{L-1} \Var[\hat Y_k] X^k X^{-k} = \frac{1}{N^2}\sum_{k=-L+1}^{L-1} \Var[\hat Y_k] P_k,
   \end{equation}
   with $P_k$ a diagonal projector having $k$ ones on the diagonal and zeros everywhere else.
   One finds the same structure for $\sum_{k=-L+1}^{L-1} \EE[(S^{(i)}_k)^\dagger S^{(i)}_k]$ analogously.
   By the assumption of the lemma $\Var[\hat Y_k] \leq \varepsilon^2 $.
   Therefore, matrix variance statistics is dominated as
   \begin{equation}
      \max \left\{ \snorm{\sum_{i=1}^N\sum_{k=-L+1}^{L-1} \Ev(S_kS\ad_k)},\snorm{\sum_{i=1}^N\sum_{k=-L+1}^{L-1} \Ev(S_k\ad S_k)} \right\} \leq \frac{M \varepsilon^2} N.
   \end{equation}

   The matrix Bernstein inequality \cite{TroppBernstein} yields
   \begin{equation}
     \PP\left[ \snorm{\sum_{i=1}^N \sum_{k=-L+1}^{L-1} S_k} \geq \epsilon \right] \leq M \exp\left(
        - \min \left\{
            \frac{\epsilon^2 N}{4M\varepsilon^2}, \frac{3\epsilon N} 8 \right\} \right).
   \end{equation}
   Requiring the right hand side to be dominated by $\delta$ and solving for $N$ yields the lemma's assertion.
  \end{proof}

\subsection{Vandermonde conditioning for \acl{RB} decays}

The noise-enhancements factor in the performance guarantee for the tone-finding algorithms MUSIC and ESPRIT is given
by the inverse of the minimum singular value $\varepsilon^{-1}_{\min}$ of the Hankel matrix of the ideal, noise-free signal.
This minimum singular value \eqref{eq:sigmin_as_cond} is in turn controlled by the minimal absolute value of the poles and the conditioning of the Vandermonde matrix $W_L(z)$ associated with the poles and the signal length. Here we numerically investigate this conditioning in various scenarios relevant to \ac{RB}.
We express all data in terms of the dimension of the Hankel matrix $L$, which one can generally take as being about half of the maximal sequence length $M$.

\begin{figure}[t]
  \includegraphics[width=.8\textwidth]{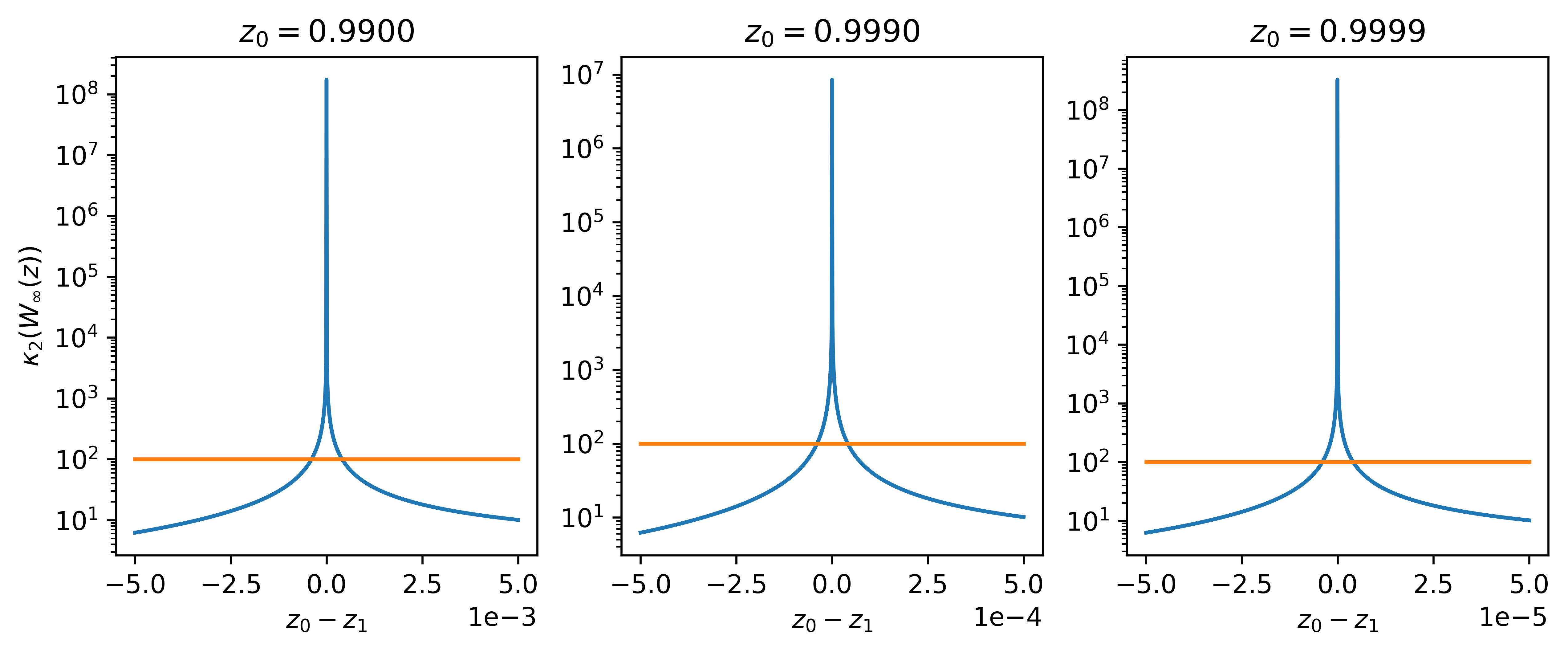}
  \caption{\label{fig:plotAsymptoticConditioningForPairs} Here we show the dependency of the conditioning number of the Vandermonde matrix on the spacing of two poles $z_0,z_1$, for infinite sequence length. We see that the conditioning depends drastically on the distance between the two poles, but not on the absolute location of the poles on the real line. The orange line at $10^2$ is added for the purpose of comparison. }
\end{figure}
\begin{figure}[t]
  \includegraphics[width=.6\textwidth]{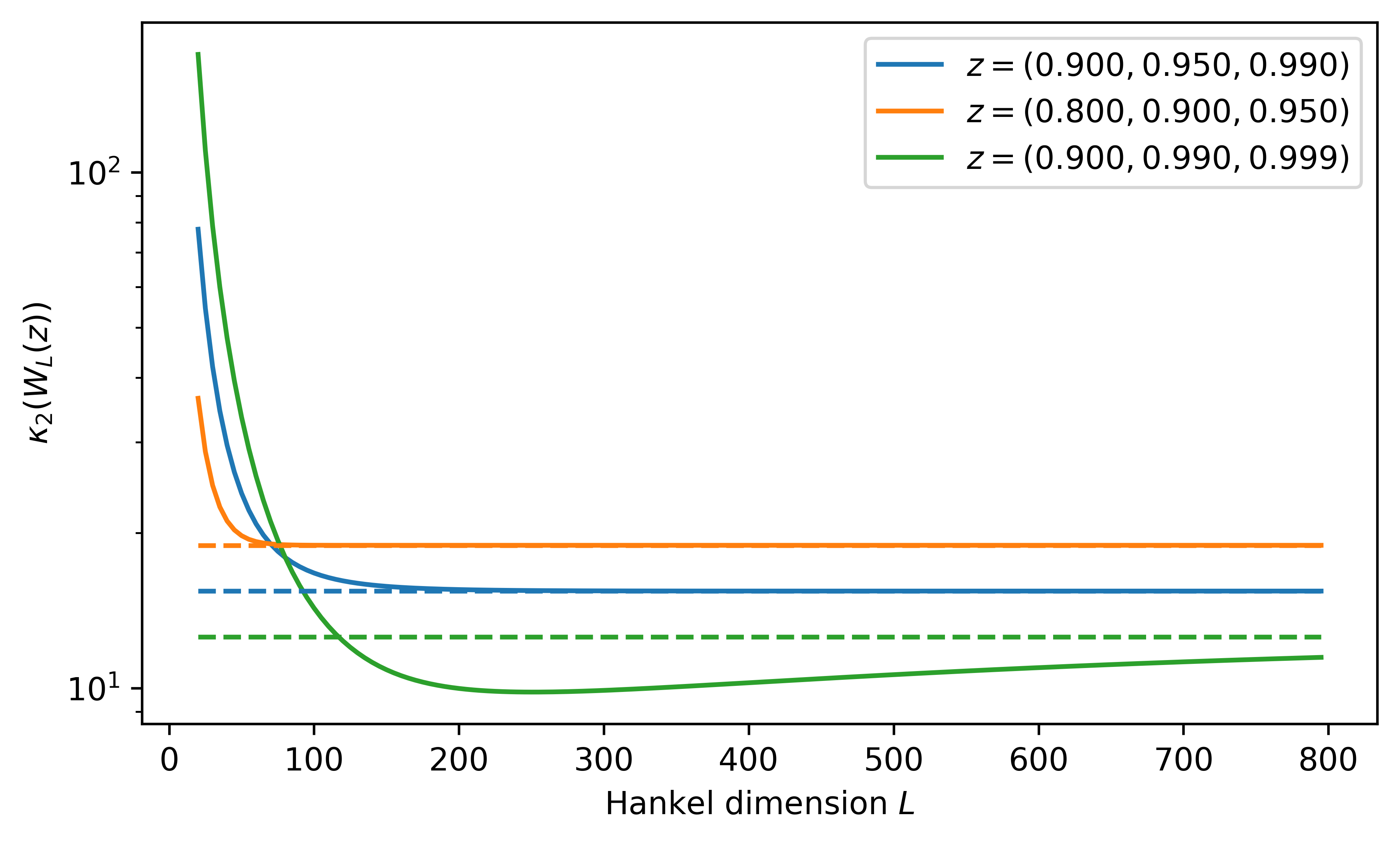}
  \caption{\label{fig:condVsSequenceLength} The condition number of the Vandermonde matrix for different Hankel matrix dimensions ($\propto$ \ac{RB} sequence length) for three different sets of poles.
  The dashed lines indicate the asymptotic expression of Lemma~\ref{lem:asymptotic_conditioning}.
  Note that the minimum for the green line is due to the scaling of the maximal eigenvalue of $W_{M-L}$. We observe (not depicted here) that the minimum singular value of Hankel matrix, as appearing in theorem~\ref{thm:MUSICperformance}, is monotonically increasing in $L$ for all three sets of poles.}
\end{figure}
When the \ac{RB} data model is described by many poles that are close in value
the noise-enhancement due to bad conditioning can be limiting factor rendering the extraction
of poles infeasible.

Increasing the sequence length improves the conditioning of $W_L(z)$, see figure~\ref{fig:condVsSequenceLength}.
But theorem \ref{thm:bazanbound} shows that the condition number of $W_L(z)$ is
even in the asymptotic limit $W_\infty(z)$ for large $L$ bounded away from zero.
Thus, increasing the length of observed \ac{RB} series only improves the conditioning up to a certain point.

The explicit expressions of the upper and lower bounds on the condition number in theorem~\ref{thm:bazanbound} have a rather complicated dependency on the geometrical constellation of the poles.
One can argue that for \ac{RB} data with poles on the real line there are roughly speaking two effects coming into play: (1) The spacing of the poles and (2) the number of poles.

To illustrate the dependency on the spacing of the poles, we have numerically evaluated the $\kappa_2(W_\infty(z))$ for different pairs of poles as they might appear in \ac{RB} data. The result is shown in fig.~\ref{fig:plotAsymptoticConditioningForPairs}.
The first pole is chosen to deviate from $1$ by a value $r \in \{10^{-2}, 10^{-3}, 10^{-4}\}$, the second pole is
chosen at different values around the first one.
We observe that as both poles move together the condition number diverges.
Importantly, the size of the interval in which the condition number grows over a certain threshold scales with $r$.
Correspondingly, we expect that poles closer to $1$ can be still resolved with a smaller spacing compared to poles
that deviate considerably from $1$.

\begin{figure}[t]
  \includegraphics[width=.6\textwidth]{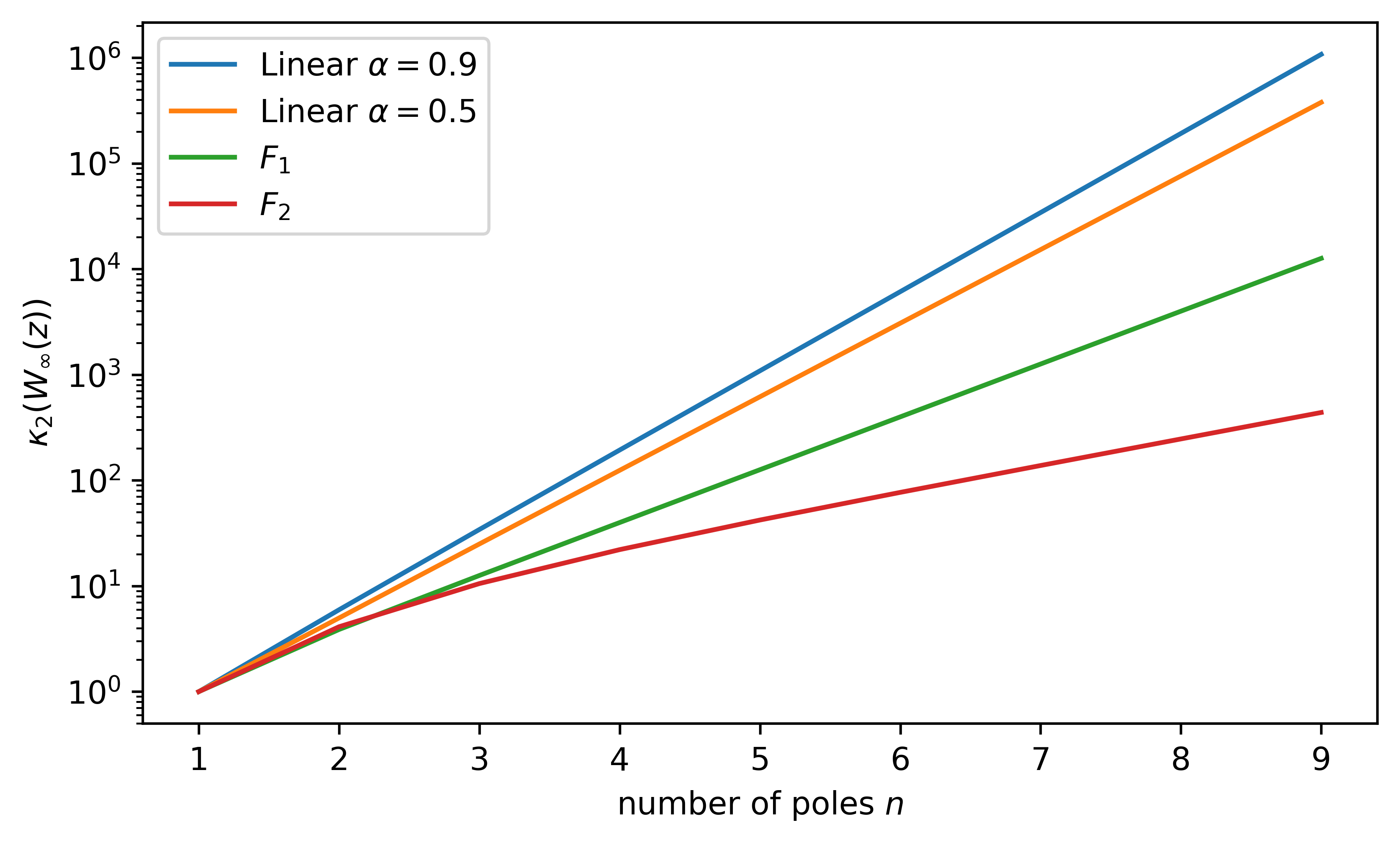}
  \caption{\label{fig:asympCondScalingInN} The dependency of the condition number in the limit of infinite sequence length on the number of poles for different families of poles. These families are defined in table \ref{tab:families}.}
\end{figure}

Secondly, even if the poles are spaced such that the ratio of the departure from normality and the minimum spacing are fixed the upper bound in theorem~\ref{thm:bazanbound} exhibits an exponential dependency on the number of poles.
We numerically evaluate this dependency for different families of poles that each defines a set of poles for every cardinality, see table~\ref{tab:families}.
These families include linearly spaced poles within the interval $(\alpha, 1)$ and the pole families
$F_a(n) = (z_i = 1 - 10^{-i/a} \mid i \in [n])$ for positive real $a$.
For example, $F_1(n) = (.9, .99, .999, \ldots)$ which can be regarded as featuring exponentially spaced `infidelities'.

\newcommand{\tabbodysize}{\scriptsize}

\begin{table}
\begin{tabular}{l >{\tabbodysize}c >{\tabbodysize}c >{\tabbodysize}c >{\tabbodysize}c}
  \toprule
  n &\normalsize 2 &\normalsize 4 &\normalsize 6 \\
  \midrule
  Lin. $\alpha = .9$ &
  $(0.9,  0.95)$ &
  $(0.9,   0.925, 0.95,  0.975)$ &
  $(0.9,        0.9167, 0.9333, 0.95,       0.9667, 0.9833)$
  \\
  Lin. $\alpha = .5$ &
  $(0.5,  0.75)$ &
  $(0.5,   0.625, 0.75, 0.875)$ &
  $(0.5,   0.5833, 0.6667, 0.75,  0.8333, 0.9167)$
  \\
  $F_1$ &
  $(.9, .99)$ &
  $(.9, .99, .999, .9999)$ &
  $(.9, .99, .999, .9999, .99999, .999999)$ %
  \\
  $F_2$ &
  $(0.9,     0.9684)$ &
  $(0.9,       0.9684, 0.99, 0.9968)$ &
  $(0.9,        0.9684, 0.99,  0.9968, 0.999, 0.9997)$
  \\
  \bottomrule
\end{tabular}
\caption{\label{tab:families} Examples of pole families for different numbers of poles $n$.}
\end{table}

Fig.~\ref{fig:asympCondScalingInN} depicts the dependency of $\kappa_2(W_\infty(z))$ on the number of poles $n$ for different families.
We find that due to a typically exponential dependency, the conditioning indicates that the reconstruction of multiple poles becomes demanding for already small numbers $n$.

Note that the conditioning is significantly improved if the poles are not exclusively on the real line but also have non-vanishing imaginary parts.
Such pole sets for example arise in the \ac{RB} variant of ref.~\cite{PhysRevLett.123.060501} focusing on individual gates.

\subsection{Performance evaluation}

\begin{figure}[t]
  \includegraphics[width=.5\textwidth]{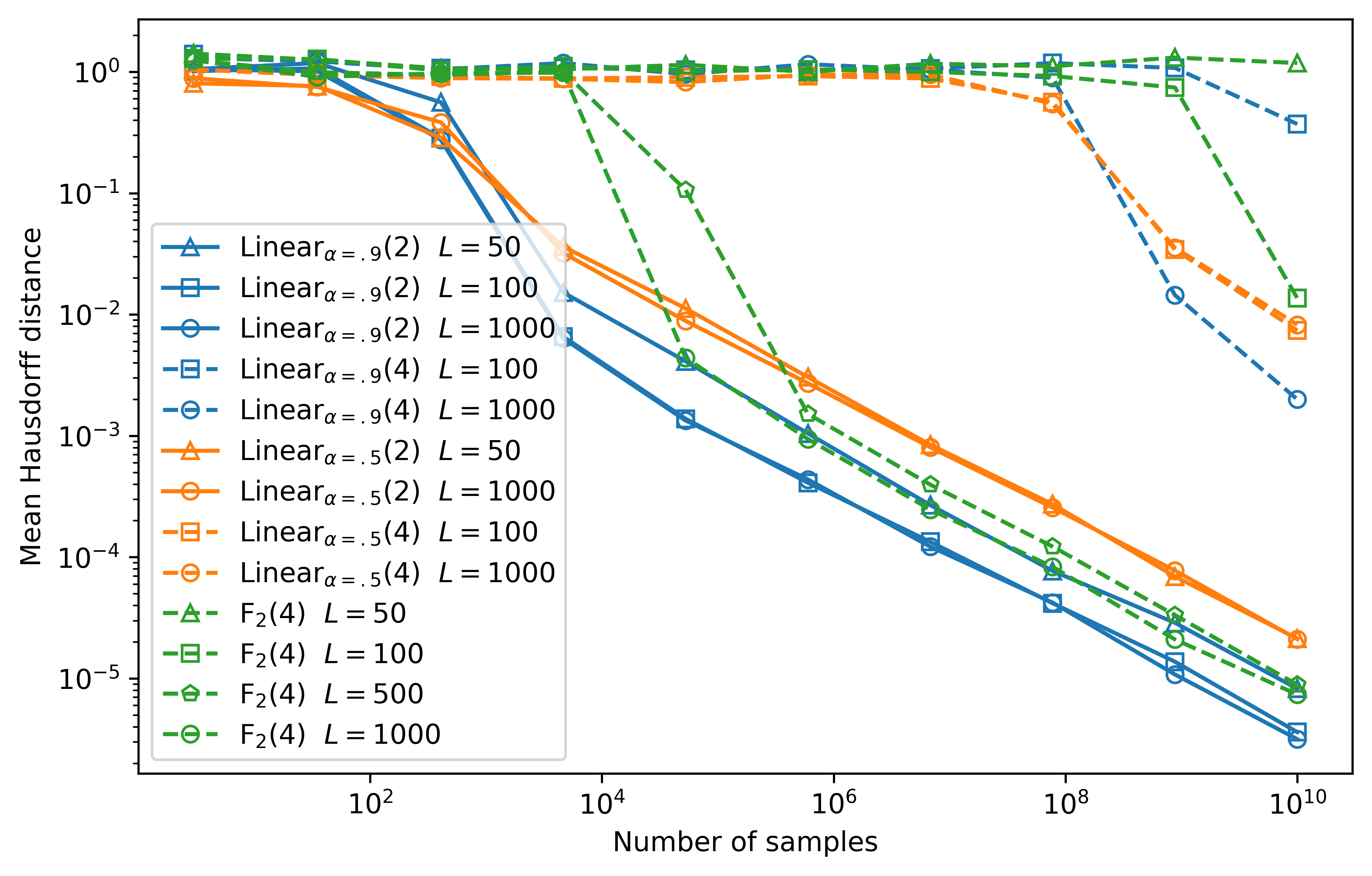}
  \caption{\label{fig:ESPRITHausdorffVsNos} Mean Hausdorff distance between the real set of poles and the reconstructed set of poles (via ESPRIT) for different families of poles (as defined in table \ref{tab:families}) and Hankel dimension $L$ ($\propto$ maximal \ac{RB} sequence length $M$) versus the number of samples used per expectation value estimation. Each data point is averaged over $100$ repetitions. For all families we see that the reconstruction essentially fails until a sampling threshold is reached, after this threshold the accuracy of the estimation increases rapidly with increased number of samples. This threshold increases strongly with the number of poles in the family across all families and also depends on the maximal sequence length. This latter dependence is mediated by the actual locations of the poles in the complex plane, which is as expected. }
\end{figure}
\begin{figure}[t]
  \includegraphics[width=.5\textwidth]{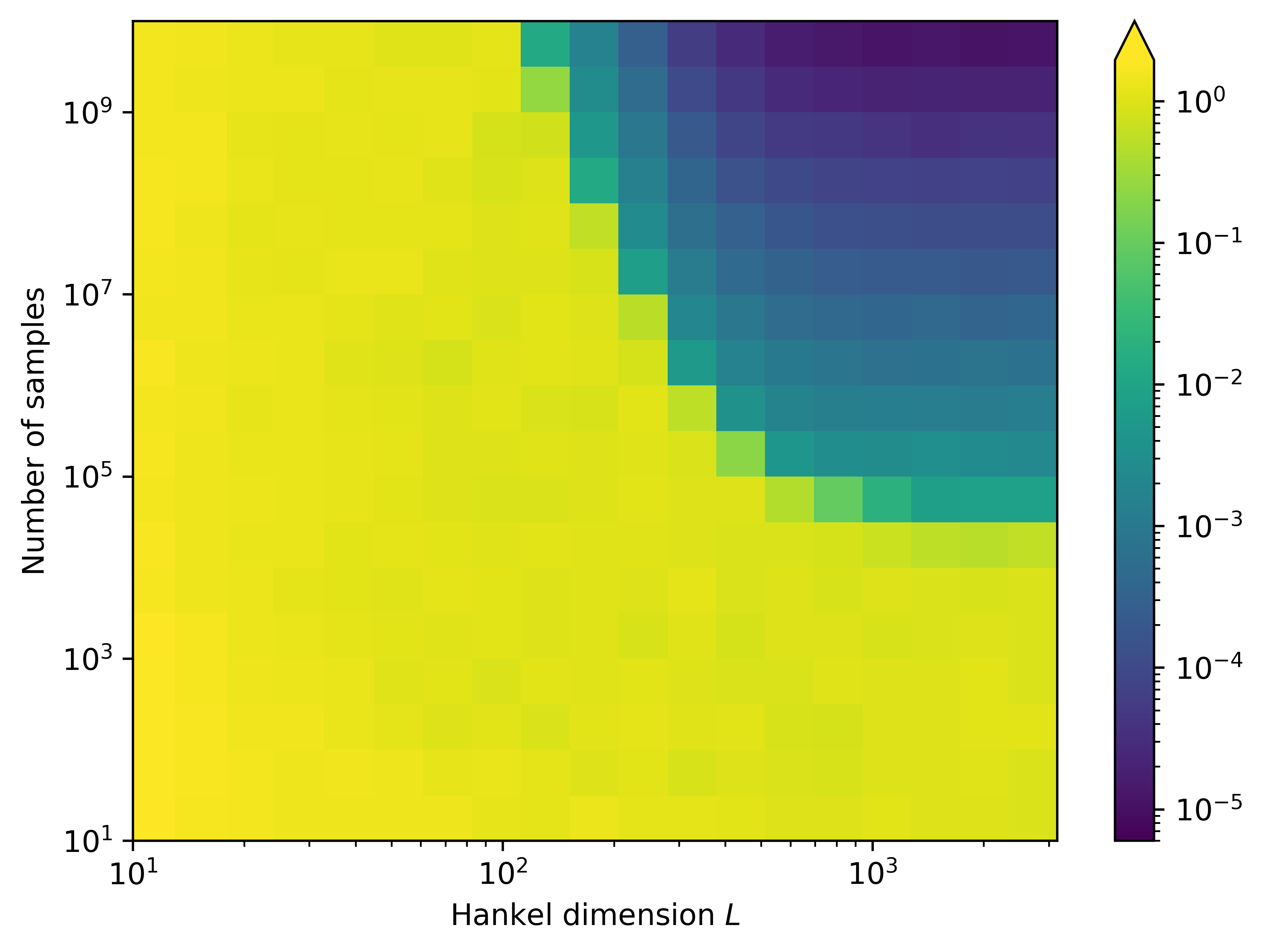}
  \caption{\label{fig:ESPRITHausdorffVsNosAndL} Mean Hausdorff distance of the reconstruction (via ESPRIT) for poles $z = F_2(4) = (0.9 ,       0.968 ,0.99  ,     0.997)$ for different number of samples and Hankel dimension $L$ ($\propto$ maximal \ac{RB} sequence length $M$). Each data point is averaged over $100$ repetitions. We see again that reconstruction essentially fails, until a threshold is reached both in number of samples and in maximal sequence length after which the accuracy of reconstruction increases with increasing number of samples and in $L$.}
\end{figure}

After collecting evidence that the reconstruction of multiple poles quickly becomes a demanding task.
We here show that for moderate configurations (i.e., not too many poles, not too close together) the ESPRIT algorithm is suitable for the post-processing of
\ac{RB} data. To this end, we implemented the ESPRIT algorithm in Python. For a fixed set of poles the ideal data series (constructed from the poles and a fixed identical pre-factor)
is made noisy by randomly sampling binomial distributions. This simulates the random noise due to finite statistics for a certain number of samples per sequence length.
Subsequently, the set of poles is reconstructed from the noisy data using the ESPRIT algorithms.
We compare the reconstructed set of poles with the ideal set of poles using the
symmetric Hausdorff distance.
Let $z \in \CC^n$ and $z' \in \CC^{n'}$
\begin{equation}
  d_H(z, z') = \max \{d_{dH}(z' ; z), d_{dH}(z ; z')\}\, , \qquad d_{dH}(z , z') = \max_{k \in [n]} \min_{k' \in [n']} |z_k - z'_{k'}|\, .
\end{equation}
Fig.~\ref{fig:ESPRITHausdorffVsNos} displays the mean Hausdorff distance for different
number of samples. Each data point is averaged over $100$ repetitions.
Fig.~\ref{fig:ESPRITHausdorffVsNosAndL} depicts the mean Hausdorff distance for different numbers of samples and maximal sequence lengths. In both of these plots we note a threshold effect where the reconstruction of the poles essentially fails until a threshold of samples and maximal sequence length is reached, after which reconstruction accuracy increases with increasing number of samples. This phenomenon is observed for different families of poles and the location of the threshold depends strongly on the number of poles in the signal. It is interesting to note in fig.~\ref{fig:ESPRITHausdorffVsNosAndL} that the minimal number of samples needed for reconstruction is dependent on the maximal sequence length. Since increasing the maximal sequence length has an implicit sampling cost, this points to a non-trivial optimization problem in allocating resources. We leave further investigation of the optimal point for a family of poles for further research. The conclusion from these numerical investigations is that the \ac{RB} decay rate recovery problem is feasible using modern methods when the number of poles is small but rapidly becomes impractical as the number of poles grows.

\section{Isolating matrix exponentials associated with a representation}\label{sec:new_crb}

We have seen in section \ref{sec:output_data} that for uniform \acf{RB} the output data is well described by a linear combination of (matrix) exponential decays associated with irreducible sub-representations of a reference representation. The decay rates can in principle be extracted by the methods described in section \ref{sec:data_processing}. However, two issues crop up here: (1) the sample complexity of extraction is strongly dependent on the number of decays present in the \ac{RB} output data, limiting \ac{RB} to groups with reference representations containing at most a few irreducible sub-representations, and (2) upon successful extraction of decay constants, it is not clear a priori how they are related to the different irreducible sub-representations present, making it hard to relate the decay constants to the average fidelity.

A data processing technique that addresses this problem was proposed in various papers (marked with a $*$ in fig.\ \ref{fig:results overview}) such as the dihedral benchmarking scheme \citep{carignan2015characterizing} for the single qubit dihedral group, the character benchmarking scheme \citep{helsen2019new} which works for general groups (with some technical constraints on the reference representation) and the Pauli channel tomography scheme \citep{flammia2019efficient} and cycle benchmarking \citep{CycleBenchmarking} for the Pauli group (in ref.\  \citep{flammia2019efficient} multiple decays are actually estimated in parallel).  The unifying theme in all of these procedures is that one estimates \ac{RB} output data $p(i,m, \gend)$ for different {ending gates} $\gend\in \gr$, and then correlates the resulting vector of signals $[p(i,m,\gend)]_{\gend}$ with a scalar function $f_\lambda(\gend)$ (which can be thought of as a dual vector) that depends on an irreducible sub-representation $\sigma_\lambda$ of the reference representation $\omega$.

In this section we will take this idea and generalize it as far as possible. In particular we will propose a post-processing method that, for any group $\gr$ and reference representation $\omega$, takes in \ac{RB} output data $p(i, m, \gend)$ (for all $\gend \in \gr$) and an irreducible sub-representation $\sigma_\lambda$ of the reference representation $\omega$, and outputs post-processed data $k_\lambda(m)$ that only depends on the (matrix) exponential decay associated with $\sigma_\lambda$. We will state theorems for uniform \ac{RB}, but the discussion below generalizes to the other types of \ac{RB}.

We note that all examples of \ac{RB} schemes without inversion gates (marked with a $**$ in fig.\ \ref{fig:results overview}) can be seen as special cases of the procedure given below, where the output data $[p(i,m,\gend)]_{\gend}$ is simply averaged over $\gend$.
{We would also like to note that the procedure defined here obviates the need for explicitly implementing the inversion gate (as it can be simply absorbed by redefining $\gend$). This makes the protocol more experimentally practical.}

\subsection{The post-processing procedure}

We begin by defining \emph{filter functions} $\alpha_\lambda$ (associated with a representation $\sigma_\lambda$)
\begin{equation}
\alpha_\lambda: \gr\times I \to \mathbb{C} :g,i \mapsto\braa{\Pi_i} \mc{P}_\lambda \overline{\omega}(g)\kett{\rho_0}
\end{equation}
where $\mc{P}_\lambda:\mc{S}_d\to \mc{S}_d$ is the projection onto the sub-representation $\sigma_\lambda^{\oplus n_\lambda}$ of the reference representation $\omega$. This is (up to normalization) the matrix element of the sub-representation $\sigma_\lambda^{\oplus n_\lambda}$ corresponding to the vectors $\kett{\rho_0}$ and $\braa{\Pi_i}$.
From the \ac{RB} data and the above matrix element function we can now compute the following quantity we call the $\lambda$-filtered \ac{RB} output data:
\begin{equation}\label{eq:lambda_correlation}
k_\lambda(m)  = \gsum{\gend}\sum_{i\in I} N_\lambda^{-1} \alpha_\lambda(i, \gend) p(i,m,\gend)
\end{equation}
where the normalization constant is given by
\begin{equation}
N_\lambda = \gsum{g}\sum_{i\in I}
\alpha_\lambda(g,i )\braa{\Pi_i} {\omega}(g)\kett{\rho_0}.
\end{equation}
One can think of this quantity as measuring the presence of the sub-representation $\sigma_\lambda$ in the data $p(i, \gend, m)$. We will make this more precise in the following theorem:
\begin{theorem}[Measuring sub-representations in the data]\label{thm:char_avg}
Let $\gr$ be a finite group and $\omega:\gr\to \mc{S}_d$ a reference representation of $\gr$ with decomposition $\omega = \bigoplus_{\lambda'\in \Lambda} \sigma_{\lambda'}^{\oplus n_{\lambda'}}$. Moreover, let $\phi$ be an implementation of $\omega$ for which Theorem \ref{thm:mother} holds.
For a fixed $\lambda\in \Lambda$ consider the $\lambda$-filtered data $k_\lambda(m)$ as defined in \eqref{eq:lambda_correlation}. As a function of $m$ we now have that
\begin{equation}
|k_\lambda(m) - \tr(B_\lambda M_\lambda^m)|\leq  8K\left(\delta\left[1+ \frac{2\delta}{1-5\delta}\right] \right)^{m}
\end{equation}
where $B_\lambda$ is an $n_\lambda\times n_\lambda$ matrix encoding SPAM terms, $M_\lambda$ is given by the projection onto the subspace associated with the $n_\lambda$ largest eigenvalues of $\mc{F}(\phi)[\sigma_\lambda]$ (as given in theorem \ref{thm:mother}), and $K$ is some constant independent of $m$.
\end{theorem}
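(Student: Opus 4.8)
The plan is to combine the exact ending-gate-dependent decomposition of the output data established inside the proof of Theorem~\ref{thm:mother} with the orthogonality of irreducible matrix elements, which is precisely what the filter function $\alpha_\lambda$ encodes. The starting observation is that in the Fourier picture the ending gate enters $p(i,m,\gend)$ only through the \emph{ideal} representation matrices: writing $p(i,m,\gend)=\braa{\mc{E}_{\mathrm{M}}(\Pi_i)}(\phi*\phi^{*m})(\gend)\kett{\mc{E}_{\mathrm{SP}}(\rho_0)}$ and applying the inverse Fourier transform, the entire dependence on $\gend$ sits in the factors $\overline{\sigma_{\lambda'}}(\gend^{-1})$. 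Viewed as an element of the group algebra, $\gend\mapsto p(i,m,\gend)$ therefore has a Fourier expansion supported on the irreducibles $\sigma_{\lambda'}$ present in $\omega$, with coefficients given by $\mc{F}(\phi)^{m+1}[\sigma_{\lambda'}]$ dressed by the SPAM vectors. The filter $\alpha_\lambda(i,\gend)=\braa{\Pi_i}\mc{P}_\lambda\overline{\omega}(\gend)\kett{\rho_0}$ is, by construction, a matrix element of $\overline{\sigma_\lambda}$ alone, i.e.\ it is band-limited to the single irreducible $\sigma_\lambda$.

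First I would import from the proof of Theorem~\ref{thm:mother} the exact split $p(i,m,\gend)=(1)_{i,\gend}+(2)_{i,\gend}$, where $(1)_{i,\gend}=\sum_{\lambda'\in\Lambda}\tr(A_{\lambda'}M_{\lambda'}^m)$ carries the $\gend$-dependence through $\overline{\sigma_{\lambda'}}(\gend^{-1})$ inside the SPAM matrices $A_{\lambda'}$ as in eq.~(\ref{eq:spamfactor}), and where $(2)_{i,\gend}$ is the sub-dominant contribution. The crucial point is that the estimate on $(2)$ obtained there is a max-norm bound on a Fourier operator and is hence \emph{uniform} in $\gend$ and $i$, namely $|(2)_{i,\gend}|\leq 8(\delta[1+\tfrac{2\delta}{1-5\delta}])^m$. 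Substituting the split into the definition of $k_\lambda(m)$ and bounding the filtered sub-dominant part by the triangle inequality gives
\begin{equation}
\Big|\gsum{\gend}\sum_{i\in I}N_\lambda^{-1}\alpha_\lambda(i,\gend)\,(2)_{i,\gend}\Big|\leq 8K\left(\delta\left[1+\frac{2\delta}{1-5\delta}\right]\right)^m,
\end{equation}
with $K:=N_\lambda^{-1}\gsum{\gend}\sum_{i\in I}|\alpha_\lambda(i,\gend)|$ a constant independent of $m$, which produces exactly the error term in the claim.

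The heart of the argument is then the filtered average of the dominant part $(1)$. Here I would carry out the sum over $\gend$ first: each summand of $(1)_{i,\gend}$ is linear in $\overline{\sigma_{\lambda'}}(\gend^{-1})$, while $\alpha_\lambda(i,\gend)$ is linear in $\overline{\sigma_\lambda}(\gend)$, so the $\gend$-average is a sum of products of matrix elements of $\sigma_\lambda$ and $\sigma_{\lambda'}$. By the grand (Schur) orthogonality relations — the same relations underlying the projectors in eqs.~(\ref{eq:projections_onto_irred_basis})--(\ref{eq:projections_onto_irred_subspace}) and the Parseval identity eq.~(\ref{Parseval_identity}) — every contribution with $\sigma_{\lambda'}\not\simeq\sigma_\lambda$ vanishes, and the surviving $\lambda'=\lambda$ term contracts the matrix-element indices against $\kett{\rho_0}$ and $\braa{\Pi_i}$. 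Summing over $i\in I$ and absorbing the resulting (now $\gend$-independent) overlap and SPAM factors, together with the normalization $N_\lambda^{-1}$, into a single $n_\lambda\times n_\lambda$ matrix $B_\lambda$, collapses $\sum_{\lambda'}\tr(A_{\lambda'}M_{\lambda'}^m)$ to the single term $\tr(B_\lambda M_\lambda^m)$, with $M_\lambda$ unchanged from Theorem~\ref{thm:mother}. Combining this with the bound on the filtered $(2)$-term by the triangle inequality then yields the stated estimate.

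The main obstacle I anticipate is the bookkeeping in this orthogonality step: one has to track the multiplicity indices $j_\lambda$ carefully to confirm that the filter isolates precisely the $\sigma_\lambda^{\oplus n_\lambda}$ block and that the contracted object is a genuine $n_\lambda\times n_\lambda$ matrix multiplying $M_\lambda^m$ on the correct side, rather than an uncontrolled mixture of blocks. A secondary point requiring care is that $\alpha_\lambda$ is built from the \emph{ideal} state $\rho_0$ and POVM $\Pi_i$ whereas the data $p$ carries the noisy SPAM channels $\mc{E}_{\mathrm{SP}},\mc{E}_{\mathrm{M}}$; this mismatch is harmless because the noisy SPAM simply enters the definition of $B_\lambda$, but one must additionally verify that $N_\lambda$ is nonzero for the filter to be well defined, which holds generically whenever $\sigma_\lambda$ genuinely appears in the support of the prepared state and measurement.
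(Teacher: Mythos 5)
Your proposal is correct and follows essentially the same route as the paper's own proof: split $p(i,m,\gend)$ via Theorem~\ref{thm:mother}, bound the filtered sub-dominant part by the triangle inequality with an $m$-independent constant $K$, and collapse the dominant part to the single $\lambda$-term using the identity $\gsum{\gend}\mc{P}_\lambda\omega(\gend)\otimes\sigma_{\lambda'}(\gend)=\delta_{\lambda,\lambda'}\mc{F}(\omega)[\sigma_{\lambda'}]$, which is exactly the Schur-orthogonality step you invoke. Your explicit choice $K=N_\lambda^{-1}\gsum{\gend}\sum_{i\in I}\lvert\alpha_\lambda(i,\gend)\rvert$ and your remarks on the ideal-versus-noisy SPAM mismatch and on $N_\lambda\neq 0$ are minor refinements of, not departures from, the paper's argument.
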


\begin{proof}
We know from theorem \ref{thm:mother} that
\begin{equation}
|p(i,m, \gend) -  \sum_{\lambda'\in \Lambda}\tr(A_{\lambda'}M^m_{\lambda'})| \leq 8\left(\delta\left[1+ \frac{2\delta}{1-5\delta}\right] \right)^{m},
\end{equation}
with $A_{\lambda'}$ given in eq.\ (\ref{eq:spamfactor}).
From the definition of $k_\lambda(m)$,  we can thus compute
\begin{align}
k_\lambda(m) &= \frac{1}{|\gr|}\sum_{\gend\in \gr}\sum_{i\in I} N_\lambda^{-1} \alpha_{\lambda}(i,\gend) \sum_{\lambda'\in \Lambda}\tr(A_{\lambda'}M^m_{\lambda'})  \\&\hspace{5em}+\frac{1}{|\gr|}\sum_{\gend\in \gr}\sum_{i\in I} N_\lambda^{-1} \alpha_{\lambda}(i,\gend) \left(%
p(i, \gend, m)%
- \sum_{\lambda'\in \Lambda}\tr(A_{\lambda'}M^m_{\lambda'})%
\right) .
\end{align}
Considering only the first term, and inserting the definition of $\alpha_{\lambda}(i,\gend)$ we are interested in the SPAM operator quantity
\begin{equation}
B_{\lambda,\lambda'} = \frac{1}{|\gr|}\sum_{\gend\in \gr}\sum_{i\in I}\braa{\Pi_i}\mc{P}_\lambda \overline{\omega}(\gend)\kett{\rho_0}A_{\lambda'}
\end{equation}
for $\lambda'\in \Lambda$.
From the proof of theorem \ref{thm:mother} (eq.\ (\ref{eq:spamfactor})), 
we can recover an expression for the $n_{\lambda'}\times n_{\lambda'}$ matrix $A_{\lambda'}$:
\begin{equation}
[A_{\lambda'}]_{j,j'} = d_{\sigma_{\lambda'}}\braa{\mc{E}_{\mathrm{M}}(\Pi_i)}\tr_{V_{\sigma_{\lambda'}}}\left[(\overline{\sigma}_{\lambda'}(\gend^{-1})\otimes \1) R^{\lambda'}_1 F(\mc{P}_{\lambda'}^j \omega\mc{P}_{\lambda'}^{j'}) {L_1^{\lambda'}}\ct \right]\kett{\mc{E}_{\mathrm{SP}}(\rho_0)}
\end{equation}
where $\mc{P}^j_{\lambda'}$ is the projector onto the $j$'th copy of $\sigma_{\lambda'}$ in the reference representation $\omega$ and $R^{\lambda'},L_1^{\lambda'}$ encode the deviation of $\phi$ from $\omega$ (their precise shape is not relevant for our argument).  By linearity,
we can now consider
\begin{align}
&[B_{\lambda,\lambda'}]_{j,j'}= \\
& \sum_{i\in I}d_{\sigma_{\lambda'}} \braa{\Pi_i\otimes \mc{E}_{\mathrm{M}}(\Pi_i)}\gsum{\gend} (\mc{P}_\lambda \overline{\omega}(\gend^{-1}))\otimes \tr_{V_{\sigma_{\lambda'}}}\left[(\overline{\sigma}_{\lambda'}(\gend^{-1})\otimes \1) R^{\lambda'}_1 F(\mc{P}_{\lambda'}^j \omega\mc{P}_{\lambda'}^{j'}) {L_1^{\lambda'}}\ct \right] \kett{\rho_0\otimes\mc{E}_{\mathrm{SP}}(\rho_0)}\\
&= \sum_{i\in I} d_{\sigma_{\lambda'}}\braa{\Pi_i\otimes\mc{E}_{\mathrm{M}}(\Pi_i)} \tr_{V_{\sigma_{\lambda'}}}\left[\delta_{\lambda,\lambda'}(\mc{F}(\omega)[\sigma_\lambda]\otimes \1) \1\otimes ( R^{\lambda'}_1 F(\mc{P}_{\lambda'}^j \omega\mc{P}_{\lambda'}^{j'}) {L_1^{\lambda'}}\ct) \right]\kett{\rho_0\otimes\mc{E}_{\mathrm{SP}}(\rho_0)},
\end{align}
where we have used that
\begin{equation}
\gsum{\gend} \mc{P}_\lambda\omega(\gend) \otimes \sigma_{\lambda'}(\gend) =\gsum{\gend} \sigma_\lambda^{\oplus n_\lambda}(\gend) \otimes \sigma_{\lambda'}(\gend) =  \delta_{\lambda,\lambda'} \mc{F}(\omega)[\sigma_{\lambda'}],
\end{equation}
which is the Fourier transform analog of the orthogonality of characters of irreducible representations. Hence $B_{\lambda,\lambda'} = \delta_{\lambda,\lambda'}B_{\lambda,\lambda'} := B_{\lambda}$.

Plugging this back into the expression for $k_\lambda$ we get
\begin{equation}
k_\lambda(m) =  \tr(B_\lambda M_{\lambda}^m) + \frac{1}{|\gr|}\sum_{\gend\in \gr}\sum_{i\in I} N_\lambda^{-1} \alpha_{\lambda}(i,\gend) \left(\sum_{\lambda'\in \Lambda}\tr(A(\Pi_i,\gend)_{\lambda'}M^m_{\lambda'}) - p(i, \gend, m)\right).
\end{equation}
We can thus upper bound the difference $k_\lambda(m) -  \tr(B_\lambda M_{\lambda})$ by considering the magnitude of the difference term. Note that we know from theorem\ (\ref{thm:mother}) that $\left(\sum_{\lambda'\in \Lambda}\tr(A(\Pi,\gend)_{\lambda'}M^m_{\lambda'}) - p(i,m \gend)\right) \leq O(\delta^m)$. It follows that there exists a $K$ such that
\begin{equation}
\bigg|\frac{1}{|\gr|}\sum_{\gend\in \gr}\sum_{i\in I} N_\lambda^{-1} \alpha_{\lambda}(i,\gend) \left(\sum_{\lambda'\in \Lambda}\tr(A(\Pi,\gend)_{\lambda'}M^m_{\lambda'}) - p(i,m, \gend)\right)\bigg| \leq 8K\left(\delta\left[1+ \frac{2\delta}{1-5\delta}\right] \right)^{m} .
\end{equation}

\end{proof}
Hence, the $\lambda$-filtered output data has essentially the same behaviour as regular \ac{RB} data, except that only the Fourier mode associated with $\sigma_\lambda$ is included in the signal. One can think of the $\lambda$ filter function $\alpha_\lambda$ as placing a delta-peak filter function centered on the `frequency' $\sigma_\lambda$. Note that by linearity we get essentially the same result if one defines a filter function associated with non-irreducible representations (via a direct sum of irreducible representations). This can be thought of as placing a frequency comb on the \ac{RB} data.
{
Finally, it is interesting to explicitly write down the form of the SPAM matrix $B_\lambda$ in the limit of no SPAM and perfect gates. In the case of a multiplicity-free reference representation $\omega$ we have
\begin{equation}
B_{\lambda} = N_\lambda^{-1}\sum_{i\in I} \braa{\Pi_i\tn{2}} \mc{F}(\omega)[\sigma_\lambda]\kett{\rho_0\tn{2}},
\end{equation}
which emphasizes the importance of the normalization constant (on which more later), but also the importance of choosing $\rho$ and $\{\Pi_i\}_{i \in I}$ such that $B_\lambda$ is non-zero.
}

\subsection{Statistical estimation}\label{subsec:correlator_estimator}
When computing the filtered output data $k_\lambda(m)$ in the previous section we assumed we had access to the \ac{RB} output data $p(i,\gend,m)$ for all $i \in I$ and $\gend\in \gr$. This is not realistic since both the size of the POVM $\{\Pi_i\}_{i\in I}$ and the size of the group $|\gr|$ can be exponential in the number of qubits.  In practice we will need to construct a statistical estimator $\hat{k}_\lambda$ for $k_\lambda$, and argue that $\hat{k}_\lambda$ is a good approximation for a reasonable number of samples. This we will do in this section.

Note that the normalization factor $N_\lambda$ is essential in lower bounding the magnitude of the filtered function $k_\lambda$ (i.e., making sure that the number $k_\lambda$ is not too small). However, this normalization factor can be proportional to the Hilbert space dimension $d$, making it tricky to set up an estimator for $k_\lambda$ that has a sampling complexity that does not grow with $d$ (which would make sampling practically impossible for more than a few qubits). This is the task we will turn to now. We can construct an estimator for $k_\lambda(m)$ essentially directly from its definition.

\begin{center}
\framebox{
\begin{minipage}[t]{0.95\columnwidth}
\centering
\begin{algorithm}[H]
\SetAlgoLined
\For{$l\in \{1, \ldots , L\}$}{
	Choose ${\gend}_l \in \gr $ uniformly at random\\
	Perform the \ac{RB} protocol alg.~\ref{prot:rand_bench} to obtain frequencies $\{f_i({\gend}_l)\}_{i\in I}$  from the distribution $\{p(i,{\gend}_l,m)\}_{i\in I}$\\
 	Compute $\lambda$-filter function values for the non-zero frequencies: $\{\alpha_\lambda(i,{\gend}_l)\;\;\|\;\; i\in I,\; f_i({\gend}_l)\}$\\
 }
 Compute the empirical weighted average
 \begin{equation}
 \hat{k}_\lambda(m) = \frac{1}{L} \sum_{l=1}^L N^{-1}_\lambda \alpha_\lambda(i,{\gend}_l) f_i({\gend}_l)
 \end{equation}
 \caption{An estimator for $k_\lambda(m)$}\label{prot:lambda_estimator}
\end{algorithm}
\end{minipage}
}
\end{center}

It is easy to see that the mean of this estimator is equal to the $\lambda$ filtered output data $k_\lambda(m)$. However, this does not mean that the associated estimation procedure is efficient. A priori the variance of the estimator could scale with Hilbert space dimension $d$, since the magnitude of the filter function $N_\lambda^{\-1}\alpha_\lambda$ does so in general. We can not prove that this estimator is efficient for all groups $\gr$ and POVMs $\{\Pi_i\}_{i\in I}$. We can, however, make some partial statements. In particular, we can prove that the estimator is efficient as long as the POVM $\{\Pi_i\}_{i\in I}$ is generated by a $3$-design. This is a restrictive condition, but not impossible to fulfill. We will discuss how to implement such a POVM after stating and proving the following theorem, which essentially states that under the $3$-design condition, the variance of the estimator $\hat{k}_\lambda(m)$ does not scale with the Hilbert space dimension $d$. This means that the sampling resources required by the protocol do not depend on the number of qubits in the system, making the post-processing step scalable (at least with respect to sampling). We note that this theorem gives an extremely crude bound on the variance, and the actual variance is liable to be substantially smaller. For simplicity,
we assume that there is no SPAM or gate noise, but the conclusions made here easily generalize.

\begin{theorem}[Efficient estimators]\label{thm:eff_estimator}
Consider a uniform \ac{RB} experiment of sequence length $m$, with group $\gr$, reference representation $\omega$, measurement POVM $\{\Pi_i\}_{i\in I}$ and initial state $\rho_0$, and further assume that the POVM $\{\Pi_i\}_{i\in I}$ is an (exact) $3$-design, that is $\Pi_i = \frac{d}{|I|}\dens{\chi_i}$ with states $\ket{\chi_i}$ and $\frac{1}{I}\sum_{i\in I}\dens{\chi_i}\tn{3} = \int d\psi \dens{\psi}\tn{3}$. Then for all $\lambda \in \Lambda$ the variance of the estimator $\hat{k}_\lambda(m)$ is asymptotically independent of the Hilbert space dimension $d$.
\end{theorem}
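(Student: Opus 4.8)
The plan is to bound the variance of the single-shot contribution to $\hat{k}_\lambda(m)$ and to exploit the fact that this contribution contains three copies of the POVM element $\Pi_i$, so that the $3$-design property collapses the sum over measurement outcomes to a handful of trace expressions. Since the $L$ samples in alg.~\ref{prot:lambda_estimator} are i.i.d., $\Var[\hat{k}_\lambda(m)] = \Var[Y]/L$ where $Y = N_\lambda^{-1}\alpha_\lambda(i,\gend)$ is the value produced by a single run ($\gend$ drawn uniformly, $i$ the observed outcome), and $\Var[Y]\leq \EE[Y^2]$. Thus it suffices to show that $\EE[Y^2]$ is asymptotically independent of $d$. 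First I would invoke the no-noise assumption $\phi=\omega$: because $\omega$ is a representation, the pre-measurement state equals $\omega(\gend)(\rho_0)$ irrespective of the random gate sequence, so the only randomness left is in $\gend$ and in $i$, with $p(i,\gend) = \braa{\Pi_i}\omega(\gend)\kett{\rho_0}$.

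Writing out the second moment gives
\begin{equation}
\EE[Y^2] = N_\lambda^{-2}\,\gsum{g}\sum_{i\in I} p(i,g)\,\alpha_\lambda(i,g)^2 ,
\end{equation}
and inserting $\Pi_i = \tfrac{d}{|I|}\dens{\chi_i}$ together with the operators $B_g := \omega(g)(\rho_0)$ and $A_g := \mc{P}_\lambda\overline{\omega}(g)(\rho_0)$ shows that each summand carries a factor $\dens{\chi_i}^{\otimes 3}$ with prefactor $(d/|I|)^3$. The central step is to apply the exact $3$-design identity
\begin{equation}
\frac{1}{|I|}\sum_{i\in I}\dens{\chi_i}^{\otimes 3} = \frac{1}{d(d+1)(d+2)}\sum_{\pi\in S_3} P_\pi ,
\end{equation}
with $P_\pi$ the permutation operators on $(\mathbb{C}^d)^{\otimes 3}$. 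This turns $\sum_i$ into a sum over $S_3$ of products of traces of $B_g$ and $A_g$ distributed according to the cycle type of $\pi$ (for instance $\tr[B_g]\tr[A_g]^2$ for the identity, $\tr[B_gA_g]\tr[A_g]$ for a transposition, and $\tr[B_gA_gA_g]$ for a $3$-cycle). I would treat $N_\lambda$ in exactly the same way, now using the $2$-design identity implied by the $3$-design, producing trace products of $A_g$ and $B_g$ with prefactor $(d/|I|)^2/(d(d+1))$.

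Collecting the design normalizations, the numerator scales as $|I|^{-2}\,d^2/((d+1)(d+2))$ times a $g$-averaged combination of traces, while $N_\lambda^{-2}$ supplies $|I|^{2}\,(d+1)^2/d^2$ times the inverse square of a $g$-averaged trace combination. The powers of $|I|$ cancel identically, the explicit powers of $d$ from the moment operators tend to constants, and what remains is a ratio of $g$-averaged trace expressions. Bounding each trace factor by operator norms --- using $\fnormn{\rho_0}\leq 1$, that $\omega(g)$ and $\overline{\omega}(g)$ are Hilbert--Schmidt isometries, and that $\mc{P}_\lambda$ is a Hilbert--Schmidt-orthogonal projection --- shows that all these combinations are $\mathrm{O}(1)$ in $d$, yielding a $d$-independent bound on $\EE[Y^2]$ and hence on the variance.

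The main obstacle is the normalization: $N_\lambda$ is generically small (it can scale inversely with $d$, which is precisely why it enters squared in the denominator), so the cancellation of $d$-powers between the numerator and $N_\lambda^{-2}$ must be controlled rather than estimated bluntly. Concretely, one has to verify that the leading surviving terms in $\gsum{g}\sum_{\pi}\tr[(B_g\otimes A_g\otimes A_g)P_\pi]$ scale in $d$ exactly like the square of the leading terms of $N_\lambda$; here the representation-theoretic structure of $\mc{P}_\lambda$ is essential (e.g.\ $\tr[A_g]$ vanishes unless $\sigma_\lambda$ is the trivial sub-representation, which removes the naively dominant $\tr[A_g]^2$ term and keeps the two remaining trace factors balanced). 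Provided $\rho_0$ and $\{\Pi_i\}_{i\in I}$ are chosen so that $N_\lambda\neq 0$ --- the same condition needed for the filter in \eqref{eq:lambda_correlation} to be well-defined --- this balance delivers the claimed $d$-independence, albeit with a very loose constant.
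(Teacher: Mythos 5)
Your proposal is correct and follows essentially the same route as the paper's proof: both exploit the $3$-design property to turn the outcome sum (three copies of $\Pi_i$) into a third Haar moment --- your permutation-operator identity $\frac{1}{|I|}\sum_i \dens{\chi_i}\tn{3} = \frac{1}{d(d+1)(d+2)}\sum_{\pi \in S_3} P_\pi$ is just the symmetric-subspace form of the paper's Weingarten evaluation --- and both then track the cancellation of the $d$- and $|I|$-powers against $N_\lambda^{-2}$, leaving a ratio of trace overlaps that is $\mathrm{O}(1)$ in $d$ provided $\tr[\rho_0\mc{P}_\lambda(\rho_0)]$ is bounded away from zero. The only immaterial differences are that you bound $\Var[Y]\leq \EE[Y^2]$ directly where the paper uses the law of total variance (and hence also evaluates an extra $2$-design term for the conditional mean), and your observation that $\tr[A_g]=0$ for non-trivial $\sigma_\lambda$ is exactly the mechanism by which the corresponding trace products vanish in the paper's third-moment formula.
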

\begin{proof}
First we calculate the effect of the $3$-design condition on the normalization factor of the correlation function $\alpha(i,\cdot)$, by direct calculation we have
\begin{align}
N_\lambda &= \gsum{g}\sum_{i\in I} \alpha_\lambda(i,g)\braa{\Pi_i} \omega(g)\kett{\rho_0}\\
&= \frac{d^2}{|I|} \gsum{g}\int d\psi \braa{\psi\tn{2}}\omega(g)\tn{2} \mc{P}_\lambda\otimes\1 \kett{\rho_0\tn{2}}\\
& = \frac{d^2}{|I|}\left[\frac{1}{d^2-1}\tr[\rho_0\mc{P}_\lambda(\rho_0)] + \frac{\tr(\mc{P}_\lambda(\rho_0))\tr(\rho_0)}{d^2}\right]\\
&=\frac{1}{|I|}\left[\frac{d^2}{d^2-1}\tr[\rho_0\mc{P}_\lambda(\rho_0)] + \tr(\mc{P}_\lambda(\rho_0))\right]
\end{align}

where we have used the fact that the Haar measure is invariant under unitary action to absorb the $\omega(g)$ dependence, as well as a standard formula for the second moment of a Haar average over the unitary group, see e.g.~\cite[Proposition 37]{KlieschRoth:2020:Tutorial} or \cite{huang2020predicting} (and that $\tr(\rho_0)=1)$.
We can now calculate the variance.
We denote by $\hat{k}_\lambda(m,\gend)$ the estimator of $\sum_{i \in I} N_\lambda^{-1}\alpha(i,\gend) p(i,m,\gend ) $ for a fixed $\gend\in \gr$.
By the law of total variation we can write:
\begin{align}
\md{V}(\hat{k}_\lambda(m)) &= \gsum{\gend}\md{V}\big[\hat{k}_\lambda(m,\gend)\big] + \md{V}_{\gr}\left[\sum_{i\in I}\alpha(i,\gend) p(i,m,\gend ) \right]\\
&\leq \gsum{\gend} \sum_{i \in I } N_\lambda^{-2}\alpha(i,\gend)^2 p(i,m,{\gend}) + \gsum{\gend} \left[ \sum_{i\in I}N_\lambda^{-1}\alpha(i,\gend) p(i,m,\gend ) \right]^2,
\end{align}
by dropping the negative terms in the variances.
We will begin with calculating the second term. For this note that for all $\gend \in \gr$ (again using the invariance of the Haar measure):
\begin{align}
\sum_{i\in I}N_\lambda^{-1}\alpha(i,\gend) p(i,m,{\gend}) &= I\left[\frac{d^2}{d^2-1}\tr[\rho_0\mc{P}_\lambda(\rho_0)] + \tr(\mc{P}_\lambda(\rho_0))\right]^{-1}\\&\hspace{3em} \times \int d\psi \frac{d^2}{I} \braa{\psi\tn{2}}\big(\omega(g)\otimes (\mc{E}_{\mathrm{M}} \phi^{*m}(\gend) \mc{E}_{\mathrm{SP}} )\big)\big( \mc{P}_\lambda\otimes \1\big) \kett{\rho_0\tn{2}}\\
&= \left[\frac{d}{d^2-1}\tr[\rho_0\big(\mc{P}_\lambda\mc{E}_{\mathrm{M}} \phi^{*m}(\gend) \mc{E}_{\mathrm{SP}}\big)(\rho_0)] + \tr \big(\mc{P}_\lambda(\rho_0)\big)\right]\\
&\hspace{3em}\times \left[\frac{d^2}{d^2-1}\tr[\rho_0\mc{P}_\lambda(\rho_0)] + \tr(\mc{P}_\lambda(\rho_0))\right]^{-1},
\end{align}
where we have used the expression for $p(i,m,\gend)$ from eq.\ (\ref{eq:rb_as_conv}). {Note that this expression is asymptotically independent of the Hilbert space dimension (depending only on how well the initial state overlaps with the projector $\mc{P}_\lambda$).}
Next we discuss the first term, given by
\begin{align}
\gsum{\gend} \sum_{i \in I } N_\lambda^{-2}&\alpha(i,\gend)^2 p(i,m,\gend )\\ &= N_\lambda^{-2} \frac{d^3}{|I|^2} \gsum{\gend} \int d\psi \braa{\psi\tn{3}} \big(\omega(\gend)\tn{2}\otimes (\mc{E}_{\mathrm{M}} \phi^{*m}(\gend) \mc{E}_{\mathrm{SP}} )\big)\big(\mc{P}_\lambda\tn{2}\otimes \1\big) \kett{\rho_0\tn{3}}\\
&= N_\lambda^{-2} \frac{d^3}{|I|^2} \int d\psi \braa{\psi\tn{3}}\big(\mc{P}_\lambda\tn{2}\otimes (\mc{E}_{\mathrm{M}} \phi^{*m}(e) \mc{E}_{\mathrm{SP}}) \big)\kett{\rho_0\tn{3}}.
\end{align}
Here appears a third moment of a Haar average, which can evaluated using Weingarten calculus (see for instance equations $S35$ and $S36$  in ref.\ \cite{huang2020predicting}, ref.~\cite{KlieschRoth:2020:Tutorial} or
ref.\ \citep{ginory2019weingarten} more generally). In this particular instance, we get
\begin{align}
\int d\psi\bra{\psi}&\mc{P}_\lambda(\rho_0)\ket{\psi} \!\!\bra{\psi}\mc{P}_\lambda(\rho_0)\ket{\psi}\!\! \bra{\psi}\big(\mc{E}_{\mathrm{M}} \phi^{*m}(e) \mc{E}_{\mathrm{SP}}\big)\!(\rho_0)\!\ket{\psi}\\ &= \frac{\tr\big[\mc{P}_\lambda(\rho_0))|_{t}^2\big] + 2 \tr \big[\mc{P}_\lambda(\rho_0)|_{t}^2 \big(\mc{E}_{\mathrm{M}} \phi^{*m}(e) \mc{E}_{\mathrm{SP}}\big)\!(\rho_0)\big]}{(d+2)(d+1)d} \\&\hspace{3em}+ \frac{\tr\big[\mc{P}_\lambda(\rho_0))\tr (\mc{P}_\lambda(\rho_0)|_{t} \big(\mc{E}_{\mathrm{M}} \phi^{*m}(e) \mc{E}_{\mathrm{SP}}\big)\!(\rho_0)\big] }{d^2(d+1)} + \frac{\tr\big[\mc{P}_\lambda(\rho_0)\big]^2}{d^3} ,
\end{align}
where $A|_t = A-\tr(A) \1$ for matrices $A$.
By isolating a common $d^{-3}$ factor and plugging back in, we get
\begin{align}
\gsum{\gend} \sum_{i \in I } N_\lambda^{-2}&\alpha(i,g)^2 p(i,m,\gend )\\
&=\bigg[\frac{2\tr\big[\mc{P}_\lambda(\rho_0))|_{t}^2)\big]  \tr \big[\mc{P}_\lambda(\rho_0)|_{t}^2 \big(\mc{E}_{\mathrm{M}} \phi^{*m}(e) \mc{E}_{\mathrm{SP}}\big)\!(\rho_0)\big]}{(d+2)(d+1)d^{-2}} \\&\hspace{3em}+ \frac{\tr(\big[\mc{P}_\lambda(\rho_0)\big]\tr \big[\mc{P}_\lambda(\rho_0)|_{t} \big(\mc{E}_{\mathrm{M}} \phi^{*m}(e) \mc{E}_{\mathrm{SP}}\big)\!(\rho_0)\big] }{d^{-1}(d+1)} +\tr\big[\mc{P}_\lambda(\rho_0)\big]^2\bigg]\\
&\hspace{17em}\times\left[\frac{d^2}{d^2-1}\tr[\rho_0\mc{P}_\lambda(\rho_0)] + \tr(\mc{P}_\lambda(\rho_0))\right]^{-2},
\end{align}
which is again asymptotically independent of the Hilbert space dimension.
\end{proof}

Measurement POVMs that are proportional to $3$-designs are not very common. However, when considering a system of $q$ qubits it is possible to construct one by considering computational basis measurements conjugated by a random element of the $q$-qubit Clifford group $\mathbb{C}_q$. That is, we consider the POVM
\begin{equation}
 \{\Pi_{x,C}\} = \{ \frac{1}{|\mathbb{C}_q|} \,C\!\ket{x}\!\!\bra{x}\!C\ct \;\; \|\;\; x\in \{0,1\}^q,\;\;C\in \mathbb{C}_q \} .
\end{equation}
It is easy to see that this is a POVM
\begin{equation}
\sum_{C\in \mathbb{C}_q} \sum_{x\in\{0,1\}^q } \frac{1}{|\mathbb{C}_q|}\, C\!\ket{x}\!\!\bra{x}\!C\ct = \frac{1}{|\mathbb{C}_q|} \sum_{C\in \mathbb{C}_q} C C\ct = \mathbb{I}
\end{equation}
and it is also proportional to a $3$-design, because the multi-qubit Clifford group is a unitary $3$-design~\citep{webb2016clifford,zhu2017multiqubit}, and hence every orbit $\{ C\ket{x}\!\!\bra{x}C\ct\}_{C\in \mathbb{C}_q}$ is a state $3$-design (and thus so is the union over $x$).

We emphasize that the $3$-design condition is only a sufficient condition for a controlled variance of the estimator for the filtered output data, which works for any group $\gr$ and sub-representation $\sigma_\lambda$. For particular choices of $\gr$ and $\sigma_\lambda$ the estimator $\hat{k}_\lambda(m)$ might be efficient for other choices of the POVM $\{\Pi_i\}_{i\in I}$. It is for instance easy to see that the variance will also be controlled if the degree $d_\lambda$ of the irrep $\sigma_\lambda$ is small. This follows from the fact that the normalization factor $N_\lambda$ can be written as
\begin{align}
N_\lambda = \frac{1}{d_\lambda} \sum_{i\in I} \tr(\Pi_i \mc{P}_\lambda(\Pi_i))\tr(\rho_0\mc{P}_\lambda(\rho_0))
\end{align}
so assuming the POVM $\{\Pi_i\}_{i\in I}$ and the initial state $\rho_0$ can be chosen to have sufficient (larger than $1/d$) overlap with the sub-representation $\sigma_\lambda$ the magnitude of the inverse normalization factor $N_\lambda^{-1}$ , and hence the size of the support of the probability distribution $\{N_\lambda^{-1} \alpha(i,\gend )\}_{p(i,m,\gend)}$ is controlled by $1/d_\lambda$. Hence, if $d_\lambda$ is small, the estimator $\hat{k}_\lambda(m)$ is efficient. This follows because it is constructed by sampling from a ($O(1)$ in $d$) bounded random variable. Examples of this behaviour have been noted in the literature \citep{helsen2019new,carignan2015characterizing,flammia2019efficient}.

Alternatively, there are situations where the dimension of the representation $\sigma_\lambda$ scales with the total Hilbert space dimension $d$ but the estimator $\hat{k}_\lambda(m)$ is still efficient because the group $\gr$ under consideration is sufficiently randomizing (roughly, it spans its own $3$-design due to the randomization over the ending gate $\gend$). An example of this is the recently introduced \emph{linear cross entropy benchmarking procedure} which we will discuss in the next section.

{Finally ,we would like to add that if one re-uses the same experimental data $p(m,\gend)$ to estimate $k_\lambda(m)$ for different $\lambda$, the resulting estimates for $k_\lambda(m)$ (and consequently the associated decay rates) will be correlated. This must be taken into account when performing joint statistical inferences on estimates for several $M_\lambda$. This can of course be remedied by gathering new data for each representation label $\lambda$.

}

\subsection{Example: Linear cross-entropy benchmarking}\label{subsec:lin_cross_ent}

Recently, ref.\ \citep{arute2019quantum} has introduced a \ac{RB}-like protocol referred to as \emph{linear cross-entropy benchmarking}, in short \emph{XEB}. We will see in this section that this protocol falls into the framework of the benchmarking schemes
introduced here. In fact, it can be seen as uniform \ac{RB} with $\gr$ the full unitary group, together with a post-processing scheme that is a special case of the above filtering scheme. Let $\phi:U(2^q) \to \mc{S}_d$ be an implementation map of the unitary group, also let $\{\Pi_x\}_{x\in \{0,1\}^n}$ be the computational basis POVM, and $\rho_0 =\dens{0}$. The linear cross-entropy fidelity is now given by
\begin{equation}
\mc{F}_{\mathrm{XEB}} = d\int_{\mathrm{Haar}} \!\!\!\!dU  \sum_{x\in \{0,1\}^q} |\!\!\bra{x}\!U\!\ket{0}\!|^2 \braa{\Pi_x}\mc{E}_{\mathrm{M}} \phi(U)\mc{E}_{\mathrm{SP}}\kett{\rho_0}
\end{equation}
with $\mc{E}_{\mathrm{M}},\mc{E}_{\mathrm{SP}}$ being the usual SPAM error channels.  Setting $\alpha(x,U) = |\!\bra{x}\!U\!\ket{0}\!|^2 = \braa{\Pi_x} \omega(U) \kett{\rho_0}$ we see that $\mc{F}_{\mathrm{XEB}}$ can be interpreted as a \ac{RB} experiment of sequence length `$0$' with $\gend = U$ together with post-processing by correlation with the adjoint representation $\omega(U) = U\cdot U\ct$. Note that the dimensional factor almost precisely serves as the correct normalization factor for $\alpha(x,U)$, since
\begin{equation}
\int dU \sum_{x} |\!\bra{x}\!U\!\ket{0}\!|^4 = \frac{2d}{d+1}.
\end{equation}
We can extend this interpretation by considering the linear cross entropy of a sequence of $m$ random unitaries (this is done implicitly  in ref.\ \cite{arute2019quantum}).
This  gives
\begin{equation}
\mc{F}_{\mathrm{XEB},m} = d\int_{\mathrm{Haar}}\!\!\!\! dU_1\ldots U_m  \sum_{x\in \{0,1\}^q} |\!\!\bra{x}\!U_m\cdots U_1\!\ket{0}\!|^2 \braa{\Pi_x}\mc{E}_{\mathrm{M}} \phi(U_1\cdots U_m)\mc{E}_{\mathrm{SP}}\kett{\rho_0} .
\end{equation}
Using the invariance of the Haar measure and the linearity of the trace and the tensor product we can rewrite this as
\begin{align}
\mc{F}_{\mathrm{XEB},m} &= d\int_{\mathrm{Haar}} \!\!\!\!dU_m  \sum_{x\in \{0,1\}^q} |\!\!\bra{x}\!U_m\!\ket{0}\!|^2 \braa{\Pi_x}\mc{E}_{\mathrm{M}} \phi^{*m}(U_m)\mc{E}_{\mathrm{SP}}\kett{\rho_0}\\
 &= d\int_{\mathrm{Haar}}\!\!\!\! dU_m  \sum_{x\in \{0,1\}^q} |\!\!\bra{x}\!U_m\!\ket{0}\!|^2p(x, U_m, m)
\end{align}
with $p(x, U_m, m)$ the output probability of a regular \ac{RB} experiment. Now noting that $\omega(U)$ decomposes into the trivial representation (on the space $\{ a \kett{\1} \mid a\in \mathbb{C} \}$) and the adjoint representation (on the space $\{\;\kett{A} \mid \tr(A)=0\}$) we apply theorem \ref{thm:mother} to the above to get
\begin{equation}
\mc{F}_{\mathrm{XEB},m} = A_{\mathrm{tr}}s_{\mathrm{tr}}^m + A_{\mathrm{adj}}f_{\mathrm{adj}}^m
\end{equation}
up to a correction exponentially small in $m$, where $s_{\mathrm{tr}}$ ($f_{\mathrm{adj}}$) is the largest eigenvalue of the Fourier transform of $\phi$ evaluated at the trivial (adjoint) representation. Recall that $s_{\mathrm{tr}}=1$ if $\phi(U)$ is trace preserving for all $U$, and that we can moreover interpret $f_{\mathrm{adj}}$ as affinely related to the average fidelity (certainly in the gate independent noise setting). Hence, through theorem  \ref{thm:mother} and our general post-processing scheme the linear cross entropy benchmarking procedure inherits both the stability and interpretation of uniform \ac{RB}.

It is notable that the estimator $\hat{k}_\lambda(m)$, which in this case estimates the linear cross entropy fidelity $\mc{F}_{\mathrm{XEB},m}$ is actually efficient, in the sense of theorem \ref{thm:eff_estimator}. We can sketch an argument for this by directly estimating the variance of the estimator. For this argument we will assume gate-independent noise (i.e., $\phi(U) = \mc{A}\omega(U)$ for some completely positive $\mc{A}$). Following theorem
\ref{thm:eff_estimator}, we have
\begin{align}
\md{V}(\hat{k}_\lambda(m)) &\leq d^2\sum_{x \in \{0,1\}^q} \int_{\mathrm{Haar}}dU  |\!\!\bra{x}\!U\!\ket{0}\!|^4 \braa{\Pi_x}\mc{E}_{\mathrm{M}} \phi^{*m}(U)\mc{E}_{\mathrm{SP}}\kett{\rho_0}\\&\hspace{5em} + d^2\int_{\mathrm{Haar}}dU \left[ \sum_{x \in \{0,1\}^q} |\!\!\bra{x}\!U\!\ket{0}\!|^2 \braa{\Pi_x}\mc{E}_{\mathrm{M}} \phi^{*m}(U)\mc{E}_{\mathrm{SP}}\kett{\rho_0} \right]^2\\
&\leq d^3\max_{x \in \{0,1\}^q} \int_{\mathrm{Haar}}dU  |\!\!\bra{x}\!U\!\ket{0}\!|^4 \braa{\Pi_x}\mc{E}_{\mathrm{M}} \phi^{*m}(U)\mc{E}_{\mathrm{SP}}\kett{\rho_0} \\&\hspace{5em}+ d^4 \max_{x,x' \in \{0,1\}^q} \int_{\mathrm{Haar}}\!\!\!\!dU  |\!\!\bra{x}\!U\!\ket{0}\!|^2 |\!\!\bra{x'}\!U\!\ket{0}\!|^2\\&\hspace{10em}\!\times \braa{\Pi_x}\mc{E}_{\mathrm{M}}\phi^{*m}(U)\mc{E}_{\mathrm{SP}}\kett{\rho_0} \braa{\Pi_{x'}}\mc{E}_{\mathrm{M}} \phi^{*m}(U)\mc{E}_{\mathrm{SP}}\kett{\rho_0}.
\end{align}

Using the gate independent noise assumption and the fact hat $\omega(U)(\rho) = U\rho U\ct$, the RHS is a Haar integral of a degree-$3$ homogeneous polynomial in the entries of $U, \overline{U}$, and the second term is a Haar integral of a degree-$4$ homogeneous polynomial. The asymptotic behaviour of such integrals (in the limit of large $d$) is well known~\citep{ginory2019weingarten} and evaluates to $O(d^{-3})$ and $O(d^{-4})$, respectively. Hence, the overall variance is $O(1)$ in $d$. One could fill in the exact constants by evaluating the Haar integrals (like we did in theorem \ref{thm:eff_estimator}), but we do not pursue this here.

\section{Randomized benchmarking and average fidelity}\label{sec:fidelity_interpretation}
{
Up until now, we have treated the information extracted from RB procedures, and in particular the decay rates, as figures of merit in their own right, without establishing a direct connection to other well-know quantities such as the average gate fidelity. Indeed, this latter object is often portrayed as the conclusive result of an RB protocol.

In this section, we will provide a series of arguments to validate the interpretation of the RB parameters as standalone information, by showing that connecting RB decays to the average gate fidelity presents complications that are hard to overcome. The underlying reason for this incompatibility is due to the gauge-dependent nature of the average gate fidelity (as argued in \cite{Proctor17}) that cannot be established nor controlled under RB. More precisely, in subsection \ref{sec:counter_example_CPT_property} we provide an explicit example showing that adopting a gauge to match the average gate fidelity gives rise to a channel that is not physical. In subsection \ref{sec:fidelity_vs_RB}, we substantiate our argument with an analysis of the expression of the entanglement fidelity -- a quantity closely related to the average fidelity -- in terms of RB decay parameters and the adopted gauge. Observing this expression we conclude that \ac{RB} parameters and fidelity can be linked only if there is a close overlap between the dominant eigenvector of the ideal operator and the dominant, gauge-dependent left and right eigenvectors of its implemented version; the critical point is that ascertaining whether this requirement is met is not possible with an RB procedure.
We want to highlight that this intricacy in connecting RB to other well-established quantities does not mean RB protocols are inherently flawed, but only that the information they provide have to be regarded independently, with decay rates as the defining quantities to characterize the accuracy of experimentally implemented sets of gates.}

\subsection{The depolarizing gauge and in-between noise average fidelity}\label{sec:counter_example_CPT_property}

In an attempt to resolve the apparent disconnect between fidelity and \ac{RB} decay parameters in the gate-dependent noise setting, in refs.\ \citep{wallman2018randomized} and \citep{Merkel18} proposals have been
made for the precise connection between \ac{RB} decay rates and average fidelity.
In ref.~\citep{wallman2018randomized}, it has been noted that the output data of Clifford \ac{RB} could be exactly fitted to a single exponential whose decay rates are exactly interpreted as the average fidelity of the `noise in between gates', a manifestly gauge invariant quantity.
Similarly in ref.~\citep{Merkel18}, it has been argued that the decay of Clifford \ac{RB} can be regarded as the average fidelity of the implementation w.r.t.~a particular gauge choice, namely the one in which the average implementation inverted with the reference representation is precisely a depolarizing channel.
We will show here that (1) both of these statements can be generalized to \ac{RB} with arbitrary groups, (2) both statements in fact say the exact same thing, and (3) both interpretations suffer from the same problem, namely that the channel of which the average fidelity is measured by \ac{RB} is not necessarily a CP map (i.e., physical), even if the implementation map $\phi$ is.


In ref.~\citep{wallman2018randomized}, the \ac{RB} decay rate is interpreted as measuring the fidelity of `the noise in between gates'.  (A general version of) this construction goes as follows. For an implementation $\phi$ of a group $\gr$, close to some reference representation $\omega = \bigoplus_{\lambda\in \Lambda}\sigma_\lambda$ we can pick the dominant eigenvectors $\mathrm{vec}(\mc{R}_\lambda)$ of the Fourier transform $\mc{F}(\phi)$ evaluated at the irreducible sub-representation $\sigma_\lambda\subset \omega$ (for now assuming no multiplicities, this easily generalizes). We can de-vectorize these eigenvectors and sum them up to create a super-operator $\mc{R}$ with the property
\begin{equation}
\gsum{g}\phi(g)\mc{R}\omega(g)\ct = \mc{R}\mathrm{Dep}
\end{equation}
where $\mathrm{Dep}$ is the generalized depolarizing channel $\mathrm{Dep}= \sum_{\lambda\in \Lambda} f_\lambda \mc{P}_\lambda$ with $f_\lambda$ the eigenvalue corresponding to $\mc{R}_\lambda$. W.l.o.g. we can assume that $\mc{R}$ is invertible (as a matrix). Note also that for any $\phi$ we can write $\phi(g) = \mc{R} \omega(g) \mc{L}(g)$ where $\mc{L}(g)$ is some implementation map (not necessarily completely positive).

With this parametrization the noise between two gates $g,g'$ (which in this parametrization only depends on $g$) is given by $\mc{L}(g)\mc{R}$.
The entanglement fidelity w.r.t.~the identity averaged over all $g \in \gr$ of this map is
\begin{equation}
\gsum{g} F_{\rm avg}(\mc{L}(g)\mc{R},\1) = \gsum{g}F_{\rm avg} (\mc{R}^{-1}\mc{R} \omega(g)\mc{L}(g)\mc{R}\omega(g)\ct,\1) = F_{\rm avg} (\mathrm{Dep},\1),
\end{equation}
where we have used the linearity and unitary invariance of the average fidelity. Note that $F_{\rm avg} (\mathrm{Dep},\1) = \frac{1}{d^2-1}\sum_{\lambda\in \Lambda} f_\lambda d_\lambda -1$ is precisely the average fidelity one would obtaining by plugging the \ac{RB} decay rates $f_\lambda$ into eq.~\eqref{eq:fid} .

On the other hand, ref.~\citep{Merkel18} connects the \ac{RB} decay rates to the average fidelity of the implementation map $\phi$ in a particular gauge, that is a particular choice of invertible super-operators such that
\begin{equation}
\gsum{g}F_{\rm avg}(S^{-1}\phi(g)S,\omega(g))  =F_{\rm avg} (\mathrm{Dep},\1).
\end{equation}
This map $\phi_{\rm dep} = S^{-1}\phi S$ is called the depolarizing gauge. According to
ref.\ \citep{Merkel18} the correct interpretation of the \ac{RB} decay rates is that they measure the fidelity of the implementation map $\phi $ in the depolarizing gauge with respect to the reference implementation $\omega$.
It turns out that the correct choice for $S$ is precisely the operator $R$ mentioned above, which can be easily seen by explicit computation
\begin{equation}
\gsum{g}F_{\rm avg}(\mc{R}^{-1}\phi(g)R,\omega(g)) = F_{\rm avg}\left(\mc{R}^{-1}\gsum{g}\phi(g)\mc{R}\omega(g)\ct, \1\right) = F_{\rm avg}\left(\mc{R}^{-1}\mc{R} \mathrm{Dep},\1\right).
\end{equation}
We can connect the above two interpretations by inserting the parametrization $\phi(g) = \mc{R}\omega(g)\mc{L}(g)$ into the expression for $\phi_{\rm dep}$ as
\begin{equation}
\phi_{\rm dep}(g) = \mc{R}^{-1}\mc{R}\omega(g)\mc{L}(g) \mc{R} = \omega(g)\mc{L}(g)\mc{R}.
\end{equation}
Hence, the depolarizing gauge is precisely the gauge in which each super-operator $\phi_{\rm dep}(g)$ is viewed as the ideal super-operator $\omega(g)$ preceded by the noise in between gates $\mc{L}(g)\mc{R}$ (in the sense of ref.\ \cite{wallman2018randomized}). Hence, these two interpretations of the \ac{RB} decay rates as corresponding to an average fidelity of `something' neatly map to each other.

A central open question in both the above constructions is whether the noise in between gates, or equivalently the noise in the implementation in the depolarizing gauge, can always be chosen to be a completely positive implementation map. This is essential if we want to consider these interpretations as actual descriptions of reality. Here we answer this question in the negative by giving an example (An adaptation of a construction given in ref.\ \citep{Proctor17}) of a point-wise CP implementation map $\phi$ where the noise in between gates (the implementation in the depolarizing gauge) is not completely positive. Let $\gr $ be the single qubit Clifford group, and consider, in the Pauli basis, the following super-operators

\begin{equation}
T(\gamma) = \begin{pmatrix} 1 & 0 & 0 & 0 \\ 0 & \sqrt{\gamma} & 0 & 0 \\ 0 & 0 & \sqrt{\gamma} & 0 \\ 1-\gamma & 0 & 0 & \gamma \end{pmatrix},\;\;\;\;\;
M_1(\alpha) = \begin{pmatrix} 1 & 0 & 0 & 0 \\ 0 & \alpha & 0 & 0 \\ 0 & 0 & 1 & 0 \\ 0 & 0 & 0 & 1\end{pmatrix},\;\;\;\;\;
M_2(\alpha) = \begin{pmatrix} 1 & 0 & 0 & 0 \\ 0 & 1 & 0 & 0 \\ 0 & 0 & 1 & 0 \\ 0 & 0 & 0 & \alpha^{-1}\end{pmatrix}.
\end{equation}
From these we can construct the implementation $\phi(g) = T(\gamma)M_1(\alpha)\omega(g)M_2(\alpha)$, with $\omega(g)(\rho) = U_g\rho U_g\ct$ the standard reference representation. It is easy to see that the transformation to the depolarizing gauge is given by $M_2(\alpha) \phi(g)M_2(\alpha)^{-1} = M_2(\alpha) T(\gamma) M_1(\alpha)\omega(g)$. Equivalently, the noise in between gates is given by $M_2(\alpha) T(\gamma) M_1(\alpha)$. The claim is now that there exists pairs $\alpha, \gamma$ such that $\phi(g)$ is completely positive for all $g\in \gr$ but $M_2(\alpha) T(\gamma) M_1(\alpha)$ is not. An easy pathological example can be obtained by setting $\gamma =0$.
In this case we have
\begin{equation}
\phi(g) = \begin{pmatrix} 1 & 0 & 0 & 0 \\ 0 & 0 & 0 & 0 \\ 0 & 0 & 0 & 0 \\ 1 & 0 & 0 & 0 \end{pmatrix},\;\;\;\;\; M_2(\alpha) T(1) M_1(\alpha) = \begin{pmatrix} 1 & 0 & 0 & 0 \\ 0 & 0 & 0 & 0 \\ 0 & 0 & 0 & 0 \\ \alpha^{-1} & 0 & 0 & 0 \end{pmatrix}.
\end{equation}
Hence, for all $\alpha<1$ the maps $\phi(g)$ are CP while the map $M_2(\alpha) T(0) M_1(\alpha)$ is not (this can be verified by using the complete positivity conditions for qubit channels from ref.\ \cite{ruskai2002analysis}). For $\gamma<1$ one can always construct interval conditions on $\alpha$ such that the same holds. Hence, the interpretations \citep{wallman2018randomized,Merkel18} both suffer from a problem, namely that in order to imagine \ac{RB} as `measuring the average fidelity' of some object, this object has to be chosen in a way that is not necessarily physical. This possibility was already indicated by both papers, but no explicit example was given. It is unclear how to resolve this problem: one could for instance try to find natural conditions on $\phi$ such that the noise in between gates, or equivalently the implementation in the depolarizing gauge, is always completely positive. Alternatively one could adopt the framework of ref.\ \cite{carignan2018randomized} where one relaxes the problem by asking for a positive gauged implementation map that has a fidelity approximately given by the \ac{RB} decay rates (with approximate meaning small relative to $1-f_\lambda$). This can be done for Clifford \ac{RB} on a single qubit~\citep{carignan2018randomized} but generalizing to higher dimensions seems difficult (although some work in this direction has been done~\cite{carignan2019polar}).

\subsection{Connecting average fidelity and \acl{RB} decay rates}\label{sec:fidelity_vs_RB}
In the previous subsection we showed that the depolarizing gauge does not always give rise to a CP implementation map, and hence, cannot be connected in all cases to the average fidelity of a physical process. Here we want to investigate the link between fidelity and the \ac{RB} decay parameters under a general gauge choice $S$.
We will do this using the tools of perturbation theory we have used earlier to establish theorem \ref{thm:mother}.

\subsubsection{The \acl{RB} measurement outcome}\label{sec:survival_probability}
Let us consider a special case of theorem~\ref{thm:mother} corresponding to reference representations $\omega$ that are multiplicity-free (for simplicity), and making the gauge freedom $S$ explicit.
In this situation, we can write the Fourier operator $F (\omega)$ as a direct sum of rank-1 orthogonal projections, since from eq.~\eqref{eq:FT_projection} and~\eqref{eq:FT_projection_rank} it follows that for each unitary irreducible representation $\sigma_\lambda$ of~$\gr$
\begin{equation}\label{eq:FT_1-projection}
\mc{F}(\pi)[\sigma_\lambda]
=
\begin{cases}
\ketbra{z (\sigma_\lambda)}{z (\sigma_\lambda)}	& \text{rank-1 orthogonal projection if $\pi$ and $\sigma_\lambda$ are equivalent irreducible representations,}\\
\\[0.5pt]
0 &\text{otherwise}.
\end{cases}
\end{equation}

Furthermore, we also assume that the Fourier transform $\mathcal{F}(\sigma_\lambda)$ is a diagonalizable operator. Since the set of diagonalizable matrices is dense~\cite{DenseDiagMatrices}, it is always possible to find such a diagonalizable matrix at arbitrary proximity of any given operator.
We can thus write the Fourier transform of the implementation map on the irreducible representation appearing in the decomposition of $\omega$ as the perturbation
$\wh E(\sigma_\lambda) \coloneqq \mc{F}(S\phi S^{-1}- \omega)[\sigma_\lambda]$
of the rank-1 operator $\mc{F}(\omega)[\sigma_\lambda]$,
\begin{align}
\mc F (S \phi S^{-1}) [\sigma_\lambda]
&=
\mc{F}(\omega)[\sigma_\lambda] + \mc{F}(S\phi S^{-1}- \omega)[\sigma_\lambda] \\
&=
\mc{F}(\omega)[\sigma_\lambda] +\wh E(\sigma_\lambda) \\
&=
f_{\max}(\sigma_\lambda) \ketbra{r_{\max}(\sigma_\lambda)}{\ell_{\max} (\sigma_\lambda)}
+
\sum_{j_\lambda=1}^{d_\lambda - 1} f_{j_\lambda} (\sigma_\lambda) \ketbra{r_{j_\lambda} (\sigma_\lambda)}{\ell_{j_\lambda} (\sigma_\lambda)}, \label{eq:eigval_decomposition_FT}
\end{align}
where $f_{\max}(\sigma_\lambda)$ is the largest eigenvalue of $\mc F (S \phi S^{-1}) [\sigma_\lambda]$ and $\set{f_{j_\lambda}}_{j_\lambda=1}^{d_\lambda-1}$ are the other eigenvalues. The sets of left- and right eigenvectors form a bi-orthogonal system, that is,
$\braket{\ell_{\max}(\sigma_\lambda)}{r_{\max}(\sigma_\lambda)}=\braket{\ell_{j_\lambda}(\sigma_\lambda)}{r_{j_\lambda} (\sigma_\lambda)}=1$
and
$\braket{\ell_{\max}(\sigma_\lambda)}{r_{j_\lambda} (\sigma_\lambda)} = \braket{\ell_{j_\lambda} (\sigma_\lambda)}{r_{\max}(\sigma_\lambda)} =
\braket{\ell_{j_\lambda} (\sigma_\lambda)}{r_{k_\lambda} (\sigma_\lambda)}=0$, for $j_\lambda\neq k_\lambda$. The important remark that we should make here is that this basis of eigenvectors reflects the gauge transformation $S\phi S^{-1}$.

In this scenario, we can thus write eq.~\eqref{eq:RB_outcome_simple_form} in the proof of theorem~\ref{thm:mother} for $\gend=\1$
\begin{align}
p(i,m)
&=
\sum_{\lambda \in \Irr} d_{\lambda}\braa{\mc{E}_{\mathrm{M}}(\Pi_i)}\tr_{V_\lambda}\big[\mc{F}(S \phi S^{-1})^{m+1}[\sigma_\lambda] (\overline{\sigma_\lambda}(\1)\otimes \1 )\big]\kett{\mc{E}_{\mathrm{SP}}(\rho_0)} \\
&=
\sum_{\lambda \in \Lambda} \bigg\{ d_{\lambda} \, (f_{\max} (\sigma_\lambda))^{m+1} \braa{\mc{E}_{\mathrm{M}}(\Pi_i)}\tr_{V_{\lambda}} \big[ \ketbra{r_{\max}(\sigma_\lambda)}{\ell_{\max} (\sigma_\lambda)} \big]\kett{\mc{E}_{\mathrm{SP}}(\rho_0)} \\
&+
\sum_{j_\lambda}  d_{\lambda} \, (f_{j_\lambda} (\sigma_\lambda))^{m+1} \braa{\mc{E}_{\mathrm{M}}(\Pi_i)}\tr_{V_{\lambda}} \big[ \ketbra{r_{j_\lambda} (\sigma_\lambda)}{\ell_{j_\lambda} (\sigma_\lambda)} \big] \kett{\mc{E}_{\mathrm{SP}}(\rho_0)}\bigg\} \\
&+
\sum_{\gamma \notin \Lambda} d_{\gamma}\braa{\mc{E}_{\mathrm{M}}(\Pi_i)}\tr_{V_{\sigma_\gamma}}\big[\mc{F}(S \phi S^{-1})^{m+1}[\sigma_\gamma] (\overline{\sigma}_\gamma (\1)\otimes \1 )\big]\kett{\mc{E}_{\mathrm{SP}}(\rho_0)}.
\end{align}
By eq.~\eqref{bound:eigval_perturbation}, it follows that $ f_{\max}(\sigma_\lambda)$ for each $\sigma_\lambda$ in the irreducible decomposition of $\omega$ is lower bounded by $1-\infnorm{\wh E(\sigma_\lambda)}$, while the sub-dominant eigenvalues, correspond to perturbations of the kernel of $\mc{F}(\omega)[\sigma_\lambda]$, are upper bounded by  $\infnorm{\wh E(\sigma_\lambda)}$.
Moreover, by Theorem~\ref{thm:stability} presented in Section~\ref{sec:RBdiamond}, the eigenvalues in those subspaces not related to irreducible representations appearing in decomposition are again dominated by $\infnorm{\wh E(\sigma_\lambda)}$.
Hence, we can choose $m$ large enough such that $f_{\max}^m (\lambda) \gg f_{j_\lambda}^m (\sigma_\lambda)$ for all $f_{j_\lambda} (\sigma_\lambda)$ and for each irreducible representations $\sigma_\lambda$ occurring in the decomposition of $\omega$, and such that the leakage of the perturbation in non-occurring irreducible subspaces is suppressed.

For these values of $m$, we then retrieve the formula for the power law in eq.~\eqref{eq:rb_data_decay}, but here with respect to 1-dim parameters,
\begin{equation}
p (i,m)
\approx
\sum_{\lambda \in \Lambda} (f_{\max} (\lambda) )^{m+1}\, \xi(S,\sigma_\lambda,\Pi_i,\rho_0) , \label{eq:survival_probability_model}
\end{equation}
where $\xi(S,\sigma_\lambda,\Pi_i,\rho_0)
\coloneqq
d_{\lambda} \braa{\mc{E}_{\mathrm{M}}(\Pi_i)}\tr_{V_{\lambda}} \big[ \ketbra{r_{\max}(\sigma_\lambda)}{\ell_{\max} (\sigma_\lambda)} \big]\kett{\mc{E}_{\mathrm{SP}}(\rho_0)}$.

\subsubsection{Average gate fidelity and entanglement fidelity}\label{sec:entanglement_fidelity}

The first \ac{RB} protocols based on the Clifford group~\citep{CliffordRBPRL,knill2008randomized} linked a single decay parameter $f$ to the average fidelity of a quantum channel $\mc R$, under the assumption of gate-independent noise, i.e., $\phi (g) = \mc R \omega(g)$. The relation is given by
\begin{equation}
	F_{\mathrm{\rm avg}}(\mc R) = f +\frac{1-f}{d} .
\end{equation}
This formula generalizes to uniform \ac{RB} with an arbitrary group $\gr$ with reference representation $\omega = \bigoplus_{\lambda\in \Lambda}\sigma_\lambda^{\oplus n_\lambda}$, again under the assumption of gate-independent noise. However, it is  more convenient to express it in terms of the \emph{entanglement fidelity}, defined as
\begin{equation}
	F_e (\mc{R})
	\coloneqq
	\braa{\Psi} \big(\1 \otimes \mc{R} \big)\kett{\Psi}
	=
	\frac{1}{d^2} \tr(\mc{R}),
\end{equation}
where the trace is taken over the super-operators, and related to the average gate fidelity by
\begin{equation}
	F_{\mathrm{\rm avg}}(\mc{R}) = \frac{ d F_e (\mc{R}) +1}{d+1}.
\end{equation}
In particular we have (first formally written down in ref.\ \cite{francca2018approximate})
\begin{equation}\label{eq:fid}
	F_{\mathrm{\rm avg}}(\mc R) =\frac{1}{d^2}\sum_{\lambda\in \Lambda}d_\lambda \tr(M_\lambda)
\end{equation}
with $M_{\lambda}$ again an $n_\lambda\times n_\lambda$ matrix.

The connection between the \ac{RB} decay rates and the fidelity has been challenged in ref.~\citep{Proctor17}, where it has been argued that the average fidelity and the output of \ac{RB} are not related in a unique way. In doing so they introduced the concept of gauge freedom into the \ac{RB} literature.

In the context of \ac{RB}, gauge freedom  is the observation that two implementation maps $\phi$ and $\phi'$ give rise to the same \ac{RB} output data $p(m)$ if they are related by a similarity transformation $S$, i.e., $\phi'= S\phi S^{-1}$. However, the average fidelity of these implementation maps (relative to some reference implementation) will generally differ. Note that this an issue even with the assumption of gate-independent noise, however,
in this case there is a `canonical' choice of gauge for which the \ac{RB} decay rates and the fidelity are related. In the gate-dependent noise scenario there is no such obvious gauge choice. The rest of this section will be concerned with this question.

The entanglement fidelity -- averaged over $\gr$ -- can be expressed in terms of Fourier transforms (as has first been noted in ref.~\cite{Merkel18}). Indeed,
we have
\begin{align}
\mathbb{E}_g\, F_e (S \phi  S^{-1} (g),\omega(g))
&=
\mathbb{E}_g\, F_e (\omega^\dagger(g) S \phi  S^{-1} (g))\\
&=
\frac {1}{d^2}\, \mathbb{E}_g \, \Tr[\omega^\dagger(g) S \phi  S^{-1} (g) ] \\
&=
\frac {1}{d^2}\, \sum_{\lambda \in \Irr} d_{\lambda} \Tr \big[ (\mc{F}(\omega)[{\sigma_\lambda}] )^\dagger \, \mc{F}(S \phi S^{-1})[{\sigma_\lambda}] \big] ,
\end{align}
where we have used the second Parseval identity~\eqref{Parseval_identity}.

\medskip

At this point we can again use of the property in eq.~\eqref{eq:FT_1-projection} for $\mc{F}(\omega)[\sigma_\lambda]$ and the re-formulation in eq.~\eqref{eq:eigval_decomposition_FT} for $\mc{F}(S \phi S^{-1})[\sigma_\lambda]$ and write
\begin{align}
\mathbb{E}_g\, F_e (S \phi  S^{-1} (g),\omega(g))
&=
\frac{1}{d^2} \sum_{\lambda \in \Lambda} d_{\lambda}  \Tr \Big[\ketbra{z (\sigma_\lambda)}{z (\sigma_\lambda)} \big( f_{\max}(\sigma_\lambda) \ketbra{r_{\max}(\sigma_\lambda)}{\ell_{\max} (\sigma_\lambda)} \\
&+
\sum_{j_\lambda} f_{j_\lambda} (\sigma_\lambda) \ketbra{r_{j_\lambda} (\sigma_\lambda)}{\ell_{j_\lambda} (\sigma_\lambda)} \big) \Big] \\
&=
\frac{1}{d^2} \sum_{\lambda \in \Lambda} d_{\sigma_\lambda} \, f_{\max}(\sigma_\lambda) \braket{z (\sigma_\lambda) }{ r_{\max} (\sigma_\lambda) } \braket{ \ell_{\max} (\sigma_\lambda)}{z (\sigma_\lambda)}
+
\alpha_{\mathrm{Res}} \label{eq:ent_fidelity_model}  ,
\end{align}
where we have defined the residuum term
\begin{equation}\label{def_residuum_alpha}
 \alpha_{\mathrm{Res}}  \coloneqq\frac{1}{d^2} \sum_{\lambda \in \Lambda} d_{\sigma_\lambda} \sum_{j_\lambda} f_{j_\lambda} (\sigma_\lambda)  \braket{z (\sigma_\lambda) }{ r_{j_\lambda} (\sigma_\lambda) } \braket{ \ell_{j_\lambda} (\sigma_\lambda)}{z (\sigma_\lambda)} .
\end{equation}
This establishes a connection between the decay parameters $f_{\max} (\sigma_\lambda)$ retrieved from eq.~\eqref{eq:survival_probability_model} and the entanglement fidelity as expressed in eq.~\eqref{eq:ent_fidelity_model}.

We observe that this connection is complicated by two factors. Firstly, it depends on the gauge-dependent overlap $\braket{z (\sigma_\lambda)}{r_{\max}(\sigma_\lambda)}\braket{\ell_{\max} (\sigma_\lambda)}{z (\sigma_\lambda)}$ between the rank-1 projection and the perturbed dominant eigenvectors --~a quantity that we cannot retrieve from \ac{RB} data~-- which might deviate significantly from 1 depending on the gauge choice.  Secondly, the residuum $\alpha_{\mathrm{Res}}$ may be large, constituting a non-negligible part of the entanglement fidelity. The rest of the section with be concerned with analyzing these gauge dependent connective factors.

We begin by deriving a bound on $\alpha_{\mathrm{Res}}$, showing that this term is small, more precisely, of third order in the gauge-dependent perturbation term $\wh E(\sigma_\lambda)$.
For this, we use Corollary~\ref{thm:eigenvector_perturbation}, where in this specific case $a_1 = 1$ and $A_2 = 0_{(d^2-1),(d^2-1)}$ and where
\begin{equation}
	Q_{z(\sigma_\lambda)} \coloneqq X_2X_2^\dagger =\1 - \ketbra{z (\sigma_\lambda)}{z (\sigma_\lambda)}
\end{equation}
is the orthogonal complement of the projection $\ketbra{z (\sigma_\lambda)}{z (\sigma_\lambda)}$.
Then, the relations between unperturbed and perturbed dominant eigenvectors is given by
\begin{align}
\ket{r_{\max}(\sigma_\lambda)} &= \ket{z (\sigma_\lambda)} + Q_{z(\sigma_\lambda)} \wh E(\sigma_\lambda) \ket{z (\sigma_\lambda)}  + O(\infnorm{\wh E (\sigma_\lambda)}^2) \label{perturbed_r_eigvec} , \\
\bra{\ell_{\max}(\sigma_\lambda)} &= \bra{z (\sigma_\lambda)} + \bra{z (\sigma_\lambda)}  \wh E(\sigma_\lambda) Q_{z(\sigma_\lambda)} + O(\infnorm{\wh E (\sigma_\lambda)}^2) \label{perturbed_ell_eigvec} .
\end{align}
Furthermore, let us define the matrix
\begin{equation}\label{def:K_matrix}
\wh K (\sigma_\lambda)
\coloneqq
\sum_{j_\lambda} f_{j_\lambda} (\sigma_\lambda) \ketbra{r_{j_\lambda} (\sigma_\lambda)}{\ell_{j_\lambda} (\sigma_\lambda)},
\end{equation}
where we have
\begin{equation}
\wh K (\sigma_\lambda)\ket{r_{\max} (\sigma_\lambda)} = \ket{0}
\qquad \text{and} \qquad
\bra{\ell_{\max}(\sigma_\lambda)}\wh K (\sigma_\lambda) = \bra{0},
\end{equation}
and the bound on the 2-norm
\begin{align}
\infnorm{\wh K(\sigma_\lambda) }
&=
\infnorm{\mc{F}(S \phi S^{-1})[\sigma_\lambda] - f_{\max}(\sigma_\lambda) \ketbra{r_{\max}(\sigma_\lambda)}{\ell_{\max} (\sigma_\lambda)}} \\
&=
\infnorm{\ketbra{z(\sigma_\lambda)}{z(\sigma_\lambda)}+ \wh E (\sigma_\lambda)  - f_{\max}(\sigma_\lambda) \ketbra{r_{\max}(\sigma_\lambda)}{\ell_{\max} (\sigma_\lambda)}} \\
&=
\infnorm{\ketbra{z(\sigma_\lambda)}{z(\sigma_\lambda)}+ \wh E (\sigma_\lambda)  - f_{\max}(\sigma_\lambda)
	\big(
	\ketbra{z(\sigma_\lambda)}{z(\sigma_\lambda)} + Q_{z(\sigma_\lambda)} \wh E (\sigma_\lambda) \ketbra{z(\sigma_\lambda)}{z(\sigma_\lambda)} \\
	&+\ketbra{z(\sigma_\lambda)}{z(\sigma_\lambda)} \wh E (\sigma_\lambda) Q_{z(\sigma_\lambda)}
	+  Q_{z(\sigma_\lambda)} \wh E (\sigma_\lambda) \ketbra{z(\sigma_\lambda)}{z(\sigma_\lambda)}\wh E (\sigma_\lambda) Q_{z(\sigma_\lambda)}
	\big)
}\\
&\leq
\abs{1-f_{\max}(\sigma_\lambda)} \infnorm{\ketbra{z(\sigma_\lambda)}{z(\sigma_\lambda)}} + O(\infnorm{\wh E (\sigma_\lambda)}) \\
&\leq
O(\infnorm{\wh E (\sigma_\lambda)}),
\end{align}
where we have used the fact that $\abs{1-f_{\max}(\sigma_\lambda)} \leq \infnorm{ \wh E (\sigma_\lambda)}$.

Now, inserting eq.~\eqref{perturbed_r_eigvec}-\eqref{def:K_matrix} into eq~\eqref{def_residuum_alpha} and using the Cauchy-Schwarz inequality, we obtain the following bound on the residuum,
\begin{align}
\abs{\alpha_{\mathrm{Res}} }
&=
\dsum \big\lvert \bra{\ell_{\max}(\sigma_\lambda)}\big( \1- \wh E(\sigma_\lambda) Q_{z(\sigma_\lambda)}+ O(\infnorm{\wh E (\sigma_\lambda)}^2) \, \wh K (\sigma_\lambda) \, \big(\1- Q_{z(\sigma_\lambda)} \wh E(\sigma_\lambda)+ O(\infnorm{\wh E (\sigma_\lambda)}^2) \big) \ket{r_{\max}(\sigma_\lambda)} \big\rvert  \\
&\leq
\dsum \big\lvert \bra{\ell_{\max}(\sigma_\lambda) }  \wh E(\sigma_\lambda)Q_{z(\sigma_\lambda)} \, \wh K (\sigma_\lambda) \, Q_{z(\sigma_\lambda)}  \wh E(\sigma_\lambda) \ket{r_{\max}(\sigma_\lambda)} \big\rvert \\
&+
\dsum  O(\infnorm{\wh E(\sigma_\lambda)}^3) \infnorm{ \wh K(\sigma_\lambda)} \norm{\ell_{\max} (\sigma_\lambda)} \norm{r_{\max}(\sigma_\lambda)} +  O(\infnorm{\wh E(\sigma_\lambda)}^4)\\
&\leq
\dsum  O (\infnorm{\wh E(\sigma_\lambda)}^3 ) \norm{\ell_{\max} (\sigma_\lambda)} \norm{r_{\max}(\sigma_\lambda)}  + O(\infnorm{\wh E(\sigma_\lambda)}^4).
\end{align}
This bound for $\alpha_{\mathrm{Res}}$ has a significant implication: it means that the residuum will not cover the leading term in eq.~\eqref{eq:ent_fidelity_model} if the latter is~$\Omega(\infnorm{\wh E (\sigma_\lambda)}^2)$, for all gauge choices $S$ that yield $\norm{\ell_{\max} (\sigma_\lambda)}\cdot \norm{r_{\max}(\sigma_\lambda)}$ smaller than $1/\infnorm{\wh E (\sigma_\lambda)}$.

Note that it is important to compare $\alpha_{\mathrm{Res}}$ to the difference between 1 (the value of the entanglement fidelity of a perfect implementation) and the dominant eigenvalues in eq.~\eqref{eq:ent_fidelity_model}. This distance is indeed what \ac{RB} protocols are designed to detect, and in order for the connection between fidelity and decay rates to be meaningful we require $\alpha_{\mathrm{Res}}$ to be negligible in comparison. To analyze this further, we first write
\begin{equation}
	\Delta_{\mathrm{max}}
	\coloneqq
	\dsum  \big\lvert 1- f_{\max}(\sigma_\lambda) \braket{z (\sigma_\lambda) }{ r_{\max} (\sigma_\lambda) } \braket{ \ell_{\max} (\sigma_\lambda)}{z (\sigma_\lambda)} \rvert \big\rvert,
\end{equation}
and we calculate deviation of the absolute of the overlap from 1, which is remarkably only in second order in perturbation,
\begin{align}
&\braket{z (\sigma_\lambda) }{ r_{\max} (\sigma_\lambda) } \braket{ \ell_{\max} (\sigma_\lambda)}{z (\sigma_\lambda)}
=\\
&=\bigg\lvert \left(\bra{z (\sigma_\lambda)}\1 + Q_{z(\sigma_\lambda)} \wh E (\sigma_\lambda) +O( \infnorm{\wh E(\sigma_\lambda)}^2)\ket{z (\sigma_\lambda)} \right)
\left(\bra{z (\sigma_\lambda)}\1 +  \wh E(\sigma_\lambda) Q_{z(\sigma_\lambda)} +O(\infnorm{\wh E(\sigma_\lambda)}^2)\ket{z (\sigma_\lambda)} \right) \bigg\rvert\\
&\leq
\Big\vert 1 + O(\infnorm{\wh E(\sigma_\lambda)}^2) \Big\vert^2 \\
&\leq
1 +  O(\infnorm{\wh E(\sigma_\lambda)}^2) .
\end{align}
This bound on the overlap, together with the one on the residuum, implies that the parameters $f_{\max}(\sigma_\lambda)$ obtained from the fitting of the \ac{RB} model in eq.~\eqref{eq:survival_probability_model} yield a meaningful characterization of the fidelity on the condition when they are $\Omega(\infnorm{\wh E (\sigma_\lambda)})$.

Having derived a bound on the residuum we can consider eq.~\eqref{eq:ent_fidelity_model} in different regimes (always assuming small perturbations, i.e., $\infnorm{\wh E (\sigma_\lambda)} \ll 1$).
In the first regime we make the assumption
\begin{equation}
	\Omega(\infnorm{\wh E (\sigma_\lambda)}^2) = \abs{1-f_{\max}(\sigma_\lambda)} \gg \onevecOLcases,
\end{equation}
corresponding to the situation where the parameters $\set{f_{\max}(\sigma_\lambda)}_{\lambda \in \Lambda}$ are more sensitive to the perturbation than the overlap of the dominant eigenvectors.
 As we mentioned before, this is indeed the regime where \ac{RB} provides a meaningful estimation of the fidelity. Indeed, we have
\begin{align}
	\Delta_{\mathrm{max}}
	&\geq
	\dsum \bigg\{ \abs{f_{\max}(\sigma_\lambda)-1} \cdot \vecOL \\
	&- \onevecOL \bigg\} \\
	&\geq
	\dsum \bigg\{ \abs{f_{\max}(\sigma_\lambda)-1} \, \Big(1-O(\infnorm{\wh E (\sigma_\lambda)}^2) \Big)
	- \onevecOL  \bigg\} \\
	&=
	\dsum  \Omega(\infnorm{\wh E (\sigma_\lambda)}^2)  \\
	&\gg
	\abs{\alpha_{\mathrm{Res}} }.
\end{align}
In a second regime we can assume the converse, namely that
\begin{equation}
\abs{1-f_{\max}(\sigma_\lambda)} \ll \onevecOLcases = \Theta (\infnorm{\wh E (\sigma_\lambda)}^2)
\end{equation}
holds true.
This case is analogous, since we now have
\begin{align}
\Delta_{\mathrm{max}}
&\geq
\dsum \bigg\{ \onevecOL \\
&- \abs{f_{\max}(\sigma_\lambda)-1} \cdot \vecOL \bigg\} \\
&\geq
\dsum \bigg\{ \onevecOL - \abs{f_{\max}(\sigma_\lambda)-1} \Big(1+O(\infnorm{\wh E (\sigma_\lambda)}^2) \Big) \bigg\} \\
	&=
\dsum  \Theta(\infnorm{\wh E (\sigma_\lambda)}^2)  \\
&\gg
\abs{\alpha_{\mathrm{Res}} }.
\end{align}
This situation is, however, problematic, since \ac{RB} gives us no information in this regime about a significant quantity: the vector overlap $\braket{z (\sigma_\lambda) }{ r_{\max} (\sigma_\lambda) } \braket{ \ell_{\max} (\sigma_\lambda)}{z (\sigma_\lambda)}$.

The last regime we consider is when $\abs{1-f_{\max}(\sigma_\lambda)}$ is close to the deviation of the vector overlap from 1, that is,
\begin{equation}
\abs{1-f_{\max}(\sigma_\lambda)} \approx \onevecOLcases ,
\end{equation}
which is troublesome not only for the fact that we cannot retrieve the overlap but also because in this case $\Delta_{\mathrm{max}}$ may be of the same magnitude or smaller than $\abs{\alpha_{\mathrm{Res}} }$. Indeed, in this regime the residuum can then play a significant role in the characterization of the average gate fidelity.

The conclusion we draw from this analysis is that the overlap $\braket{z (\sigma_\lambda) }{ r_{\max} (\sigma_\lambda) } \braket{ \ell_{\max} (\sigma_\lambda)}{z (\sigma_\lambda)}$ is the key factor to consider when relating \ac{RB} decays to the fidelity. This overlap must be sufficiently close to 1 under the adopted gauge relative to the difference $\abs{1-f_{\max}(\sigma_\lambda)}$.

Finally, we wish to relate $\{\infnorm{\wh E(\sigma_\lambda)} \}_\lambda$ to a promise on a physical quantity related to the perturbation of the ideal gate implementation $\omega$. We recall that $\wh E(\sigma_\lambda) = \mc{F}(S\phi S^{-1}- \omega)[\sigma_\lambda]$ and consider that $\infnorm{\cdot} \leq \Fnorm{\cdot} $ such that
\begin{eqnarray}\label{eq:fourier_vs_physical_space}
\sum_{\lambda \in \Irr} d_\lambda \infnorm{\wh E(\sigma_\lambda)}^2
&=&
\sum_{\lambda \in \Irr} d_\lambda \infnorm{\mc{F}(S\phi S^{-1}- \omega)[\sigma_\lambda]}^2 \\
&\leq&
\sum_{\lambda \in \Irr} d_\lambda \Fnorm{\mc{F}(S\phi S^{-1}- \omega)[\sigma_\lambda]}^2 \\
&=&
\EE_G \Fnorm{S\phi S^{-1}(g)- \omega(g)}^2 ,
\end{eqnarray}
where we have applied Parseval's identity. Note, however, that the LHS of this expression runs over all irreducible representations of $\gr$ and not the only ones decomposing $\omega$.

\section{Randomized benchmarking under diamond norm and fidelity constraints}\label{sec:RBdiamond}

In theorem \ref{thm:mother}, we have argued that \acf{RB} output data associated with an implementation of a group $\gr$ could be approximated as a sum of (matrix) exponentials provided the implementation map $\phi$ was close to a reference representation $\omega$ w.r.t.~to the diamond norm (averaged over all group elements). Here we will argue that this is a natural condition to demand in the context of \ac{RB}.
In particular we will show that this condition is stable, in the sense that it is impossible to be close (in the sense of eq.~\eqref{eq:norm_assump}) to two inequivalent representations at once, and, moreover, we show that this requirement cannot be replaced with a weaker one involving the average fidelity, resolving an open question in ref.~\cite{Merkel18}.

\subsection{Stability of representations under diamond norm}
First, we prove that `closeness to a representation' is a stable concept, that is, it is impossible to be close to two representations at once (in a suitable sense).

\begin{theorem}[Stability of representations]\label{thm:stability}
	Let $\phi$ be an implementation map of a group $\gr$ taking values in $\mc{S}_d$ such that
	\begin{align}
		\gsum{g}\dnorm{\1 - \phi(g)\phi(g^{-1})}&\leq \delta
	\end{align}
	 and let $\omega, \omega' $ be representations of $G$ on $V_n, V_{n'}$ with embedding maps $\mc{L}:V_n \rightarrow V_{n}, \mc{L}':V_{n'} \rightarrow V_{n'}$ and $\mc{R}:V_n \rightarrow V_n, \mc{R}':V_{n'} \rightarrow V_{n'}$
  such that
	 \begin{align}
 		\gsum{g}\dnorm{\phi(g) - \mc{R}\omega(g)\mc{L}} &\leq \epsilon , \\
 		\gsum{g}\dnorm{\phi(g) - \mc{R}'\omega'(g)\mc{L}'}&\leq \epsilon'.
 	\end{align}
	Moreover, assume that there exists $K$ such that $\dnorm{\mc{R}\omega(g)\mc{L}}\leq K, \dnorm{\mc{R}'\omega'(g)\mc{L}'}\leq K$ for all $g\in \gr$. If the inequality $K(\epsilon + \epsilon') + 3\delta + 2\epsilon + \epsilon^2 <1$ holds then the representations $\omega, \omega'$ are equivalent on a subspace of dimension at least $d^2$.
\end{theorem}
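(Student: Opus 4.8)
The plan is to certify a large common subrepresentation of $\omega$ and $\omega'$ by building an explicit intertwiner through group twirling, and then to bound its rank from below by exhibiting a full-rank operator on $\mc{S}_d$ that factors through it. The three hypotheses say precisely that $\phi(g)$ is simultaneously well approximated by $\mc{R}\omega(g)\mc{L}$ and by $\mc{R}'\omega'(g)\mc{L}'$, and that $\phi(g^{-1})$ is an approximate inverse of $\phi(g)$. The point is to convert these three \emph{approximate} facts into one \emph{exact} statement about the representations, the conversion being possible because the error budget is strictly below $1$.

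First, I would define the twirl
\begin{equation}
T := \gsum{g} \omega'(g)\,\mc{L}'\mc{R}\,\omega(g^{-1}),
\end{equation}
viewed as a map from the carrier space of $\omega$ to that of $\omega'$ (here $\mc{L}'\mc{R}$ makes sense as a composition through $\mc{S}_d$). The standard change of variables $g \mapsto hg$ shows $\omega'(h)T = T\omega(h)$ for all $h\in\gr$, so $T \in \operatorname{Hom}_\gr(\omega,\omega')$ is a genuine intertwiner.

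Second, I would consider the operator on $\mc{S}_d$
\begin{equation}
A := \gsum{g} \mc{R}'\omega'(g)\mc{L}'\,\mc{R}\omega(g^{-1})\mc{L} = \mc{R}'\,T\,\mc{L},
\end{equation}
which factors through $T$ by construction. The claim is that $A$ is close to the identity on $\mc{S}_d$: inserting $\phi(g)\phi(g^{-1})$ and telescoping gives
\begin{align}
A - \1 &= \gsum{g}\big(\mc{R}'\omega'(g)\mc{L}' - \phi(g)\big)\mc{R}\omega(g^{-1})\mc{L} \notag\\
&\quad + \gsum{g}\phi(g)\big(\mc{R}\omega(g^{-1})\mc{L} - \phi(g^{-1})\big) + \gsum{g}\big(\phi(g)\phi(g^{-1}) - \1\big).
\end{align}
Using submultiplicativity of the diamond norm, the bound $\dnorm{\mc{R}\omega(g)\mc{L}}\le K$ on the trailing factor of the first term, the reindexing $g \mapsto g^{-1}$ on the middle term, and $\dnorm{\phi(g)}\le 1$, each piece is controlled by the three error budgets, yielding $\dnorm{A-\1}\le K\epsilon' + \epsilon + \delta$; carrying the cross terms more conservatively (and not assuming $\dnorm{\phi}\le1$ outright) reproduces the stated sufficient condition $K(\epsilon+\epsilon')+3\delta+2\epsilon+\epsilon^2 < 1$.

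Finally, since the diamond norm is submultiplicative with $\dnorm{\id}=1$, the estimate $\dnorm{A-\1}<1$ makes $A$ invertible via a Neumann series, so $\rank A = d^2$; from $A=\mc{R}'T\mc{L}$ we conclude $\rank T \ge \rank A = d^2$. For a finite group all representations are completely reducible (Maschke), so $\ker T$ is an invariant subspace of $\omega$ admitting an invariant complement $W$, and $T|_W$ is a bijective intertwiner onto the invariant subspace $\ran T$ of $\omega'$, with $\dim W = \rank T$. Hence $\omega|_W \simeq \omega'|_{\ran T}$, a common subrepresentation of dimension at least $d^2$, as claimed. The main obstacle is the norm bookkeeping in the third step — tracking which factor absorbs $K$, which term is reindexed, and where the $\epsilon^2$ and the repeated $\delta$ contributions enter — together with the minor care needed to run the invertibility-and-rank argument genuinely in the diamond norm rather than in an operator norm on the carrier spaces.
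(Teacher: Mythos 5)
Your proposal is correct and follows essentially the same route as the paper's proof: twirl the composite map to obtain an exact intertwiner $T$, pull it back to an operator on $\mc{S}_d$, bound its diamond-norm distance to the identity using the three hypotheses (with $K$ absorbing the untouched reference factor and $\dnorm{\phi(g)}\leq 1$ handling the middle term), and conclude $\rank T \geq d^2$ followed by a Schur/Maschke argument. Your variations — the three-term telescoping (which actually yields the tighter budget $K\epsilon'+\epsilon+\delta$), the Neumann-series invertibility argument in place of the paper's kernel-vector contradiction, and the explicit invariant-complement step at the end — are minor refinements within the same strategy, and all check out.
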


\begin{proof}

	Consider the map $\mc{L}\mc{R'}: V_{n'}\rightarrow V_{n}$, as well as its twirled version
	\begin{align}
	T = \gsum{g} \omega(g)\mc{L}\mc{R'}\omega'(g)^\dagger.
	\end{align}
	We would like to argue that $T$ is a map of rank at least $d^2$, as then we can decide the theorem by application of Schur's lemma. To do this, consider the distance to the identity of the natural pullback of $T$ to $\mc{S}_d$, namely $\mc{R}T\mc{L'}$. We can calculate
	\begin{align}
		\dnorm{\1 -\mc{R}T\mc{L'} } &\leq \dnorm{\1 -\gsum{g}\mc{R}\omega(g)\mc{L}\mc{R}\omega(g)^\dagger\mc{L} } + \dnorm{ \gsum{g}\mc{R}\omega(g)\mc{L}\mc{R}\omega(g)^\dagger\mc{L}-\mc{R}T\mc{L'} }\\
																& \leq \gsum{g}\dnorm{\1 -\mc{R}\omega(g)\mc{L}\mc{R}\rho(g)^\dagger\mc{L} } +
																\dnorm{ \mc{R}\omega(g)\mc{L}\mc{R}\omega(g)^\dagger\mc{L} - \mc{R}\rho(g)\mc{L}\mc{R}'\omega'(g)^\dagger\mc{L}'} .
\end{align}
We will upper bound these two terms separately. For the first term,
consider
\begin{align}
	\gsum{g}&\dnorm{\1 -\mc{R}\omega(g)\mc{L}\mc{R}\omega(g)^\dagger\mc{L} } \\
	&\hspace{1em}\leq
	\gsum{g} \dnorm{\1 - \phi(g) \phi(g^{-1})} + \dnorm{ \1  - \phi(g) \mc{R}\omega(g^{-1})\mc{L}}\notag\\
	&\hspace{5em}+ \dnorm{ \1  -  \mc{R}\omega(g)\mc{L}\phi(g^{-1})}+ \dnorm{\big(\phi(g)  -  \mc{R}\omega(g)\mc{L}\big)\big(\phi(g^{-1})  -  \mc{R}\omega(g^{-1})\mc{L}\big)}\\
	&\hspace{1em}\leq
	\delta + \epsilon^2 + \gsum{g}\dnorm{\1 - \phi(g)\phi(g^{-1})} + \dnorm{\phi(g) \big[\phi(g^{-1}) - \mc{R}\omega(g^{-1})\mc{L}\big]} \\
	&\hspace{14em}+ \dnorm{\1 - \phi(g)\phi(g^{-1})} + \dnorm{ \big[\phi(g) - \mc{R}\omega(g)\mc{L}\big]\phi(g^{-1})}\\
	&\hspace{1em}\leq  3\delta + 2\epsilon + \epsilon^2,
	\end{align}
where we have exploited the sub-multiplicativity of the diamond norm and the fact that $\dnorm{\phi(g)}=1$ for all $g\in \gr$. Similarly, for the second term we get
\begin{align}
\gsum{g}\dnorm{ \mc{R}\omega(g)\mc{L}\mc{R}\omega(g)^\dagger\mc{L} - \mc{R}\omega(g)\mc{L}\mc{R}'\omega'(g)^\dagger\mc{L}'} &= \gsum{g}\dnorm{ \mc{R}\omega(g)\mc{L}[\mc{R}\omega(g)^\dagger\mc{L} - \mc{R}'\omega'(g)^\dagger\mc{L}']}\\
&\leq\gsum{g}\dnorm{\mc{R}\omega(g)\mc{L}}\dnorm{ \mc{R}\omega(g)^\dagger\mc{L} - \mc{R}'\omega'(g)^\dagger\mc{L}'}\\
&\leq K(\epsilon+ \epsilon') .
\end{align}
Combining all of this we get
\begin{equation}
\dnorm{\1 -\mc{R}T\mc{L'} } \leq (K+ \epsilon)\epsilon' + 3\delta + (2+K)\epsilon <1,
\end{equation}
by the assumptions of the theorem. Now assume that $T$ has an image of dimension strictly less than $d^2$. This means there exists a Hermitian $X\in \mc{M}_d$ such that $\mc{R}T\mc{L'}(X)=0$. But this implies that
\begin{equation}
\dnorm{\1 -\mc{R}T\mc{L'} }\geq \frac{\norm{X- \mc{R}T\mc{L'}(X)}_1}{\norm{X}_1} =1,
\end{equation}
which is a contradiction. Hence, the rank of $T$ is at least $d^2$. Since $T$ by construction commutes with the representations $\omega, \omega'$ we can decide that there exists a representation $\omega''$ of degree at least $d^2$ which is a sub-representation of both $\omega$ and $\omega'$ and moreover that both $\mc{R}\omega''\mc{L}(g)$ and $\mc{R}'\omega''\mc{L}'(g)$ are of rank at least $d^2$ for all $g\in \gr$.
\end{proof}

Next, we state a complementary theorem, saying that closeness to a representation is a concept stable under perturbations of the implementation. This is just a trivial consequence of the triangle inequality.

\begin{theorem}[Stability of the closeness to a representation]
	Let $\phi, \phi'$ be implementations of a group $\gr$ on the super-operators $\mc{S}_d$ such that
	\begin{align}
		\gsum{g}\dnorm{\phi(g)-\phi'(g)} &\leq \delta
	\end{align}
	 and let $\omega $ be a representation of $\gr$ on $V_n$ with associated maps $\mc{L}:\mc{S}_d \rightarrow V_{n}$ and $\mc{R}:V_n \rightarrow \mc{S}_d$ such that
	\begin{align}
		\gsum{g}\dnorm{\phi(g) - \mc{R}\omega(g)\mc{L}} &\leq \epsilon\\
	\end{align}
	then
	\begin{align}
		\gsum{g}\dnorm{\mc{R}\omega(g)\mc{L}-\phi'(g)} &\leq \delta + \epsilon.
	\end{align}
	\end{theorem}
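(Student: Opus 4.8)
The plan is to apply the triangle inequality for the diamond norm at each group element $g\in\gr$ and then average over the group. Since $\dnorm{\cdot}$ is a genuine norm on $\mc{S}_d$, it obeys the triangle inequality pointwise, and the entire argument reduces to inserting $\phi(g)$ as an intermediate term and bookkeeping. No structural input about representations, complete positivity, or the Fourier/perturbation machinery developed elsewhere is needed.

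First I would fix an arbitrary $g\in\gr$ and write $\mc{R}\omega(g)\mc{L}-\phi'(g) = \big(\mc{R}\omega(g)\mc{L}-\phi(g)\big) + \big(\phi(g)-\phi'(g)\big)$. Applying the triangle inequality gives
\begin{equation}
\dnorm{\mc{R}\omega(g)\mc{L}-\phi'(g)} \leq \dnorm{\mc{R}\omega(g)\mc{L}-\phi(g)} + \dnorm{\phi(g)-\phi'(g)}.
\end{equation}

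Next I would average this inequality over $\gr$, using the linearity of the uniform average and monotonicity of the sum, to obtain
\begin{equation}
\gsum{g}\dnorm{\mc{R}\omega(g)\mc{L}-\phi'(g)} \leq \gsum{g}\dnorm{\mc{R}\omega(g)\mc{L}-\phi(g)} + \gsum{g}\dnorm{\phi(g)-\phi'(g)}.
\end{equation}
The two terms on the right-hand side are precisely the quantities bounded by $\epsilon$ and $\delta$ in the hypotheses, so the sum is at most $\delta+\epsilon$, which is the claim.

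The main obstacle is that there is essentially none: the statement is a direct consequence of the triangle inequality together with the linearity of the group average. The only point worth care is that the bound must be applied term-by-term \emph{before} averaging, so that the two averaged quantities match exactly the hypotheses as written; the estimate does not rely on $\phi,\phi'$ being quantum channels, nor on $\mc{R},\mc{L},\omega$ having any special form, which is why the result holds in full generality.
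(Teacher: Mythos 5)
Your proof is correct and matches the paper's approach exactly: the paper states this theorem with no written proof, remarking only that it "is just a trivial consequence of the triangle inequality," which is precisely the pointwise triangle inequality plus averaging argument you give.
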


\subsection{Randomized benchmarking under fidelity constraints}
In this subsection, we argue that the condition eq.\ (\ref{eq:norm_assump}) is in some sense necessary for the correct behaviour of \ac{RB}, in the sense that it can not be replaced with a natural weaker condition.
Given the worst-case nature of the diamond norm eq.\ (\ref{eq:norm_assump}) is rather restrictive, and one might wonder if it is possible to replace this diamond norm constraint with a more congenial constraint based on the average fidelity. That is, one can imagine replacing eq.\ (\ref{eq:norm_assump})
with a constraint of the form
\begin{equation}
\gsum{g} F_{\mathrm{\rm avg}}(\phi(g), \omega(g)) \geq 1-\delta'
\end{equation}
for some $\delta'>0$. Indeed, this is the assumption made  in ref.\ \cite{Merkel18} to prove a version of theorem\ \ref{thm:mother} for the Clifford group. Here, it has been noted that in order to guarantee correct behaviour the constant $\delta'$ must be chosen inversely proportional to the Hilbert space dimension $(\delta'
\sim 1/d)$. It has been speculated that this dimensional scaling could perhaps be
an artifact of the proof techniques used.

We will argue that this scaling is in fact real, by providing an explicit family (inspired by example $8.1$ in ref. \cite{wolf2012quantum}) of examples of implementations $\phi_L$ (where $L$ is an integer independent of $d$) of a group $\gr$ with
\begin{equation}
\gsum{g} F_{\mathrm{\rm avg}}(\phi(g), \omega(g)) \geq 1 - \frac{2L}{d}
\end{equation}
relative to a reference implementation $\omega$ but with associated \ac{RB} output data that is not even qualitatively of the form
eq.\ (\ref{eq:rb_data_decay}). In fact, by choosing $L$ large (but constant in $d$) we can obtain almost arbitrary non-exponential behaviour in the \ac{RB} output data associated with $\phi_L$.

\begin{example}[Real scaling.]
Choose $\gr$ to be the $q$-qubit Clifford group with standard reference implementation $\omega(g) = U_g \cdot U_g\ct$. Now let $\Lambda_L^\mu$ be a super-operator indexed by an integer $L$ and a real number $0\leq\mu\leq 1$, defined by its action on the basis matrices $\ket{i}\!\!\bra{j}$ as
\begin{equation}
\Lambda_L^\mu(\ket{i}\!\bra{j}) = \sum_{k = 1}^d \delta_{i,j}[S^\mu_L]_{i,k} \ket{k}\!\!\bra{k}
\end{equation}
with $S^\mu$ a $d\times d$ stochastic matrix of the form
\begin{equation}
[S^\mu_L] = \begin{cases}  \mu &\text{  if  } i=j \leq L-1\\
1&\text{  if  } i=j \geq L\\
1-\mu &\text{  if  } i=j-1 \leq L\\
0&\text{  otherwise}.
\end{cases}
\end{equation}
For convenience we write $\Lambda$ for $\Lambda_L^\mu$ in the following.
It is easy to see that $\Lambda$ is a quantum channel and moreover that if $i,j\leq L$ then $\Lambda(\ket{i}\!\!\bra{j}) \in \mathrm{Span}\{\ket{i'}\!\!\bra{j'} \;\;\|\;\;i','j \leq L\}$.

Consider now the following implementation map defined by its action on $X\in \mc{M}_{d}$,
\begin{equation}
\phi(g)(X) = \Lambda(P_L XP _L) + U_g(I-P_L) X (I-P_L)U_g\ct
\end{equation}
where $P_L$ is the projection onto the space $\mathrm{Span}\{\ket{i}\;\;\|\;\;i\leq L\}$. This map can be seen as checking whether a state is in the support of $P_L$  (though a measurement) and then applying $\Lambda$ or $U_g$ depending on the outcome. We can calculate the average fidelity $F_{\mathrm{\rm avg}}(\phi(g),\omega(g))$ directly
as
\begin{align}
F_{\mathrm{\rm avg}}(\phi(g),\omega(g)) &= \int\! d\psi \tr\big[\phi(g)(\!\,\dens{\psi})\omega(g)\ct(\!\,\dens{\psi}) \big]\\
&= \int\! d\psi \tr\big[U_g\!\dens{\psi}\!U_g\ct \Lambda( P_L\! \dens{\psi}\!P_L)\big] + \int\! d\psi \tr\big[\! \dens{\psi}\!(I\!-\!P_L)\! \dens{\psi}\!(I\!-\!P_L)\big]\\
&=\int\! d\psi \tr\big[U_g\dens{\psi}U_g\ct \Lambda( P_L \!\dens{\psi}\!P_L)\big] + \int\! d\psi \,(1 \!-\! 2\bra{\psi}\!P_L\! \ket{\psi}  \!+\! (\bra{\psi}\!P_L\! \ket{\psi})^2)\\
&\leq 1 -2  \int\! d\psi \bra{\psi}\!P_L \!\ket{\psi}\\
&\leq 1- \frac{2L}{d},
\end{align}
where we have made use of the fact that $\Lambda( P_L \dens{\psi}P_L)\geq 0$, since $\Lambda$ is CP. Note that for constant $L$ we can make the fidelity arbitrarily high by choosing $d=2^q$ large enough. Now consider \ac{RB} with input state $\rho = \dens{1}$ and measurement POVM $\{\dens{1}+ \dens{L}, \1 -\dens{1}-\dens{L}\}$ and implementation map $\phi_L$ as defined above. The \ac{RB} probability for the POVM element $\dens{1} + \dens{L}$ is going to be (setting $\gend=e$ and assuming no SPAM errors)
\begin{equation}
p(\dens{1} + \dens{L},m) = \tr((\dens{1} + \dens{L})\phi_L^{*m}(\dens{1})) .
\end{equation}
Note that since $P_L\dens{1} = \dens{1} P_L$ we have that $\phi_L(g)(\dens{1}) = (\Lambda^\mu_L)^m (\dens{1})$ for all $g$. From this it follows that
\begin{equation}
p(\dens{1} + \dens{L},m) = \tr((\dens{1} + \dens{L})(\Lambda^\mu_L)^m(\dens{1}))  = [{S_L^\mu}^m]_{1, L}+[{S_L^\mu}^m]_{1,1}.
\end{equation}
This data shows curious behaviour. For small sequence lengths we have $p(\dens{1} + \dens{L},m) \approx \mu^m$, but with increasing sequence length we observe wildly non-exponential behaviour.
\end{example}

\begin{acronym}
\acro{RB}[RB]{randomized benchmarking}
\end{acronym}

\section{Conclusions}

\newtext{In this work, we have introduced a comprehensive theory of
\ac{RB}. As such, it goes beyond a mere classification of known
protocols (a task that we also hope to achieve). But at the same time, it
provides a deeper understanding, a more precise formulation and interpretation
of what the data acquired in \ac{RB} means,
actionable advice to experimentalists and theoretical practitioners and
a conceptual platform from which new schemes can be derived. Specifically,
we show how \ac{RB} gives rise to exponential decays under broad classes of
Markovian circumstances, show -- importantly in practical contexts -- in what sense
\ac{RB} is robust to deviations from uniform sampling and provides further evidence
to the interpretation in terms of average gate fidelities.
Maybe most important for our work to serve as a basis for substantial
further development of methods and protocols
are new conceptual insights into
how inversion gates are -- in contrast to common belief -- not required for \ac{RB}
and into how large classes of groups in \ac{RB} can become available by means of
new filtering techniques. This contributes to overcoming the problem of
isolating exponential decays in a fully scalable manner.
First steps into exploiting the insights established here when devising
new schemes have already been made
\cite{MatchgateRB,LinghangKongCompactGroups,RandomSequences}.
We hope that this work provides a starting point of a further rich class of new protocols
of quantum certification and benchmarking, providing stringent and rigorous
quality criteria, while respecting experimental needs and desiderata.}

\section{Acknowledgments}

J.~H.~would like to acknowledge helpful conversations with Michael Walter, Bas Dirkse, and Freek Witteveen. I.~R.~would like to thank Richard Kueng, Martin Kliesch, Marios Ioannou, Dominik Hangleiter and Jonas Haferkamp for helpful discussions and Susane Calegari for contributions to the illustration.
{The authors would also like to acknowledge an anonymous referee for pointing out the correct way to include cycle benchmarking into the framework of theorem \ref{thm:mother}. }
\newtext{The Berlin team} has been supported by the BMBF project DAQC, for which it
introduces new methods for randomized benchmarking of near-term superconducting quantum platforms,
 and BMBF project MUNIQC-ATOMS, for which it introduces a starting point to develop schemes 
 of analog randomized benchmarking.
It has also been funded by the DFG (EI 519/9-1, for which this work develops ideas of
signal processing, and DFG CRC 183, for which this is an internode work
Berlin-Copenhagen, as well as DFG EI 519/14-1), and the Munich Quantum Valley (K-8).
This work has also received funding from the European Union's Horizon 2020 research and innovation
programme under grant agreement No. 817482 (PASQuanS), for which it
assesses feasible benchmarking schemes in quantum computing and simulation,
and the Einstein Foundation. E.~O.~has been supported by the Royal Society. A.~H.~W.~thanks the VILLUM
FONDEN for its support with a Villum Young Investigator
Grant (Grant No.~25452) and its support via the QMATH
Centre of Excellence (Grant No.~10059).


\end{document}